\date{}
\makeatletter \renewenvironment{proof}[1][\proofname]
{\par\pushQED{\qed}\normalfont\topsep6\p@\@plus6\p@\relax\trivlist\item[\hskip\labelsep\bfseries#1\@addpunct{.}]\ignorespaces}{\popQED\endtrivlist\@endpefalse} \makeatother
\theoremstyle{definition}   
\newtheorem{thm}{Theorem}
\newtheorem{prop}{Proposition}
\newtheorem{lem}{Lemma}
\newtheorem{defn}{Definition}
\newtheorem{rmk}{Remark}
\newtheorem{cor}{Corollary}
\newcommand{\vs}{{\mathbf{s}}}
\newcommand{\hvs}{{\widehat{\mathbf{s}}}}
\newcommand{\liminfty}{{\underset{\bm{\sigma} \rightarrow +\infty}{\lim}}}
\newcommand{\alphaes}{{ \alpha_e^{\mathbb{S}}}}
\newcommand{\gammaesu}{{ \overline{\gamma_e}^{\mathbb{S}}}}
\newcommand{\gammaesl}{{ \underline{\gamma_e}^{\mathbb{S}}}}
\newcommand{\gammaeu}{{ \overline{\gamma_e}}}
\newcommand{\gammael}{{ \underline{\gamma_e}}}
\begin{document}

\date{This version: October, 2016}
\title{Understanding the Impacts of Dark Pools on Price Discovery}
\author{Linlin Ye \footnote{E-mail: linlinye@cuhk.edu.cn, phone:+86-755-84273420. I am very grateful to Pierre-Oliver Weill, Avanidhar Subrahmanyam, Tomasz Sadzik, Mark Garmaise, Haoxiang Zhu, Mao Ye, Joseph Ostroy, John Wiley, Obara Ichiro, Simon Board, Ivo Welch,  Antonio Bernardo, Francis Longstaff, Bernard Herskovic, William, Mann, Daniel Andrei, Shuyang Sheng, and Jernej Copic for valuable comments and discussions.  I would also like to thank the seminar participants at the University of California Los Angeles Economics Department, UCLA Anderson School, Peking University HSBC, Bologna Business School, the Chinese University of Hong Kong, the European Finance and Banking conferences. I also thank Alex Kemmsies in Rosenblatt Rosenblatt Securities and Sayena Mostowfi in TABB Group for providing the industry data.} }
\affil{The Chinese University of Hong Kong, Shenzhen}
\maketitle


\begin{abstract}\footnotesize

         This paper investigates the impact of dark pools on price discovery (the efficiency of prices on stock exchanges to aggregate information). Assets are traded in either an exchange or a dark pool, with the dark pool offering better prices but lower execution rates. Informed traders receive noisy and heterogeneous signals about an asset's fundamental. We find that informed traders use dark pools to mitigate their information risk and there is a \textit{sorting effect}: in equilibrium, traders with strong signals trade in exchanges, traders with moderate signals trade in dark pools, and traders with weak signals do not trade.  As a result, dark pools have an \textit{amplification effect} on price discovery. That is, when information precision is high (information risk is low), the majority of informed traders trade in the exchange hence adding a dark pool enhances price discovery, whereas when information precision is low (information risk is high), the majority of the informed traders trade in the dark pool hence adding a dark pool impairs price discovery.  The paper reconciles the conflicting empirical evidence and produces novel empirical predictions. The paper also provides regulatory suggestions with dark pools on current equity markets and in emerging markets.
\end{abstract}

\section{Introduction}
Over the years, the world financial system has experienced a widening of equity trading venues, among which dark pools have rapidly grown in popularity. The market share of dark pools in the US has grown from 7.51\% in 2008 to 16.57\% in 2015.\footnote{Rosenblatt Securities: \textit{Let There Be Light}, January 2016 Issue.} In contrast with a traditional stock exchange, dark pools do not publicize information about their orders and price quotations before trade. Unlike a stock exchange in which prices are formed to clear the buy and sell orders, a typical dark pool does not form such prices: it executes orders using prices derived from the stock exchanges. Those dark pools do not contribute to the process of information aggregation in the exchange, and hence they do not offer price discovery. Price discovery (i.e., the process and efficiency of prices aggregating information about assets' values) is essential to achieving the confidence of a broad community of market participants and ensuring the efficiency of capital markets. Therefore, the question of whether dark pool trading will harm price discovery has become a rising concern and matter of debate for regulators and industry practitioners.\footnote{For example, as remarked by the SEC Commissioner Kara M. Stein before the Securities Traders Association's 82nd Annual Market Structure Conference in Sep. 2015, ``As more and more trading is routed to dark venues that have restricted access and limited reporting, I am concerned that overall market price discovery may be distorted rather than enhanced.'' According to ``An objective look at high-frequency trading and dark pools,'' a report released by PricewaterhouseCoopers (2015), ``Dark pools may harm the overall price discovery process, particularly in a security in which a significant portion of that security's trade volume is in the pools.''} Academic research, for its part, has yielded conflicting results. \citet*{ye_glimpse_2011} predicts that, in theoretical studies, the addition of a dark pool strictly harms price discovery. By contrast, \citet*{zhu_dark_2013} predicts that dark pools strictly improve price discovery. Empirically, there are findings that support each of the different predictions.

This paper investigates the question whether dark pool trading will harm price discovery. In the model, there are informed speculators and uninformed liquidity traders. More specifically, informed traders have heterogeneous private signals, with the distribution of these signals determined by an \textit{information precision} level. Uninformed liquidity traders have heterogeneous demands for liquidity. Both types of traders choose among three options: a) trade in an exchange, b) trade in a dark pool, or c) do not trade (delay trade).  The exchange is modeled as market makers posting bid-ask prices and guaranteeing execution, whereas the dark pool is modeled as a crossing-mechanism that uses the average of bid and ask (mid-price) in the exchange to execute orders (if there are more buy orders than sell orders, buy orders are executed probabilistically, with some buy orders not executed, and vice versa).

We find a novel \textit{amplification effect} of dark pools on price discovery: price discovery in the exchange will be enhanced when traders' information precision is high and will be impaired when traders' information precision is low. The results help to reconcile the seemingly contradictory empirical findings about dark pool impact on the market and generate novel empirical predictions regarding the information content of dark pool trades, dark pool market share, and their relationships with exchange spread. We identify that information structure (information precision) is one key variable in determining the informational efficiency (price discovery) when markets are fragmented by dark pools. We show that the results have immediate policy implications for enhancing price discovery in equity markets and dark pool usage in emerging economies. We also provide a discussion regarding the possible measures of markets' information precision.

The intuition of the amplification effect is as follows. First, we show that, in equilibrium, there is a \textit{sorting effect}: for informed traders, those with strong signals trade in the exchange, those with modest signals trade in the dark pool, and those with weak signals do not trade. For uninformed liquidity traders, those with high liquidity demand trade in the exchange, those with modest liquidity demand trade in the dark pool, and those with low liquidity demand delay trade. The sorting effect is derived from the trade-off of trading dark pools: dark pools provide better prices than exchanges, but this is offset by a higher non-execution probability. Therefore, amongst informed traders, those with strong signals prefer an exchange because they are very confident about making profits and desire a guaranteed execution more than a better price; those with moderate signals prefer a dark pool because they are less confident about making profits and desire a better price more than execution; and finally, those with weak signals prefer not to trade because they are unconfident about making profits. A similar argument holds for liquidity traders.

Second, we show that the amplification effect holds as a result of the sorting effect. Since different information precision levels result in different distributions in the strengths of signals and hence different venue choices for the majority of the informed traders, they cause different dark pool impacts on price discovery. When information precision is high, the majority of informed traders receive strong signals and prefer an exchange. Therefore, adding a dark pool attracts only a small fraction of informed traders, compared with the liquidity traders, leaving a higher informed-to-uninformed ratio (i.e., relative ratio of informed and uninformed traders) in the exchange and hence improving price discovery in the exchange. In contrast, when information precision is low, the majority of informed traders receive modest signals and prefer a dark pool. Therefore a dark pool would attracts a higher fraction of informed traders, compared with the liquidity traders, leaving a lower informed-to-uninformed ratio in the exchange and hence impairing price discovery in the exchange.

This paper points out an important function of dark pools not yet discussed in the existing literature: dark pools help informed traders mitigate their information risk, that is, the loss that is attributable to wrong information.  When traders' information is relatively weak (meaning there is a higher probability that it is wrong), they face a high risk of losing money in trading. Dark pools provide those traders a perfect ``buffer zone'' -- a place that strictly lowers their information risk. This function of dark pools is only present, however, when traders have a noisy information structure.

To the best of our knowledge, this paper is the first to introduce a noisy information structure in a fragmented market to study dark pools and price discovery. Examining the noisiness in information is of essential importance, not only because it is much more realistic than assuming perfect information, but also because it reveals the process of price discovery by identifying the motivations of traders' choices. As a result, our predictions are more robust in the sense that the sorting and amplification effects hold in every equilibrium. In contrast, the current theoretical literature assumes that all informed traders have perfectly precise information. This obscures trading motivations and induces instability in the results.  For example, \citet*{zhu_dark_2013} studies some equilibria in which dark pools improve price discovery, but there may exist other equilibria in his model in which dark pools harm discovery. Yet, \citet*{zhu_dark_2013} does not discuss these equilibria. 

Our findings have immediate policy implications for the ongoing debate over dark pool usage. Our findings imply that, in contrast with current literature, there is no uniform impact that dark pools have on price discovery and other measures of market quality.  Dark pool activity and its impacts display significant cross-sectional variation and should be evaluated differently across various economic environments. Concrete suggestions for regulators to enhance pricing efficiency include:  (i) identifying firm characteristics and monitoring dark pool trades in firms that are likely to have a negative dark pool impact, such as high R\&D firms, young firms, small firms, and less analyzed firms, (ii) facilitating information transmission and processing, enhancing accounting and reporting disclosure systems, and improving the efficiency of the judicial systems and law enforcement against insider trading, and (iii) being cautious in emerging markets with regards to dark pool trading, given that most emerging markets are regulated by poor legal systems that lack implemental power against insider trading and have a low precision in information disclosure. A more detailed discussion is provided in Section \ref{sec_policy}.

Our study also produces testable predictions and helps to reconcile the seemingly contradictory results in the current empirical literature. One of the predictions that could motivate empirical and regulatory concerns is how much dark pool trades can forecast price movements.  We predict that the information content of  dark pool trades has an inverted U-shape relationship with the liquidity level (exchange spread), implying that assets with modest liquidity have the highest information content in their dark pool trades, whereas the most liquid and illiquid assets have the lowest information content in their dark pool trades. There are also some predictions which coincide with current theoretical literature. For example, dark pool usage also has an inverted U-shape association with exchange spread. Dark pools create additional liquidity for the market.  A more detailed discussion is in Section \ref{sec_testables}.

\textbf{Related Work:} 
There is a large collection of studies that examines information asymmetry and price discovery in financial markets, in both the theoretical and empirical fields. In theoretical studies, a large set of papers analyze non-fragmented markets, including the two pioneering works in price discovery, \citet*{glosten_bid_1985}, and  \citet*{kyle_continuous_1985}. Other studies examine fragmented lit markets, for example \citet*{viswanathan_market_2002}, \citet*{chowdhry_multimarket_1991}, and \citet*{hasbrouck_one_1995}. There are a handful of papers that study information asymmetry in a market fragmented by lit and dark venues \citep[see, e.g.,][]{hendershott_crossing_2000, degryse_dynamic_2009, buti_dark_2011}. Yet, these models assume either non-freedom of choice for traders or exogenous prices.  Our study, on the other hand, considers free venue selection for traders and endogenous prices. This paper is closely related to \citet*{zhu_dark_2013} whose trading protocols are the same as ours. But unlike \citet*{zhu_dark_2013} who considers an exact information structure, we examine a noisy information structure. Under this noisy information structure, we predict different results in price discovery and other measures from \citet*{zhu_dark_2013}. When the informational noise is absent in our model (i.e., information noise converges to zero), our prediction of price discovery coincides with \citet*{zhu_dark_2013}'s. Our paper is also related but divergent from \citet*{ye_glimpse_2011}. Whereas our model considers free selection of traders, \citet*{ye_glimpse_2011} assumes that uninformed traders are not subject to free-choice between different venues, and hence the corresponding piece of the pricing mechanism is missing. In our model, if we fix the choices of uninformed traders and only allow informed traders to choose between venues, our prediction also coincides with \citet*{ye_glimpse_2011}.

Empirical works report conflicting results regarding dark pool impact on price discovery. These results are within the predictions of our study. For example, \citet*{buti_diving_2011}, \citet*{jiang_market_2012}, and \citet*{fleming_order_2013} support an improvement for price discovery with dark trading, while \citet*{hatheway_empirical_2013}, and \citet*{weaver_trade-at_2014} discover a diminishment in price informativeness. Also, \citet*{jones_island_2005} find a negative impact for dark trading on price discovery, while \citet*{comerton-forde_dark_2015-1} find that, cross-sectionally, dark pool trading improves price discovery when the proportion of non-block dark trades are low (below 10\%, suggesting a low fraction of informational content) and harms price discovery when the proportion of non-block dark trades is high.

There are also other empirical studies that focus on dark pool operation and other measures of market quality. Some papers analyze the information content of dark pool trades. For example \citet*{peretti_is_2014} find that dark trades can predict price movement. Some study the trade-offs of dark trading. For example, \citet*{gresse_effect_2006}, \citet*{conrad_institutional_2003}, \citet*{naes_equity_2005} , and \citet*{ye_non-execution_2010} study the execution probability in dark pools.  Another category studies the association between dark trading and the exchange spread. My study predicts the same inverted U-shape as \citet*{ray_match_2010} and \citet*{preece_dark_2012}. My study also suggests a cross-sectional variance and provides insights in explaining the contradictory results reported in other papers. For example  \citet*{asic_dark_2013}, \citet*{comerton-forde_dark_2015-1}, \citet*{degryse_impact_2011}, \citet*{hatheway_empirical_2013}, and \citet*{weaver_trade-at_2014} find a positive association while \citet*{ohara_is_2011} and \citet*{ready_determinants_2014-1} find a negative association between dark pool market share and exchange spread. Others find cross-sectional differences \citep[see, e.g.,][]{nimalendran_informational_2014, buti_diving_2011}. A more detailed discussion of the relationship between our predictions and the current empirical literature is provided in Section \ref{sec_testables}.

\section{Dark Pools: An Overview}
\label{sec_dp_overview}
Over the last decade, numerous trading platforms have emerged to compete with the incumbent exchanges. Today, in the U.S. investors can trade equities in approximately 300 different venues. According to TABB (Oct. 2015),\footnote{ ``US Equity Market Structure: Q2-2015 TABB Equity Digest,'' TABB Group, Oct. 2015.} as of June 2015, there are 11 exchanges, 40 active dark pools, a handful of ECNs, and numerous broker-dealer platforms that are operating as equity trading venues in the U.S. \footnote{In Europe, according to \citet*{gomber_mifid_2010} there are around 32 dark pools operating in
 equity markets. In Australia, from \citet*{asic_dark_2013}, there are 20 dark trading venues operating. }.

Among those venues, dark pools are a type of equity trading venue that does not publicly disseminate the information about their orders, best price quotations, and identities of trading parties before and during the execution.\footnote{Although the information about orders are hidden before trade, the after executed trades are not: executed trades are recorded to the consolidate tape right after the trade. SEC requires reporting of OTC trades in equity securities within 30 seconds of execution. Also, dark pools are required to report weekly aggregate volume information on a security-by-security basis to FINRA.}\footnote{SEC Reg NMS Rule 301 (b) (3) requires all alternative trading systems (ATSs) that execute more than 5\% of the volume in a stock to publish its best-priced orders to the consolidated quote system. However, it only applies if the ATS distributes its orders to more than one participant. If it does not provide information about its orders to any participants, it is exempt from the quote rule.}\footnote{Electronic Communication Networks (ECNs) are registered as a type of ATS. But unlike dark pools, ECNs display orders in the consolidated quote stream.} The term ``dark'' is so named for this lack of transparency. Dark pools emerged as early as the 1970s as private phone-based networks between buy-side traders (See \citet*{degryse_dark_2013}). In the early days, the success of these trading venues was limited, but this has changed substantially in the last decade. Dark pools have experienced a rapid growth of trading activity in the U.S., Europe and Asia-Pacific area. Figure \ref{fig_countryvolume} shows the annual data on the market share of dark pool trading as of the consolidated volume in the U.S., Europe, and Canada, updated to 2015.\footnote{We estimated Canadian dark pool market share from ``Report of Market Share by Marketplace--Historical (2007-2014),'' IIROC, Aug 2015, ``Report of Market Share by Marketplace (historical 2015--Present),'' IIROC, May, 2016. Precisely, we estimate the market share of the following 4 dark pools operating in Canada: Liquidnet, Matchnow, Instinet, and SigmaX Canada. } According to the data, the U.S. market share of dark pools increased from about 7.51\% in 2008 to 16.57\% in 2015. The dark pool market shares in Europe and Canada are less, but they exhibit the same growth trend. In Australia, according to the Australian Securities \& Investments Commission \citep{asic_dark_2013}, as of June 2015 dark liquidity consists of 26.2\% of total value that traded in Australian equity market.\footnote{Australian Securities \& Investments Commission, ``Equidity Market Data,'' June 2015.  The number contains 12\%  block size dark liquidity and 14.1\% non-block size dark liquidity. It describes all the hidden orders in the markets including those in exchanges and dark pools.}

\begin{figure}[h]
\centering
\includegraphics[width=.85\textwidth]{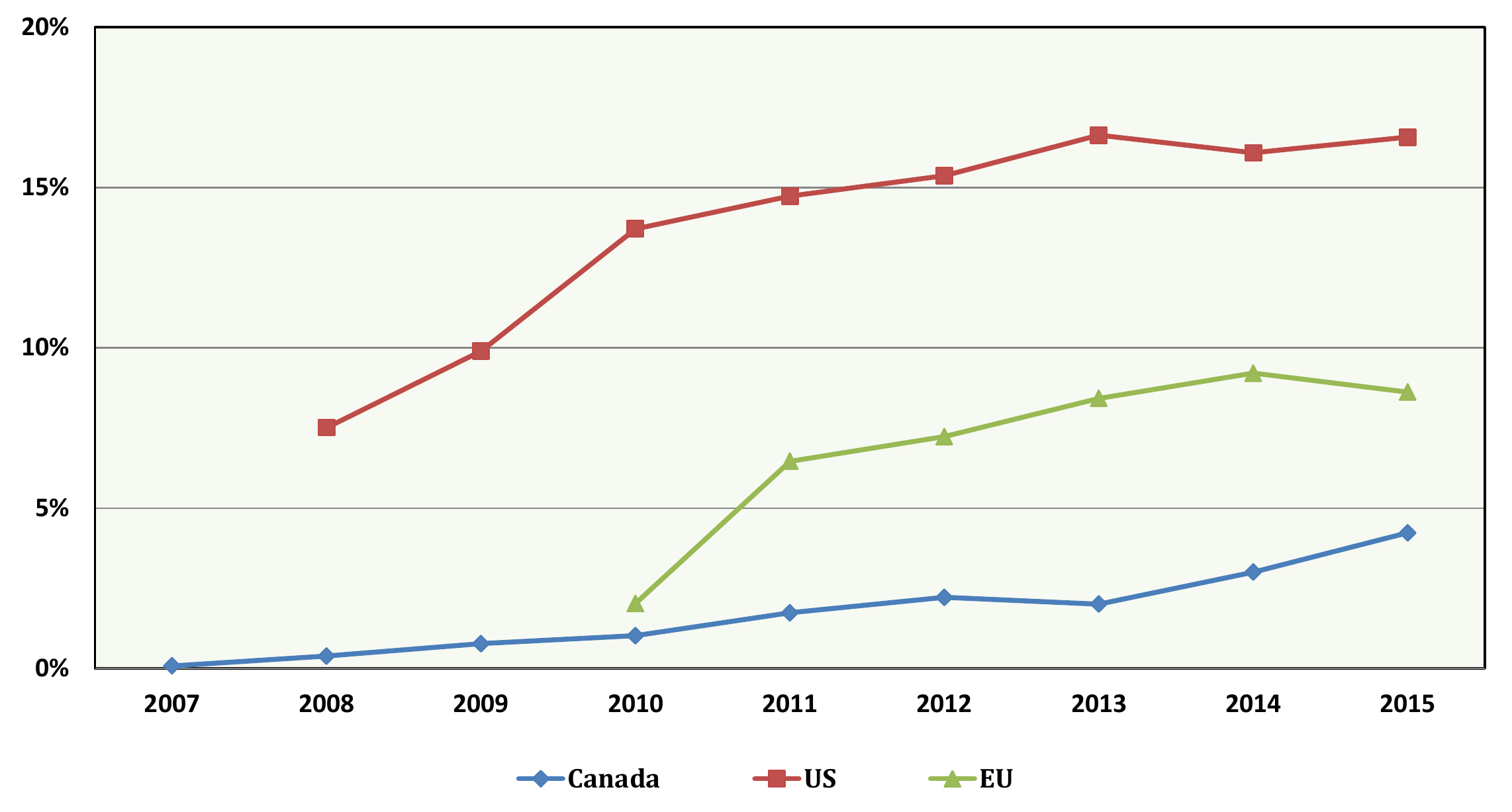}
\caption{\footnotesize{\textbf{Dark Pool Market Share}. The plot shows the annual data of dark pool volume as a percentage of the total consolidated volume in the US, Europe, and Canada. \\
\textbf{Data source}: US data (2008 - 2015) is from ``Rosenblatt Securities: \textit{Let There Be Light}, January 2016 Issue'' and Europe data (2010 - 2015) is from ``Rosenblatt Securities: \textit{Let There Be Light -- European Edition}, January 2016 Issue''. Figures in Canada (2007 - 2015) are derived from reports of IIROC.}}
\label{fig_countryvolume}
\end{figure}

One reason behind the rapid growth of dark pool trading is the technology development in electronic trading algorithms. Advances in technology have made it easier to automatically optimize routing and execution according to different sets of considerations and trading protocols. Another reason for the proliferation is the regulation changes that have been made to encourage competition between trading venues. For example, in the U.S., Regulation NMS (National Market System) was revised and reformed in 2005 to encourage the operation of various platforms, and as a consequence, a wide variety of trading centers have been established since then. Another example is the introduction of the Market in Financial Instruments Directive (MiFID) in the European Union in 2007, which spurred the creation of new trading venues, including dark pools.\footnote{In recent years, however, as the debate about dark pool usage has escalated, many countries have started to consider restrictions on dark trading. For example, Canada and Australia have required dark liquidity to provide a ``meaningful price improvement'' of at least one trading increment (i.e., one cent in most major markets), and US regulators have also been contemplating imposing such restrictions. In recent years, US regulators start to strengthen law enforcement against dark pools and urged their upgrading in operation. These cases include UBS Securities (Jan 2015), Goldman Sachs Execution \& Clearing, L.P (SIGMA X, July 2015), and Barclays (Jan 2015).}

There are two key commonalities in dark pools' operating protocols: the pricing mechanism and execution mechanism. First, dark pools generally do not provide price discovery. Instead, they typically use a price derived from an existing primary market as their transaction price. The most commonly used pricing mechanism is the mid-point mechanism: a pricing method to cross orders at the concurrent mid-point of the National Best Bid and Offer (NBBO).\footnote{\citet*{nimalendran_informational_2014} document the usage of such a pricing mechanism in their dark trading sample and find that not all trades are at the midpoint of NBBO, but about 57\% transactions are within .01\% of the price around the midpoint. In this paper, we follow the majority and adopt the mid-point pricing mechanism. } Second, unlike exchanges where orders are cleared at the exchange price, in most of the dark pools, orders don't clear.  Instead, dark pools adopt a ``rationing'' mechanism to execute orders. That is, traders anonymously place unpriced orders to the pool, and the orders are matched and executed probabilistically -- orders in the shorter side are executed for sure, whereas orders in the longer side are rationed probabilistically.

The pricing and execution mechanisms of dark pools' operation reflect the trade-off of trading in a dark pool for an individual trader. On the one hand, dark pools have lower transaction costs than exchanges (typically because orders are executed within the NBBO, with the ``trade-at rule'' further enhancing such price improvement), and lessen the price impact for big orders. On the other hand, investors suffer a lower execution rate compared with the exchange. \citet*{gresse_effect_2006} found that the execution probability in the two dark pools in his dataset was only 2-4 percent, while \citet*{ye_non-execution_2010} documents a dark pool execution probability of 4.11\% (NYSE listed) and 2.17\% (NASDAQ listed) in his dataset, in comparison with a probability of 31.47\% and 26.48\% for their exchange counterparts.\footnote{Nowadays, a rising concern of dark pools is their vulnerability to predatory trading by High Frequency Traders (HFTs) (See \citet*{mittal_are_2008}, \citet*{nimalendran_informational_2014}, \citet*{asic_dark_2013} for instance.)}.

The dark pools' participating constituent base has evolved over time. In the early years, dark pools were designed as venues where large, uninformed traders transact blocks of shares to reduce price impact. This is possible because dark pools are not subject to NMS fair access requirements and can thus prohibit or limit access to their services (see Reg ATS Rule 301(b)(5)). In recent years, however, this has changed greatly. According to an industry insider in Rosenblatt Securities Inc., ``it can be assumed that most pools are open to most investors connecting to the pool, provided the investors do not violate any codes of conduct.'' A measure of such a change is reflected in the trading sizes of dark pools.  Figure  \ref{fig_ussize} shows the average trading size in the U.S. According to the data, the US average trading size in dark pools and exchanges (NYSE and NASDAQ) have been started to converge since 2011, highlighting the fact that the participating constituents in these venues have become more and more similar. It implies that the exclusivity of a dark pool to informed traders has been weakened . As a result, more prominence has been attached to the issue of the potential impact of dark pools on price discovery, because as more informed traders obtain access to dark pools, their migration to dark pools may hurt the information aggregation process in the exchange,\footnote{This paper, as well as \citet*{zhu_dark_2013} considers full access for informed traders.}.
\begin{figure}[h]
\centering
\includegraphics[width=.85\textwidth]{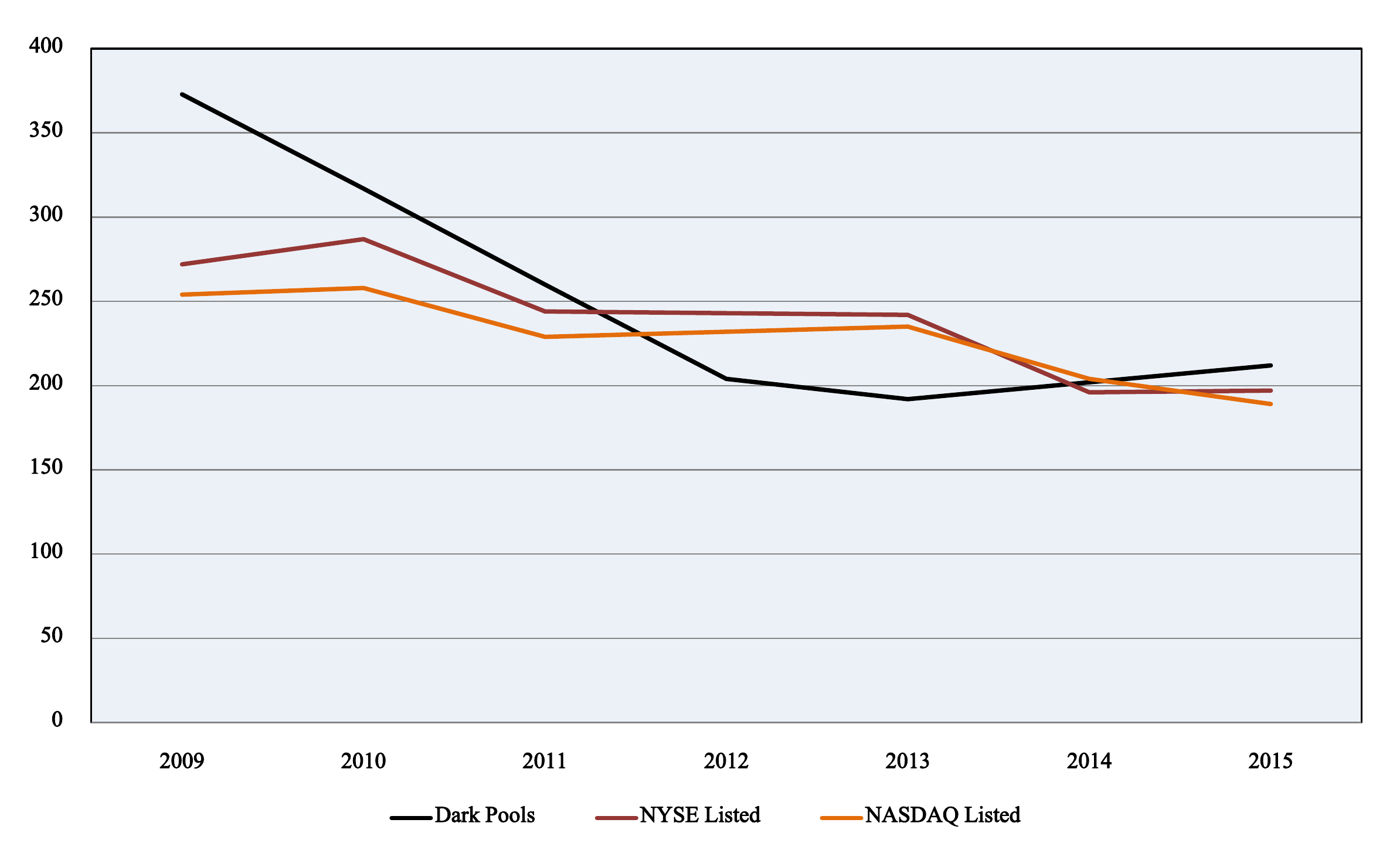}
\caption{\footnotesize{\textbf{Average Trade Size}. The plot shows the annual average trade size of US dark pools, NYSE and NASDAQ, from 2009 to 2015. \\\textbf{Data source}: Rosenblatt Securities.}}
\label{fig_ussize}
\end{figure}

Dark pools are heterogeneous. The types of dark pools can be classified according to different characteristics based on their ownership structure, pricing access, operation mechanism, constituency and other factors. All of these categories are in constant flux for the dark pools. Most of the pools also overlap in one or more categories as well, only the owner types remain constant overtime. We provide a discussion on some characteristics and their examples.

\textbf{(i) Pricing. } Dark pools use three primary pricing mechanisms. The execution will take place once two sides of a suitable trade are matched. The three pricing mechanisms are automatic pricing (usually at the midpoint of the best bid and offer), derived pricing (for example, average price during the last five minutes), and negotiated pricing (for example, Liquidnet Negotiatoin offers availability of one-to-one negotiation of price and size).

\textbf{(ii) Order Type.} There are primarily three types of order that prevails in dark pools: limit orders (to buy or to sell a security at a desired price or better), peg orders (peg to the NBBO, for example midpoint or alternate midpoint,\footnote{Traders are able to specify premiums or discounts vis-$\grave{a}$-vis the mid when placing a trade. For example, a motivated buyer may specify an order that promises to pay the mid plus a penny. This would give this trade priority over all other buy orders.}) and immediate or cancel order (IOC). A dark pool may accept a subset of these order types. Pools that accept limit orders may offer some price discovery (usually within the NBBO). These pools include, for example, Credit Suisse's CrossFinder, Goldman Sachs' Sigma X, Citi's  Citi Cross, and Morgan Stanley's MS Pool. Pools that execute peg orders do not provide price discovery. These include, for example, Instinet, Liquidnet, and ITG Posit. Pools accepting only IOC orders are single dealer platforms (SDP), where the operator works as market makers and customers interact solely with the operator's own desk (for example, Citadel Connect and Knight Link by KCG\footnote{Getco LLC once operated an SDP called GetMatched.  Following the 2013 merger of Knight Capital Group and Getco LLC, GetMatched was decommissioned.}).

\textbf{(iii) Execution Frequency and Order Information.} There are three modes of execution: scheduled crossing, continuous blind crossing, and indicated market.\footnote{See \citet*{decovny_dark_2008}.} The scheduled crossing networks include BIDS, ITG POSIT Match, and Instinet US Crossing.  In scheduled crossing networks, the two sides of a trade cross during a set period. These networks typically do not display quotes but may have an order imbalance indicator. Continuous blind crossing networks continuously cross orders for which no quotes are given. Indicated markets cross orders using participants' indications of interest (IOIs) and provide some level of transparency in order to attract liquidity. Liquidnet and Merrill Lynch offer variations on this theme.\footnote{Pipeline, a well-known dark pool using IOIs, settled allegations that it misled customers and was shut in May 2012.}

\textbf{(iv) Customer Base and Exclusivity.} There are dark pools which design their rules and monitor trading in an attempt to limit access to buy-side (natural contra-side) institutional investors. According to \citet*{boni_dark_2013}, Liquidnet ``Classic'' is one of those. A measure of the exclusivity is the average trading size of a dark pool. In May 2015, among the 40 active dark pools operating in the US, there are 5 dark pools in which over 50\% of their Average Daily Volumes are block volume (larger than 10k per trade). Those pools can be regarded as ``Institutional dark pools,'' and they include Liquidinet Negotiated, Barclays Directx, Citi Liquifi, Liquidnet H20, Instinet VWAP Cross, and BIDS Trading. Other dark pools have  percentages of block volumes less than 15\%, with most of them lower than 2\%.\footnote{``Let There Be Light , Jun 2015,''  Rosenblatt Securities, Inc. }

\textbf{(v) Ownership Structure.} According to Rosenblatt (2015), dark pools can be classified into four categories according to their ownership structure. This is the only classification that does not fluctuate over time. The four categories include the Bulge Bracket/Investment bank, Independent agency, Market maker, and Consortium-sponsored. In May 2015, The market shares of the four categories are, respectively, 55.28\%, 24.11\%, 13.79\%, and 6.82\%. Examples of the Bulge Bracket/Investment bank-owned dark pools are CS Crossfinder, UBS ATS, DB SuperX, and MS Pool. Independent agency owned pools include, for example,  ITG POSIT, Instinet CBX, ConvergEx Millennium. Market maker owned pools include Citadel Connect and Knight Link by KCG, and Consortium-sponsored pools include Level and BIDS. \footnote{``Let There Be Light , Jun 2015,''  Rosenblatt Securities, Inc. }

Finally, ``dark pools liquidity'' is not equivalent to ``dark liquidity.'' Dark liquidity, or dark volume, is a broader concept since it measures the total non-displayed market volume. Exchanges, for example, can contain ``dark'' volumes, which are applied through iceberg orders and workup processes. According to the TABB group' classification, dark volume can break down into retail-wholesaler, dark pool volume, and hidden exchange volume. As of Q2-2015, the percentages of each are 40.1\%, 39.7\%, and 20.2\% respectively. In total, the dark volume was 43.9\% of the consolidated volume.\footnote{``US Equity Market Structure: Q2-2015 TABB Equity Digest,'' TABB Group, Oct. 2015.}

\section{The Model}
\label{section_themodel}
The model considers an economy that lasts for three periods. We index the periods by 0, 1, and 2. There is one risky asset that is traded during the two periods with an uncertain fundamental value $$\tilde{v}=\left\{\begin{array}{cc}  -\sigma_v, &\mbox{ with probability } {1\over 2},   \\ \sigma_v,  &\mbox{ with probability } {1\over 2}. \end{array} \right. $$ That is to say, the risky asset has an unconditional mean zero and standard deviation $\sigma_v$.  In period 0, $\tilde{v}$ is realized, but this information is not revealed to the public.

There are two types of traders who are potentially interested in the risky asset: informed speculators and uninformed liquidity traders. We assume that they are all risk-neutral. There is a continuum of informed speculators with measure $\mu$, a continuum of uninformed liquidity buyers with measure $Z^+$, and a continuum of liquidity sellers with measure $Z^-$. We assume that $Z^+, Z^-$ are identical and continuously distributed random variables on $[0,+\infty)$, with mean $\frac{1}{2}\mu_z$. $Z^+, Z^-$ are also realized at period 0 so that liquidity buyers and liquidity sellers arrive at the market at the same time. The realizations of $Z^+, Z^-$ are not observed by any market participants.

In period 0, each informed speculator receives his or her own private signal regarding the value of the asset, $s_i=\tilde{v}+e_i$, where $i$ is the index of informed traders and $e_i$ represents the noise of the signal.\footnote{According to \citet*{gyntelberg_private_2009-1}, there are various types of private information that stock market investors may have about the fundamental determinants of a firm's value, including knowledge of the firm's products and  innovation prospect, management quality, and the strength and likely strategies of the firm's competitors. Private information may also include passively collected information about macro-variables and other fundamentals which may be dispersed among customers. Equity market order flow to a large degree reflects transactions by investors who are very active in collecting private information. A more detailed discussion is in section \ref{sec_policy}. } We assume that $e_i$ are identically independently distributed normal random variables, with mean 0 and standard deviation $\sigma_e$. Therefore, in the first period, they trade on both their private information and public information (if there is any). They can trade (either buy or sell) up to 1 unit of the asset. If there are more than one venue to trade, they can split their orders.  Without loss of generality, we assume that informed speculators only trade in period 1.\footnote{In period 2 when the informed traders' private information becomes public, they lost their information advantage. Since the informed agents are risk neutral and they only enter the market for profit, they will not actively place orders in the second period.}  The model is distinctive to \citet*{zhu_dark_2013} in the information structure. \citet*{zhu_dark_2013} assumes that all informed traders receive exact signals about the asset, whereas we consider a noisy information structure.\footnote{We do not consider information acquisition cost because it is modeled as a sunk cost in this paper. }  The introduction of a richer information structure is crucial to our analysis, not only because it is more realistic, but also because it reveals a sorting effect of market fragmentation on information. That is, in equilibrium, traders with strong signals trade in the exchange, traders with modest signals trade in the dark pool, and traders with weak signals do not trade. This sorting effect is the major economic force in the trader's venue-selection and the process of price discovery. The absence of such an effect will likely cause instability of predictions in multiple equilibra, such as discussed in \citet*{zhu_dark_2013}. A more detailed discussion is in Section \ref{sec_eqm_multi}.

A liquidity buyer (seller) comes to the market to buy (sell) 1 unit of the risky asset. Similarly, one can split their orders if there exist multiple transaction venues. The uninformed liquidity traders, however, do not have any private information. They enter the market to meet their liquidity demands. The level of their liquidity demand is measured by a delay cost, a cost that reflect how urgent one needs his or her order to be fulfilled in period 1. More precisely, if a liquidity trader, buyer or seller, cannot have his or her order executed in period 1, a delay cost is incurred. The delay cost (per unit) is represented by $\sigma_v  d_j $, where $j$ is the index for the liquidity traders. $d_j s$ are i.i.d random variables with a Cumulative Distribution Function $G(x): [0,\bar{d}]\rightarrow [0,1]$, where $G(x)\in C^2$, $1\leq\bar{d}<\infty$ and $G'(x)+xG''(x)\geq 0, \forall x \in [0,1]$. \footnote{This additional assumption is for the uniqueness of the equilibrium. It is satisfied by many commonly used distributions. For example, a uniform distribution. } Again, $d_j s$ are realized at period 0.

There are two venues for traders to trade: an exchange (the Lit market) and a dark pool.  We will then consider a benchmark model where there is only one trading venue for the agents -- the exchange only.  By comparing our model with the benchmark model, we are able to study the impact of a dark pool to the public exchange, and the interaction between the two venues. We now specify the transaction rules in the two venues and the problems of each type of traders.

Finally, the distributions of $\tilde{v},Z^+, Z^-, \{e_i\}, \{d_j\}$ are all publicly known information.

\subsection{Transaction rules in the exchange (Lit market)}
A lit market is an exchange for the asset. The exchange is modeled in the spirit of \citet*{glosten_bid_1985}. Precisely, in the lit market, there is a risk neutral market maker who facilitate transactions. The objective of the market maker is to balance his or her budget. The market maker has no private information. Therefore, at period 0, the market maker announces a bid and an ask price for the risky asset, based only on public information. The announced bid and ask price will be the prices for any order submitted to the exchange in period 1, and will be committed by the market maker. Because of symmetry of $\tilde{v}$ and the fact that the unconditional mean of $\tilde{v}$ is zero, the midpoint of the market maker's bid and ask is zero. Therefore, the ask price in the lit market is some $A>0$, and bid price in the lit market is $-A$.  That is, the half-spread is represented by $A$. We normalize $A$ by the standard deviation of $\tilde{v}$, $\frac{A}{\sigma_v}$, and get the normalized half-spread. For simplicity, we refer to $A$ as the ``spread,'' and $\frac{A}{\sigma_v}$ as the ``normalized spread.'' The spread represents a transaction cost in the lit market, because all traders, buyers or sellers, lose $A$ dollars (per unit) to the market maker whenever they trade on the exchange. Thus, alternatively, we also refer to $A$ as the (per unit) ``exchange transaction cost'' and $\frac{A}{\sigma_v}$ as the (per unit) ``normalized exchange transaction cost.''

In period 1, since informed speculators hold some information advantage about the asset, the market maker may lose money to the informed traders ex post. For example, if the realized value of the asset is $\sigma_v$, then the market maker loses money if he is trading against a ``Buy'' order. Precisely, let $\overline{\gamma_e}, \underline{\gamma_e}$ be the respective fraction of informed speculators who place ``Buy'' and who place ``Sell'' orders on exchange, and let $\alpha_e$ be fraction of uninformed liquidity traders who trade in the exchange. For now we assume that they do not split orders among venues, then WLOG if the realized value of $\tilde{v}$ is $\sigma_v$, the ex post payoff of the market maker is
$$ \mbox{MM payoff}=\sigma_v[(\underline{\gamma_e}\mu -\overline{\gamma_e}\mu) +( \alpha_e Z^- -\alpha_e Z^+)]+ A[\overline{\gamma_e}\mu+\underline{\gamma_e}\mu+\alpha_e Z^+ +\alpha_e Z^-],$$
where the first term is the market maker's profit on the asset. It is composed of the net gain from the informed traders, $\underline{\gamma_e}\mu -\overline{\gamma_e}\mu$, and the net gain from the uninformed traders, $ \alpha_e Z^- -\alpha_e Z^+$. The second term is the gains obtained from the transaction fee (spread) per every exchange order. If the realized value of the asset is $-\sigma_v$, by symmetry, the market maker's payoff shall be the same as above. In this way, we also refer to $\overline{\gamma_e}$ as the fraction of informed who ``make money'' (trade in the ``right direction''), and $\underline{\gamma_e}$ the fraction of informed who ``lose money'' (trade in the ``wrong direction'').

A market maker's objective is to break even on average.\footnote{One can think of this as as result of the competition among market makers. For simplicity, we assume that there is one market maker operating.} That is,
$$ 0 = \mathbb{E} \left\{\sigma_v[(\underline{\gamma_e}\mu -\overline{\gamma_e}\mu) +( \alpha_e Z^- - \alpha_e Z^+)]+ A[\overline{\gamma_e}\mu+\underline{\gamma_e}\mu+\alpha_e Z^+ +\alpha_e Z^-]\right\}.$$

Since  $\mathbb{E}Z^+ = \mathbb{E} Z^- = \frac{1}{2}\mu_z$, the market maker's objective becomes
$$0= \sigma_v(-\overline{\gamma_e}+\underline{\gamma_e})\mu+ A[(\overline{\gamma_e}+\underline{\gamma_e})\mu+\alpha_e \mu_z].$$
It implies that,
\begin{align}
A& = \frac{\overline{\gamma_e}-\underline{\gamma_e}}{\overline{\gamma_e}+\underline{\gamma_e}+\alpha_e \frac{\mu_z}{\mu}} \sigma_v.
\label{spread}
\end{align}

If $\overline{\gamma_e}\geq \underline{\gamma_e}(\geq 0)and \alpha_e\geq 0$, then the normalized spread $0\leq \frac{A}{\sigma_v} \leq 1$. In the next sections, we will show that in equilibrium, $\overline{\gamma_e}>\underline{\gamma_e}$. In other words, informed traders are more likely to ``make money'' (trade in the ``right direction''). Intuitively this is true because of their information advantage. Therefore, on average, the market maker loses money to the informed.

At the end of period 1, the market maker observe the exchange volumes $V_b, V_s$ for ``Buy'' volume and ``Sell'' volume respectively. Based on such information, the market maker then announces a closing price $P_1 = \mathbb{E}[\tilde{v}|V_b, V_s]$, which we consider as a proxy for the fundamental value of the asset $\tilde{v}$.  This is because $\mathbb{E}[\tilde{v}|P_1, V_b, V_s]=\mathbb{E}[\mathbb{E}[\tilde{v}|P_1, V_b, V_s]|P_1]=P_1$. We are interested in how much the price $P_1$ can aggregate information in the market (price discovery), that is, how close $P_1$ is to the true value of the asset.

In period 2, since the realization of $\tilde{v}$ has already been revealed, all trades  will be made at the price that is equal to that realization. Thus, the payoff of the market maker in period 2 is automatically zero.

The reason we model the exchange as a market maker instead of other trading protocols such as limit order books is for the same reason as \citet*{zhu_dark_2013}. It is a simple but tractable way to capture the basic trade-off of dark pools. These trade-offs include lower transaction costs (lower spread) and higher execution risks, which is common to most trading protocols.
\subsection{Transaction rules in the dark pool}
We consider the operational costs of the dark pool as a sunk cost, and hence not considered in the model. Also, we normalize the entry fee of a dark pool as zero. The trading protocols in the dark pool we consider, include the pricing mechanism, which refers to on what price the dark pool execute orders, and the execution mechanism, which refers to how to match the buying and selling orders.\footnote{As in section \ref{sec_dp_overview}, we point out that not all dark pools are equal. There might be other features that investors concern. But for simplicity we focus on the two major aspects of a dark pool.}

We restrict our attention to dark pools of a particular pricing mechanism: the midpoint pricing. That is, the orders in the dark pool are crossed at the midpoint of the bid-ask in the exchange.  Since the midpoint of the exchange price is 0, the transaction price in the dark pool is 0. The midpoint pricing mechanism is a reflection of an advantage trading in the dark pool: price improvement. As we point out previously, a trader has to pay a transaction cost (the spread) $A$ on the exchange, no matter at which direction he or she is trading. But in the dark pool, such cost is reduced to 0.

The execution mechanism we consider in this paper is a rationing mechanism. That is, orders in the shorter side are executed with probability one, whereas orders in the longer side are executed probabilistically to balance the market. For example, suppose the realization of $\tilde{v}$ is $\sigma_v$ (the case when $\tilde{v}=-\sigma_v$ is symmetric). Let $\overline{\gamma_d}$, $\underline{\gamma_d}$ be the fractions of informed speculators who trade in the ``right direction'' and ``wrong direction'' respectively, $\alpha_d$ be the fraction of uninformed liquidity traders who trade in the dark pool in period 1, then the respective expected execution rates (taken with respect to $Z^-, Z^+$) for trading in the ``wrong direction'' and in the ``right direction'' are: ,
\begin{align}
\bar{R}&=\mathbb{E}\left[\min\left\{1, \frac{\overline{\gamma_d}\mu + \alpha_d Z^+}{\underline{\gamma_d}\mu + \alpha_d Z^-} \right\} \right],
\label{R_upper}\\
\underline{R}&= \mathbb{E}\left[\min\left\{1, \frac{\underline{\gamma_d}\mu + \alpha_d Z^-}{\overline{\gamma_d}\mu + \alpha_d Z^+} \right\}\right].
\label{R_lower}
\end{align}
Therefore, $\underline{R},\bar{R} \in [0,1]$. The execution mechanism in the dark pool reflects a disadvantage of trading in the dark pool: execution risk. On average, one cannot expect that his or her orders be executed with probability 1 in a dark pool. In contrast, the market maker in the exchange is able to provide such certainty.

Moreover, as we will show in the next section, $\overline{\gamma_d}>\underline{\gamma_d}$. This means that the information asymmetry exists in the dark pool and informed traders are more likely to trade in the ``right direction.'' Therefore, $ \underline{R}\leq \bar{R}$. That is to say, orders that are in the ``right direction'' are less likely to be executed than orders that are in the ``wrong direction.'' In this way,  we obtain a measure of dark pool adverse selection cost in the dark pool by
\begin{align*}
(\bar{R}-\underline{R})\sigma_v
\end{align*}
We therefore refer to $\bar{R}-\underline{R}$ as the ``Normalized dark pool adverse selection cost.''

Without loss of generality, we assume that the dark pool operates only in period 1. In period 2, since the realization of $\tilde{v}$ is revealed, orders in the exchange are executed at that realized value. The dark pool loses its advantage and becomes redundant as nobody is willing to trade there. Therefore, unless cancelled, orders that failed to execute in period 1 will be routed to the exchange and executed there in period 2.

\subsection{The informed speculators' problem}
As we point out, the informed traders only participate in period 1, when they can use their private information to their advantage. Upon the reception of a signal, the informed speculators update their beliefs about the asset fundamental value using Bayes' rule. Let $B(s)$ be the probability that the realization is high ($\sigma_v$), conditional on signal $s$, then by Bayes' rule,
\begin{align}
B(s)=\Pr(\tilde{v}=\sigma_v|s)=\frac{\phi(\frac{s-\sigma_v}{\sigma_e})}{\phi(\frac{s-\sigma_v}{\sigma_e})+\phi(\frac{s+\sigma_v}{\sigma_e})},
\label{Belief}
\end{align}
where $\phi(x)$ is the pdf of a standard normal distribution function. $B(s)\in (0,1)$ and $B(s)$ is strictly increasing in $s$.

Consider an informed trader with signal $s$,  given the exchange spread, $A$,  and the dark pool execution probabilities, $\bar{R}, \underline{R}$, the expected (per unit) ``Buy'' and ``Sell'' profit in each venue, or do not trade, are respectively,
\begin{alignat*}{3}
&\mbox{Exchange(Lit):}&& \mbox{ ``Buy'': }B(s)\sigma_v-(1-B(s))\sigma_v -A ,\\
&                    &&   \mbox{ ``Sell'': }  -[B(s)\sigma_v-(1-B(s))\sigma_v]-A .\\
&\mbox{Dark pool:}&& \mbox{ ``Buy'': }B(s)\underline{R}\sigma_v-(1-B(s))\bar{R}\sigma_v,\\
&                 &&  \mbox{  ``Sell'': } -[B(s)\underline{R}\sigma_v-(1-B(s))\bar{R}\sigma_v].\\
&\mbox{Not trade:}&& 0.
\end{alignat*}

An informed speculator's problem is then, given his or her signal $s$, to choose a trading direction in $\{``Buy", ``Sell"\}$, the quantity in each venue \{Exchange(Lit),  Dark pool, Do not trade\} to maximize his or her total expected payoff, such that total quantity does not exceed 1 unit.\footnote{The case that the informed speculator simultaneously place ``Buy'' and ``Sell'' orders in each venue is not considered,  because the agents have no individual impact to the market. By the linearity of the per unit profit in each venue, it is never optimal to do so.}

We argue that, in equilibrium, whenever he or she decides to trade, an informed trader will place a ``Buy'' order if his or her signal is positive, and a ``Sell'' order if his or her signal is negative. Moreover, almost surely it is optimal for him to send the entire order to one of the two venues, or not trade at all. The argument is summarized in Lemma \ref{non-split informed}.

\begin{lem}\textbf{(Trading direction and non-split orders, informed)}\footnote{A non-slit order is strictly preferred in this model. This is a stronger result than \citet*{zhu_dark_2013}, in which it is only weakly optimal to not split orders for the informed because they are all indifferent between the two venues. }
If an informed trader decide to trade, it is strictly optimal to ``Buy'' if his or her signal $s>0$ and to ``Sell'' if $s<0$. Moreover, with probability one, an informed trader strictly prefers to send the entire order to one of the two venues, or do not trade at all.
\label{non-split informed}
\end{lem}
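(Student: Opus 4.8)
The plan is to exploit the fact that each informed trader is infinitesimal and hence takes the spread $A$ and the dark-pool execution rates $\bar{R},\underline{R}$ as fixed; her per-unit profit in each venue is then a constant, independent of her own order size, and her problem collapses to a linear program: choose a direction in $\{\text{Buy},\text{Sell}\}$ and a quantity vector $(q_e,q_d)$ in the simplex $\{q_e,q_d\ge 0,\ q_e+q_d\le 1\}$ so as to maximize $q_e\pi_e+q_d\pi_d$, where $\pi_e,\pi_d$ are the corresponding per-unit profits. I would first settle the direction by a term-by-term comparison of the Buy and Sell profits in each venue, and then read off the non-split conclusion from linearity together with an atomlessness argument for the distribution of signals.

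For the \emph{direction}, recall from \eqref{Belief} that $B$ is strictly increasing with $B(0)=\tfrac12$ and $B(-s)=1-B(s)$ (the latter from the evenness of $\phi$), so that $2B(s)-1$ has the same sign as $s$. In the exchange the Buy-minus-Sell per-unit profit is $[(2B(s)-1)\sigma_v-A]-[-(2B(s)-1)\sigma_v-A]=2(2B(s)-1)\sigma_v$. In the dark pool I would recompute the Sell profit directly from the rationing mechanism: a sell order is the \emph{wrong}-direction order when $\tilde v=\sigma_v$ (probability $B(s)$, execution rate $\bar{R}$ from \eqref{R_upper}, payoff $-\sigma_v$) and the \emph{right}-direction order when $\tilde v=-\sigma_v$ (probability $1-B(s)$, execution rate $\underline{R}$ from \eqref{R_lower}, payoff $\sigma_v$), giving a per-unit Sell profit $-B(s)\bar{R}\sigma_v+(1-B(s))\underline{R}\sigma_v$. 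Subtracting this from the Buy profit $B(s)\underline{R}\sigma_v-(1-B(s))\bar{R}\sigma_v$ yields a dark-pool Buy-minus-Sell gap of $(2B(s)-1)(\bar{R}+\underline{R})\sigma_v$. Both gaps are nonnegative multiples of $2B(s)-1$, strictly positive in the exchange; hence for $s>0$ replacing any Sell plan by its Buy mirror never lowers, and (whenever the plan puts positive quantity on the exchange) strictly raises, profit. Since a strictly profitable plan must place weight where the per-unit profit is positive, this shows that conditional on trading, Buy is strictly optimal for $s>0$; the case $s<0$ is symmetric.

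For the \emph{non-split} claim, fix the optimal direction, say $s>0$ and Buy, with $\pi_e=(2B(s)-1)\sigma_v-A$ and $\pi_d=B(s)\underline{R}\sigma_v-(1-B(s))\bar{R}\sigma_v$. The objective $q_e\pi_e+q_d\pi_d$ is linear on the triangle with vertices $(0,0)$ (no trade), $(1,0)$ (entire order to the exchange) and $(0,1)$ (entire order to the dark pool), so a maximizer is attained at a vertex, and a genuine split across the two venues (an interior point of the edge joining $(1,0)$ and $(0,1)$) can be optimal only when $\pi_e=\pi_d$. Writing $\pi_e-\pi_d$ as an affine function of $B(s)$ with slope $(2-\bar{R}-\underline{R})\sigma_v$, which is nonzero except in the degenerate case $\bar{R}=\underline{R}=1$ (where the dark pool strictly dominates the exchange and no split can arise), I would conclude that $\pi_e-\pi_d$ is strictly monotone in $s$ and therefore vanishes on at most a single signal value; the same reasoning applies to the boundary ties $\pi_e=0$ and $\pi_d=0$.

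Finally, since $s=\tilde v+e$ is a half-and-half mixture of $N(\sigma_v,\sigma_e^2)$ and $N(-\sigma_v,\sigma_e^2)$, its law is absolutely continuous, so the finite union of these tie sets has probability zero; off this null set the linear program has a unique optimal vertex, giving the probability-one, strict conclusion that the trader routes her entire order to a single venue or does not trade. The step I expect to be the main obstacle is precisely this last one, namely upgrading ``weakly optimal at a vertex'' to the strict, almost-sure statement, which rests on the strict monotonicity of $\pi_e-\pi_d$ in the signal together with the atomlessness of the signal distribution; by contrast the direction result is immediate once the Sell profit is computed directly from the execution mechanism.
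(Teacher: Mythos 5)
Your proof is correct and its core — linearity of the objective in the quantity vector, so that an optimum sits at a vertex of the simplex, combined with the observation that ties between venues occur only where an affine function of $B(s)$ vanishes, hence on a finite set of signal values that is null under the atomless law of $s$ — is exactly the argument the paper gives. The one place you genuinely depart from the paper is the trading-direction step, and your version is the more careful one: the paper simply asserts that the ``Buy'' profit exceeds the ``Sell'' profit in each venue, and its displayed dark-pool ``Sell'' payoff is written as the exact negative of the ``Buy'' payoff, $-[B(s)\underline{R}\sigma_v-(1-B(s))\bar{R}\sigma_v]$, which does not follow from the rationing mechanism (a sell order faces execution rate $\bar{R}$ when it is in the wrong direction and $\underline{R}$ when it is in the right direction, not the other way around); under that formula the Buy-minus-Sell gap would be $2[B(s)\underline{R}-(1-B(s))\bar{R}]\sigma_v$, which can be negative for $s>0$ when $\bar{R}\gg\underline{R}$. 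Your recomputation from the mechanism gives the gap $(2B(s)-1)(\bar{R}+\underline{R})\sigma_v$, which has the sign of $s$ unconditionally, so the direction claim is secured where the paper's one-line assertion is not. The rest — the slope $(2-\bar{R}-\underline{R})\sigma_v$ argument for strict monotonicity of $\pi_e-\pi_d$ in the signal, and the absolute continuity of the mixture law of $s$ — matches the paper's reasoning in substance, just stated more explicitly.
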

The trading direction is rather straightforward since a positive signal indicates that the asset's fundamental value is more likely to be high (i.e., $\sigma_v$), and hence more profitable in a ``Buy'' direction, whereas a negative signals indicates a low value (i.e., $-\sigma_v$) and hence more profitable in a ``Sell'' direction. And, since each trader's signal is drawn from the same continuous distribution, and there is a continuum of informed traders, by law of large numbers, the realization of signals among them are continuously distributed. Therefore, the beliefs are distributed continuously. Since no individual has impact on the market, and the expected profit in each venue is linear in the agents' beliefs, it is with probability 1 that, for any informed trader with signal $s$, one venue (or not trade) is strictly better than others.

By Lemma \ref{non-split informed}, the potential trading direction is determined once an agent receives his or her signal. Moreover, the magnitude of $B(|s|)$ reflects the probability that this trading direction is ``right.'' Thus $|s|$ can be regarded as the strength of one's signal, and $B(|s|)$, can be regarded as the agent's confidence level in their information. A strong signal (i.e., a high $|s|$) represents a strong belief that the trading direction is ``right,'' whereas weak signals (i.e., low $|s|$)  represents a weak belief in the trading direction. We will show in the next section, how much credit an informed trader gives to his or her private information is crucial in determining his or her strategies of venue selection.

Based on an informed traders' signal strength, $B(|s|)$, the payoffs of trading in each venue and no trade are, respectively,
\begin{alignat}{3}
&\mbox{Exchange(Lit)}&&: B(|s|)\sigma_v-(1-B(|s|))\sigma_v -A ,
\label{payoff_informed_lit}\\
&\mbox{Dark pool}&&: B(|s|)\underline{R}\sigma_v-(1-B(|s|))\bar{R}\sigma_v,
\label{payoff_informed_dp}\\
&\mbox{Not trade}&&: 0.
\label{payoff_informed_no}
\end{alignat}

An informed agent's problem is then reduced to choosing one of the two venues and sending the entire 1 unit to it, with a trading direction specified in Lemma \ref{non-split informed}, or not trade at all, to yield the maximum payoff, based on his or her confidence level $B(|s|)$.

Finally, we define the strategy of an informed speculator who receives a signal $s$ by a mapping
$$h_I(s): (\infty, \infty) \rightarrow \{\mbox{``Buy''}, \mbox{ ``Sell''}\} \times \{\mbox{Exchange(Lit)},  \mbox{ Dark pool}, \mbox{ Not trade}\}.$$

\subsection{The uninformed liquidity traders' problem}
Liquidity buyer or seller types are specified by the level of their liquidity demand -- the (per unit) delay cost $d$. If the agent fails to have his or her order executed in period 1, he or she will bear a (per unit) cost of $\sigma_v d$.  Therefore a higher delay cost implies a higher demand for liquidity, and a higher devaluation on execution risk for the traders.

More precisely, a type $d$ uninformed liquidity buyer's (seller's) per unit payoffs of trading in the exchange, in the dark pool, or delaying trade are, respectively,
\begin{alignat}{3}
&\mbox{Exchange(Lit)}&&: -A,
\label{payoff_uninformed_lit}\\
&\mbox{Dark pool}&&: - {(\bar{R}-\underline{R})\over 2}\sigma_v - (1- {\bar{R}+\underline{R})\over 2})\sigma_v d ,
\label{payoff_uninformed_dp}\\
&\mbox{Delay trade}&&: -\sigma_v d.
\label{payoff_uninformed_no}
\end{alignat}

Similarly, we argue that in period 1, it is strictly optimal for any liquidity trader to send the entire order to one of the two venues, or delay the trade, almost surly. The argument is summarized in Lemma \ref{non-split uninformed}.
\begin{lem}
\textbf{(No split orders, uninformed)}A liquidity trader (buyer or seller) strictly prefers to send the entire order to one of the venues, or delay trade.
\label{non-split uninformed}
\end{lem}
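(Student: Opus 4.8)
The plan is to mirror the argument behind Lemma \ref{non-split informed}: once the aggregate quantities are fixed at their equilibrium values, each liquidity trader faces a \emph{linear} maximization problem over how to split the order, so the optimum is generically a vertex of the feasible simplex. First I would fix an equilibrium and treat $A$, $\bar{R}$ and $\underline{R}$ as given constants, using the fact that a single atomless trader has no impact on the spread or on the dark-pool execution rates. For a buyer (the seller case is symmetric) who splits one unit as $(q_L,q_D,q_N)$ with $q_L+q_D+q_N=1$ and $q_i\ge 0$ across Exchange, Dark pool and Delay, the three channels are independent in expectation, so the total expected payoff is the linear functional
$$
U(q_L,q_D,q_N)=q_L\,\pi_L+q_D\,\pi_D+q_N\,\pi_N,
$$
where $\pi_L,\pi_D,\pi_N$ are exactly the per-unit payoffs in \eqref{payoff_uninformed_lit}--\eqref{payoff_uninformed_no}. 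The one point I would check carefully is that $\pi_D$ is genuinely the per-unit payoff of \emph{each} unit routed to the dark pool: because the trader is atomless, routing $q_D$ units leaves $\bar{R},\underline{R}$ unchanged, and the expected delay cost on the non-executed fraction is already folded into $\pi_D$ through the coefficient $1-\frac{\bar{R}+\underline{R}}{2}$. This is the main obstacle, since it is what rules out any convexity in the order size that could make splitting strictly attractive.

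Next I would invoke the elementary fact that a linear functional on the probability simplex attains its maximum at a vertex; hence it is optimal to place the entire unit on whichever of the three options has the largest per-unit payoff, and the vertex maximizer is \emph{unique} precisely when that largest payoff is attained by a single option. It then remains to show this uniqueness holds for almost every type $d$. Viewing $\pi_L,\pi_D,\pi_N$ as functions of $d$, the first is the constant $-A$ and the other two are affine in $d$ with slopes $-\bigl(1-\frac{\bar{R}+\underline{R}}{2}\bigr)\sigma_v$ and $-\sigma_v$ respectively; any two of these three functions therefore agree on at most one value of $d$, the only degenerate exception being $\pi_D\equiv\pi_N$, which forces $\bar{R}=\underline{R}=0$ and is excluded once the dark pool is active. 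Thus the set of $d$ at which two options tie for the maximum is finite.

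Finally, since each $d_j$ is drawn from the continuous (atomless) distribution $G$, that finite exceptional set has probability zero, so with probability one a liquidity trader faces a strictly unique best option and therefore strictly prefers to send the whole order to one venue or to delay the trade. Beyond the linearity/atomlessness verification for the dark-pool channel flagged above, the remaining simplex and measure-zero arguments are routine, and the direction of the order need not be argued at all because a liquidity trader's buy-or-sell type is exogenously fixed.
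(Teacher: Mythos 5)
Your proposal is correct and follows essentially the same route as the paper: fix $A$, $\bar{R}$, $\underline{R}$ since an atomless trader cannot move them, observe that the total payoff is linear in the quantities routed to each option so a vertex of the simplex is optimal, and use the continuity of the distribution of $d$ to conclude that ties occur with probability zero, making the preference strict. The paper's appendix proof is terser (it relegates the measure-zero tie argument to the discussion following the lemma), but the substance is identical.
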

The intuition of Lemma \ref{non-split uninformed} is similar. Since all individuals are infinitesimal, no single trader has an impact on the market. For any liquidity trader, he or she either strictly prefers one venue over the other or is indifferent between two venues (or do not trade). Since the distribution of the delay cost $d$ is continuous, it is with probability one that one venue (or delay) is strictly better than the other.

By Lemma \ref{non-split uninformed}, a type $d$ liquidity buyer's (or seller's) problem is to maximize his or her payoff (i.e., minimize the costs), by choosing one of the venues in which trade the entire order in period 1, or to delay trade to period 2.

Moreover, we define the strategy of a type $d$ uninformed liquidity trader by a mapping:
$$h_{U,\iota}(d): [0,\bar{d}] \rightarrow \{\mbox{Exchange(Lit)},  \mbox{ Dark pool}, \mbox{ Delay trade}\},$$
where $\iota \in \{ \mbox{Buyer}, \mbox{ Seller}\}$

Finally, the trading timeline of the model is summarized in Figure \ref{timeline}. At period 0, the asset fundamental value $\tilde{v}$, the measure of liquidity buyers $Z^+$ and liquidity sellers $Z^-$, the signal for each informed trader $s_i$, the per unit delay cost for each uninformed trader $d_j$ are realized. But none of this information is public. Also, at period 0, the market maker announces the bid-ask prices with the spread $A$.  After that, traders select venues in which place orders,  which are executed according to the transaction rules in each venue. At the end of period 1, before the revelation or the value of the asset, the market maker announces a closing price of period 1, based on the volumes he observes in the exchange during that period. Then after the revelation of $\tilde{v}$, orders that failed to execute in period 1 are routed to the exchange (unless cancelled) and execute at the revealed value of $\tilde{v}$. The market is then closed.
\begin{figure}[h]
\centering
\includegraphics[width=\textwidth]{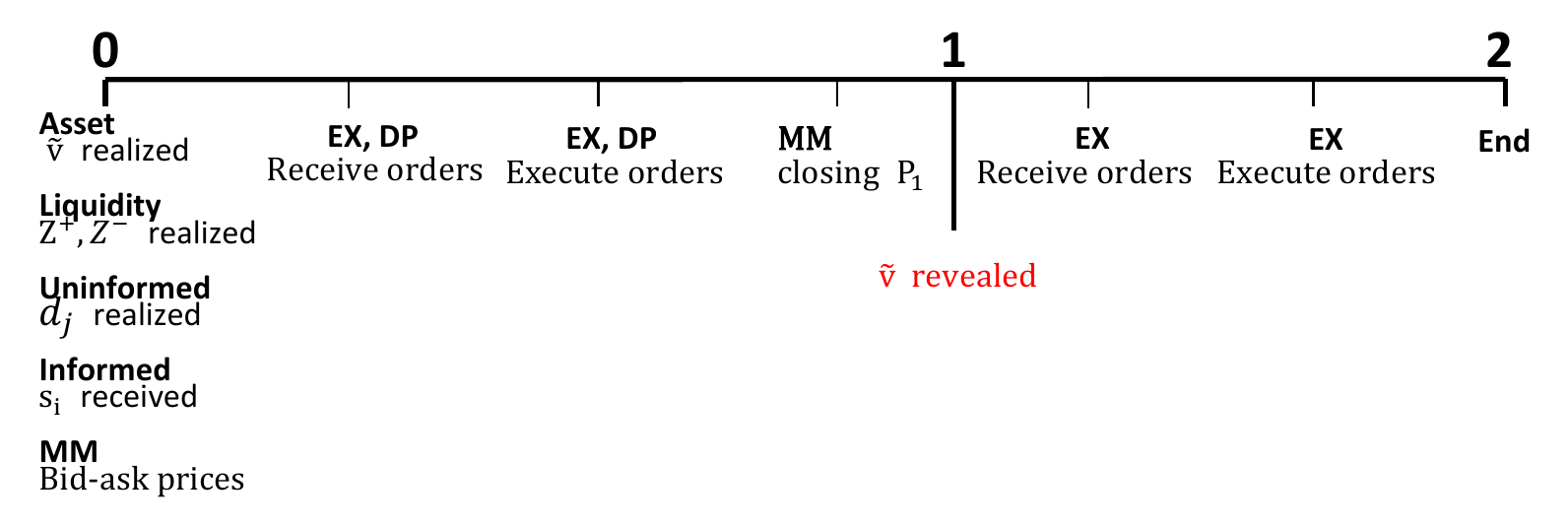}
\caption{Trading Timeline}
\label{timeline}
\end{figure}
\section{The Equilibrium}
The model we describe in Section \ref{section_themodel} assumes that both the exchange (Lit), and the dark pool are available to traders. We refer to it as the ``Multi-venue'' Model. We now introduce a benchmark in which there is only one venue that is operating: the exchange (Lit market). We refer to it as the ``Single-venue'' Model. The comparison between the two model in Section \ref{sec_compstatics} gives us insights into the impacts of dark pools to market behaviors.

\subsection{Benchmark model: without a dark pool}
\label{sec_benchmark_model}
In the benchmark model,  all else are the same except that the exchange (the lit market) is the only trading venue available for traders. Lemma \ref{non-split informed} and Lemma \ref{non-split uninformed} also hold in this model, i.e., traders do not split their orders. We use the superscription ``$\mathbb{S}$'' to denote the ``single venue'' model. The equilibrium is defined as follows:
\begin{defn}
\textbf{(Benchmark: without a dark pool)} An equilibrium of the ``Single-venue'' model is a strategy for the informed speculators, $h^{\mathbb{S}}_I(s)$, a strategy for the uninformed liquidity traders, $h^{\mathbb{S}}_{U,\iota}(d)$, $\iota \in \{\mbox{Buyer}, \mbox{ Seller}\}$, an exchange spread $A^{\mathbb{S}}$, a set of participation fractions $\overline{\gamma_e}^{\mathbb{S}}, \underline{\gamma_e}^{\mathbb{S}}, \alpha_e^{\mathbb{S}}$, such that
\begin{itemize}
\item (i) given $A^{\mathbb{S}}$, $h^{\mathbb{S}}_I(s)$ and $h^{\mathbb{S}}_{U,\iota}(d)$ are optimal, respectively, for an informed speculator with signal $s$ and for an uninformed liquidity trader with per unit delay cost $d$;
\item (ii)  given $\overline{\gamma_e}^{\mathbb{S}}, \underline{\gamma_e}^{\mathbb{S}}$, and $\alpha_e^{\mathbb{S}}$, the exchange spread $A^{\mathbb{S}}$ makes a market maker in the exchange break-even on average;
\item (iii) $\overline{\gamma_e}^{\mathbb{S}}, \underline{\gamma_e}^{\mathbb{S}}$ measure the respective fractions of informed traders who trade in the ``right'' and ``wrong'' direction in the exchange, and $\alpha_e^{\mathbb{S}}$ measures the period 1 exchange fraction of uninformed traders.
\end{itemize}
\label{eqn def3}
\end{defn}

Given $\overline{\gamma_e}^{\mathbb{S}}, \underline{\gamma_e}^{\mathbb{S}}$, and $\alpha_e^{\mathbb{S}}$, an exchange spread $A^{\mathbb{S}}$ that makes the market maker break even on average satisfies (\ref{spread}). That is,
\begin{align}
A^{\mathbb{S}}& = \frac{\overline{\gamma_e}^{\mathbb{S}}-\underline{\gamma_e}^{\mathbb{S}}}
{\overline{\gamma_e}^{\mathbb{S}}+\underline{\gamma_e}^{\mathbb{S}}+\alpha_e^{\mathbb{S}} \frac{\mu_z}{\mu}} \sigma_v.
\label{spread_benchmark}
\end{align}

Equation (\ref{spread_benchmark}) implies that if $\overline{\gamma_e}^{\mathbb{S}}\geq \underline{\gamma_e}^{\mathbb{S}}\geq 0$, and $\alpha_e^{\mathbb{S}}> 0$, then $\sigma_v\geq A^{\mathbb{S}}\geq 0$. Considering an informed trader with signal ``$s$,'' by Lemma \ref{non-split informed}, the optimal trading direction is to ``Buy'' if $s\geq 0$ and to ``Sell'' if $s<0$. Then given $A^{\mathbb{S}}$, The expected payoffs of trading in the exchange and do not trade are, respectively:
\begin{alignat*}{3}
&\mbox{Exchange(Lit)}&&: B(|s|)\sigma_v-(1-B(|s|))\sigma_v -A^{\mathbb{S}} ,\\
&\mbox{Not trade}&&: 0.
\end{alignat*}

\begin{figure}[h]
\centering
\begin{subfigure}[b]{0.48\textwidth}
\centering
\includegraphics[width=\textwidth]{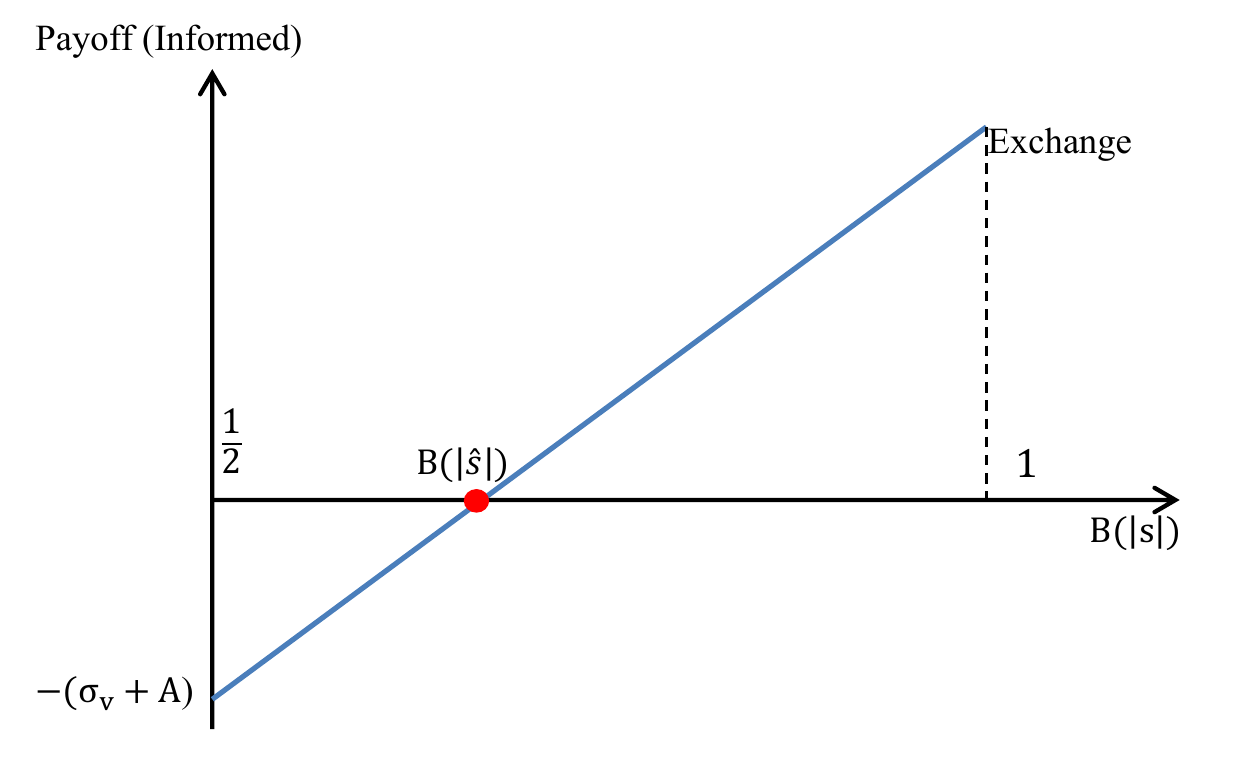}
\caption{Informed}
\label{fig_payoffs_informed_benchmark}
\end{subfigure}
\begin{subfigure}[b]{0.48\textwidth}
\includegraphics[width=\textwidth]{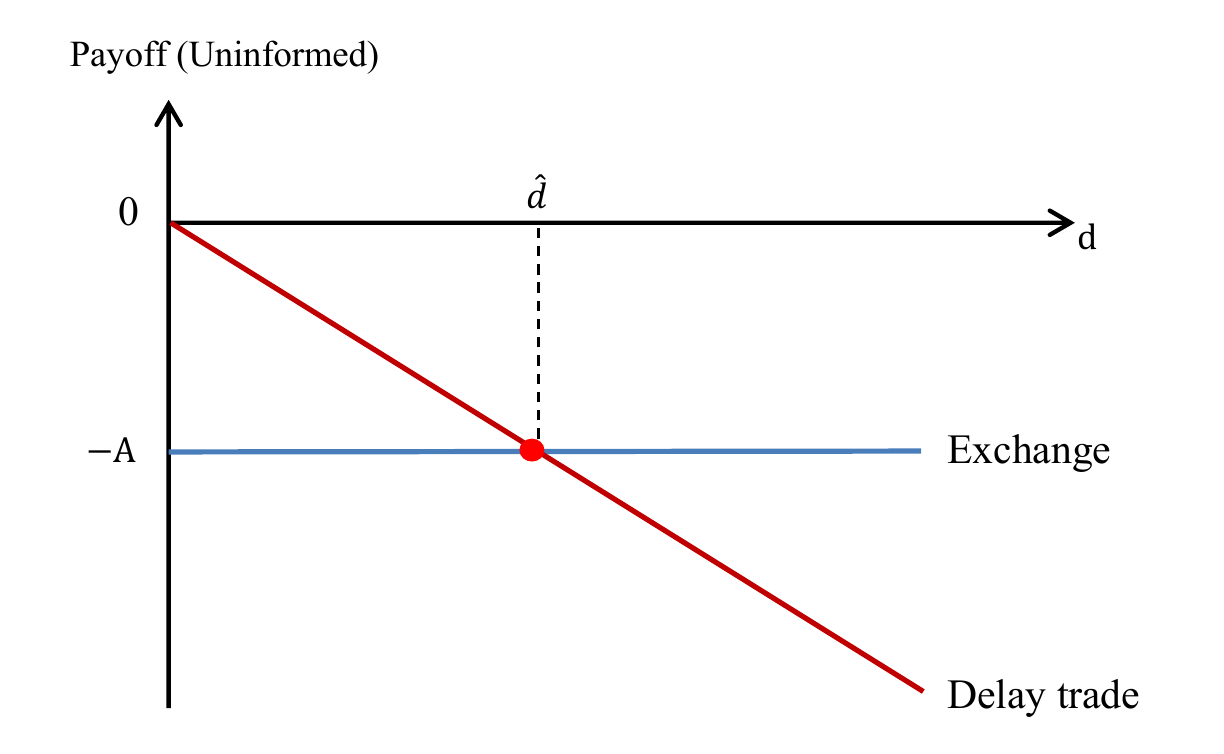}
\caption{Uninformed}
\label{fig_payoffs_uninformed_benchmark}
\end{subfigure}
\caption{Payoffs For Traders, Single-venue}
\end{figure}

Suppose $\sigma_v\geq A^{\mathbb{S}}\geq 0$, then if the signal is extremely weak, i.e., $B(|s|)=\frac{1}{2}$, or, $s=0$, the expected payoff of trading in the exchange is strictly negative, and it is strictly optimal not to trade. In contrast, if the signal is extremely strong, i.e., $B(s)=1$, or, $s=\pm \infty$, the expected payoff of trading in the exchange is strictly positive, and it is strictly optimal to trade in the exchange. This is illustrated in Figure \ref{fig_payoffs_informed_benchmark}. Therefore there must exist some cut-off point
$\widehat{s}>0$ such that the $\widehat{s}$ type informed traders are indifferent between trading in the exchange and do not trade. That is,
\begin{align}
B(\widehat{s})\sigma_v-(1-B(\widehat{s}))\sigma_v -A^{\mathbb{S}}=0,
\label{shat_benchmark}
\end{align}
and the optimal choice for an informed trader with signal $s$ is then
\begin{align}
h^{\mathbb{S}}_I(s)&=\left\{\begin{array}{ll}
                                                                         (\mbox{``Buy''},\mbox{ Exchange(Lit)}) &\mbox{ if } s \geq \widehat{s},\\
                                                                          (\mbox{``Sell''},\mbox{ Exchange(Lit)}) &\mbox{ if } s < -\widehat{s}, \\                                                                                                                                                  \mbox{Do not trade}&\mbox{ others. }                                                                    \end{array}\right.
\label{strategy_informed_single}
\end{align}
Without loss of generality, we assume that the realization of $\tilde{v}$ is $\sigma_v$. If all informed speculators follow the same optimal strategy, then the fraction of informed traders who will trade in the ``right'' and ``wrong'' directions across the population are, respectively,
\begin{align}
\overline{\gamma_e}^{\mathbb{S}}&= Pr (s\geq \widehat{s}|\tilde{v}=\sigma_v)=Pr (s\leq -\widehat{s}|\tilde{v}=-\sigma_v)=1-\Phi(\frac{\widehat{s}-\sigma_v}{\sigma_e}),
\label{gammabare_benchmark}\\
\underline{\gamma_e}^{\mathbb{S}}&= Pr (s< -\widehat{s}|\tilde{v}=\sigma_v)=Pr (s> \widehat{s}|\tilde{v}=-\sigma_v)=1-\Phi(\frac{\widehat{s}+\sigma_v}{\sigma_e}).
\label{gammalowe_benchmark}
\end{align}

(\ref{gammabare_benchmark}),(\ref{gammalowe_benchmark}) imply that $\overline{\gamma_e}^{\mathbb{S}}\geq \underline{\gamma_e}^{\mathbb{S}}> 0$.

Now, we consider an uninformed liquidity trader with a (per unit) delay cost ``$d$.'' Similarly, his or he payoffs of trading in the exchange and delaying trade are, respectively:
\begin{alignat*}{3}
&\mbox{Exchange(Lit)}&&: -A^{\mathbb{S}},\\
&\mbox{Delay trade}&&: -\sigma_v d.
\end{alignat*}
Since $d\in[0, \bar{d}]$ with $\bar{d}\geq 1$, and $\sigma_v\geq A^{\mathbb{S}}\geq 0$, if the liquidity trader is extremely patient, i.e., $d=0$, it is strictly optimal to delay trade to period 2. In contrast, if the liquidity trader is extremely impatient, i.e., $d=\bar{d}>1$, it is strictly optimal to trade in the exchange. This is shown in Figure \ref{fig_payoffs_uninformed_benchmark}. Therefore, there also exists a cut-off $\widehat{d}$ such that the type ``$\widehat{d}$ '' liquidity trader is indifferent between trading in the exchange and delaying trade to the next period. That is,
\begin{align}
-A^{\mathbb{S}}&=-\sigma_v \widehat{d}.
\label{dhat_benchmark}
\end{align}
To combine (\ref{shat_benchmark}) with (\ref{dhat_benchmark}), we derive that
$$\widehat{d} = 2B(\widehat{s})-1.$$
The optimal strategy for uninformed liquidity traders is then,
\begin{align}
h^{\mathbb{S}}_{U,\iota}(d)&=\left\{\begin{array}{ll}
            (\mbox{``Buy'' if $\iota$=Buyer, or ``Sell'' if $\iota$=Seller},\mbox{ Exchange(Lit)}) &\mbox{ if } d \geq 2B(\widehat{s})-1,\\
            \mbox{Delay trade} &\mbox{ others. }
            \end{array}\right.
\label{strategy_uninformed_single}
\end{align}
The period 1 exchange participation rate for the uninformed traders is then
\begin{align}
\alpha_e^{\mathbb{S}} &= Pr(d\geq \widehat{d}) = 1-G(2B(\widehat{s})-1),
\label{alphae_benchmark}
\end{align}
and $0\leq \alpha_e^{\mathbb{S}}\leq 1$.

We then find a cut-off equilibrium. Theorem \ref{thm existence0} summarizes the existence and uniqueness.

\begin{thm}
\textbf{(Existence and Uniquness, benchmark)} For any $\sigma_e, \sigma_v \geq 0$, there exists an equilibrium in which traders follow cut-off strategies. That is, the respective optimal strategies for informed speculators and uninformed liquidity traders, $h^{\mathbb{S}}_I(s)$ and $h^{\mathbb{S}}_{U,\iota}(d)$, are defined as (\ref{strategy_informed_single}) and (\ref{strategy_uninformed_single}), with the cut-off $\widehat{s}$ determined by (\ref{shat_benchmark}). The exchange spread $A^{\mathbb{S}}$ satisfies (\ref{spread_benchmark}), and the participation fractions $\overline{\gamma_e}^{\mathbb{S}},\underline{\gamma_e}^{\mathbb{S}},\alpha_e^{\mathbb{S}}$ are determined respectively by (\ref{gammabare_benchmark}), (\ref{gammalowe_benchmark}), (\ref{alphae_benchmark}), (\ref{spread_benchmark}).

Moreover, every equilibrium is a cut-off equilibrium, and the equilibrium is unique if $\sigma_e, \sigma_v>0$, $G'(x)+xG''(x)\geq 0, \forall x \in [0,1]$.
\label{thm existence0}
\end{thm}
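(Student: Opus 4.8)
The plan is to collapse the entire equilibrium to a single scalar fixed‑point equation and then treat existence by the intermediate value theorem and uniqueness by a single‑crossing argument. First I would record that the cutoff structure is forced. For an informed trader the exchange‑minus‑no‑trade payoff $(2B(|s|)-1)\sigma_v-A^{\mathbb{S}}$ is strictly increasing in $|s|$ because $B$ is strictly increasing, so trade occurs iff $|s|\ge\widehat{s}$ with $(2B(\widehat{s})-1)\sigma_v=A^{\mathbb{S}}$; for an uninformed trader $-A^{\mathbb{S}}\ge-\sigma_v d$ iff $d\ge A^{\mathbb{S}}/\sigma_v$. Hence every equilibrium is a cutoff equilibrium, pinned down by the normalized spread $x:=A^{\mathbb{S}}/\sigma_v\in[0,1)$: set $\widehat{s}=\widehat{s}(x)$ by $2B(\widehat{s})-1=x$, $\widehat{d}=x$, and read off $\overline{\gamma_e},\underline{\gamma_e},\alpha_e$ from (\ref{gammabare_benchmark}), (\ref{gammalowe_benchmark}), (\ref{alphae_benchmark}) as explicit functions of $x$. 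Writing $a=(\widehat{s}-\sigma_v)/\sigma_e$, $b=(\widehat{s}+\sigma_v)/\sigma_e$, $S=\overline{\gamma_e}+\underline{\gamma_e}=2-\Phi(a)-\Phi(b)$ and $\Delta=\overline{\gamma_e}-\underline{\gamma_e}=\Phi(b)-\Phi(a)$, the break‑even condition (\ref{spread_benchmark}) together with informed indifference becomes $\Xi(x)=0$, where $\Xi(x):=R(x)-x$ and $R(x):=\Delta/\bigl(S+(1-G(x))\tfrac{\mu_z}{\mu}\bigr)$.

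For existence I would apply the intermediate value theorem to $\Xi$ on $[0,1)$. The map is continuous (the denominator is bounded below by $\overline{\gamma_e}>0$ for every finite $\widehat{s}$), and $\Xi(0)=R(0)=\frac{2\Phi(\sigma_v/\sigma_e)-1}{1+\mu_z/\mu}>0$ whenever $\sigma_v>0$. As $x\to1^{-}$ one has $\widehat{s}\to\infty$, so $\overline{\gamma_e},\underline{\gamma_e},\Delta\to0$; if $\bar{d}>1$ the denominator tends to $(1-G(1))\tfrac{\mu_z}{\mu}>0$ and $R\to0$, while in the boundary case $\bar{d}=1$ a Mills‑ratio estimate ($1-\Phi(y)\sim\phi(y)/y$ together with $\phi(b)/\phi(a)\to0$) still gives $R\to0$. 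Thus $\Xi(x)\to-1<0$ and a root $x^{*}\in(0,1)$ exists. The degenerate boundaries $\sigma_v=0$ and $\sigma_e=0$ are handled directly and explain why uniqueness is claimed only for $\sigma_e,\sigma_v>0$.

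Uniqueness is the crux, and the hard part will be controlling the sign of $\Xi'$ at roots. The plan is to show that every root is a strict down‑crossing, i.e. $\Xi'(x^{*})<0$; since $\Xi(0)>0$ this forces exactly one root. At a root, $\Xi'(x^{*})=(N_x-x^{*}D_x-D)/D$ with $N=\Delta$ and $D=S+(1-G)\tfrac{\mu_z}{\mu}$. The decisive simplification is that along the parametrization $x$ is identically $2B(\widehat{s})-1=\frac{\phi(a)-\phi(b)}{\phi(a)+\phi(b)}$, so $x(\phi(a)+\phi(b))=\phi(a)-\phi(b)$; this makes the Gaussian terms in $N_x-x^{*}D_x$ cancel exactly and leaves $N_x-x^{*}D_x=x^{*}G'(x^{*})\tfrac{\mu_z}{\mu}$. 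Hence $\Xi'(x^{*})<0$ is equivalent to $S+(1-G)\tfrac{\mu_z}{\mu}>x^{*}G'(x^{*})\tfrac{\mu_z}{\mu}$, and eliminating $\tfrac{\mu_z}{\mu}$ via the equilibrium relation $x^{*}\bigl(S+(1-G)\tfrac{\mu_z}{\mu}\bigr)=\Delta$ reduces this, after clearing denominators, to $\Delta\,(x^{*}G'(x^{*})+G(x^{*})-1)<(x^{*})^{2}G'(x^{*})\,S$.

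This is precisely where the regularity hypothesis enters. The assumption $G'(x)+xG''(x)\ge0$ on $[0,1]$ says exactly that $xG'(x)$ is nondecreasing, and combined with $\bar{d}\ge1$ I would derive the pointwise bound $x G'(x)(1-x)\le 1-G(x)$: for $t\in[x,1]$ monotonicity of $tG'(t)$ gives $G'(t)\ge xG'(x)/t\ge xG'(x)$, and integrating over $[x,1]$ with $1-G(x)=\int_x^{\bar{d}}G'\ge\int_x^{1}G'$ yields the claim. When $x^{*}G'+G-1\le0$ the target inequality holds since its right‑hand side is nonnegative; when $x^{*}G'+G-1>0$, using $\Delta<S$ (as $\underline{\gamma_e}>0$) it suffices that $x^{*}G'(x^{*})+G(x^{*})-1\le(x^{*})^{2}G'(x^{*})$, i.e. $x^{*}G'(x^{*})(1-x^{*})\le 1-G(x^{*})$, which is exactly the bound just established. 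Therefore $\Xi'(x^{*})<0$ at every root, the root is unique, and together with the cutoff structure this yields both existence and uniqueness; substituting $x^{*}$ back into (\ref{gammabare_benchmark})–(\ref{spread_benchmark}) recovers the stated equilibrium objects.
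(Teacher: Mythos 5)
Your proposal is correct in substance but reaches uniqueness by a genuinely different mechanism than the paper. The paper parametrizes everything by the (normalized) signal cutoff $\mathbf{\widehat{s}}$, collapses the equilibrium to $f(\mathbf{\widehat{s}})=0$ with $f(s)=(2B(s)-1)\bigl[2-\Phi(s+\tfrac{\sigma_v}{\sigma_e})-\Phi(s-\tfrac{\sigma_v}{\sigma_e})+(1-G(2B(s)-1))\tfrac{\mu_z}{\mu}\bigr]-\bigl[\Phi(s+\tfrac{\sigma_v}{\sigma_e})-\Phi(s-\tfrac{\sigma_v}{\sigma_e})\bigr]$, and gets uniqueness from sign conditions at the endpoints plus the global claim $f''<0$, attributed to $G'+xG''\ge 0$. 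You instead parametrize by the normalized spread $x=A^{\mathbb{S}}/\sigma_v$ and show that every root of $\Xi(x)=R(x)-x$ is a strict down-crossing. Your two key steps --- the exact cancellation $N_x-xD_x=xG'(x)\tfrac{\mu_z}{\mu}$, which I verified holds identically along the indifference locus because $x\,(\phi(a)+\phi(b))=\phi(a)-\phi(b)$, and the pointwise bound $xG'(x)(1-x)\le 1-G(x)$ obtained by integrating the monotonicity of $tG'(t)$ over $[x,1]\subseteq[0,\bar{d}]$ --- are both sound, and they make the role of the hypothesis $G'+xG''\ge 0$ (and of $\bar{d}\ge 1$) considerably more transparent than the paper's unverified assertion that $f''<0$; the price is that your argument is local at roots rather than a single global concavity statement.

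Two loose ends you should tighten. First, in the existence step for the boundary case $G(1)=1$, the Mills-ratio computation alone does not give $R\to 0$: it gives $\Delta/S\to 1$, and in fact $\Delta/S>x$ near $x=1$, so you must explicitly use that $S=\Theta\bigl(\phi(\mathbf{\widehat{s}}-\tfrac{\sigma_v}{\sigma_e})/\mathbf{\widehat{s}}\bigr)$ decays like $e^{-\mathbf{\widehat{s}}^2/2}$ while $1-G(x)=\Theta(1-x)=\Theta(e^{-2\mathbf{\widehat{s}}\sigma_v/\sigma_e})$ (note $G'(1)>0$ is forced by $G'+xG''\ge0$ unless $G$ is degenerate), so the term $(1-G(x))\tfrac{\mu_z}{\mu}$ dominates the denominator and drives $R\to 0$. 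Second, your elimination of $\tfrac{\mu_z}{\mu}$ divides by $1-G(x^{*})$, which fails if $G(x^{*})=1$ (zero uninformed participation); but there $G'(x^{*})=0$ automatically and $\Xi'(x^{*})=-1<0$ directly from $\Xi'(x^{*})=\bigl(x^{*}G'(x^{*})\tfrac{\mu_z}{\mu}-D\bigr)/D$, so it is cleaner to argue from that formula without eliminating $\tfrac{\mu_z}{\mu}$ at all. Neither issue is structural; both are patchable within your framework.
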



The benchmark clearly gives us some insight regarding the \textit{sorting effect} on types of traders. In equilibrium, it is strictly optimal for informed traders with relatively strong signals to trade in the exchange and for those with weak signals not to trade (avoid trading). Similarly, it is strictly optimal for uninformed liquidity traders who are relatively patient to trade in the exchange and for those who are relatively impatient to delay trade. The exchange provides functions to separate certain types of traders from others. As we will point out later, such a sorting effect is even strengthened in the presence of a dark pool.


\subsection{Multi-venue model: with a dark pool}
\label{sec_eqm_multi}
Two trading venues are available in the multi-venue model: an exchange (Lit) and a dark pool. To differentiate from the single-venue model, we do not use the superscription $\mathbb{S}$ in the multi-venue model. The equilibrium of the multi-venue is defined as follows:
\begin{defn}\textbf{(Multi-venue, with a dark pool)}
An equilibrium is a strategy for the informed speculators, $h_I(s)$, a strategy for and for the uninformed liquidity traders,  $h_{U,\iota}(d)$, an exchange spread, $A$, two expected execution rate in the dark pool $\bar{R}, \underline{R}$ , and a set of participation fractions $\overline{\gamma_e},\underline{\gamma_e},\overline{\gamma_d},\underline{\gamma_d},\alpha_e,\alpha_d$, s.t.
\begin{itemize}
\item (i) $h_I(s)$ is optimal for informed speculators with signal $s$, whereas $h_{U,\iota}(d)$ is optimal for uninformed liquidity traders with (per unit) delay cost $d$, given $A, \bar{R}$, and $\underline{R}$.
\item (ii) the exchange spread $A$ makes a market maker in the exchange break-even on average, given $\overline{\gamma_e},\underline{\gamma_e},\overline{\gamma_d},\underline{\gamma_d},\alpha_e$, and $\alpha_d$;
\item (iii) the dark pool operates using a mid-pricing and a rationing execution mechanism. $\underline{R}$ and $\bar{R}$ are the respective expected execution probability for orders that are in the ``right'' and in the ``wrong'' directions;
\item (iv) $\overline{\gamma_e}$ and $\underline{\gamma_e}$ measure the respective fractions of informed traders in the exchange who trade in the ``right'' and ``wrong'' directions. $\overline{\gamma_d}$ and $\underline{\gamma_d}$ measure the respective fractions of informed traders in the dark pool who trade in the ``right'' and ``wrong'' directions. $\alpha_e$ and  $\alpha_d$ measure the respective fraction of uninformed traders who trade in the exchange and in the dark pool in period 1.
\end{itemize}
\label{eqn def}
\end{defn}

Consider an informed speculator with signal ``$s$.'' Based on the strength of his or her signal $B(|s|)$, the payoffs of trading in the exchange, the dark pool and do not trade are summarized in (\ref{payoff_informed_lit}), (\ref{payoff_informed_dp}), (\ref{payoff_informed_no}). These payoffs are shown in Figure \ref{fig_payoffs_informed}.
\begin{figure}[h]
\centering
\begin{subfigure}[b]{0.48\textwidth}
\centering
\includegraphics[width=\textwidth]{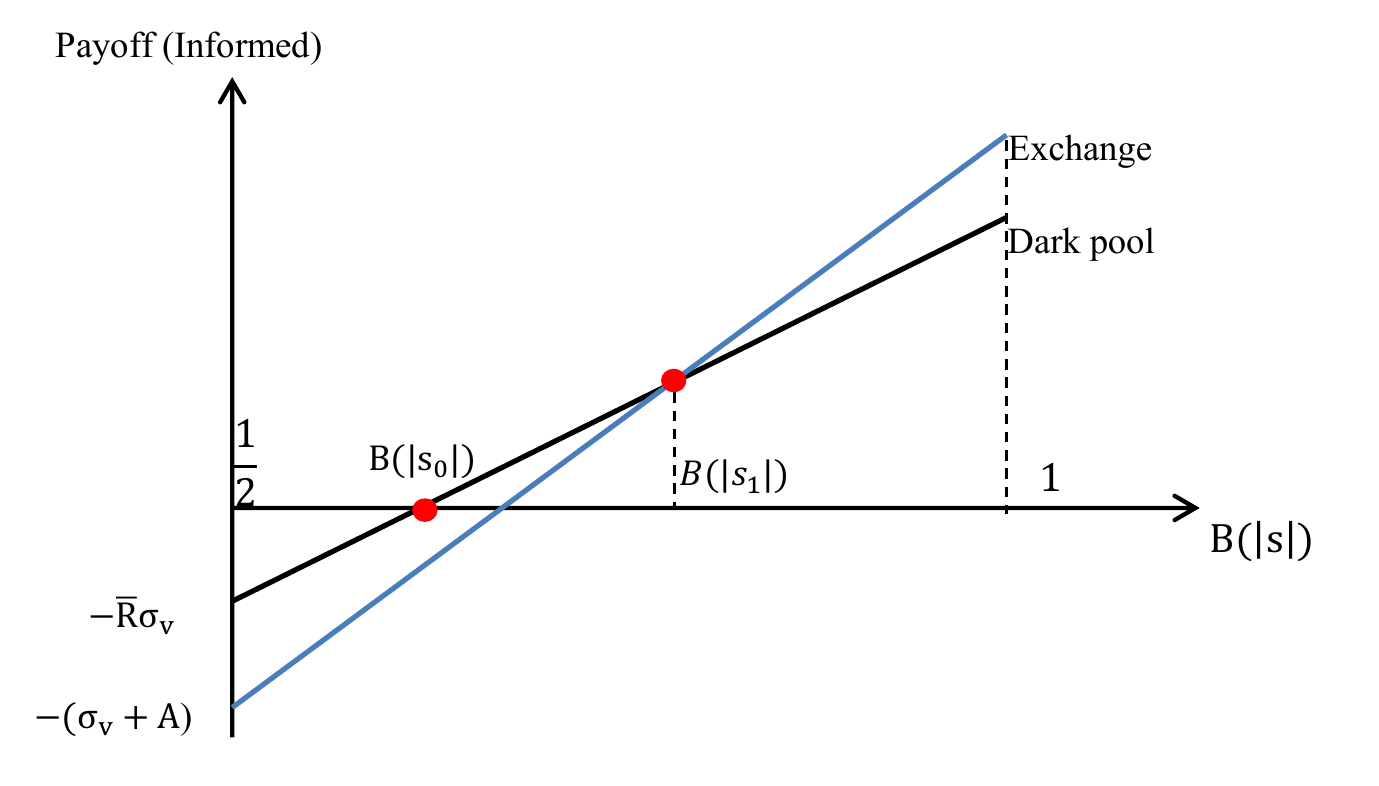}
\caption{Informed}
\label{fig_payoffs_informed}
\end{subfigure}
\begin{subfigure}[b]{0.48\textwidth}
\includegraphics[width=\textwidth]{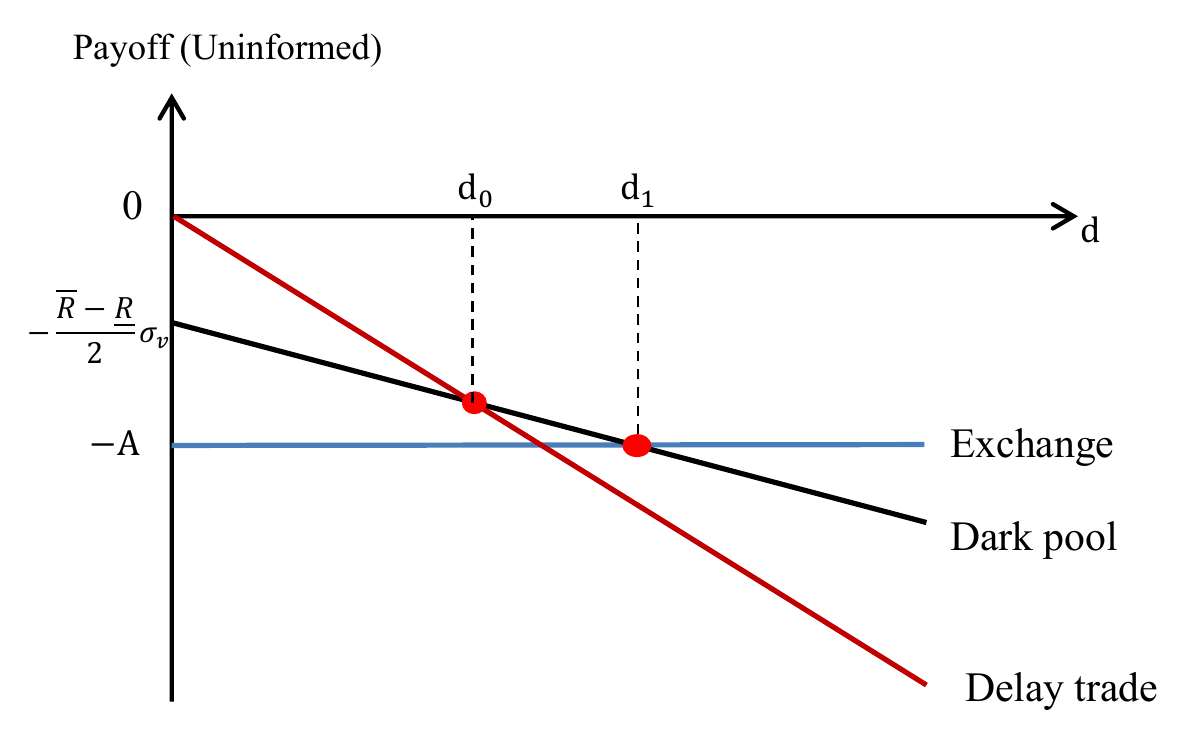}
\caption{Uninformed}
\label{fig_payoffs_uninformed}
\end{subfigure}
\caption{Payoffs For Traders, Multi-venue}
\end{figure}

Suppose  $1\geq \bar{R} \geq \underline{R}> 0$ and $\sigma_v\geq A\geq 0$. As is shown in Figure \ref{fig_payoffs_informed}, if a trader receives extremely weak signals ($s=0$ for example), it is never profitable to trade, since trading is costly. However, whenever an informed trader decides to trade, he faces a trade-off between execution certainty in the exchange and price improvement in the dark pool. When $|s|$ is low, the need for price improvement overwhelms the need for execution, in which case, trading in a dark pool is better. But as the signals becomes stronger, the need for execution grows faster than the need for price improvement.  This can be observed from the fact that the exchange payoff has a higher slope with respect to $B(|s|)$ than the dark pool payoff. Therefore, when $s$ is extremely high, it is possible that the two intersect. Suppose an informed trader with signal $s_0>0$ is indifferent between trading in a dark pool and not trade, an informed with signal $s_1>0$ is indifferent between trading in a dark pool and in the exchange, then by (\ref{payoff_informed_lit}), (\ref{payoff_informed_dp}), and (\ref{payoff_informed_no}), $s_0, s_1$ satisfies:
\begin{align}
B(s_0)(\bar{R}+\underline{R})&=\bar{R}
\label{equation0}\\
B(s_1)\left[(1-\bar{R})+(1-\underline{R})\right]\sigma_v&=A+(1-\bar{R})\sigma_v.
\label{equation1}
\end{align}

At this point, the existence and relationship of $s_0$ and $s_1$ is not established yet. For now, we suppose that $(s_0, s_1)$ exists and  $s_0<s_1<+\infty$ (we will prove that this is true in every equilibrium), the optimal strategy for an informed trader with signal $s$ is then
\begin{align}
h^{\mathbb{S}}_I(s)&=\left\{\begin{array}{ll}
                                                                         (\mbox{``Buy''},\mbox{ Exchange(Lit)}) &\mbox{ if } s \geq s_1,\\
                                                                         (\mbox{``Buy''},\mbox{ Dark pool}) &\mbox{ if } s_0 \leq s < s_1,\\
                                                                         (\mbox{``Sell''},\mbox{ Dark pool}) &\mbox{ if } -s_1 \leq s < -s_0,\\
                                                                          (\mbox{``Sell''},\mbox{ Exchange(Lit)}) &\mbox{ if } s < -s_1, \\                                                                                                                                                  \mbox{Do not trade}&\mbox{ others. }                                                                    \end{array}\right.
\label{strategy_informed}
\end{align}
This is illustrated in Figure \ref{fig_cutoff_informed}. That is, it is strictly optimal that informed traders with strong signals to trade in the exchange, informed traders with modest signals to trade in the dark pool, and informed traders with weak signals to not trade.
\begin{figure}[h]
\centering
\includegraphics[width=.8\textwidth]{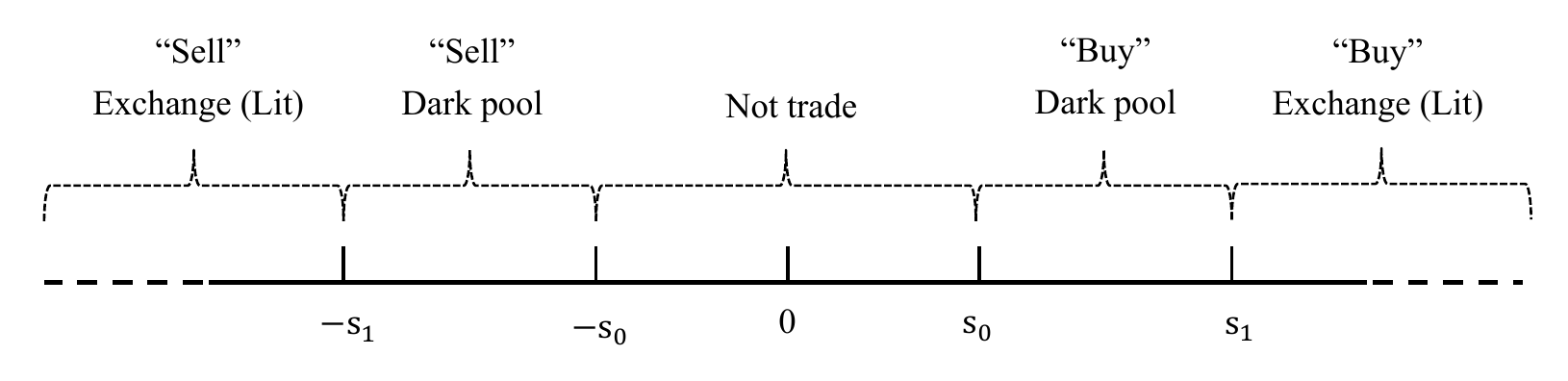}
\caption{Strategy of Informed Traders}
\label{fig_cutoff_informed}
\end{figure}

If all informed traders follow such strategy, the exchange fraction of informed who trade in the ``right'' and ``wrong'' directions are, respectively,
 \begin{align}
\overline{\gamma_e}&=\Pr(s\geq s_1|\tilde{v}=\sigma_v)=\Pr(s\leq -s_1|\tilde{v}=-\sigma_v)=1-\Phi(\frac{s_1-\sigma_v}{\sigma_e}),
\label{gamma_upper}\\
\underline{\gamma_e}&=\Pr(s<-s_1|\tilde{v}=\sigma_v)=\Pr(s>s_1|\tilde{v}=-\sigma_v)=1-\Phi(\frac{s_1+\sigma_v}{\sigma_e}).
\label{gamma_lower}
\end{align}

And the dark pool fraction of informed who trade in the ``right'' and ``wrong'' directions are, respectively,
  \begin{align}
\overline{\gamma_d}&=\Pr(s_0\leq s< s_1|\tilde{v}=\sigma_v)=\Pr(-s_1\leq s< -s_0|\tilde{v}=-\sigma_v)=\Phi(\frac{s_1-\sigma_v}{\sigma_e})-\Phi(\frac{s_0-\sigma_v}{\sigma_e}),
\label{beta_upper}\\
\underline{\gamma_d}&=\Pr(-s_1\leq s< -s_0|\tilde{v}=\sigma_v)=\Pr(s_0\leq s< s_1|\tilde{v}=-\sigma_v)=\Phi(\frac{s_1+\sigma_v}{\sigma_e})-\Phi(\frac{s_0+\sigma_v}{\sigma_e}).
\label{beta_lower}
\end{align}

Similarly, for the uninformed, the payoffs of trading in the exchange, in the dark pool, and delaying trade are respectively given in (\ref{payoff_uninformed_lit}), (\ref{payoff_uninformed_dp}), and (\ref{payoff_uninformed_no}), as illustrated in Figure \ref{fig_payoffs_uninformed}. Again, a liquidity trader with extremely low liquidity demands would find it optimal to delay trade. However, if he decides to trade in period 1, only those with extremely high liquidity demands (i.e., extremely impatient) are willing to trade, for the similar reason as the informed traders. Let $d_0$ and $d_1$ respectively represent the type of liquidity traders who are indifferent between delaying trade and trading in a dark pool, and the type who are indifferent between delaying trading in a dark pool and in the exchange, then by (\ref{payoff_uninformed_lit}), (\ref{payoff_uninformed_dp}), and (\ref{payoff_uninformed_no}) we have
\begin{align*}
 - {(\bar{R}-\underline{R})\over 2}\sigma_v - (1- {\bar{R}+\underline{R})\over 2})\sigma_v d_0 & =-\sigma_v d_0, \\
- {(\bar{R}-\underline{R})\over 2}\sigma_v - (1- {\bar{R}+\underline{R})\over 2})\sigma_v d_1 & =-A.
\end{align*}
Combine this with (\ref{equation0}) and (\ref{equation1}), we derive that
\begin{align*}
d_0 &= 2B(s_0)-1,\\
d_1 &= 2B(s_1)-1.
\end{align*}

By a similar argument, the optimal strategy for an uninformed trader is also a cut-off strategy:
\begin{align}
h^{\mathbb{S}}_{U,\iota}(d)&=\left\{\begin{array}{ll}
            (\mbox{``Buy'' if $\iota$=Buyer, or ``Sell'' if $\iota$=Seller},\mbox{ Exchange(Lit)}) &\mbox{ if } d \geq 2B(s_1)-1,\\
            (\mbox{``Buy'' if $\iota$=Buyer, or ``Sell'' if $\iota$=Seller},\mbox{ Dark pool}) &\mbox{ if } 2B(s_0)-1 \\
						&\leq d <2B(s_1)-1,\\
            \mbox{Delay trade} &\mbox{ otherwise. }
            \end{array}\right.
\label{strategy_uninformed}
\end{align}
This is described in Figure \ref{fig_cutoff_uninformed}. The exchange fraction, $\alpha_e$, and dark pool fraction, $\alpha_d$, of uninformed liquidity traders, are, respectively,
\begin{align}
\alpha_e&=1-G(2B(s_1)-1),
\label{alpha_e}\\
\alpha_d&=G(2B(s_1)-1)-G(2B(s_0)-1).
\label{alpha_d}
\end{align}
\begin{figure}[h]
\centering
\includegraphics[width=.8\textwidth]{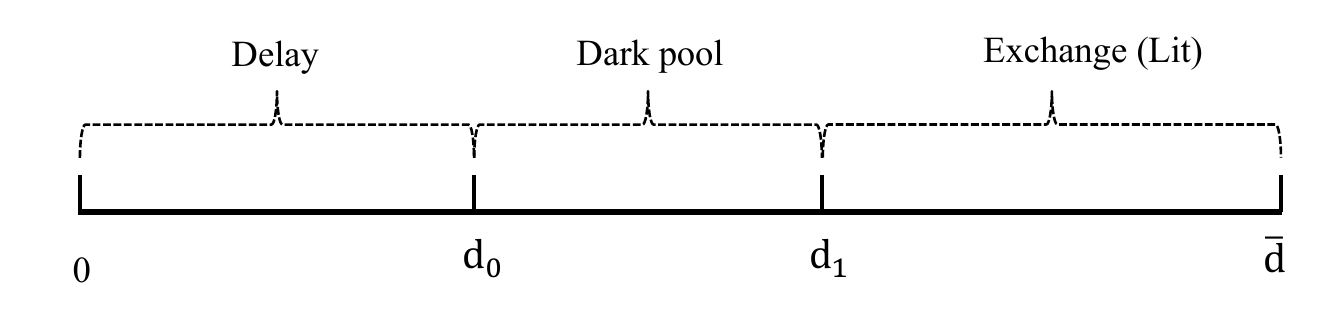}
\caption{Strategy of Uninformed Traders}
\label{fig_cutoff_uninformed}
\end{figure}
The fact that the traders the cut-off of uninformed traders' are functions of the cut-off of informed traders' reveals that, in equilibrium, uninformed and informed traders always move together. It cannot happen that uninformed traders move collectively from one venue to another, forming a new equilibrium without influencing the behavior of the informed traders. This is in contrast with \citet*{zhu_dark_2013}.

Given $\overline{\gamma_e},\underline{\gamma_e},\alpha_e$, the exchange spread $A$ captured in (\ref{spread}) makes the market maker break even. Also, given $\overline{\gamma_d},\underline{\gamma_d},\alpha_d$, and given the distribution of $Z^+$ and $Z^-$, the expected execution rates in the dark pool, $\bar{R}$ and $\underline{R}$, are respectively determined by (\ref{R_upper}) and (\ref{R_lower}).


If such $s_0, s_1$ exists, we find a cut-off equilibrium. But the existence is not obvious. The difficulty arises from two aspects. First, we cannot simply apply a fixed point theorem because it cannot distinguish the trivial equilibrium from others: a trivial equilibrium is one in which all trades happen in one venue, for example, the exchange. Second, the equilibrium involves a very complicated equation system and these equations are non-linear and are not likely to exhibit monotonicity. Nevertheless,  we are able to show in Theorem \ref{thm existence} that the equilibrium exists. Moreover, all equilibria are cut-off equilibra, and all equilibra are non-trivial.
\begin{thm}
\textbf{(Equilibrium with DP)} For any $\sigma_v, \sigma_e >0$, an equilibrium exists in which traders follow cut-off strategies. That is, the respective optimal strategies for informed and uninformed traders, $h_I(s)$ and $h^U_{\iota}(d)$, are defined as in (\ref{strategy_informed}) and (\ref{strategy_uninformed}), with cut-offs $(\mathit{s}_0 ,\mathit{s}_1 )$ solving (\ref{equation0}) and (\ref{equation1}), $0< s_0 < s_1 $. Moreover, every equilibrium is a cut-off equilibrium, and every equilibrium is non trivial (meaning positive participation for both informed and uninformed traders in both venues).

The exchange spread, $A$, the expected execution rates, $\bar{R},\underline{R}$, are determined, respectively, by (\ref{spread}), (\ref{R_upper}), and (\ref{R_lower}). The set of participation fractions,  $\{\overline{\gamma_e},\underline{\gamma_e},\overline{\gamma_d},\underline{\gamma_d},\alpha_e, \alpha_d\}$ are determined by (\ref{gamma_upper}), (\ref{gamma_lower}), (\ref{beta_upper}), (\ref{beta_lower}), (\ref{alpha_e}), and (\ref{alpha_d}).
\label{thm existence}
\end{thm}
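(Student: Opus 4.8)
My plan is to prove the three assertions — that every equilibrium is a cut-off equilibrium, that one exists, and that every equilibrium is non-trivial — by reducing the whole fixed-point system to a single scalar equation in the exchange cut-off $s_1$ and then invoking the intermediate value theorem. The virtue of this route (over a blind Brouwer argument, which as the authors note cannot exclude the degenerate ``all-in-one-venue'' configurations) is that the sign computations at the two ends will automatically place the solution in the non-trivial interior.

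\textbf{Step 1 (characterization).} I would first show that \emph{any} equilibrium has the cut-off form (\ref{strategy_informed}), (\ref{strategy_uninformed}). Fix candidate aggregates $A,\bar{R},\underline{R}$ with $0\le A\le\sigma_v$ and $0<\underline{R}\le\bar{R}\le 1$. By Lemma \ref{non-split informed} an informed trader maximizes over the three payoffs (\ref{payoff_informed_lit})--(\ref{payoff_informed_no}), each an affine function of the confidence level $B(|s|)$: the exchange payoff has slope $2\sigma_v$, the dark-pool payoff slope $(\bar{R}+\underline{R})\sigma_v$, and no-trade is flat. Since these slopes are strictly ordered and $B(|s|)$ is strictly increasing in $|s|$ by (\ref{Belief}), the upper envelope is piecewise linear and the optimal choice passes from ``not trade'' to ``dark pool'' to ``exchange'' as $|s|$ grows, with the two switch points given exactly by (\ref{equation0}) and (\ref{equation1}); the sign of $s$ fixes the direction. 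The identical envelope argument applied to (\ref{payoff_uninformed_lit})--(\ref{payoff_uninformed_no}), now affine in $d$, yields (\ref{strategy_uninformed}). Hence every equilibrium is summarized by a pair $(s_0,s_1)$.

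\textbf{Step 2 (reduction to one dimension).} The key structural observation is that, through (\ref{gamma_upper}), (\ref{gamma_lower}), (\ref{alpha_e}) and the break-even identity (\ref{spread}), the spread depends on $s_1$ alone; write it $A(s_1)$. For each fixed $s_1$ I treat (\ref{equation0}) as a scalar equation in $s_0$, with residual $g(s_0;s_1)=B(s_0)(\bar{R}+\underline{R})-\bar{R}$, where $\bar{R},\underline{R}$ are computed from (\ref{beta_upper}), (\ref{beta_lower}), (\ref{alpha_d}), (\ref{R_upper}), (\ref{R_lower}) at $(s_0,s_1)$. At $s_0=0$ the dark pool carries a strict informed buy-imbalance (the window mass $\Phi(\tfrac{s_1+\sigma_v}{\sigma_e})-\Phi(\tfrac{s_1-\sigma_v}{\sigma_e})$ is smaller than its counterpart centred at $0$), so $\bar{R}>\underline{R}$ and $g(0;s_1)=\tfrac12(\underline{R}-\bar{R})<0$. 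As $s_0\uparrow s_1$ the informed composition concentrates at the margin, where the right/wrong density ratio is $\phi(\tfrac{s_1-\sigma_v}{\sigma_e})/\phi(\tfrac{s_1+\sigma_v}{\sigma_e})=B(s_1)/(1-B(s_1))$, while the symmetric liquidity flow $\alpha_d Z^\pm$ dilutes this imbalance; I expect this to force $\bar R/(\bar R+\underline R)<B(s_1)$, i.e.\ $g(s_1^-;s_1)>0$. Continuity of $\bar{R},\underline{R}$ in the cut-offs (dominated convergence, the integrands bounded by $1$) and the intermediate value theorem then give a root $s_0=\psi(s_1)\in(0,s_1)$, which already secures strict interior informed participation in \emph{both} venues.

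\textbf{Step 3 (existence) and the main obstacle.} Substituting $s_0=\psi(s_1)$ into (\ref{equation1}), I study $F(s_1)=B(s_1)\,(2-\bar{R}-\underline{R})-\bigl(A(s_1)/\sigma_v+1-\bar{R}\bigr)$. As $s_1\to0^+$ both cut-offs collapse and the marginal densities coincide ($\phi(\tfrac{-\sigma_v}{\sigma_e})=\phi(\tfrac{\sigma_v}{\sigma_e})$), so $\bar{R}-\underline{R}\to0$ while $A(0^+)/\sigma_v=a_0>0$ survives, giving $F(0^+)=-a_0<0$; as $s_1\to\infty$ the exchange empties of informed traders so $A(s_1)\to0$, while the persistent dark-pool buy-imbalance keeps $\underline{R}<1$, giving $F\to 1-\underline{R}>0$. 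Continuity of $F$ (inherited from $\psi$) and the intermediate value theorem then deliver $s_1^{*}\in(0,\infty)$ with $0<\psi(s_1^{*})<s_1^{*}<\infty$; all informed fractions are then positive, and since $d_0=2B(s_0)-1<d_1=2B(s_1)-1$ lie strictly inside $[0,\bar d)$ (using $\bar d\ge1$ and $B<1$), (\ref{alpha_e})--(\ref{alpha_d}) give $\alpha_e,\alpha_d>0$, establishing non-triviality. The genuinely hard step is making Steps 2--3 rigorous at the degenerate limits: I must prove continuity (and enough single-valuedness) of the inner selection $\psi$, and control the boundary behaviour of the execution-rate functionals $\bar{R},\underline{R}$, which are expectations of nonlinear $\min$-ratios of $(Z^+,Z^-)$ and degenerate as $0/0$ when a side of the pool vanishes. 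Concretely, the marginal inequality $g(s_1^-;s_1)>0$ reduces, using $Y\min\{1,X/Y\}=\min\{X,Y\}$, to $\phi(\tfrac{s_1+\sigma_v}{\sigma_e})\,\mathbb{E}[Z^+/\max\{X,Y\}]\le\phi(\tfrac{s_1-\sigma_v}{\sigma_e})\,\mathbb{E}[Z^-/\max\{X,Y\}]$, which I can verify in the pure-informed ($a\to0$) and pure-noise ($a\to\infty$) limits and would need to interpolate via a monotonicity-in-$a$ or coupling argument; likewise $F(\infty)>0$ rests on the impossibility of $\underline{R}=1$ once $\overline{\gamma_d}>\underline{\gamma_d}$. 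Should $\psi$ fail to be single-valued, I would fall back to a two-dimensional Brouwer argument on a compact box $\{\epsilon\le B(s_0)\le B(s_1)\le 1-\epsilon\}$, using a priori estimates to keep the self-map off the degenerate boundary and the same sign computations to force the fixed point interior; either way, the a priori control of $A,\bar{R},\underline{R}$ near degeneracy is the real work, the rest being the envelope bookkeeping of Step 1.
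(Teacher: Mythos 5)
Your proposal has the right architecture and, modulo one step you yourself flag, proves the same statement; but it organizes the existence argument differently from the paper, and the difference matters for where the hard work lands. The paper also reduces everything to the two indifference conditions (\ref{equation0})--(\ref{equation1}) plus a continuity argument, but it treats them as two loci in the $(s_0,s_1)$-plane rather than nesting scalar IVTs: the locus of (\ref{equation0}) is traced starting from $(0,0)$ in the direction of increasing $s_0$ (it lies strictly above the diagonal because $B(s_0)>\tfrac12$ forces $\bar{R}>\underline{R}$, hence $\overline{\gamma_d}>\underline{\gamma_d}$, hence $s_1>s_0$, and it escapes to $s_1=+\infty$ at some finite $\overline{\mathbf{s}}$), while the locus of (\ref{equation1}) runs from a finite point on the edge $s_0=0$ down to a point on the diagonal, whose very meaning is supplied by a Taylor-expansion lemma (the paper's Lemma \ref{lem_Rlimit}) defining $\bar{R},\underline{R},A$ in the degenerate limit $s_1\downarrow s_0$. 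Two connected continuous curves with these endpoint configurations must cross in the open region $0<s_0<s_1<\infty$. Your Step 1 (the envelope argument over the affine payoffs, giving that every equilibrium is cut-off) and your endpoint computations for the outer equation ($F(0^+)=-a_0<0$ since $A/\sigma_v\to(2\Phi(\sigma_v/\sigma_e)-1)/(1+\mu_z/\mu)>0$, and $F(+\infty)=1-\underline{R}>0$ because the dark pool retains a strict informed imbalance) coincide with the paper's. The shared soft spot is continuity and connectedness of the implicitly defined solution set; your fallback to a fixed-point argument on a compact box bounded away from the degenerate boundary is a reasonable way to make that airtight, and the paper is no more rigorous on this point.

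The one genuine gap is the inner sign condition $g(s_1^-;s_1)>0$, which your route needs at \emph{every} $s_1$ and which the paper never needs, precisely because it parametrizes the (\ref{equation0})-locus by $s_0$ from the origin instead of solving for $s_0$ given each $s_1$. The good news is that the claim is true and closable without the monotonicity-in-$a$ interpolation you were worried about. In the diagonal limit put $p=\phi\bigl(\tfrac{s_1-\sigma_v}{\sigma_e}\bigr)\mu$, $q=\phi\bigl(\tfrac{s_1+\sigma_v}{\sigma_e}\bigr)\mu$, $W=aZ^+$, $V=aZ^-$, $X=p+W$, $Y=q+V$; the required inequality $\tfrac{B(s_1)}{1-B(s_1)}=\tfrac{p}{q}>\tfrac{\bar{R}}{\underline{R}}$ is $p\underline{R}-q\bar{R}>0$, and using $\min\{X,Y\}/(XY)=1/\max\{X,Y\}$ it becomes $\mathbb{E}\bigl[(pV-qW)/\max\{X,Y\}\bigr]>0$. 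Symmetrizing over the exchangeable pair $(W,V)$, with $M_1=\max\{p+W,q+V\}$ and $M_2=\max\{p+V,q+W\}$, the combined numerator equals $(p-q)(WM_2+VM_1)+p\,(V-W)(M_2-M_1)$, and a case check on the sign of $V-W$ shows $(V-W)(M_2-M_1)\ge 0$ pathwise, so the expectation is strictly positive whenever $a>0$ and the $Z$'s are not degenerate at zero. With that inserted, your Step 2 is complete and the rest of the proposal goes through.
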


\begin{cor}
\textbf{(Liquidity begets liquidity)} $\alpha_d > 0$ if and only if $\overline{\gamma_d}-\underline{\gamma_d}> 0$.
\label{cor_liquidity begets liquidity}
\end{cor}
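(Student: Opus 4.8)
The plan is to reduce the corollary to a single statement about the dark-pool cut-offs, namely that $\alpha_d>0$, $\overline{\gamma_d}-\underline{\gamma_d}>0$, and $s_1>s_0$ are all equivalent. Recall from the cut-off structure (see (\ref{beta_upper}), (\ref{beta_lower}), (\ref{alpha_d}) together with the relations $d_0=2B(s_0)-1$ and $d_1=2B(s_1)-1$) that both participation quantities are determined entirely by the two informed cut-offs $s_0\le s_1$, which Theorem \ref{thm existence} places in $(0,\infty)$. I would treat the informed side and the uninformed side separately, show that each is governed by the sign of $s_1-s_0$, and then chain the equivalences.

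For the informed side, I would write $\overline{\gamma_d}-\underline{\gamma_d}=\psi(s_1)-\psi(s_0)$, where $\psi(s):=\Phi\!\left(\frac{s-\sigma_v}{\sigma_e}\right)-\Phi\!\left(\frac{s+\sigma_v}{\sigma_e}\right)$; this is immediate by subtracting (\ref{beta_lower}) from (\ref{beta_upper}) and regrouping. The key step is to show that $\psi$ is strictly increasing on $[0,\infty)$. Differentiating gives $\psi'(s)=\tfrac{1}{\sigma_e}\left[\phi\!\left(\frac{s-\sigma_v}{\sigma_e}\right)-\phi\!\left(\frac{s+\sigma_v}{\sigma_e}\right)\right]$, and since $\phi$ is even and strictly decreasing in $|\cdot|$ while $|s-\sigma_v|<s+\sigma_v$ for every $s>0$ (as $\sigma_v>0$), the bracket is strictly positive for $s>0$. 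Hence $\psi$ is strictly increasing, so for $0\le s_0\le s_1$ we have $\overline{\gamma_d}-\underline{\gamma_d}=\psi(s_1)-\psi(s_0)>0$ if and only if $s_1>s_0$. Equivalently, one can write the difference as $\int_{s_0}^{s_1}\psi'(t)\,dt$ and observe that the integrand is positive on $(0,\infty)$.

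For the uninformed side, I would use $\alpha_d=G(d_1)-G(d_0)$ with $d_i=2B(s_i)-1$. Since $B$ is strictly increasing (by (\ref{Belief})), $d_1>d_0$ if and only if $s_1>s_0$. Because $G$ is a CDF it is non-decreasing, so $\alpha_d>0$ forces $d_1>d_0$, giving the direction $\alpha_d>0\Rightarrow s_1>s_0\Rightarrow\overline{\gamma_d}-\underline{\gamma_d}>0$; for the converse I would invoke that the delay cost $d$ is continuously distributed with full support (positive density), so that $G$ is strictly increasing and $d_1>d_0$ yields $\alpha_d=G(d_1)-G(d_0)>0$. Combining the two sides produces the chain $\alpha_d>0\iff s_1>s_0\iff\overline{\gamma_d}-\underline{\gamma_d}>0$, which is the claim.

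The main obstacle is the informed-side monotonicity of $\psi$: it is the only place where a genuine inequality must be proved rather than read off from monotonicity of $B$ or $G$, and it is precisely what pins the sign of the informed imbalance to the ordering of the cut-offs. The only other point requiring care is the reverse implication on the uninformed side, where strict monotonicity of $G$ on the relevant interval (full support of the delay-cost distribution) is needed to exclude a degenerate case in which $s_0<s_1$ yet no uninformed trader finds the dark pool strictly optimal.
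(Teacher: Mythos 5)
Your proof is correct and follows essentially the same route as the paper, which (in the inline proof of the adverse-selection corollary and in Theorem \ref{thm existence}) reads both $\alpha_d>0$ and $\overline{\gamma_d}-\underline{\gamma_d}>0$ off the participation formulas once $0<s_0<s_1$ is established; you simply make the equivalence chain $\alpha_d>0\iff s_1>s_0\iff\overline{\gamma_d}-\underline{\gamma_d}>0$ explicit via the monotonicity of $\psi$, $B$, and $G$. Your flag that the converse direction needs $G$ strictly increasing on the relevant interval is a fair observation — the paper leaves that implicit in its non-triviality claim — but it does not change the substance of the argument.
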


The equilibrium characterized in Theorem \ref{thm existence} is distinctive to \citet*{zhu_dark_2013} in the following aspects. First, in contrast with \citet*{zhu_dark_2013}, in equilibrium in our model, there is a \textit{sorting effect} of market fragmentation, and both uninformed and informed traders always move together. It is respectively optimal for informed traders with strong signals, modest signals, and weak signals to trade in the exchange, in the dark pool, and do not trade, whereas it is respectively optimal for uninformed traders with high, modest, and low degrees of impatience to trade in the exchange, in the dark pool, and delay trade. In \citet*{zhu_dark_2013}, however, such a sorting effect is absent for informed traders. In his model, informed traders are homogeneous and indifferently between venues. This may cause the instability of its prediction. For example, uninformed traders can collectively move from the dark pool to the exchange. This movement may increase the adverse selection cost in the dark pool so much so that they will stay in the exchange, and price discovery is strictly decreased. These equilibra are not discussed in \citet*{zhu_dark_2013}.  Our prediction is more robust in the sense that traders always move together and this sorting effect exists in every equilibrium. The same predictions on price discovery hold in every equilibrium.

Second,  unlike \citet*{zhu_dark_2013}, in which there exists some cases where informed traders do not participate in the dark pool, we predict that all equilibrium is non-trivial. That is, informed and uninformed participate in both venues in all equilibra, as captured in Corollary \ref{cor_liquidity begets liquidity}. This casts light on the dynamics of liquidity creation in a dark pool: informed and uninformed traders tend to arrive the dark pool in a clustered fashion, which in turn attract more liquidity to the dark pool, as documented in the literature.\footnote{\citet*{sarkar_liquidity_2009} provide a more detailed description of such process.} One explanation why \citet*{zhu_dark_2013} predicts a different result is that he assumes exact signals for traders. As we have pointed out, traders with strong signals tend to prefer an exchange. It is possible that, in some cases, they all crowd in the exchange and are absent in the dark pool. But again, this might be subject to an unstable status. In our model, this will not happen because with a noisy information structure, the dark pool will always be attractive to some informed traders. This is related to the following aspect.

The equilibrium described in Theorem \ref{thm existence} also disclose one important function of dark pools: a function that cannot be captured without a noisy information structure. That is, dark pools help to mitigate traders' information risk, i.e., the loss atributable to bad information. Dark pools take a role as a ``buffer zone'' for informed traders -- a gambling place for those who are less well-informed to trade. This adds value to the trade-off of dark pools, and shall clearly not be neglected. When information becomes noisier, more informed traders will find dark pools more valuable places to trade. Also, if traders become risk-averse, the importance of this function for dark pools will increase to a great extent.

\begin{cor}
Given any $\sigma_e, \sigma_v >0$, $s_1> \widehat{s}$, and in correspondent, $d_1> \widehat{d}$.
\label{cor_exchange_amplify}
\end{cor}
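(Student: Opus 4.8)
The plan is to collapse this cross-model comparison into a one-dimensional sign analysis by exploiting the fact that, once the exchange cutoff is fixed, the exchange looks identical in the two models. Substituting the exchange participation fractions (\ref{gamma_upper}), (\ref{gamma_lower}), (\ref{alpha_e}) into the break-even condition (\ref{spread}) writes the equilibrium spread as a function $\mathcal{A}(\cdot)$ of the exchange cutoff $s_1$ alone (recall $\alpha_e$ depends on $s_1$ only through $2B(s_1)-1$). Comparing with the benchmark expressions (\ref{gammabare_benchmark}), (\ref{gammalowe_benchmark}), (\ref{alphae_benchmark}) and (\ref{spread_benchmark}) shows these are the \emph{same} functional forms with $\widehat{s}$ in place of $s_1$, so $A=\mathcal{A}(s_1)$ and $A^{\mathbb{S}}=\mathcal{A}(\widehat{s})$ for one common map $\mathcal{A}$. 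I would then define the reduced net exchange payoff of the marginal trader,
$$F(s):=\bigl(2B(s)-1\bigr)\sigma_v-\mathcal{A}(s).$$

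Next I would evaluate $F$ at the two cutoffs. The benchmark indifference (\ref{shat_benchmark}) is exactly $F(\widehat{s})=(2B(\widehat{s})-1)\sigma_v-A^{\mathbb{S}}=0$. At $s_1$, the multi-venue cutoff is defined by indifference between the exchange and the dark pool (\ref{equation1}), so the exchange payoff there equals the dark-pool payoff $\Pi_d(s):=\bigl[B(s)(\underline{R}+\bar{R})-\bar{R}\bigr]\sigma_v$, giving $F(s_1)=(2B(s_1)-1)\sigma_v-A=\Pi_d(s_1)$. Since $\underline{R}+\bar{R}>0$ in any (non-trivial) equilibrium and $B$ is strictly increasing, $\Pi_d$ is strictly increasing; condition (\ref{equation0}) gives $\Pi_d(s_0)=0$; and Theorem \ref{thm existence} provides $s_0<s_1$. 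Hence $F(s_1)=\Pi_d(s_1)>\Pi_d(s_0)=0=F(\widehat{s})$. Economically this is the whole point: the marginal exchange trader in the multi-venue world is held to a strictly positive outside option (the dark pool), whereas in the benchmark the outside option is merely zero.

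It remains to convert $F(s_1)>0=F(\widehat{s})$ into $s_1>\widehat{s}$, and this is the main obstacle, because $\mathcal{A}$ is a ratio whose numerator $\gammaeu-\gammael$ and denominator $\gammaeu+\gammael+\alpha_e\mu_z/\mu$ both decrease in $s$, so $F$ is not obviously monotone and a naive derivative argument fails. I would instead invoke the single-crossing structure already embedded in Theorem \ref{thm existence0}: its uniqueness claim says $\widehat{s}$ is the \emph{only} zero of $F$, and the benchmark payoff analysis (the $s=0$ versus $s\to\infty$ discussion around Figure \ref{fig_payoffs_informed_benchmark}) gives $F(0)=-\mathcal{A}(0)<0$, with $F$ eventually nonnegative. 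A continuous function with $F(0)<0$ and a unique zero at $\widehat{s}$ must satisfy $F<0$ on $(0,\widehat{s})$ and $F>0$ on $(\widehat{s},\infty)$; since $s_1>0$, the strict inequality $F(s_1)>0$ forces $s_1\in(\widehat{s},\infty)$, i.e.\ $s_1>\widehat{s}$. Finally, by the strict monotonicity of $B$, $d_1=2B(s_1)-1>2B(\widehat{s})-1=\widehat{d}$, which completes the corollary. The delicate point to verify carefully is that the stated hypothesis $G'(x)+xG''(x)\geq0$ (the condition underlying the benchmark uniqueness) indeed yields the claimed single-crossing sign pattern of $F$, rather than merely isolated-zero uniqueness.
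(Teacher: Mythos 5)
Your argument is essentially the paper's own proof, which appears as Lemma \ref{lem_HATSlessthanS} in the appendix: there too the spread is written as one common function of the exchange cutoff, the benchmark cutoff solves $f(s)=2B(s)-1$ while the multi-venue cutoff solves $f(s)=2B(s)-1-[B(s)\underline{R}-(1-B(s))\bar{R}]$, and the positivity of the dark-pool payoff for $s>s_0$ (via (\ref{equation0}) and $s_0<s_1$) pushes the second intersection to the right. If anything you are more careful than the paper at the final step: the paper concludes ``obviously'' from its figure, whereas you correctly trace the required single-crossing of $F$ back to the concavity argument (driven by $G'(x)+xG''(x)\geq 0$) in the proof of Theorem \ref{thm existence0}, which does yield the sign pattern $F<0$ on $(0,\widehat{s})$ and $F>0$ on $(\widehat{s},\infty)$ rather than mere isolated-zero uniqueness.
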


\begin{cor}
\textbf{(Adverse selection)}
$\forall \sigma_e, \sigma_v >0$,   $0<\underline{\gamma_e}<\overline{\gamma_e}$, $0<\underline{\gamma_d}<\overline{\gamma_d}$, and $\bar{R}-\underline{R}>0$.
\label{cor_adverse}
\end{cor}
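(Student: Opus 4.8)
The plan is to derive all three inequalities from the cut-off characterization of Theorem \ref{thm existence} (which guarantees $0 < s_0 < s_1 < +\infty$), the strict monotonicity of the standard normal CDF $\Phi$, and a symmetrization argument for the dark-pool execution rates. The two ``$\gamma$'' statements are almost immediate from (\ref{gamma_upper})--(\ref{beta_lower}). Since $s_1<\infty$, $\underline{\gamma_e}=1-\Phi(\frac{s_1+\sigma_v}{\sigma_e})>0$, and since $\frac{s_1-\sigma_v}{\sigma_e}<\frac{s_1+\sigma_v}{\sigma_e}$ (using $\sigma_v>0$) monotonicity of $\Phi$ gives $\overline{\gamma_e}>\underline{\gamma_e}$. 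For the dark pool, $s_0<s_1$ gives $\underline{\gamma_d}=\Phi(\frac{s_1+\sigma_v}{\sigma_e})-\Phi(\frac{s_0+\sigma_v}{\sigma_e})>0$.

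The only nontrivial order statement among these is $\overline{\gamma_d}>\underline{\gamma_d}$. I would recast it as $\Phi(\frac{s_1+\sigma_v}{\sigma_e})-\Phi(\frac{s_1-\sigma_v}{\sigma_e})<\Phi(\frac{s_0+\sigma_v}{\sigma_e})-\Phi(\frac{s_0-\sigma_v}{\sigma_e})$, i.e. as the claim that $f(x):=\Phi(x+c)-\Phi(x-c)$, with $c=\sigma_v/\sigma_e>0$, is strictly decreasing on $(0,\infty)$, and then evaluate at $x=s_0/\sigma_e<s_1/\sigma_e$. Indeed $f'(x)=\phi(x+c)-\phi(x-c)$, and $(x+c)^2-(x-c)^2=4xc>0$ for $x>0$ forces $|x+c|>|x-c|$, hence $\phi(x+c)<\phi(x-c)$ since $\phi$ is strictly decreasing in $|\cdot|$; thus $f'<0$ on $(0,\infty)$ and $f(s_1/\sigma_e)<f(s_0/\sigma_e)$. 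It is essential here that $s_0>0$, which the theorem supplies.

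The step I expect to be the genuine obstacle is $\bar{R}-\underline{R}>0$, because (\ref{R_upper})--(\ref{R_lower}) are expectations over $(Z^+,Z^-)$ of $\min\{1,\cdot\}$ terms whose integrand is \emph{not} pointwise signed. Writing $X=\mu\overline{\gamma_d}+\alpha_d Z^+$ and $Y=\mu\underline{\gamma_d}+\alpha_d Z^-$ for the ``right'' and ``wrong'' side depths, one has $\bar{R}-\underline{R}=\mathbb{E}[\Psi(X/Y)]$ with $\Psi(t):=\min\{1,t\}-\min\{1,1/t\}$. Two elementary facts about $\Psi$ on $(0,\infty)$ drive the argument: it is strictly increasing (slope $1$ on $(0,1]$, derivative $1/t^2$ on $[1,\infty)$), and it is odd under inversion, $\Psi(1/t)=-\Psi(t)$. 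Together they give a pointwise lemma: if $t_1t_2>1$ then $t_2>1/t_1$, so $\Psi(t_2)>\Psi(1/t_1)=-\Psi(t_1)$, i.e. $\Psi(t_1)+\Psi(t_2)>0$.

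I would close by symmetrizing over the exchangeable liquidity shocks. Because $(Z^+,Z^-)\stackrel{d}{=}(Z^-,Z^+)$, setting $X'=\mu\overline{\gamma_d}+\alpha_d Z^-$ and $Y'=\mu\underline{\gamma_d}+\alpha_d Z^+$ yields $\mathbb{E}[\Psi(X/Y)]=\mathbb{E}[\Psi(X'/Y')]$, whence $\bar{R}-\underline{R}=\tfrac12\,\mathbb{E}[\Psi(X/Y)+\Psi(X'/Y')]$. Since $\overline{\gamma_d}>\underline{\gamma_d}$ and $\mu>0$ we have $X>Y'$ and $X'>Y$ (all quantities positive, the denominators being positive because $\underline{\gamma_d}>0$), hence $(X/Y)(X'/Y')=XX'/(YY')>1$ for every realization. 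Applying the pointwise lemma with $t_1=X/Y$, $t_2=X'/Y'$ makes the integrand strictly positive almost surely, so the expectation is strictly positive and $\bar{R}>\underline{R}$. The crux is recognizing that the adverse-selection gap $\overline{\gamma_d}>\underline{\gamma_d}$, once paired with the inversion symmetry of $\Psi$ and the distributional symmetry of the liquidity shocks, pins down the sign without any pointwise comparison of $\bar R$ and $\underline R$.
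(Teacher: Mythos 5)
Your proposal is correct, and for the hardest part it supplies an argument the paper does not actually give. The paper's own proof of this corollary is a one-liner: it invokes Theorem \ref{thm existence} to get $0<s_0<s_1$ and then asserts that the inequalities follow ``by definition'' of (\ref{R_upper})--(\ref{beta_lower}), including the final step $0<\underline{R}<\bar{R}<1$. Your treatment of the $\gamma$ inequalities matches what the paper implicitly relies on (monotonicity of $\Phi$ plus the observation that $x\mapsto \Phi(x+c)-\Phi(x-c)$ is strictly decreasing on $(0,\infty)$, which is exactly where $s_0>0$ is needed). Where you genuinely add value is $\bar{R}-\underline{R}>0$: you correctly note that the integrand $\min\{1,X/Y\}-\min\{1,Y/X\}$ is \emph{not} pointwise signed, so the claim does not follow from $\overline{\gamma_d}>\underline{\gamma_d}$ by inspection, and you close the gap with the inversion-odd function $\Psi$ together with symmetrization over the exchangeable pair $(Z^+,Z^-)$. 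This is a clean and complete argument; the only hypothesis you are using beyond what the paper states explicitly is that $(Z^+,Z^-)\stackrel{d}{=}(Z^-,Z^+)$, which holds under the paper's (implicit) assumption that the two liquidity masses are i.i.d., and is in any case exactly the symmetry the model intends. In short: same skeleton as the paper for the participation fractions, but a genuinely worked-out proof, rather than an assertion, for the adverse-selection gap in execution rates.
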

\begin{proof}
If $\bm{\sigma}\in (0, +\infty)$, by Theorem \ref{thm existence}, $0<s_0<s_1$. Therefore by definition of (\ref{Belief}), and (\ref{spread}), (\ref{R_upper}), (\ref{R_lower}), (\ref{gamma_upper}), (\ref{gamma_lower}), (\ref{beta_upper}), (\ref{beta_lower}), (\ref{alpha_d}), it must be that $\frac{A}{\sigma_v}, \alpha_d, \alpha_e \in (0,1)$ and $0<\underline{\gamma_e}<\overline{\gamma_e}<1, 0<\underline{\gamma_d}<\overline{\gamma_d}<1$.  Therefore $0<\underline{R}<\bar{R}<1$.
\end{proof}

Corollary \ref{cor_exchange_amplify} states that dark pools strictly decrease traders' participation in the exchange. Corollary \ref{cor_adverse} states that there exists adverse selection in both the exchange and the dark pool. Market makers lose money to informed traders on average.

\section{Dark Pool Trading and Information Structure}
\label{sec_compstatics}
In this section, we restrict our attention to the following questions. These questions will be discussed in Section \ref{sec_cross section},  \ref{sec_cross model}, and \ref{sec_price discovery}, respectively.
\begin{itemize}
\item (i) How do each venue's market participation and information asymmetry level vary with the information structure, i.e., ``$\sigma_e$''?
\item (ii) How does adding a dark pool impact market participation and information asymmetry?
\item (iii) How does adding a dark pool impact price discovery, and what are the determinants?
\end{itemize}

\subsection{Information Precision and Market Characteristics}
\label{sec_cross section}
To recall, dark pools are of important value for informed traders who are less well-informed because they mitigate their informational risks. When information becomes more precise, such need decreases, and a migration of traders from one venue to another shall be observed. In this section, we study how the traders' participation and information asymmetry level in each venue vary with the informational structure. The results are shown in Proposition \ref{prop_monotonicity_lowsigma} and Proposition \ref{prop_monotonicity_lowsigmaparticipation}. The numerical example is in Figure  \ref{fig_participation}. We use $\sigma_e$ to capture the information precision for informed traders. A lower $\sigma_e$ corresponds with lower noises, hence a higher precision in their signals.

\begin{prop} \textbf{(Exchange spread, Dark pool adverse selection costs)}
If $\sigma_e$ is large, then both the exchange spreads and the dark pool adverse selection costs increase in information precision. That is, as $\sigma_e$ decreases,
    \begin{itemize}
      \item(Without DP): $\frac{A^{\mathbb{S}}}{\sigma_v}$ strictly increases;
      \item (With DP): Similarly, $\frac{A}{\sigma_v}$ increases, $\widehat{\bar{R}}-\widehat{\underline{R}}$ increases,
    \end{itemize}
\label{prop_monotonicity_lowsigma}
\end{prop}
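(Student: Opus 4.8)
The plan is to collapse the equilibrium to a single scalar equation in the benchmark (and a small coupled system with a dark pool) and then run a perturbation in the precision parameter. The first observation I would make is that the belief in (\ref{Belief}) is logistic: $2B(s)-1 = \tanh\!\big(s\sigma_v/\sigma_e^2\big)$. This makes every indifference condition transparent. The benchmark cutoff condition (\ref{shat_benchmark}) becomes $A^{\mathbb{S}}/\sigma_v = \tanh(\widehat s\,\sigma_v/\sigma_e^2)$, and in the multi-venue model (\ref{equation0})--(\ref{equation1}) reduce to $\tanh(s_0\sigma_v/\sigma_e^2) = (\bar R-\underline R)/(\bar R+\underline R)$ together with an analogous closed expression for $s_1$. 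I would then non-dimensionalize by writing $\beta := \sigma_v/\sigma_e$ (so ``$\sigma_e$ large'' is ``$\beta$ small'') and $t_i := s_i/\sigma_v$, giving $\overline{\gamma_e}^{\mathbb{S}} = \Phi(\beta(1-\widehat t))$, $\underline{\gamma_e}^{\mathbb{S}} = \Phi(-\beta(1+\widehat t))$, and $\alpha_e^{\mathbb{S}} = 1-G(\tanh(\widehat t\beta^2))$. Substituting these into the break-even identity (\ref{spread_benchmark}) produces a single equation $H(a,\beta)=0$ for the normalized spread $a := A^{\mathbb{S}}/\sigma_v$.

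For the benchmark I would expand around $\beta=0$. Positing $a = c\beta + O(\beta^2)$ forces $\beta\widehat t \to c$, so by Taylor expansion of $\Phi$ the numerator $\overline{\gamma_e}^{\mathbb{S}}-\underline{\gamma_e}^{\mathbb{S}}$ is $2\phi(c)\beta + O(\beta^2)$ while the denominator tends to $2\Phi(-c)+[1-G(0)]\mu_z/\mu$. At leading order the coefficient $c$ therefore solves the explicit fixed point $c\,[\,2\Phi(-c)+(1-G(0))\mu_z/\mu\,] = 2\phi(c)$, whose left side increases from $0$ and whose right side decreases, so there is a unique positive root. Hence $A^{\mathbb{S}}/\sigma_v \sim c\,\sigma_v/\sigma_e$ with $c>0$. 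To upgrade this to the asserted monotonicity I would invoke the implicit function theorem: uniqueness (Theorem \ref{thm existence0}) together with the assumption $G'(x)+xG''(x)\ge 0$ yields a non-degenerate Jacobian, so $a(\beta)$ is $C^1$ and $da/d\beta \to c>0$ as $\beta\to 0$. Since $d\beta/d\sigma_e = -\sigma_v/\sigma_e^2 < 0$, it follows that $d(A^{\mathbb{S}}/\sigma_v)/d\sigma_e = (da/d\beta)(d\beta/d\sigma_e) < 0$ for all sufficiently large $\sigma_e$, which is precisely the claim without a dark pool.

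The dark-pool case follows the same template applied to the coupled equilibrium system in $(a,\bar R,\underline R,t_0,t_1)$ guaranteed by Theorem \ref{thm existence}. As $\beta\to 0$ signals become uninformative, so the order imbalance $\overline{\gamma_d}-\underline{\gamma_d}\to 0$; by the symmetry of (\ref{R_upper})--(\ref{R_lower}) this forces $\bar R-\underline R\to 0$ as well. An expansion analogous to the benchmark should show that both $A/\sigma_v$ and $\bar R-\underline R$ vanish at rate $\beta$ with strictly positive leading coefficients, whence both increase as $\sigma_e$ decreases, and the same implicit-function step signs the derivatives in $\sigma_e$.

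I expect the main obstacle to lie exactly in the execution rates. Unlike the benchmark, $\bar R$ and $\underline R$ are expectations of $\min\{1,\cdot\}$ over the liquidity shocks $Z^+,Z^-$, hence only piecewise smooth and without elementary closed form. The technical core is therefore twofold: (i) showing that the map from the imbalance $\overline{\gamma_d}-\underline{\gamma_d}$ to $\bar R-\underline R$ is differentiable near balance with a strictly positive slope, so that the leading coefficient of $\bar R-\underline R$ is genuinely positive rather than zero; and (ii) verifying that the full system admits a $C^1$ solution branch as $\beta\to 0$, so the implicit-function argument that transfers the sign of $da/d\beta$ (and of $d(\bar R-\underline R)/d\beta$) into the sign of the derivative with respect to $\sigma_e$ remains valid. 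Controlling the differentiability and positivity through the $\min$ operator is the step I anticipate will require the most care.
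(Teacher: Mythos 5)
Your benchmark half is, up to notation, the paper's own proof. The paper also works with $\bm{\sigma}=\sigma_v/\sigma_e$ (your $\beta$), shows the normalized cutoff $\widehat{s}/\sigma_e$ converges as $\bm{\sigma}\rightarrow 0^+$ to the unique root $\mathbf{s}^*$ of $s\left[2-2\Phi(s)+\mu_z/\mu\right]=2\phi(s)$ --- which is exactly your fixed point for $c$, since $2\Phi(-c)=2-2\Phi(c)$ and $G(0)=0$ --- and then establishes $\lim_{\bm{\sigma}\rightarrow 0^+} d(A^{\mathbb{S}}/\sigma_v)/d\bm{\sigma}=\mathbf{s}^*>0$, concluding strict monotonicity for large $\sigma_e$ exactly as you do. One small repair: your claim that the left side $c\left[2\Phi(-c)+\mu_z/\mu\right]$ is increasing can fail for small $\mu_z/\mu$ (its derivative is $2\Phi(-c)+\mu_z/\mu-2c\phi(c)$); uniqueness instead follows because the difference of the two sides has derivative $2\Phi(-c)+\mu_z/\mu>0$, the terms $-2c\phi(c)$ and $-2\phi'(c)$ cancelling.

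The genuine gap is in the dark-pool half, and it sits precisely where you flag it. You leave as the ``technical core'' the differentiability and positive slope of the map from the order imbalance to $\bar{R}-\underline{R}$ through the $\min\{1,\cdot\}$ expectations, but you do not supply that step, and the paper never proves it either --- it sidesteps it. The paper differentiates the equilibrium indifference condition (\ref{equation0}), written as $\bar{R}/\underline{R}=B(s_0)/(1-B(s_0))$, which yields the needed combination $\underline{R}\,\tfrac{d\bar{R}}{d\bm{\sigma}}-\bar{R}\,\tfrac{d\underline{R}}{d\bm{\sigma}}$ directly in terms of $s_0$ and its derivative; the Taylor-expansion Lemma \ref{lem_Rlimit} is used only to pin down the limits $\bar{R},\underline{R}\rightarrow 1$. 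Without either that trick or a proof of your step (i), your dark-pool argument is a program rather than a proof. Moreover, your assertion that both $A/\sigma_v$ and $\bar{R}-\underline{R}$ have \emph{strictly} positive leading coefficients overshoots what is provable by this route: the paper only obtains $\lim_{\bm{\sigma}\rightarrow 0^+} d(A/\sigma_v)/d\bm{\sigma}=s_0(0^+)\geq 0$ and cannot exclude $s_0(0^+)=0$, which is exactly why Proposition \ref{prop_monotonicity_lowsigma} says ``strictly increases'' without a dark pool but only ``increases'' with one. You should either weaken your claim to weak monotonicity or supply an argument that $s_0(0^+)>0$, which the paper does not have.
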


\begin{prop} \textbf{(Participation rates)}
 Suppose $\sigma_e$ is large. Then for informed traders, as information precision increases, both the exchange and the dark pool participation increase. In contrast, for uninformed traders, as information precision increases, the exchange participation decreases while the dark pool participation increase. And total uninformed participation decreases. That is, as $\sigma_e$ decreases,
    \begin{itemize}
      \item(Without DP): $\overline{\gamma_e}^{\mathbb{S}} - \underline{\gamma_e}^{\mathbb{S}}$ strictly increases, and $\alpha_e^{\mathbb{S}}$ strictly decreases;
      \item (With DP): Similarly, $\overline{\gamma_e} - \underline{\gamma_e}$, $\overline{\gamma_d} - \underline{\gamma_d}$  increases, $\alpha_e$ decreases, $\alpha_d$ increases, and $\alpha_e+\alpha_d$ decreases.\footnote{$\overline{\gamma_e} - \underline{\gamma_e}$ and $\overline{\gamma_d} - \underline{\gamma_d}$ capture the ``meaningful'' participation of informed trades, in the sense that they are the fractions of informed trades that trade in the ``right'' direction net the fractions that trade in the ``wrong'' direction.}
    \end{itemize}

\label{prop_monotonicity_lowsigmaparticipation}
\end{prop}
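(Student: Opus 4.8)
The plan is to reduce every quantity in the statement to an explicit function of the equilibrium cutoffs and of $\sigma_e$, and then to read off the signs from a large-$\sigma_e$ asymptotic analysis fed by Proposition \ref{prop_monotonicity_lowsigma}. The crucial simplification is that, by (\ref{Belief}), the belief is logistic, so $2B(s)-1=\tanh(s\sigma_v/\sigma_e^2)$. Hence the uninformed quantities collapse: from (\ref{alpha_e})–(\ref{alpha_d}) with $d_0=2B(s_0)-1$ and $d_1=2B(s_1)-1$, we get $\alpha_e=1-G(\tanh(s_1\sigma_v/\sigma_e^2))$ and $\alpha_e+\alpha_d=1-G(\tanh(s_0\sigma_v/\sigma_e^2))$, while (\ref{gamma_upper})–(\ref{beta_lower}) give $\overline{\gamma_e}-\underline{\gamma_e}=\psi(s_1)$ and $\overline{\gamma_d}-\underline{\gamma_d}=\psi(s_0)-\psi(s_1)$, where $\psi(s):=\Phi(\frac{s+\sigma_v}{\sigma_e})-\Phi(\frac{s-\sigma_v}{\sigma_e})$ is the standard-normal mass in a window of fixed width $2\sigma_v/\sigma_e$ centred at $s/\sigma_e$; in particular $\psi$ is strictly decreasing in $s$ for $s>0$, which re-derives the wedges of Corollary \ref{cor_adverse}. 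The benchmark is the same reduction with the single cutoff $\widehat{s}$: using $\widehat d=A^{\mathbb{S}}/\sigma_v$ from (\ref{shat_benchmark}) we have $\alpha_e^{\mathbb{S}}=1-G(A^{\mathbb{S}}/\sigma_v)$ and $\overline{\gamma_e}^{\mathbb{S}}-\underline{\gamma_e}^{\mathbb{S}}=\psi(\widehat{s})$.

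One claim is then immediate from Proposition \ref{prop_monotonicity_lowsigma}: since $A^{\mathbb{S}}/\sigma_v$ strictly increases as $\sigma_e$ falls and $G$ is a strictly increasing $C^2$ CDF, $\alpha_e^{\mathbb{S}}=1-G(A^{\mathbb{S}}/\sigma_v)$ strictly decreases. For the multi-venue cutoffs I rewrite the two indifference conditions using the logistic form: (\ref{equation0}) becomes $\tanh(s_0\sigma_v/\sigma_e^2)=\frac{\bar{R}-\underline{R}}{\bar{R}+\underline{R}}$, and (\ref{equation1}) becomes $\tanh(s_1\sigma_v/\sigma_e^2)=\frac{2A/\sigma_v-(\bar{R}-\underline{R})}{2-(\bar{R}+\underline{R})}$. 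Consequently $\alpha_e$ and $\alpha_e+\alpha_d$ decrease exactly when $s_1\sigma_v/\sigma_e^2$ and $s_0\sigma_v/\sigma_e^2$ increase as $\sigma_e$ falls, and the remaining four signs — $\alpha_d=G(d_1)-G(d_0)\uparrow$, $\overline{\gamma_e}-\underline{\gamma_e}=\psi(s_1)\uparrow$, $\overline{\gamma_d}-\underline{\gamma_d}=\psi(s_0)-\psi(s_1)\uparrow$, and the benchmark $\psi(\widehat s)\uparrow$ — all reduce to the comparative statics of the pair $(s_0,s_1)$ (respectively $\widehat{s}$) in $\sigma_e$.

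To pin those down I would work to leading order as $\sigma_e\to\infty$, substituting $\tanh(u)\approx u$ and $\psi(s)\approx\frac{2\sigma_v}{\sigma_e}\phi(s/\sigma_e)$. For the benchmark, feeding these into (\ref{shat_benchmark}), (\ref{gammabare_benchmark}), (\ref{gammalowe_benchmark}), (\ref{alphae_benchmark}) and (\ref{spread}) yields $\widehat{s}\sim c^{\mathbb{S}}\sigma_e$, where $c^{\mathbb{S}}>0$ is the unique root of $c\,[\,2-2\Phi(c)+\mu_z/\mu\,]=2\phi(c)$ (using $G(0)=0$, so $\alpha_e^{\mathbb{S}}\to1$); hence $\overline{\gamma_e}^{\mathbb{S}}-\underline{\gamma_e}^{\mathbb{S}}\sim\frac{2\sigma_v}{\sigma_e}\phi(c^{\mathbb{S}})$, which increases as $\sigma_e$ falls, settling the last benchmark claim. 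For the multi-venue model the same substitution into (\ref{equation0}), (\ref{equation1}), (\ref{R_upper}), (\ref{R_lower}) and (\ref{spread}) produces $s_0\sim c_0\sigma_e$, $s_1\sim c_1\sigma_e$ with $0<c_0<c_1$ solving a coupled transcendental system; then $\psi(s_1)\sim\frac{2\sigma_v}{\sigma_e}\phi(c_1)$, $\psi(s_0)-\psi(s_1)\sim\frac{2\sigma_v}{\sigma_e}(\phi(c_0)-\phi(c_1))>0$, while $s_j\sigma_v/\sigma_e^2\sim c_j\sigma_v/\sigma_e$ and $\alpha_d\sim G'(0)(c_1-c_0)\sigma_v/\sigma_e$. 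Differentiating these leading-order expressions in $\sigma_e$ and verifying that the leading terms dominate the corrections delivers each sign for $\sigma_e$ large; the regularity condition $G'(x)+xG''(x)\ge0$ supplies the single-crossing structure guaranteeing the cutoff map responds monotonically.

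The main obstacle is the dark-pool execution rates. Unlike the spread, $\bar{R}$ and $\underline{R}$ in (\ref{R_upper})–(\ref{R_lower}) are expectations of the nonlinear $\min\{1,\cdot\}$ of ratios of $\overline{\gamma_d}\mu+\alpha_d Z^{+}$ to $\underline{\gamma_d}\mu+\alpha_d Z^{-}$ over arbitrary liquidity distributions $Z^{\pm}$, so their leading behaviour and their first-order response to the vanishing wedge $\overline{\gamma_d}-\underline{\gamma_d}$ must be extracted by hand: by symmetry both tend to a common limit (here $1$, since $\alpha_d\to0$ while $\overline{\gamma_d},\underline{\gamma_d}\to\Phi(c_1)-\Phi(c_0)$) and $\bar{R}-\underline{R}=O(\overline{\gamma_d}-\underline{\gamma_d})$. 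The genuine delicacy is that $\bar{R}+\underline{R}\to2$ makes the denominator of the $s_1$-condition vanish at the same rate as its numerator, forcing a near-cancellation $2A/\sigma_v\approx\bar{R}-\underline{R}$ at order $\sigma_v/\sigma_e$ that one must resolve at the next order to fix $c_1$. Thus the two hardest signs are (i) the monotonicity of $\frac{\bar{R}-\underline{R}}{\bar{R}+\underline{R}}$ that gives $\alpha_e+\alpha_d\downarrow$, and (ii) the differences $\alpha_d\uparrow$ and $\psi(s_0)-\psi(s_1)\uparrow$, where both cutoffs comove and one must compare their relative speeds rather than merely their directions. Everywhere the hypothesis that $\sigma_e$ be large is precisely what lets the leading-order terms control these signs.
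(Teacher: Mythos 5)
Your proposal follows essentially the same route as the paper's proof: after normalizing by $\sigma_e$ (so that ``$\sigma_e$ large'' means $\bm{\sigma}=\sigma_v/\sigma_e\to 0^+$), the paper likewise shows that the normalized cutoffs converge to finite positive limits --- $\widehat{\mathbf{s}}\to\mathbf{s}^*$ solving exactly your equation $c\left[2-2\Phi(c)+\frac{\mu_z}{\mu}\right]=2\phi(c)$ (Lemma \ref{lem_limitzero_hats}), and $\mathbf{s_0}(0^+)=\frac{2\phi(\mathbf{s_1}(0^+))}{2-2\Phi(\mathbf{s_1}(0^+))+\frac{\mu_z}{\mu}}$ --- and then reads off the signs of the limiting derivatives with respect to $\bm{\sigma}$, which is your leading-order differentiation in disguise. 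The two steps you flag as unresolved (justifying that one may differentiate the asymptotics, and the near-cancellation in the $\mathbf{s_1}$-condition as $\bar{R}+\underline{R}\to 2$) are precisely what the paper's Lemma \ref{lem_limit_zero} and Lemma \ref{lem_Rlimit} supply, by establishing $\bm{\sigma}\frac{d\mathbf{s}}{d\bm{\sigma}}\to 0$ for cutoffs with finite limits and controlling the cross terms $\left(\phi(\mathbf{s}+\bm{\sigma})-\phi(\mathbf{s}-\bm{\sigma})\right)\frac{d\mathbf{s}}{d\bm{\sigma}}$; with those in hand your argument closes as written.
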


\begin{rmk}
when $\sigma_e$ is large, as in Proposition \ref{prop_monotonicity_lowsigma} and Proposition \ref{prop_monotonicity_lowsigmaparticipation} , dark pool participation for informed traders and dark pool adverse selection cost INCREASES with information precision. When $\sigma_e$ is small, however, they may DECREASE with information precision. We have not been able to obtain comparative statics when $\sigma_e$ is small, but we show this inverted U-shape in the numerical example in Figure  \ref{fig_participation}.\footnote{In all our plots, we use a set of parameters in which $\mu_z = 60, \mu = 30$, $Z^+, Z^-$ has Gamma distributions with mean $30$ and variance $30$ and  $G(d)=\frac{d}{3}$ for $\bar{d} \in [0,3]$.} While we provide an explanation in the context, the explicit proof is of future work.
\label{remark1}
\end{rmk}

\begin{figure}[h]
\centering
\includegraphics[width=1\textwidth]{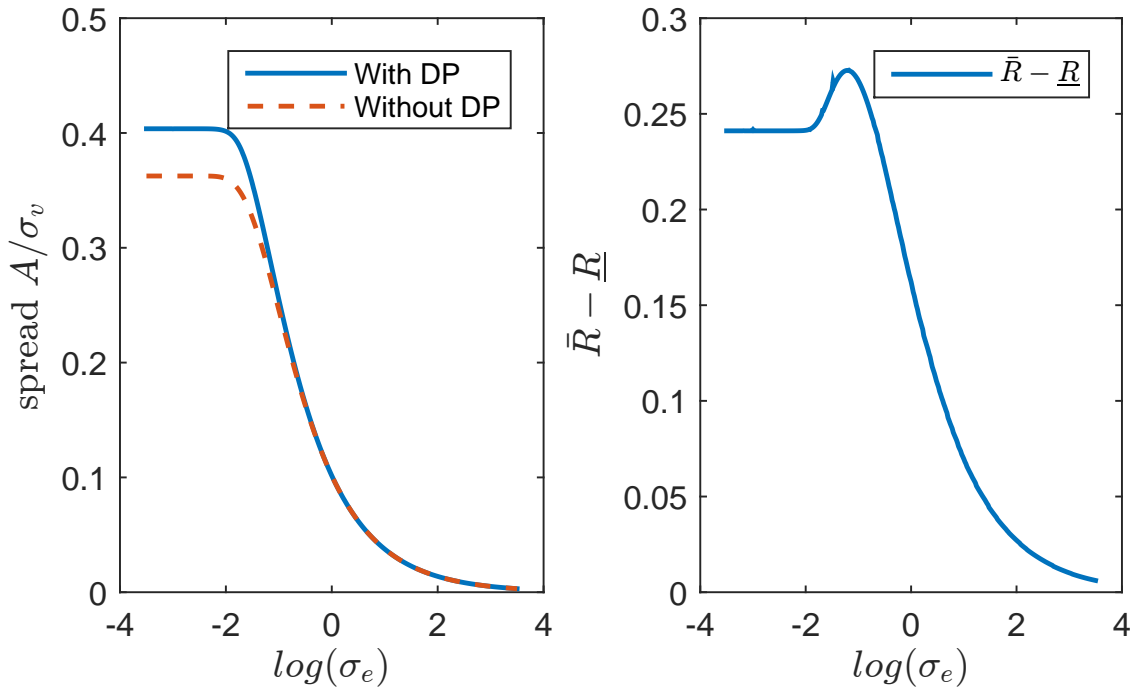}
\caption{\footnotesize \textbf{Transaction Costs.} The left-hand figure shows the normalized spreads on the exchange and how they vary with $\log(\sigma_e)$; the right-hand figure shows the adverse selection cost in the dark pool and how it vary with $\log(\sigma_e)$. In both figures, $\log(\sigma_v) =0$.}
\label{fig_spread}
\end{figure}

\begin{figure}[h]
\centering
\includegraphics[width=1\textwidth]{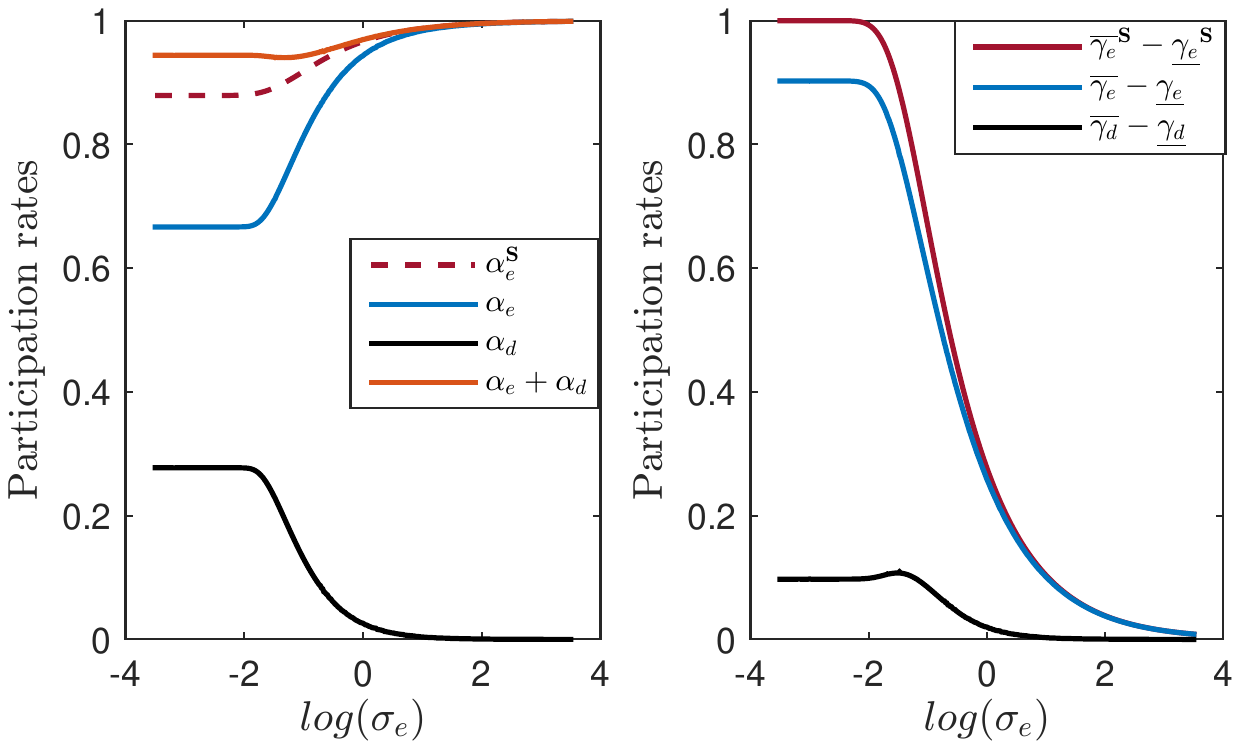}
\caption{\footnotesize \textbf{Participation Rates.} The left figure plots the expected participation rates of the uninformed and how they vary with $\log(\sigma_e)$. The right one shows the participation rates for the informed traders how they vary with $\log(\sigma_e)$. In both plots, $\log(\sigma_v) =0$, $\mu_z = 60, \mu = 30$.}
\label{fig_participation}
\end{figure}

In the exchange, when signals become more precise, both the informed exchange participation, $\overline{\gamma_e} - \underline{\gamma_e}$, and exchange spread, $A$, increase, whereas the uninformed exchange participation, $\alpha_e$, decreases. The intuition is as follows. In equilibrium the informed traders are sorted by the strengths of their signals. when there is an increment in their information precision, the overall strengths of their signals are increased. Therefore, some informed traders migrate from ``do not trade'' to ``trade in the dark pool'' and from ``trade in the dark pool'' to ``trade in the exchange.'' This will cause a strict increase of information asymmetry level in the exchange, and hence an increase of the exchange spread. Consequently, some liquidity traders migrate from ``trade in the exchange'' to ``trade in the dark pool,'' which decreases the uninformed participation in the exchange.

In the dark pool, the dark pool informed participation, $\overline{\gamma_d} - \underline{\gamma_d}$,  and  the dark pool adverse selection, $\widehat{\bar{R}}-\widehat{\underline{R}}$, exhibit an inverted U-shape with information precision. The intuition for the inverted U-shape is as follows. A change in the information precision changes the distribution of the signals' strengths. When the information precision level is low (i.e., $\sigma_e$ is high), as the precision grows, signals become more concentrated in the relative ``modest'' group, and more informed traders migrate from ``do not trade'' to the dark pool. Overall, this induces a greater proportion of informed participation in the dark pool, and the dark pool adverse selection increases. In contrast, when the information precision level is high (i.e., $\sigma_e$ is low), as precision grows, signals become more concentrated in the relative ``strong'' group. Thus, more informed traders migrate from the dark pool to the exchange, leaving a lower proportion of informed trades in the dark pool, and the dark pool adverse selection decreases.

An interesting comparison with \citet*{zhu_dark_2013} is that, although \citet*{zhu_dark_2013} does not consider the information structure, he discusses the comparative statics of market behaviors as a function of $\sigma_v$. $\sigma_v$ and $\sigma_e$ are comparable in the sense that, all else equal,  informed traders' information advantage increases in both information precision (i.e., as $\sigma_e$ decreases), and the asset value uncertainty (i.e., as $\sigma_v$ increases, see a more detailed discussion in Section \ref{sec_price discovery}).

We highlight two major differences between our predictions and those of \citet*{zhu_dark_2013}. First, our model predicts that traders' participation exhibits a smooth variation cross-sectionally (i.e., when $\sigma_v$ grows), whereas there is a discontinuity in that of \citet*{zhu_dark_2013}. In \citet*{zhu_dark_2013}, in equilibrium informed traders don't trade in dark pools for some assets unless the asset's value uncertainty is high (i.e., $\sigma_v$ is high). In contrast, we predict that both informed and liquidity traders trade in dark pools in a clustering fashion, regardless of $\sigma_v$.  This is a more realistic prediction. If there are some assets for which dark pools only attract liquidity traders, one would expect a persistent gap between the average size of dark pools and the average size of lit markets. Yet, this is not true as we observe in Figure \ref{fig_ussize}.  This, again, emphasizes that dark pools function as informational risk mitigators and that they are always lucrative for traders, informed or uninformed.

Second, \citet*{zhu_dark_2013} predicts that informed traders' participation in dark pool always squeezes out liquidity traders (i.e., $\alpha_d$ decreases as informed trades grow in the dark pool), whereas we predict that the two can grow simultaneously, especially when informed traders' information is relatively imprecise. The explanation is that the informed trading intensity in the dark pool is always high in \citet*{zhu_dark_2013} because traders have exact information. But in our model, the intensity is neutralized to some extent because some speculators trade in the ``wrong'' direction.

\subsection{Dark Pool Impacts on Market Characteristics}
\label{sec_cross model}
In this section, we study how the market responds when a dark pool is added alongside an exchange. Precisely, we compare the equilibrium traders' participation and exchange spread between the two models:  the ``Single-venue'' model and the ``Multi-venue'' model. In the comparison, we fixed the information structure (i.e., $\sigma_e$). The result is shown in Proposition \ref{prop_compare}. This result coincides    with \citet*{zhu_dark_2013}, except that the effect on the exchange spread $A$ is uncertain when information is imprecise (i.e., $\sigma_e$ high).

\begin{prop}
 Given any $\sigma_v, \sigma_e>0$, then adding a dark pool alongside an exchange a) \textbf{\textit{(Participation)}}: decreases the participation in the exchange for both informed and uninformed traders, but increases the total market participation, and b) \textbf{\textit{(Exchange spread)}}: widens the spread on the exchange, if information precision is high ($\sigma_e$ is small).

 That is,  suppose $\frac{\mu_z}{\mu}\geq \frac{R}{1-R}\frac{1}{1-G(\widehat{k})}$ where  $R = \mathbb{E} \left[\min\left\{1, \frac{R^+}{R^-} \right\} \right]$, and $\widehat{k}$ is uniquely determined by $\widehat{k}=\frac{1}{1+[1-G(\widehat{k}]\frac{\mu_z}{\mu}}$
 then
 \begin{itemize}
 \item  (i) $(\overline{\gamma_e}^{\mathbb{S}}-\underline{\gamma_e}^{\mathbb{S}})\geq (\overline{\gamma_e}-\underline{\gamma_e})$, $\alpha_e^{\mathbb{S}} \geq \alpha_e$, and if $\sigma_e$ is sufficiently small or large, $\alpha_e^{\mathbb{S}} \leq \alpha_e +\alpha_d$. And,
 \item (ii) $\frac{A^{\mathbb{S}}}{\sigma_v}\leq \frac{A}{\sigma_v}$ if $\sigma_e$ is small.
 \end{itemize}
\label{prop_compare}
\end{prop}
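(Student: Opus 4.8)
The plan is to turn every inequality into a comparison of the two models' equilibrium cut-offs, exploiting the fact that all six participation quantities and both normalized spreads are explicit functions of those cut-offs. In the single-venue model the only cut-off is $\widehat{s}$, while in the multi-venue model there are two, $s_0<s_1$, and the one comparative input I would lean on throughout is Corollary \ref{cor_exchange_amplify}, which already delivers $s_1>\widehat{s}$; interiority of all objects is guaranteed by Corollary \ref{cor_adverse}. The structural observation that makes the spread comparison tractable is that, by \eqref{spread_benchmark}, \eqref{spread} and \eqref{gamma_upper}--\eqref{gamma_lower}, both normalized spreads are the \emph{same} function of a single cut-off,
\[
g(c)=\frac{\Phi\big(\tfrac{c+\sigma_v}{\sigma_e}\big)-\Phi\big(\tfrac{c-\sigma_v}{\sigma_e}\big)}{\big[2-\Phi\big(\tfrac{c-\sigma_v}{\sigma_e}\big)-\Phi\big(\tfrac{c+\sigma_v}{\sigma_e}\big)\big]+[1-G(2B(c)-1)]\tfrac{\mu_z}{\mu}},
\]
with $A^{\mathbb{S}}/\sigma_v=g(\widehat{s})$ and $A/\sigma_v=g(s_1)$.

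Next I would dispatch the easy parts. The informed net participation equals $\Phi(\tfrac{c+\sigma_v}{\sigma_e})-\Phi(\tfrac{c-\sigma_v}{\sigma_e})$ at the relevant cut-off $c$; differentiating and using that $\phi$ is symmetric and strictly decreasing in $|\cdot|$ shows this is strictly decreasing in $c$ for $c>0$, so $s_1>\widehat{s}$ forces $\overline{\gamma_e}^{\mathbb{S}}-\underline{\gamma_e}^{\mathbb{S}}\ge\overline{\gamma_e}-\underline{\gamma_e}$. Likewise $\alpha_e=1-G(2B(c)-1)$ is decreasing in $c$ (as $B$ is increasing and $G$ is a CDF), so $s_1>\widehat{s}$ yields $\alpha_e^{\mathbb{S}}\ge\alpha_e$.

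For the total-participation claim I would use $\alpha_e+\alpha_d=1-G(2B(s_0)-1)$ from \eqref{alpha_e}--\eqref{alpha_d}, so that $\alpha_e^{\mathbb{S}}\le\alpha_e+\alpha_d$ is equivalent to $s_0\le\widehat{s}$. Feeding \eqref{equation0} (which gives $B(s_0)=\bar{R}/(\bar{R}+\underline{R})$) and \eqref{shat_benchmark} (which gives $2B(\widehat{s})-1=A^{\mathbb{S}}/\sigma_v$) into the monotonicity of $B$, this reduces to the clean inequality $\frac{\bar{R}-\underline{R}}{\bar{R}+\underline{R}}\le\frac{A^{\mathbb{S}}}{\sigma_v}$. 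I would verify this in the two tails: as $\sigma_e\to\infty$ all signal strengths collapse to $B\equiv\tfrac12$, both sides vanish, and I would compare their leading orders; as $\sigma_e\to0$ the model converges to the perfect-information benchmark, where $A^{\mathbb{S}}/\sigma_v\to\widehat{k}$ (the fixed point defining $\widehat{k}$) and, via \eqref{R_upper}--\eqref{R_lower}, the left side tends to its limiting dark-pool value. Continuity of the equilibrium objects in $\sigma_e$ then extends the strict inequality to a neighborhood of each tail, which is exactly the ``$\sigma_e$ sufficiently small or large'' qualifier.

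The spread comparison is the crux. By the shared form above it reduces to $g(s_1)\ge g(\widehat{s})$, which, since $s_1>\widehat{s}$, would follow from $g$ being increasing on $[\widehat{s},s_1]$; writing $g=q/(1+\tfrac{\alpha}{m}\tfrac{\mu_z}{\mu})$ with signal quality $q=\frac{\overline{\gamma_e}-\underline{\gamma_e}}{\overline{\gamma_e}+\underline{\gamma_e}}$ increasing and informed mass $m=\overline{\gamma_e}+\underline{\gamma_e}$ decreasing in $c$, this is the statement that the marginal uninformed-to-informed ratio $\alpha/m$ falls as the cut-off rises. I do not expect this to hold for all $\sigma_e$, which is why the claim is restricted to $\sigma_e$ small; instead I would pass to the limit $\sigma_e\to0$, show $A^{\mathbb{S}}/\sigma_v\to\widehat{k}$ and that $A/\sigma_v$ tends to a strictly larger limit, and conclude by continuity for $\sigma_e$ small. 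The hypothesis $\frac{\mu_z}{\mu}\ge\frac{R}{1-R}\frac{1}{1-G(\widehat{k})}$ is precisely what closes this limiting comparison: substituting the defining relation $\frac{\mu_z}{\mu}=\frac{1-\widehat{k}}{\widehat{k}(1-G(\widehat{k}))}$ turns it into $\widehat{k}\le 1-R$, the inequality that forces the limiting multi-venue spread to dominate $\widehat{k}$. The main obstacle, and where I expect the real work to lie, is that $\widehat{s}$ and $(s_0,s_1)$ solve two distinct nonlinear fixed-point systems with no exploitable monotonicity linking them, so the limiting analysis together with a continuity (implicit-function-type) argument in $\sigma_e$ is unavoidable, and the tail estimates for the uninformative limit $\sigma_e\to\infty$ require care because numerator and denominator vanish simultaneously.
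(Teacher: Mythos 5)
Your plan matches the paper's own proof essentially step for step: the participation comparisons follow from the cut-off ordering $\widehat{s}<s_1$ (the paper's lemma establishing $\widehat{\mathbf{s}}<\mathbf{s_1}$ for all $\bm{\sigma}$), and the total-participation and spread claims are settled exactly as you propose, by computing the equilibrium limits as $\bm{\sigma}=\sigma_v/\sigma_e\to+\infty$ and $\bm{\sigma}\to 0^+$ and extending by continuity, with the hypothesis on $\mu_z/\mu$ entering precisely through your reformulation $\widehat{k}\le 1-R$ to order the limiting spreads. The approach is correct and not genuinely different from the paper's.
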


\begin{rmk}
When information precision is high ($\sigma_e$ is low), as in Proposition \ref{prop_compare}, we proved that $\frac{A^{\mathbb{S}}}{\sigma_v}\leq \frac{A}{\sigma_v}$ (i.e., adding a dark pool WIDENS the exchange spread). When information precision is low ($\sigma_e$ is high),  however, it is possible that $\frac{A^{\mathbb{S}}}{\sigma_v}> \frac{A}{\sigma_v}$ (i.e., adding a dark pool NARROWS the exchange spread ).\footnote{When $\sigma_e$ is large, it is either $\frac{A^{\mathbb{S}}}{\sigma_v}<\frac{A}{\sigma_v}$ when $\sigma_e$ is large, or undetermined (in which,  as $\sigma_e\rightarrow +\infty$, $\frac{A^{\mathbb{S}}}{\sigma_v}$ equals $\frac{A}{\sigma_v}$, and their first order derivatives with respective to $\sigma_e$ are equal.  )} This could be caused by the fact that, in these cases, the informed traders have moved to dark pools so much that the information asymmetry level in the exchange has deceased dramatically. While we discuss this briefly in Appendix \ref{proof_prop_compare}, the explicit analysis is of future work.
\label{remark2}
\end{rmk}

Proposition \ref{prop_compare} states that adding a dark pool will decrease informed and uninformed traders' exchange participation but increase the total participation. Thus, dark pools create additional liquidity. This, again, is explained by the migration of traders. Because adding a dark pool enlarges the opportunity sets for both informed and uninformed traders, there will be migrations of both types of traders from both ``Not trade'' and ``trade in the exchange'' to ``trade in the dark pool.''  Therefore, the dark pool attracts not only additional liquidity but also part of the liquidity from the exchange. As a consequence, the exchange participation decreases, but the total participation of traders increases. This is captured in figure \ref{fig_participation} in which $\alpha_e\leq \alpha^{\mathbb{S}}_e \leq \alpha_e+\alpha_d$.

The impact of a dark pool to the exchange spread, however, is not straightforward. The spread depends on the level of information asymmetry in the exchange, which in turn depends on the intensity of informed and uninformed trades. As we have pointed out, the addition of a dark pool induces an outflow of both informed and uninformed traders. The resulting proportion of the two in the exchange depends on which overwhelms the other. When the informed traders have high information precision (i.e., low $\sigma_e$), a large fraction of them strictly prefers to stay in the exchange, and only a small fraction will migrate to the dark pool, compared with the migration of uninformed traders. As a result, the exchange information asymmetry strictly increases and exchange spread, ``$\frac{A}{\sigma_v}$,'' is enlarged. When the informed traders have low precision in their information (i.e., $\sigma_e$ is high), however, there is a large fraction of the informed who prefer to migrate to the ``buffer zone,'' the dark pool, and the relative proportion of informed traders in the exchange decreases. As a result, the exchange spread may or may not decrease, depending on how intense the migration is.\footnote{Note that $\frac{A}{\sigma_v}$ depends on both $\frac{\overline{\gamma_e}-\underline{\gamma_e}}{\overline{\gamma_e}+\underline{\gamma_e}}$ and $\frac{\overline{\gamma_e}-\underline{\gamma_e}}{\alpha_e}$, when $\sigma_e$ is large, $\frac{\overline{\gamma_e}-\underline{\gamma_e}}{\alpha_e}$ decreases when adding a dark pool but not necessarily $\frac{\overline{\gamma_e}-\underline{\gamma_e}}{\overline{\gamma_e}+\underline{\gamma_e}}$. The overall effect on $\frac{A}{\sigma_v}$ is uncertain. }
\subsection{Dark Pool Impacts on Price Discovery}
\label{sec_price discovery}
Price discovery is measured by the informativeness of $P_1$. At the end of period 1, the market maker observes the period 1 exchange order flows $V_b, V_s$, which respectively represents the ``buy'' volume and the ``sell'' volume and announces a  closing price $P_1 = \mathbb{E}[\tilde{v}|V_b, V_s]$.  $P_1$ is perceived as a proxy for the fundamental value of the asset. This is so because $\mathbb{E}[\tilde{v}|P_1, V_b, V_s]=\mathbb{E}[\mathbb{E}[\tilde{v}|P_1, V_b, V_s]|P_1]=P_1$. We are interested in how informative $P_1$ is, that is, how close $P_1$ is to the true value of the asset.

We consider similar measures as suggested by \citet*{zhu_dark_2013}. Without loss of generality, we assume that the true value $\tilde{v} = +\sigma_v$. Let the likelihood ratio
$$r = \log \frac{\Pr (\tilde{v}=+\sigma_v|V_b, V_s)}{\Pr (\tilde{v}=-\sigma_v|V_b, V_s)}=
\log \frac{\phi_z(Z^+ = \frac{1}{\alpha_e} [V_b - \overline{\gamma_e} \mu] )\cdot \phi_z(Z^- = \frac{1}{\alpha_e} [V_s - \underline{\gamma_e}\mu] )}{\phi_z(Z^- = \frac{1}{\alpha_e} [V_b - \overline{\gamma_e} \mu] )\cdot \phi_z(Z^+ = \frac{1}{\alpha_e} [V_s - \underline{\gamma_e}\mu] )}.$$
And
\begin{align*}
P_1 &= \sigma_v \Pr(\tilde{v}=+\sigma_v|V_b, V_s)+ (-\sigma_v)\Pr(\tilde{v}=-\sigma_v|V_b, V_s)\\
&=\frac{\Pr (\tilde{v}=+\sigma_v|V_b, V_s) -\Pr (\tilde{v}=-\sigma_v|V_b, V_s)}{\Pr (\tilde{v}=+\sigma_v|V_b, V_s) + \Pr (\tilde{v}=-\sigma_v|V_b, V_s)} \sigma_v
\end{align*}
Therefore
\begin{align*}
P_1 =\frac{e^r -1}{e^r +1} \sigma_v.
\end{align*}
 Clearly, if $r$ is higher, $P_1$ is closer to the true value $\sigma_v$. If $r=+\infty$, then $P_1 = \sigma_v$, in which case $P_1$ is completely informative.  Therefore, $r$ can be considered as a measure of the informativeness.

Another measure of informativeness that we consider is the scaled root-mean-squared error (RMSE), in which
$$\mbox{RMSE}=\frac{\left[\mathbb{E}[(\tilde{v}-P_1)^2|\tilde{v}=\sigma_v]\right]^{.5}}{\sigma_v} = \mathbb{E}\left[\frac{4}{(e^r +1)^2}|\tilde{v}=\sigma_v\right].$$
It is scaled by $\sigma_v$. Since $r\in (0,1)$, the scaled pricing error (RMSE) is between 0 and 1. If RMSE is higher, there are more pricing errors, and there is less price discovery.

Since $V_b, V_s$ are random variables, $r$ is also a random variable. When $\mu_z, \sigma_z^2$ are large enough, we can approximate the density of $\phi_z(\cdot)$ by a normal distribution $\mathcal{N}(.5\mu_z, .5\sigma_z^2)$.\footnote{We use the same approximation as in \citet*{zhu_dark_2013}, in which it shows that when $\mu_z$ and $\sigma_z^2$ are large enough, $Z^+$ is approximately normal. } Substituting the density functions, we get an approximate $r$ by
$$r^{Approx} = \frac{2(\overline{\gamma_e}-\underline{\gamma_e})\mu}{\alpha_e^2 \sigma_z^2}(V_b-V_s).$$

Given that $\tilde{v} = \sigma_v$, Since $V_b-V_s$ has a distribution of $\mathcal{N}\left((\overline{\gamma_e}-\underline{\gamma_e})\mu , \alpha_e^2 \sigma_z^2 \right)$,  so $r^{Approx}$ has a distribution  of
$$\mathcal{N}\left(2 \mathcal{I}(\overline{\gamma_e}, \underline{\gamma_e},\alpha_e)^2,  4 \mathcal{I}(\overline{\gamma_e}, \underline{\gamma_e}, \alpha_e)^2 \right) ,$$
where
$$\mathcal{I}(\overline{\gamma_e}, \underline{\gamma_e},\alpha_e) = \frac{(\overline{\gamma_e}-\underline{\gamma_e})\mu}{\alpha_e \sigma_z}.$$

Thus, the magnitude of $\mathcal{I}(\overline{\gamma_e}, \underline{\gamma_e},\alpha_e)$ can be taken as a measure of the price discovery in the exchange.  To be consistent with definitions of \citet*{zhu_dark_2013}, we also refer to it as ``signal-to-noise'' ratio. We consider two measures of price discovery: the signal-to-noise ratio $\mathcal{I}(\overline{\gamma_e}, \underline{\gamma_e},\alpha_e)$ and the scaled RMSE under the normal approximation.

By the same argument as \citet*{zhu_dark_2013}, under the normal approximation, a higher signal-to-noise ratio $\mathcal{I}(\overline{\gamma_e}, \underline{\gamma_e},\alpha_e)$ always corresponds to a lower scaled RMSE. That is, they are in nature the same measure.  Therefore, we only plot the ``signal-to-noise'' in our numerical example in Figure \ref{fig_price discovery-sigmav}.

We introduce a measure for the informed traders: that is, the measure of their ``information advantage'':
$$\bm{\sigma} = \frac{\sigma_v}{\sigma_e}.$$
An informed speculator's ``information advantage'' is defined as the asset's fundamental uncertainty $\sigma_v$ times the precision of the signals $\frac{1}{\sigma_e}$. Clearly, a higher $\sigma_v$ reflects a high level of undisclosed information, therefore, a higher profitability of the informed speculators. Also, a lower $\sigma_e$ means a higher precision of the private information, and hence a higher informational profit. Proposition \ref{prop_informativeness} summarizes the price discovery as a function of $\bm{\sigma}$ and the impact of a dark pool to price discovery.

\begin{prop} Price discovery (i.e. the informativeness of $P_1$) in the exchange is an increasing function of informed traders' ``information advantage'' ($\bm{\sigma}$). And, there exists a threshold, $\bm{\bar{\sigma}}>0$, such that, a) when $\bm{\sigma}<\bm{\bar{\sigma}}$, adding a dark pool impairs price discovery, and b) when $\bm{\sigma}$ is large, adding a dark pool enhances price discovery.

That is, suppose $\widehat{k}\leq \frac{\mu_z}{\mu}<+\infty$, where $\widehat{k}$ is uniquely determined by $\widehat{k}=\frac{1}{1+[1-G(\widehat{k}]\frac{\mu_z}{\mu}}$, then $\mathcal{I}(\overline{\gamma_e}, \underline{\gamma_e},\alpha_e)$, $\mathcal{I}(\overline{\gamma_e}^{\mathbb{S}}, \underline{\gamma_e}^{\mathbb{S}},\alpha_e^{\mathbb{S}})$ increase in $\bm{\sigma}$ and RMSE, $\mbox{RMSE}^{\mathbb{S}}$ decrease in $\bm{\sigma}$, when $\sigma_e>0$ is large enough, and $\exists \bm{\bar{\sigma}} >0$ such that
\begin{itemize}
\item (i)   if $ \bm{\sigma}\in (0,\bm{\bar{\sigma}})$, adding a dark pool will strictly decrease the informativeness of the price in exchange, that is, $\mathcal{I}(\overline{\gamma_e}, \underline{\gamma_e},\alpha_e) < \mathcal{I}(\overline{\gamma_e}^{\mathbb{S}}, \underline{\gamma_e}^{\mathbb{S}},\alpha_e^{\mathbb{S}})$  , and $ \mbox{RMSE}>\mbox{RMSE}^{\mathbb{S}} $
\item (ii) if $\bm{\sigma}$ is sufficiently large,  adding a dark pool will increase the informativeness of the price in exchange, that is, $\mathcal{I}(\overline{\gamma_e}, \underline{\gamma_e},\alpha_e) \geq \mathcal{I}(\overline{\gamma_e}^{\mathbb{S}}, \underline{\gamma_e}^{\mathbb{S}},\alpha_e^{\mathbb{S}})$, and $ \mbox{RMSE} \leq\mbox{RMSE}^{\mathbb{S}}$

\end{itemize}
\label{prop_informativeness}
\end{prop}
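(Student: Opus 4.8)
The plan is to collapse the two-parameter problem $(\sigma_v,\sigma_e)$ to a single variable via a scaling invariance, and then to compare the signal-to-noise ratio across the two models by tracking how it responds when the common exchange cutoff is pushed up by the dark pool.

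\textbf{Reparametrization.} Writing $x=s/\sigma_e$ for a normalized cutoff and recalling $\bm{\sigma}=\sigma_v/\sigma_e$, the belief in (\ref{Belief}) becomes $2B(s)-1=\tanh(x\bm{\sigma})$, so by (\ref{gamma_upper}), (\ref{gamma_lower}) and (\ref{alpha_e}) the exchange quantities are $\overline{\gamma_e}-\underline{\gamma_e}=\Phi(x+\bm{\sigma})-\Phi(x-\bm{\sigma})$ and $\alpha_e=1-G(\tanh(x\bm{\sigma}))$, via the \emph{same} map $x\mapsto(\cdot)$ in both models. Substituting into (\ref{spread}), (\ref{R_upper}) and (\ref{R_lower}) shows that the spread, the execution rates, and hence the cutoff equations (\ref{shat_benchmark}) and (\ref{equation0})--(\ref{equation1}) depend on $(\sigma_v,\sigma_e)$ only through $\bm{\sigma}$ (the fixed $\mu,\mu_z,\sigma_z,G$ aside). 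Thus the equilibrium normalized cutoffs $\widehat{x}(\bm{\sigma})$ and $x_1(\bm{\sigma})$, and therefore $\mathcal{I}=\tfrac{\mu}{\sigma_z}f(x;\bm{\sigma})$ with $f(x;\bm{\sigma})=\frac{\Phi(x+\bm{\sigma})-\Phi(x-\bm{\sigma})}{1-G(\tanh(x\bm{\sigma}))}$, are functions of $\bm{\sigma}$ alone, which is what makes ``increasing in $\bm{\sigma}$'' well defined regardless of whether one moves $\sigma_v$ or $\sigma_e$.

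\textbf{Monotonicity in $\bm{\sigma}$.} Since $\mathcal{I}=\frac{(\overline{\gamma_e}-\underline{\gamma_e})\mu}{\alpha_e\sigma_z}$, I read this straight off Proposition \ref{prop_monotonicity_lowsigmaparticipation}: for $\sigma_e$ large, decreasing $\sigma_e$ (equivalently raising $\bm{\sigma}$, $\sigma_v$ fixed) raises $\overline{\gamma_e}-\underline{\gamma_e}$ and lowers $\alpha_e$ in both models, so $\mathcal{I},\mathcal{I}^{\mathbb{S}}$ increase; as the scaled RMSE is a strictly decreasing function of $\mathcal{I}$ under the normal approximation (as noted in the text), $\mathrm{RMSE},\mathrm{RMSE}^{\mathbb{S}}$ decrease. \textbf{Dark-pool comparison.} By Corollary \ref{cor_exchange_amplify}, $x_1>\widehat{x}$, and both models evaluate the same $f(\cdot;\bm{\sigma})$, so $\mathrm{sign}(\mathcal{I}-\mathcal{I}^{\mathbb{S}})=\mathrm{sign}(f(x_1;\bm{\sigma})-f(\widehat{x};\bm{\sigma}))$, which I control through $\partial_x\log f=\frac{N'}{N}-\frac{D'}{D}$ with $N=\Phi(x+\bm{\sigma})-\Phi(x-\bm{\sigma})$, $D=1-G(\tanh(x\bm{\sigma}))$. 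For small $\bm{\sigma}$, $N\approx 2\bm{\sigma}\phi(x)$ gives $N'/N\approx\phi'(x)/\phi(x)=-x$ while $D'/D=O(\bm{\sigma})\to0$, so $\partial_x\log f<0$ on compact $x$-sets bounded away from $0$; with the cutoffs in such a set, $f(x_1;\bm{\sigma})<f(\widehat{x};\bm{\sigma})$ and the dark pool impairs discovery. For large $\bm{\sigma}$ the roles reverse: $N\to1$ and $N'=\phi(x+\bm{\sigma})-\phi(x-\bm{\sigma})$ decays Gaussian-fast (like $e^{-(\bm{\sigma}-x)^2/2}$), so $|N'/N|$ is negligible beside $|D'/D|$ (which decays at most exponentially, provided $1-G$ stays bounded away from $0$), whence $\partial_x\log f>0$ and the dark pool enhances discovery. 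The threshold $\bm{\bar{\sigma}}$ then follows by continuity: $\mathcal{I}(\bm{\sigma})-\mathcal{I}^{\mathbb{S}}(\bm{\sigma})$ is continuous and strictly negative near $0$, hence negative on some $(0,\bm{\bar{\sigma}})$, while the large-$\bm{\sigma}$ analysis gives the reverse; the RMSE claims follow from $\mathrm{RMSE}=\Psi(\mathcal{I})$ with $\Psi$ decreasing.

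\textbf{Main obstacle.} Everything hinges on controlling the cutoffs implicitly defined by the coupled nonlinear system (\ref{equation0})--(\ref{equation1}) with $\bar{R},\underline{R}$ the expectations (\ref{R_upper})--(\ref{R_lower}). I must show (a) as $\bm{\sigma}\to0$ the normalized cutoffs converge to strictly positive limits (solving the $\bar{R},\underline{R}\to R=\mathbb{E}[\min\{1,Z^+/Z^-\}]$ degenerate versions), so $[\widehat{x},x_1]$ is bounded away from $0$; and (b) as $\bm{\sigma}\to\infty$ the cutoffs grow strictly slower than $\bm{\sigma}$ (giving $N\to1$ and $N'$ Gaussian-small) while the induced uninformed exchange cutoff $2B(s_1)-1$ stays strictly below $\bar{d}$ (keeping $1-G$ bounded away from $0$). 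Here the hypothesis $\widehat{k}\le\mu_z/\mu$, with $\widehat{k}$ solving $\widehat{k}=1/(1+[1-G(\widehat{k})]\mu_z/\mu)$, is precisely the ``enough liquidity'' condition that keeps these limiting cutoffs interior and lets me import the orderings of Proposition \ref{prop_compare} (notably $\alpha_e\le\alpha_e^{\mathbb{S}}$); indeed $\widehat{k}$ is the $\bm{\sigma}\to\infty$ limit of the benchmark uninformed cutoff, and the multi-venue limit is strictly larger because the dark pool's price improvement enters with a factor $1-R$, which is what ultimately drives $\alpha_e<\alpha_e^{\mathbb{S}}$ with a limiting gap. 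Pinning down these cutoff asymptotics and interiority claims — rather than the sign computations for $f$, which are routine once the cutoffs are located — is where essentially all the difficulty lies.
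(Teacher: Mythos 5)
Your small-$\bm{\sigma}$ half is essentially the paper's argument: normalize by $\sigma_e$, use $\widehat{\mathbf{s}}<\mathbf{s_1}$ (Lemma \ref{lem_HATSlessthanS}), and show $\frac{d}{ds}\frac{\Phi(s+\bm{\sigma})-\Phi(s-\bm{\sigma})}{1-G(2B(s)-1)}<0$ on $[\widehat{\mathbf{s}},\mathbf{s_1}]$ by comparing $N'/N\approx -s$ against $D'/D=O(\bm{\sigma})$; the paper does exactly this via the mean value theorem. The monotonicity-in-$\bm{\sigma}$ claim read off Proposition \ref{prop_monotonicity_lowsigmaparticipation} is also how the paper handles it. However, there are two genuine problems with the rest.

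First, for large $\bm{\sigma}$ you replace the paper's strategy (compute the limits of $\overline{\gamma_e}-\underline{\gamma_e}$ and $\alpha_e$ in each model and compare the limiting ratios directly, via Lemmas \ref{lem_limitinfty_hats} and \ref{lem_limitinfty_s}) with a monotonicity-of-$f$ argument on $[\widehat{\mathbf{s}},\mathbf{s_1}]$. That substitution does not go through in general. Your sign computation $\partial_x\log f>0$ requires $|D'/D|\gg|N'/N|$, and $|D'/D|\sim \bm{\sigma}\,G'(\cdot)e^{-2x\bm{\sigma}}/(1-G(\cdot))$ is only bounded below when $x\bm{\sigma}$ stays bounded. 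The paper's Lemma \ref{lem_limitinfty_s} explicitly exhibits a second class of equilibria (its case (ii)) in which $\mathbf{s_1}\bm{\sigma}\to+\infty$, $2B(\mathbf{s_1})-1\to 1$, and $\overline{\gamma_d}\to k_3>0$; there the interval $[\widehat{\mathbf{s}},\mathbf{s_1}]$ reaches into the region where both $N'/N$ and $D'/D$ vanish and their relative rates are indeterminate, so the sign of $\partial_x\log f$ is not controlled along the whole interval. In that case the paper instead compares the endpoint limits $\frac{1-k_3}{1-G(1)}$ versus $\frac{1}{1-G(\widehat{k})}$, using the inequality $\frac{1}{1+[1-G(\widehat{k})]\frac{\mu_z}{\mu}}<\frac{1-k_3}{1-k_3+[1-G(1)]\frac{\mu_z}{\mu}}$ obtained from the execution-rate equations — a step your route has no substitute for.

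Second, you correctly identify that everything hinges on the cutoff asymptotics — $\lim_{\bm{\sigma}\to 0^+}\mathbf{s_0},\widehat{\mathbf{s}},\mathbf{s_1}$ strictly positive and finite with $\mathbf{s}\bm{\sigma}\to 0$, and the $\bm{\sigma}\to\infty$ behavior with the limiting uninformed cutoffs $\widehat{k},k_1,k_2,k_3$ interior — but you do not prove any of it, and this is where the bulk of the paper's proof lives (Lemmas \ref{lem_limit_zero}, \ref{lem_limitzero_hats}, \ref{lem_limitinfty_hats}, \ref{lem_limitinfty_s}, each of which requires a nontrivial contradiction argument to rule out $\mathbf{s}\to\infty$ or $\mathbf{s}\bm{\sigma}\to\infty$ in the wrong places). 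As it stands the proposal is an outline whose hard steps are flagged rather than executed, and whose large-$\bm{\sigma}$ mechanism fails to cover all equilibria.
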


When a dark pool is added alongside an exchange, the impact on price discovery is depending on the resulting ratio of informed traders and uninformed traders in the exchange. As we have discussed in Section \ref{sec_cross model}, when a dark pool is introduced to the market, it induces migrations of both informed traders and liquidity traders from the exchange to the dark pool. When $\bm{\sigma}$ is high, on average, informed traders have high profitability, a high proportion of the informed would rather stay in the exchange, and only a small proportion migrate from the exchange to the dark pool, compared with the liquidity traders. Therefore adding a dark pool increases the ``signal-to-noise'' ratio and improves the informativeness of $P_1$ in the exchange. When $\bm{\sigma}$ is low, however, on average the informed have low profitability so that a higher proportion would rather migrate from the exchange to trade in the ``buffer zone,'' the dark pool, compared with the liquidity traders. This leaves a lower proportion of informed traders in the exchange. The ``signal-to-noise'' ratio decreases and price discovery declines.

In Figure \ref{fig_price discovery-sigmav}, the right plots ``signal-to-noise'' ratio as a function of $\bm{\sigma}=\frac{\sigma_v}{\sigma_e}$. It increases with $\bm{\sigma}$, indicating that informed traders' trading intensity grows with higher ``informational advantage,'' and hence price discovery increases. Introducing a dark pool alongside an exchange decreases price discovery when $\bm{\sigma}$ is low (i.e., $\sigma_v$ is low or $\sigma_e$ is high), and increases when $\bm{\sigma}$ is high (i.e., $\sigma_v$ is high or $\sigma_e$ is low). The left further illustrates the dark pool impact on price discovery in a 2-dimensional context (i.e., $\sigma_v$ and $\sigma_e$). .
\begin{figure}[h!]
\centering
\begin{subfigure}{.45\textwidth}
\centering
\includegraphics[width=1\textwidth]{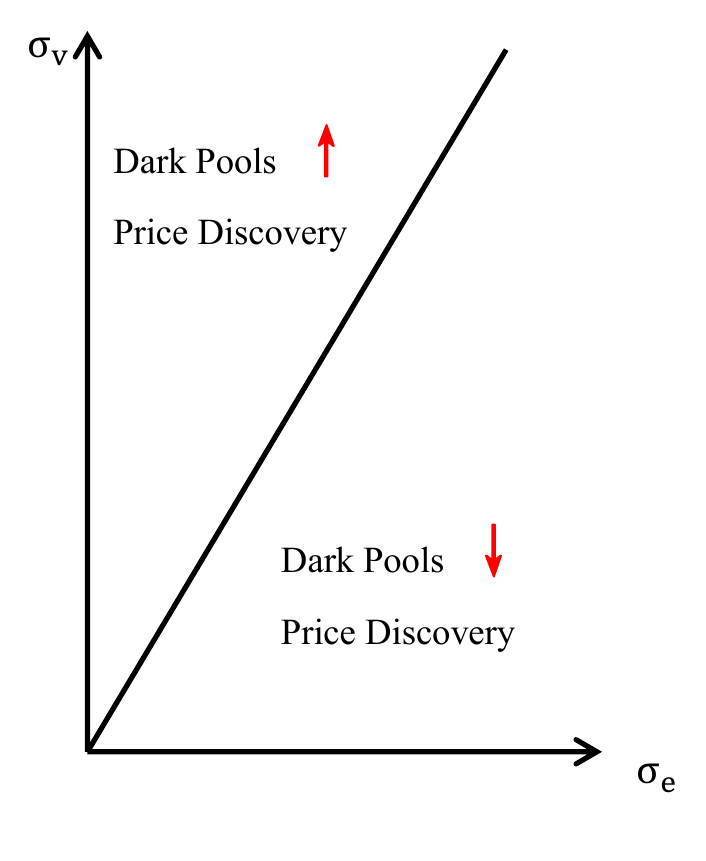}
\end{subfigure}
\begin{subfigure}{.45\textwidth}
\centering
\includegraphics[width=1\textwidth]{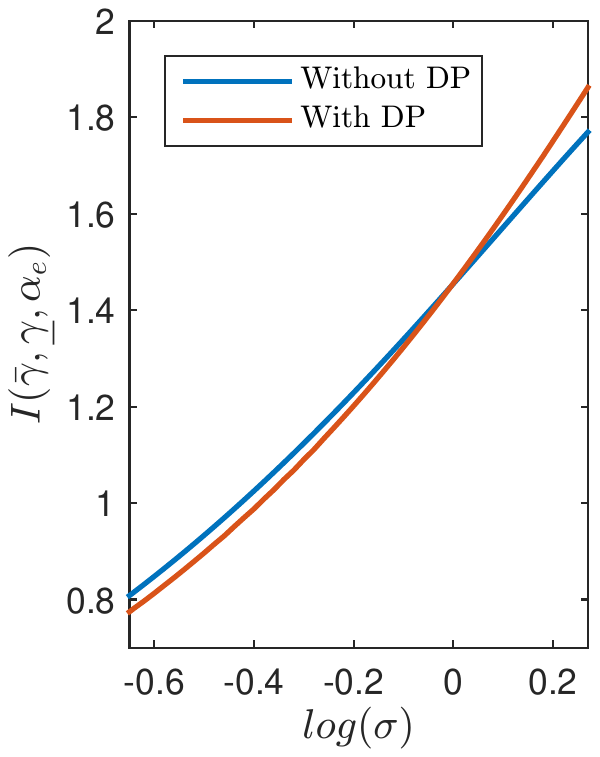}
\end{subfigure}
\caption{\footnotesize \textbf{Price Discovery.}  The left plots the dark pool impact on price discovery with 2-dimension: $\sigma_v$ and $\sigma_e$. The right plots the ``Signal-to-noise'' ratio $I(\overline{\gamma_e},\protect\underline{\gamma_e},\alpha_e)$ as a
function of $\bm{\sigma}=\frac{\sigma_v}{\sigma_e}$.
}
\label{fig_price discovery-sigmav}
\end{figure}

The results highlight an important effect dark pools have on price discovery -- an \textit{amplification Effect} That is, dark pools enhances price discovery when it is high, whereas dark pools impairs price discovery when it is low. An economy needs to be prudent in introducing dark pools to its equity market, especially when the economy has a poor information environment (low quality in information disclosure, poor legal systems and enforcement, etc.) We provide a more detailed discussion in  Section \ref{sec_policy}.

This result is in contrast with \citet*{zhu_dark_2013}, in which adding a dark pool strictly increases the price discovery. According to our analysis, the important reason \citet*{zhu_dark_2013} predicts a strict increase is due to the fact that it assumes an extreme case where signals for informed traders are perfect (i.e., $\sigma_e \rightarrow 0$ in our model). As we have pointed out, when information is in high precision (i.e., $\sigma_e$ is low), the majority of the informed traders prefer the exchange, where dark pools will attract relatively less fraction informed traders from the exchange, compared with the liquidity traders, and leave a higher ratio of informed-to-uninformed traders in the exchange, hence improve price discovery. Thus, \citet*{zhu_dark_2013} is consistent with our prediction. In reality, however, \citet*{zhu_dark_2013}'s prediction may not hold because the information structure is much richer and exhibits significant cross-sectional difference (we will discuss this in Section \ref{sec_policy}). Policies and measures should be tailored to this issue in a different information environment.

\citet*{zhu_dark_2013} also depicts a scenario when uninformed liquidity trader types are discrete. It shows that in this case, to a large degree, price discovery will be harmed by the introduction of dark pools because uninformed traders of discrete types are more likely to get ``stuck'' in their original venues while some informed traders flow from the exchange to dark pools and decrease price discovery. Our prediction corresponds to this scenario. In our prediction, the discrete type and ``stickiness''of uninformed traders will further increase the chance that price discovery be harmed.

\textbf{Determinants of the impact.} From the perspective of a regulator, when introducing dark pools, an important issue is what fraction of the assets will be harmed in their price discovery. In order to answer that question, one should examine the determinants and the overall impact dark pools have on price discovery.

We consider a proxy which we refer to as the ``likelihood that dark pools harm price discovery.''
$$\bar{\sigma}_v=\underset{x>0}{\sup} \left\{x |\forall \sigma_v\in(0,x) ,  \mathcal{I}(\overline{\gamma_e}^{\mathbb{S}}, \underline{\gamma_e}^{\mathbb{S}},\alpha_e^{\mathbb{S}})>\mathcal{I}(\overline{\gamma_e}, \underline{\gamma_e},\alpha_e)\right\}.$$
By Proposition \ref{prop_informativeness}, such $\bar{\sigma}_v$ must exist.  A higher $\bar{\sigma}_v$ reflects a higher fraction of assets whose price discovery will be harmed by adding a dark pool.

We consider two determinants. The first is the precision of traders' private information, the inverse of $\sigma_e$. Proposition \ref{prop_informativeness} indicates that the likelihood dark pools harm price decreases with precision level. Another determinant we consider is the relative measure of informed traders, $\frac{\mu}{\mu_z}$. The effects of the two on $\bar{\sigma}_v$ is summarized in Proposition \ref{prop_sigma_tendency}.

\begin{prop}
the likelihood that price discovery will be harmed by dark pool trading ($\bar{\sigma}_v$) decreases in information precision, $\sigma_e$, and increases in the relative measure of informed traders, $\frac{\mu}{\mu_z}$.

That is, Suppose $\widehat{k}\leq \frac{\mu_z}{\mu}<+\infty$, where $\widehat{k}$ is uniquely determined by $\widehat{k}=\frac{1}{1+[1-G(\widehat{k}]\frac{\mu_z}{\mu}}$, then
 \begin{itemize}
 \item (i) $\bar{\sigma}_v$ increases in $\sigma_e$.  As $\sigma_e \rightarrow 0^+$, $\bar{\sigma}_v \rightarrow 0$, and as $\sigma_e \rightarrow +\infty$, $\bar{\sigma}_v \rightarrow +\infty$. And,
 \item (ii)  for any sequence of $\{(\frac{\mu}{\mu_z})\}$, there exists a subsequence $\{(\frac{\mu}{\mu_z})_n\}$ such that as $(\frac{\mu}{\mu_z})_n$ increases, $\bar{\sigma}_v$ increases, also,  as $(\frac{\mu}{\mu_z})_n \rightarrow 0^+$, $\bar{\sigma}_v \rightarrow 0$.\footnote{We cannot directly show that $\bar{\sigma}_v$ increases in $\frac{\mu}{\mu_z}$, but we are able to show a upper bound of $\bar{\sigma}_v$ that is increasing in $\frac{\mu}{\mu_z}$.}
 \end{itemize}
\label{prop_sigma_tendency}
\end{prop}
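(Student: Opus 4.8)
The plan is to reduce the whole comparison to a one-dimensional scaling statement and then read off both parts. First I would establish that, holding the population parameters $\mu,\mu_z,\sigma_z$ and the distributions $G,\phi_z$ fixed, every normalized equilibrium object depends on $(\sigma_v,\sigma_e)$ only through $\bm{\sigma}=\sigma_v/\sigma_e$. Writing $u=s/\sigma_e$, the belief (\ref{Belief}) becomes $B=\phi(u-\bm{\sigma})/[\phi(u-\bm{\sigma})+\phi(u+\bm{\sigma})]$, so the normalized spread (\ref{spread}), the execution rates (\ref{R_upper})--(\ref{R_lower}), and the participation fractions (\ref{gamma_upper})--(\ref{alpha_d}) are all functions of the rescaled cutoffs $u_0=s_0/\sigma_e,\ u_1=s_1/\sigma_e$ (resp. $\widehat u=\widehat s/\sigma_e$) and $\bm{\sigma}$ alone; in particular $\bar R,\underline R$ do not move because $\mu$ and the $Z^{\pm}$-law are held fixed. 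Hence the cutoff systems (\ref{equation0})--(\ref{equation1}) and (\ref{shat_benchmark}) pin down $u_0,u_1,\widehat u$ as functions of $\bm{\sigma}$, and in the comparison $\mathcal I(\overline{\gamma_e},\underline{\gamma_e},\alpha_e)$ versus $\mathcal I(\overline{\gamma_e}^{\mathbb{S}},\underline{\gamma_e}^{\mathbb{S}},\alpha_e^{\mathbb{S}})$ the common factor $\mu/\sigma_z$ cancels, leaving a ratio that is a function of $\bm{\sigma}$ only.

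Given this, I would set $\bm{\bar{\sigma}}^{*}=\sup\{y>0:\ \mathcal I^{\mathbb{S}}>\mathcal I \text{ for all } \bm{\sigma}\in(0,y)\}$, which is exactly the threshold $\bm{\bar{\sigma}}$ produced by Proposition \ref{prop_informativeness} and is therefore finite and strictly positive. Since the impairment condition defining $\bar\sigma_v$ is $\mathcal I^{\mathbb{S}}>\mathcal I$, and this holds for a given pair iff $\bm{\sigma}=\sigma_v/\sigma_e<\bm{\bar{\sigma}}^{*}$, the defining supremum rescales exactly to give the identity $\bar\sigma_v=\sigma_e\,\bm{\bar{\sigma}}^{*}$. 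Part (i) is then immediate: with the population parameters fixed, $\bar\sigma_v$ is a fixed positive multiple of $\sigma_e$, so it is strictly increasing in $\sigma_e$, tends to $0$ as $\sigma_e\to0^{+}$, and tends to $+\infty$ as $\sigma_e\to+\infty$.

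For part (ii) I would fix the uninformed-side parameters $\mu_z,\sigma_z,G$ and vary $\mu$, so that $\bm{\bar{\sigma}}^{*}$ becomes a genuine function of $t=\mu/\mu_z$ (every relevant ratio, including the law of $Z^{\pm}/\mu$, is determined by $t$). The limit $\bar\sigma_v\to0$ as $t\to0^{+}$ I would obtain by a squeeze: as $t\to0^{+}$ the spread (\ref{spread}) obeys $A/\sigma_v\to0$ because the term $\alpha_e\,\mu_z/\mu$ in the denominator diverges, the dark pool's price-improvement advantage vanishes, dark-pool usage $\alpha_d,\overline{\gamma_d}-\underline{\gamma_d}\to0$, the two regimes coalesce, and for every fixed $\bm{\sigma}>0$ the market ends up in the enhancement region, forcing $\bm{\bar{\sigma}}^{*}\to0$. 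The monotonicity assertion is the delicate step, and, as the footnote anticipates, I do not expect global monotonicity of $\bm{\bar{\sigma}}^{*}$ in $t$ to be provable by a direct derivative-sign computation, since the crossing level is a nonmonotone composition of the two nonlinear cutoff systems. Instead I would bound the crossing condition $(\overline{\gamma_e}-\underline{\gamma_e})/\alpha_e=(\overline{\gamma_e}^{\mathbb{S}}-\underline{\gamma_e}^{\mathbb{S}})/\alpha_e^{\mathbb{S}}$ from one side to build an explicit $U(t)$ that is increasing in $t$ and satisfies $U(t)\to0$ as $t\to0^{+}$ (the sorting cutoffs shift with $t$ through the spread formula and through $\widehat k$, which increases with $t$). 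The squeeze $0\le\bar\sigma_v\le\sigma_e\,U(t)$ reconfirms the limit, and, because $t$ ranges over a bounded admissible interval under the hypothesis $\mu_z/\mu\ge\widehat k$, applying the monotone-subsequence theorem to $\bar\sigma_v$ along any increasing $t$-sequence extracts the increasing subsequence stated. The main obstacle is precisely this construction of a $U(t)$ that is simultaneously increasing in $t$ and collapses to $0$ at $t=0$; everything else is a consequence of the scaling identity and Proposition \ref{prop_informativeness}.
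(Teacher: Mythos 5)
Your part (i) is exactly the paper's argument: everything normalized is a function of $\bm{\sigma}=\sigma_v/\sigma_e$ alone, so $\bar{\sigma}_v=\bm{\bar{\sigma}}\,\sigma_e$ with $\bm{\bar{\sigma}}$ the constant threshold from Proposition \ref{prop_informativeness}, and the monotonicity and both limits in $\sigma_e$ follow. No issues there.

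For part (ii), however, you have correctly identified the crux — an upper bound $U(t)$ on the threshold that is increasing in $t=\mu/\mu_z$ and collapses to $0$ as $t\to 0^{+}$ — but you have not supplied it, and the heuristic you offer in its place does not close the gap. Your argument that as $t\to0^{+}$ the spread vanishes, dark-pool usage vanishes, and ``the two regimes coalesce'' only shows $\mathcal I^{\mathbb{S}}-\mathcal I\to 0$; it does not determine the \emph{sign} of $\mathcal I^{\mathbb{S}}-\mathcal I$ for small positive $t$, which is what you need to conclude the market is in the enhancement region and hence that $\bm{\bar{\sigma}}^{*}\to0$. The paper resolves this with a specific lemma: for every $\bm{\sigma}>0$ there is a $C(\bm{\sigma})>0$ such that $s\mapsto\frac{\Phi(s+\bm{\sigma})-\Phi(s-\bm{\sigma})}{1-G(2B(s)-1)}$ is \emph{increasing} on $[0,C(\bm{\sigma})]$ (its derivative at $s=0$ is strictly positive). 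Combined with the facts that $\widehat{\mathbf{s}}<\mathbf{s_1}$ always (Lemma \ref{lem_HATSlessthanS}) and that both cutoffs tend to $0$ as $\mu_z/\mu\to+\infty$, this forces $\frac{\overline{\gamma_e}^{\mathbb{S}}-\underline{\gamma_e}^{\mathbb{S}}}{\alpha_e^{\mathbb{S}}}<\frac{\overline{\gamma_e}-\underline{\gamma_e}}{\alpha_e}$ once both cutoffs lie below $C(\bm{\sigma})$, yielding the explicit bound $\bm{\bar{\sigma}}<\bm{\sigma}(\mu_z/\mu)$ with $\bm{\sigma}(\mu_z/\mu)\to0$, from which the subsequence claim and the limit follow. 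Without this (or an equivalent) sign-determining step, your squeeze $0\le\bar{\sigma}_v\le\sigma_e\,U(t)$ has no $U$, and part (ii) is not proved. Note also that the direction of the comparison flips between small $s$ (ratio increasing, so the larger cutoff $\mathbf{s_1}$ helps price discovery) and large $s$ (ratio decreasing, as used in Proposition \ref{prop_informativeness}(i)), which is precisely why a quantitative localization of the cutoffs near $0$ for large $\mu_z/\mu$ is indispensable rather than a routine detail.
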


The numerical example is given in Figure \ref{fig_barsigma}. Proposition \ref{prop_sigma_tendency} states that dark pools are beneficial for price discovery in an economy with a good information environment (i.e., high information precision and low size of informed traders), whereas they are bad for price discovery in an economy with a poor information environment (i.e., low information precision and high size of informed traders). Proposition \ref{prop_sigma_tendency} gives regulators insights into how to improve the economy and informativeness of prices. Policies and measures can be taken to enhance the market performance. Also, it points out important considerations for countries that are going to allow dark pools and provides them a benchmark to measure market quality. More details are in Section \ref{sec_policy}.

\begin{figure}[h]
\centering
\includegraphics[width=1\textwidth]{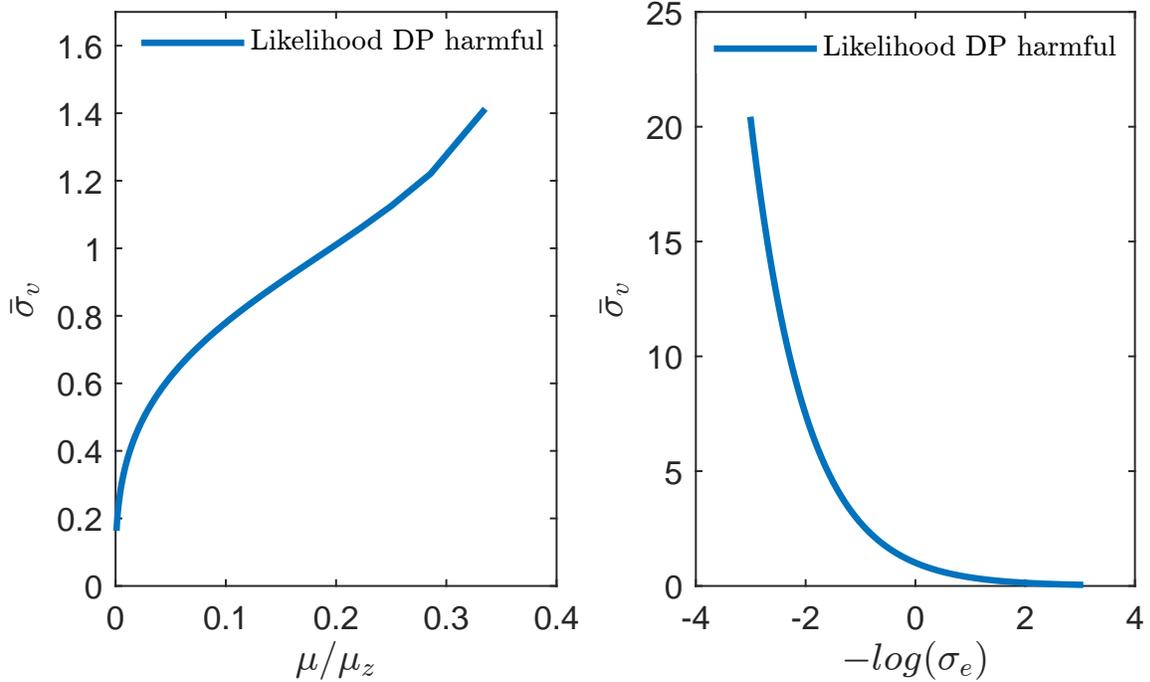}
\caption{\footnotesize \textbf{The Likelihood DPs Harm Price Discovery $\bar{\sigma}_v$}. The left-hand figure plots the threshold $\bar{\sigma}_v$ as a function of the relative size of informed traders, $\frac{\mu}{\mu_z}$.  The right-hand shows the threshold $\bar{\sigma}_v$ as a function of the information precision, $-\log(\sigma_e)$. On the left, $\log(\sigma_e)=0$. On the right, $\frac{\mu}{\mu_z}=.2$. }
\label{fig_barsigma}
\end{figure}

\section{Discussion: Empirical and Regulatory}
\label{sec_implication}

In this section, we provide a discussion about empirical implications and policy suggestions. The discussion is intended to provide insight into seemingly contradictory results in the empirical literature, as well as give exploration of channels for future research and regulatory concern. In these analyses, the economic force we consider is the variation of the  information structure, more precisely, the informed traders' ``information advantage,'' $\bm{\sigma}=\frac{\sigma_v}{\sigma_e}$, or the information imprecision,  $\sigma_e$, if $\sigma_v$ is fixed. We refer to ``good information environment'' by more precise information and less informed traders. Although we attempt to attribute the difference of the findings to the different information structures, we preserve a conservative interpretation in these predictions. In general, our model suggests that dark pool activity and its impacts display significant cross-sectional variation and thus should be evaluated differently in various economic environments.

\subsection{Measurement for information precision}
This paper shows that the level of information precision is essential to determining the impact of dark pools on price discovery. In this section, we provide a brief discussion about the measurement for information precision and information environment.

In this paper, there are two factors to consider for an economy's information environment: the precision of (private) information and the number of informed traders. The notion of a better information environment includes a higher precision in traders' (private) information and fewer informed traders.\footnote{Private information is not necessarily insider information. A big fraction of it is information that is publicly available but hard to collect, transmit, and process by the majority of the public.} We discuss the respective measurement for information precision of individual stocks and for the information environment of the whole economy. The former helps us to conduct the cross-sectional analysis for individual stocks while the latter gives regulators guidance on regulating dark pool trades as a whole.

\begin{itemize}
\item (1) For individual stocks, the measurement for the level of information precision include the following aspects.

\textbf{Firm characteristics}. Researchers have found that firms with greater growth volatility (such as high R\&D firms, young firms), smaller size, or fewer analyst followers have lower informational precision in traders' informational predictions \citep{li_information_2012, maffett_financial_2012, lang_cross-sectional_1993, baginski_determinants_1997}. Therefore, a firm's age, size, number of analyst followers, R\&D ratios, and other measures regarding its growth volatility can be used as proxies for the firm's (equity's) level of information precision. In addition to that, researchers use analysts' forecast variation and errors to proxy the level of information precision among traders  \citep{botosan_role_2004, gleason_analyst_2003}.

\textbf{Information acquisition and processing}. The activity of information acquisition and processing greatly affect the level of information precision. For example, \citet*{frankel_determinants_2006} has found that the informativeness of analysts' reports greatly depends on how lucrative the trades' are and how costly is the information acquisition.  Generally, a more lucrative and matured financial market, with more competition, more innovation in trading technologies, and years of trade has a higher level of trader ability for information acquisition and processing \citep{louis_conservatism_2014, clement_analyst_1999, chen_investor_2005}. Therefore, measures about the firms' profitability,  maturity, industry competition, level of innovation, and number (and years) of traders can be used as proxies for individual stocks' information precision.

\item (2) For the measurement of information environment in the macro-setting, measurements include measures on the strength of legal institutions and law enforcement against insider trading, the functionality of the public disclosure system, and the availability and efficiency of media transmission. Generally, public disclosure and media channels can enhance the precision of informed traders' forecasts\footnote{Although there is a debate regarding the association between public and private information, researchers generally find that public disclosures may be processed into private information by informed investors, and there is a positive correlation between the precisions of public and private information. See \citet*{botosan_role_2004} and \citet*{kim_market_1994}.} and stronger legal systems can significantly reduce the number of insiders.
\end{itemize}

\subsection{A Summary of Testable Empirical Predictions}
\label{sec_testables}
\textbf{1. Dark pool execution probability.} We predict that dark pool non-execution probability increases with information precision (i.e. $1-\frac{\bar{R}+\underline{R}}{2}$ increases as $\sigma_e$ decreases). Also, an asset's exchange spread increases with its dark pool non-execution probability (i.e. $\frac{A}{\sigma_v}$ increases in $1-\frac{\bar{R}+\underline{R}}{2}$).

This prediction suggest that the trade-off of dark pools is higher in an economy with a good information environment. The trade-off is documented in many empirical papers. For example, \citet*{gresse_effect_2006}, \citet*{conrad_institutional_2003}, \citet*{naes_equity_2005} , and \citet*{ye_non-execution_2010} study crossing networks in the US and conclude that dark pools, in comparison with exchanges, have lower trading costs (within spread price) but higher non-execution probability. \citet*{he_determinants_2006} studied Australia's Centre  Point  dark  pool and found that the dark pool execution probability increases with dark pool activity. In contrast, \citet*{kwan_trading_2014} find that the dark pool execution probability increases in the trading friction in exchanges: the minimal price improvement.

The change of execution probability can be explained as follows: the execution depends on two factors: traders' total participation and dark pool information asymmetry level. The former irons the difference between the two sides in the pool and increases the execution rate, whereas the latter does the opposite. In the numerical example in Figure \ref{fig_dp_tradeoff}, we show that, without pricing frictions in the exchange, the expected dark pool execution rate decreases as the information becomes more precise ($\sigma_e$ decreases).
\begin{figure}[h]
\centering
\includegraphics[width=.75\textwidth]{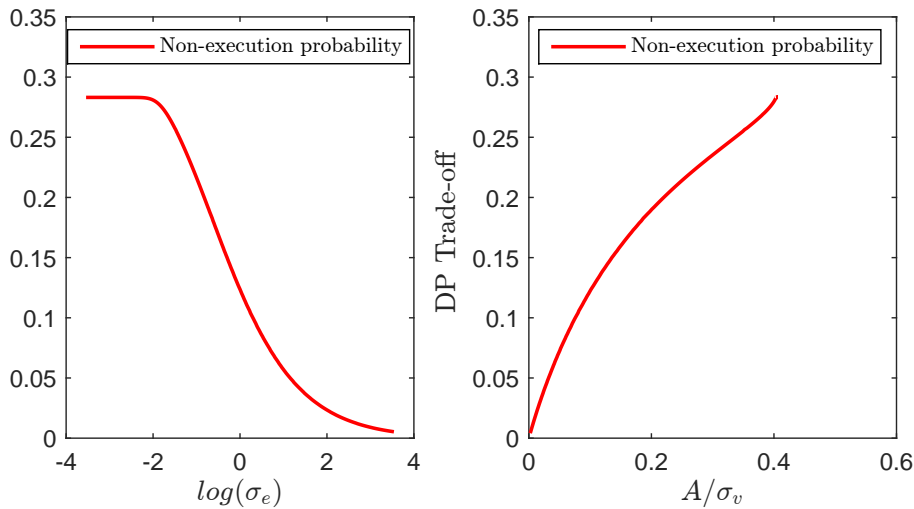}
\caption{\footnotesize \textbf{Execution Probability and Trade-off of A Dark Pool}. The left figure plots the non-execution probability as a function of  $\log(\sigma_e)$.  The right-hand figure plots the non-execution probability  as a function of the exchange spread $A/\sigma_v$. In both plots, $\log(\sigma_v)=0$.}
\label{fig_dp_tradeoff}
\end{figure}

\textbf{2. Dark pool usage and market characteristics.} All else equal, in an economy/industry/asset that has a high information precision, dark pool market share decreases with information precision and with exchange spread, whereas in an economy/industry/asset that has low information precision, dark pool market share increases with information precision and with exchange spread. More precisely,
\begin{itemize}
\item (1) dark pool market share has an inverted U-shape relationship with the information precision,
\item (2) dark pool market share has an inverted U-shape relationship with the exchange spread.
\end{itemize}

The prediction follows from Proposition \ref{prop_monotonicity_lowsigma}, Proposition \ref{prop_monotonicity_lowsigmaparticipation}, and Remark \ref{remark1}. To measure dark pool usage, we analyze the volumes in each venue. Since informed traders have no profit to trade in period 2 due to the disclosure of information, they cancel their unexecuted orders and leave the market in period 2. The remaining orders continue to execute in the exchange. The expected trading volume in the dark pool, in the exchange, and total consolidated volume are, respectively:
\begin{align}
V_d &= (\bar{R} \underline{\gamma_d} + \underline{R} \overline{\gamma_d})\mu + \frac{\bar{R}+\underline{R}}{2}\alpha_d \mu_z,\\
V_e &= ( \underline{\gamma_e} +  \overline{\gamma_e})\mu + \alpha_e \mu_z+ (1-\alpha_e-\alpha_d) \mu_z + \left( 1-\frac{\bar{R}+\underline{R}}{2}\right)\alpha_d \mu_z,\\
V &= V_d + V_e.
\end{align}
We distinguish the components of dark volumes by  ``Dark uninformed volumes'' and ``Dark informed volumes'' respectively as:
\begin{align}
V^{U}_d &= \left( 1-\frac{\bar{R}+\underline{R}}{2}\right)\alpha_d \mu_z,\\
V^{I}_d &= V_d - V^{U}_d.
\end{align}
Figure \ref{fig_volumes} illustrates equilibrium behavior of dark pool market share and dark pool ``informed volume'' share. 
\begin{figure}[h]
\centering
\includegraphics[width=.75\textwidth]{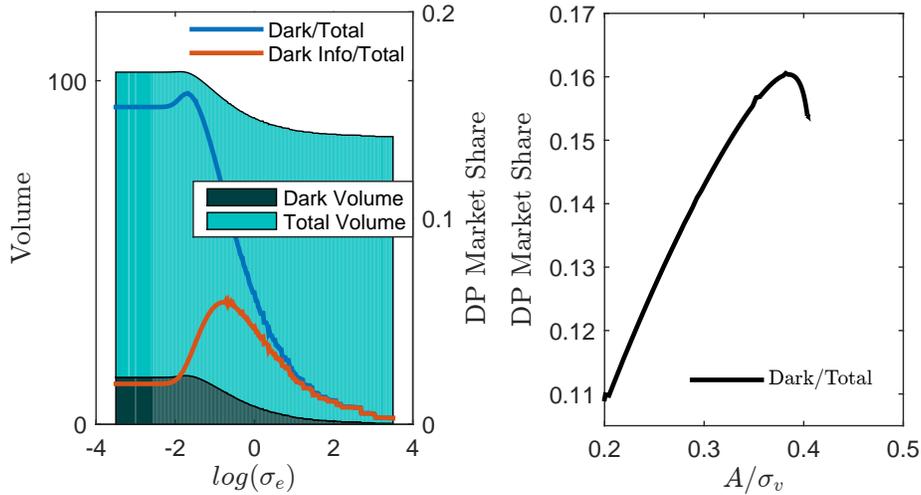}
\caption{\footnotesize \textbf{Dark Volumes and Market Share}. The left figure plots the dark pool volume, total volume and dark pool market share as a function of $\log(\sigma_e)$.  The right-hand figure shows the dark pool market share  as a function of the exchange spread $A/\sigma_v$. In both plots $\log(\sigma_v)=0$.}
\label{fig_volumes}
\end{figure}
Though this prediction coincides with \citet*{zhu_dark_2013}, our model emphasizes the role of the trader's information structure.

This prediction is consistent with \citet*{ray_match_2010} and \citet*{preece_dark_2012}, which report a similar inverted U-shape between dark pool usage and exchange spread. Other empirical studies have reported contradictory results using different datasets. For studies using different US datasets, \citet*{hatheway_empirical_2013} and \citet*{weaver_trade-at_2014} find a positive association while \citet*{ohara_is_2011} and \citet*{ready_determinants_2014-1}  find  a negative association between dark trading and exchange spread. \citet*{asic_dark_2013} and \citet*{comerton-forde_dark_2015-1} study Australian dark trading and find a positive relationship. \citet*{degryse_impact_2011} find a positive relationship for European dark fragmentation. Our model suggests that such a relationship varies cross-sectionally, depending on the specific information structure. The cross-sectional difference is reflected in \citet*{nimalendran_informational_2014},  \citet*{buti_diving_2011}, and \citet*{ohara_is_2011}. More cross-sectional studies that specify the characteristics of firms and countries are needed.

\textbf{3. Information content of dark pool trades.}
In an economy with high information precision, the information content of dark pool trades decreases with information precision and with exchange spread. By contrast, in an economy low information precision, the information content of dark pool trades increases with information precision and with exchange spread. More precisely,
\begin{itemize}
\item (1)  the information content of dark pool trades has an inverted U-shape relationship with the information precision,
\item (2) the information content of dark pools trades has an inverted U-shape relationship with the exchange spread.
\end{itemize}
The prediction follows from Proposition \ref{prop_monotonicity_lowsigma} and Proposition \ref{prop_monotonicity_lowsigmaparticipation}. We use two measures for the dark pool information content.  The first measure is the DP \textit{Predictive Fraction} -- the fraction of dark pool volumes that are traded in the ``right direction'' (i.e., fraction of volumes that predict the movement of prices). The higher the fraction is, the higher is the information content of a dark pool. In this model, the Predictive Fraction is defined as
$$\mbox{DP Predictive Fraction}=\frac{\underline{R} (\overline{\gamma_d}\mu + .5\alpha_d \mu_z)}{V_d}.$$
Another measure we consider  is the normalized adverse selection costs, $\bar{R}-\underline{R}$. The inverted U-shape of the two measures with the exchange spread is depicted in Figure \ref{fig_predicability}.
\begin{figure}[h]
\centering
\includegraphics[width=.75\textwidth]{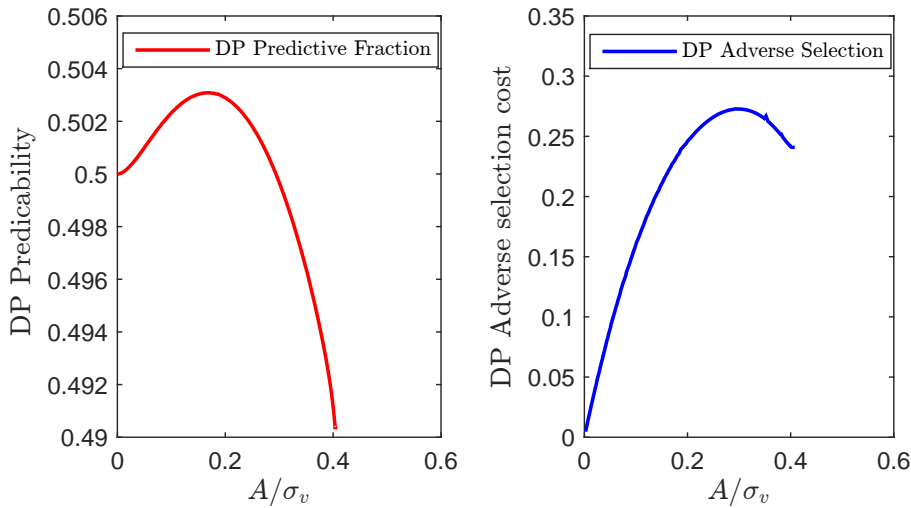}
\caption{\footnotesize \textbf{Predicability of Dark Pool Trades}. The left figure plots the dark pool ``Predictive Fraction'' as a function of spread $A/\sigma_v$.  The right-hand plots the dark pool adverse selection costs as a function of spread $A/\sigma_v$. }
\label{fig_predicability}
\end{figure}
There are relatively few studies that look at this issue. \citet*{peretti_is_2014} conclude that dark pool trades can significantly forecast price movements. \citet*{nimalendran_informational_2014} study trades in a large crossing network and find that the information content in a dark pool is positively associated with the exchange spread.

But as we point out, under different information environments, the dark pool informational content may differ cross-sectionally. Further study in this area is needed.

\textbf{4. Impacts of adding a dark pool alongside an exchange.} We predict that
\begin{itemize}
\item (i) \textbf{Liquidity externality.} Adding a dark pool alongside an exchange decreases the exchange volume but increases the overall volume.
\item (ii) \textbf{Price discovery and exchange spread.} Dark pools have an \textit{amplification effect} on price discovery. That is, the introduction of dark pools  enhances price discovery when price discovery is high, and impairs price discovery when price discovery is low.  Moreover, the improvement of price discovery is associated with a wider exchange spread, whereas the deterioration of price discovery can be associated with a wider or narrower spread.
\end{itemize}
Prediction 4 follows directly from Proposition \ref{prop_compare}, Proposition \ref{prop_informativeness} and Remark \ref{remark2}.\footnote{Remark \ref{remark2} points out, when private information is imprecise, it is possible that price discovery is decreased while spread increases. If this is the case a dark pool can be strictly detrimental to the exchange.} Few studies focus on the direct impact of introducing dark pool trading. For example, \citet*{hendershott_crossing_2000} and \citet*{jones_island_2005} found that there was a reduction in price efficiency after Island ECN stopped displaying its limit order book. \citet*{chlistalla_competition_2010} finds that the entrance of Chi-X, a dark pool in the US, decreased spread.

Other research  studies the relationship between price discovery and dark pool trading intensity within the fragmented framework. O'Hara and Ye (2011) and Jiang et al (2012) find a positive association between price discovery and dark pool trading,  whereas \citet*{hatheway_empirical_2013} and  \citet*{weaver_trade-at_2014} find the opposite. \citet*{comerton-forde_dark_2015-1} conduct a more comprehensive cross-sectional study and show that, when the fraction of non-block trades in dark pools is high (above 10\%, suggesting that dark pools contain a high fraction of informational orders), then dark trading harms price discovery, whereas if dark pools contain less informational orders, dark trading improves price discovery. \citet*{comerton-forde_dark_2015-1}'s prediction is consistent with ours in the sense that we predict an inverted U-shape for the relation of dark pool information content and the information precision.  More research is still needed on the important question of the effect of dark pool activity on price efficiency for different types of stocks in the cross-section.
\subsection{Regulatory Considerations}
\label{sec_policy}
Price discovery is the essential economic function of an exchange. As Alan and Schwartz (2013) point out, price discovery, as a public good, gives investors confidence and promotes the interests of listed entities and the broader community through an efficient secondary market for capital. More precisely, an exchange-produced price benefits a broad spectrum of market participants who use it for marking to market, derivatives valuation, mutual-fund cash flow estimation,  estates, and dark pool pricing.  Thus, the efficiency of how prices are discovered becomes a serious matter in measuring market quality. In the periods of time when markets are deeply fragmented by dark pool trading, it is of extreme importance for regulators to be wary of the impacts dark pools have on price discovery.

\textbf{1. What should regulators do?} Regulators should be cautious in controlling dark pool trading in order to not harm price discovery. To do that, regulators should examine the following aspects. First, dark pool trading should be regulated to a level that distinguishes firm characteristics. As we have pointed out, traders generally possess low precision for high R\&D firms, young firms, small firms, and less-analyzed firms. Introducing dark pools to these firms might cause a decrease in price discovery. Second, a monitoring system measuring the public's ability to process information should be built, and dark pool trading should be under dynamic revision. Third, countries should continue to improve their judicial system to prevent insider trading, and, at the same time, take  measures to improve the efficiency of public disclosure, including accounting information enhancement and financial reporting regulations. Countries should also ensure there are more effective financial media channels. In general, regulators should improve countries' information environment.

\textbf{2, Dark pools in emerging markets?} Based on current evaluations of the information environment in several emerging markets, a great proportion of emerging markets are governed by poor legal systems and have limited implemental power against insider trading and poor quality of information disclosure. These countries should be extremely cautious in dark pool trading.  For example, \citet*{bhattacharya_world_2002} found that the enforcement of insider trading laws in 81 emerging markets is significantly low compared with developed countries. \citet*{wang_quality_2011} and \citet*{yu_literature_2009} document poor quality of financial information in mainland China, and they show that up to a quarter of listed firms in mainland China explicitly admitted to the poor quality of their financial information by restating their previous financial reports. \citet*{tang_insider_2013} finds that a poor corporate governance system interacts with abnormal insider trading to aggravate the information environment in Taiwan. \citet*{budsaratragoon_applying_2012} tests insider trading regulations in Thailand and find that severe informational asymmetry, lax enforcement and poor pricing efficiency are endemic. As we point out, dark pools have an amplification effect on price discovery, so introducing dark pool trading in those countries may aggravate the situation.


\section{Conclusion}
This paper studies the impact of dark pools on price discovery in a noisy information framework. We find that the addition of a dark pool to the traditional exchange has an \textit{amplification effect} on price discovery, i.e., it enhances price discovery when the information has high precision and impairs price discovery when the information has low precision. The results reconcile the conflicting empirical findings in current literature and suggest new channels of research to disentangle the relationship between dark pool trading and market quality.

We highlight the dark pool's function as an informational risk mitigator. In equilibrium, information is sorted by market fragmentation.  That is, traders with strong signals trade in the exchange, traders with modest signals trade in the dark pool, and traders with weak signals do not trade. When information precision is low, a large proportion of informed traders with modest signals  crowd in the dark pool to reduce their information risk. Adding a dark pool, thus, shifts a higher fraction of informed traders from the exchange, compared with liquidity traders, leaving a lower informed-to-uninformed ratio in the exchange and thus decreasing price discovery. In contrast, when information precision is high, a large proportion of informed traders with strong signals crowd in the exchange. Adding a dark pool shifts only a small fraction of informed traders from the exchange, compared with liquidity traders, increasing the informed-to-uninformed ratio in the exchange and increasing price discovery.

There are several observations that complement the overall effects on market quality. First, when information precision is low, the market can experience a deterioration of price discovery along with a widened exchange spread. In this case, dark pools are strictly detrimental to the exchange. Second, dark pools always attract informed traders and liquidity traders in a clustered fashion. We should observe both informed and uninformed traders in all trading venues. Third, the ability of dark trades to predict price movement has an inverted U-shape with exchange spread. Therefore, assets with modest exchange liquidity have a high information content in their dark pool trades.

There are aspects regulators should be aware of. First, dark pools and their impacts have significant variance cross-sectionally. The information structure of different assets, industries, and countries differs in nature. The use of dark pools is thus case sensitive. Second, in a deeply fragmented market, policies that help improve the information environment are needed to enhance price discovery. These measures include, among others, enhancing public disclosure by improving accounting and reporting regulations, strengthening legal systems, and implementing laws against insider trading.

\bibliography{dp_literature}{}

\begin{thebibliography}{}

\bibitem[{ASIC}, 2013]{asic_dark_2013}
{ASIC} (2013).
\newblock Dark liquidity and high frequency trading.
\newblock Information sheet, Australian Securities \& Investments Commission.

\bibitem[Baginski and Hassell, 1997]{baginski_determinants_1997}
Baginski, S.~P. and Hassell, J.~M. (1997).
\newblock Determinants of {Management} {Forecast} {Precision}.
\newblock {\em The Accounting Review}, 72(2):303--312.

\bibitem[Bhattacharya and Daouk, 2002]{bhattacharya_world_2002}
Bhattacharya, U. and Daouk, H. (2002).
\newblock The {World} {Price} of {Insider} {Trading}.
\newblock {\em The Journal of Finance}, 57(1):75--108.

\bibitem[Boni et~al., 2013]{boni_dark_2013}
Boni, L., Brown, D.~C., and Leach, J.~C. (2013).
\newblock Dark {Pool} {Exclusivity} {Matters}.
\newblock {SSRN} {Scholarly} {Paper} ID 2055808, Social Science Research
  Network, Rochester, NY.

\bibitem[Botosan et~al., 2004]{botosan_role_2004}
Botosan, C.~A., Plumlee, M.~A., and Xie, Y. (2004).
\newblock The {Role} of {Information} {Precision} in {Determining} the {Cost}
  of {Equity} {Capital}.
\newblock {\em Review of Accounting Studies}, 9(2-3):233--259.

\bibitem[Budsaratragoon et~al., 2012]{budsaratragoon_applying_2012}
Budsaratragoon, P., Hillier, D., and Lhaopadchan, S. (2012).
\newblock Applying developed{-}country regulation in emerging markets: an
  analysis of {Thai} insider trading.
\newblock {\em Accounting \& Finance}, 52(4):1013--1039.

\bibitem[Buti et~al., 2011a]{buti_dark_2011}
Buti, S., Rindi, B., and Werner, I.~M. (2011a).
\newblock Dark {Pool} {Trading} {Strategies}.
\newblock Working {Paper} 421, IGIER (Innocenzo Gasparini Institute for
  Economic Research), Bocconi University.

\bibitem[Buti et~al., 2011b]{buti_diving_2011}
Buti, S., Rindi, B., and Werner, I.~M. (2011b).
\newblock Diving into {Dark} {Pools}.
\newblock {SSRN} {Scholarly} {Paper} ID 1630499, Social Science Research
  Network, Rochester, NY.

\bibitem[Chen et~al., 2005]{chen_investor_2005}
Chen, Q., Francis, J., and Jiang, W. (2005).
\newblock Investor learning about analyst predictive ability.
\newblock {\em Journal of Accounting and Economics}, 39(1):3--24.

\bibitem[Chlistalla and Lutat, 2011]{chlistalla_competition_2010}
Chlistalla, M. and Lutat, M. (2011).
\newblock Competition in securities markets: the impact on liquidity.
\newblock {\em Financial Markets and Portfolio Management}, 25(2):149--172.

\bibitem[Chowdhry and Nanda, 1991]{chowdhry_multimarket_1991}
Chowdhry, B. and Nanda, V. (1991).
\newblock Multimarket trading and market liquidity.
\newblock {\em Review of Financial Studies}, 4(3):483--511.

\bibitem[Clement, 1999]{clement_analyst_1999}
Clement, M.~B. (1999).
\newblock Analyst forecast accuracy: {Do} ability, resources, and portfolio
  complexity matter?
\newblock {\em Journal of Accounting and Economics}, 27(3):285--303.

\bibitem[Comerton-Forde and Putni{\c n}{\u s},
  2015]{comerton-forde_dark_2015-1}
Comerton-Forde, C. and Putni{\c n}{\u s}, T.~J. (2015).
\newblock Dark trading and price discovery.
\newblock {\em Journal of Financial Economics}, 118(1):70--92.

\bibitem[Conrad et~al., 2003]{conrad_institutional_2003}
Conrad, J., Johnson, K.~M., and Wahal, S. (2003).
\newblock Institutional trading and alternative trading systems.
\newblock {\em Journal of Financial Economics}, 70(1):99--134.

\bibitem[DeCovny, 2008]{decovny_dark_2008}
DeCovny, S. (2008).
\newblock Dark {Pools}, {Will} anonymous trading venues have ripple effects?
\newblock {\em CFA Magazine}, 19(4):28--32.

\bibitem[Degryse et~al., 2015]{degryse_impact_2011}
Degryse, H., de~Jong, F., and Kervel, V.~v. (2015).
\newblock The impact of dark trading and visible fragmentation on market
  quality.
\newblock {\em Review of Finance}, 19(4):1587--1622.

\bibitem[Degryse et~al., 2013]{degryse_dark_2013}
Degryse, H., Tombeur, G., Van~Achter, M., and Wuyts, G. (2013).
\newblock Dark {Trading}.
\newblock In Baker, H.~K. and Kiymaz, H., editors, {\em Market {Microstructure}
  in {Emerging} and {Developed} {Markets}}, pages 213--230. John Wiley \& Sons,
  Inc.

\bibitem[Degryse et~al., 2009]{degryse_dynamic_2009}
Degryse, H., Van~Achter, M., and Wuyts, G. (2009).
\newblock Dynamic order submission strategies with competition between a dealer
  market and a crossing network.
\newblock {\em Journal of Financial Economics}, 91(3):319--338.

\bibitem[Fleming and Nguyen, 2013]{fleming_order_2013}
Fleming, M. and Nguyen, G. (2013).
\newblock Order flow segmentation and the role of dark trading in the price
  discovery of {U}.{S}. treasury securities.
\newblock Technical Report 624, Staff Report, Federal Reserve Bank of New York.

\bibitem[Frankel et~al., 2006]{frankel_determinants_2006}
Frankel, R., Kothari, S.~P., and Weber, J. (2006).
\newblock Determinants of the informativeness of analyst research.
\newblock {\em Journal of Accounting and Economics}, 41(1–2):29--54.

\bibitem[Gleason and Lee, 2003]{gleason_analyst_2003}
Gleason, C.~A. and Lee, C. M.~C. (2003).
\newblock Analyst {Forecast} {Revisions} and {Market} {Price} {Discovery}.
\newblock {\em The Accounting Review}, 78(1):193--225.

\bibitem[Glosten and Milgrom, 1985]{glosten_bid_1985}
Glosten, L.~R. and Milgrom, P.~R. (1985).
\newblock Bid, ask and transaction prices in a specialist market with
  heterogeneously informed traders.
\newblock {\em Journal of Financial Economics}, 14(1):71--100.

\bibitem[Gomber and Pierron, 2010]{gomber_mifid_2010}
Gomber, P. and Pierron, A. (2010).
\newblock {MiFID} - {Spirit} and {Reality} of a {European} {Financial}
  {Markets} {Directive}.
\newblock {SSRN} {Scholarly} {Paper} ID 1858605, Social Science Research
  Network, Rochester, NY.

\bibitem[Gresse, 2006]{gresse_effect_2006}
Gresse, C. (2006).
\newblock The effect of crossing-network trading on dealer market's bid-ask
  spreads.
\newblock {\em European Financial Management}, 12(2):143--160.

\bibitem[Gyntelberg et~al., 2010]{gyntelberg_private_2009-1}
Gyntelberg, J., Loretan, M., Subhanij, T., and Chan, E. (2010).
\newblock Private information, stock markets, and exchange rates.
\newblock In Settlements, B. f.~I., editor, {\em The international financial
  crisis and policy challenges in Asia and the Pacific}, volume~52, pages
  186--210. Bank for International Settlements.

\bibitem[Hasbrouck, 1995]{hasbrouck_one_1995}
Hasbrouck, J. (1995).
\newblock One {Security}, {Many} {Markets}: {Determining} the {Contributions}
  to {Price} {Discovery}.
\newblock {\em The Journal of Finance}, 50(4):1175--1199.

\bibitem[Hatheway et~al., 2013]{hatheway_empirical_2013}
Hatheway, F., Kwan, A., and Zheng, H. (2013).
\newblock An {Empirical} {Analysis} of {Market} {Segmentation} on {U}.{S}.
  {Equities} {Markets}.
\newblock {SSRN} {Scholarly} {Paper} ID 2275101, Social Science Research
  Network, Rochester, NY.

\bibitem[He and Lepone, 2014]{he_determinants_2006}
He, W.~P. and Lepone, A. (2014).
\newblock Determinants of liquidity and execution probability in exchange
  operated dark pool: Evidence from the australian securities exchange.
\newblock {\em Pacific-Basin Finance Journal}, 30:1--16.

\bibitem[Hendershott and Jones, 2005]{jones_island_2005}
Hendershott, T. and Jones, C.~M. (2005).
\newblock Island goes dark: Transparency, fragmentation, and regulation.
\newblock {\em Review of Financial Studies}, 18(3):743--793.

\bibitem[Hendershott and Mendelson, 2000]{hendershott_crossing_2000}
Hendershott, T. and Mendelson, H. (2000).
\newblock Crossing {Networks} and {Dealer} {Markets}: {Competition} and
  {Performance}.
\newblock {\em The Journal of Finance}, 55(5):2071--2115.

\bibitem[Jiang et~al., 2012]{jiang_market_2012}
Jiang, C.~X., McInish, T.~H., and Upson, J. (2012).
\newblock Market {Fragmentation} and {Information} {Quality}: {The} {Role} of
  {TRF} {Trades}.
\newblock {SSRN} {Scholarly} {Paper} ID 1960115, Social Science Research
  Network, Rochester, NY.

\bibitem[Kim and Verrecchia, 1994]{kim_market_1994}
Kim, O. and Verrecchia, R.~E. (1994).
\newblock Market liquidity and volume around earnings announcements.
\newblock {\em Journal of Accounting and Economics}, 17(1):41--67.

\bibitem[Kwan et~al., 2015]{kwan_trading_2014}
Kwan, A., Masulis, R., and McInish, T.~H. (2015).
\newblock Trading rules, competition for order flow and market fragmentation.
\newblock {\em Journal of Financial Economics}, 115(2):330 -- 348.

\bibitem[Kyle, 1985]{kyle_continuous_1985}
Kyle, A.~S. (1985).
\newblock Continuous {Auctions} and {Insider} {Trading}.
\newblock {\em Econometrica}, 53(6):1315--1335.

\bibitem[Lang and Lundholm, 1993]{lang_cross-sectional_1993}
Lang, M. and Lundholm, R. (1993).
\newblock Cross-{Sectional} {Determinants} of {Analyst} {Ratings} of
  {Corporate} {Disclosures}.
\newblock {\em Journal of Accounting Research}, 31(2):246--271.

\bibitem[Li et~al., 2012]{li_information_2012}
Li, D., Li, G.~G., and Li, M. (2012).
\newblock Information precision and its determinants.
\newblock {\em Insurance Markets and Companies: Analyses and Actuarial
  Computations}, 3(1):6--14.

\bibitem[Louis et~al., 2014]{louis_conservatism_2014}
Louis, H., Lys, T.~Z., and Sun, A.~X. (2014).
\newblock Conservatism, {Analyst} {Ability}, and {Forecast} {Error}: {Evidence}
  on {Financial} {Statement} {Users}' {Ability} to {Account} for
  {Conservatism}.
\newblock {SSRN} {Scholarly} {Paper} ID 1031981, Social Science Research
  Network, Rochester, NY.

\bibitem[Maffett, 2012]{maffett_financial_2012}
Maffett, M. (2012).
\newblock Financial reporting opacity and informed trading by international
  institutional investors.
\newblock {\em Journal of Accounting and Economics}, 54(2):201--220.

\bibitem[Mittal, 2008]{mittal_are_2008}
Mittal, H. (2008).
\newblock Are {You} {Playing} in a {Toxic} {Dark} {Pool}?
\newblock {\em The Journal of Trading}, 3(3):20--33.

\bibitem[N{\ae}s and {\O}degaard, 2006]{naes_equity_2005}
N{\ae}s, R. and {\O}degaard, B.~A. (2006).
\newblock Equity trading by institutional investors: To cross or not to cross?
\newblock {\em Journal of Financial Markets}, 9(2):79 -- 99.

\bibitem[Nimalendran and Ray, 2014]{nimalendran_informational_2014}
Nimalendran, M. and Ray, S. (2014).
\newblock Informational linkages between dark and lit trading venues.
\newblock {\em Journal of Financial Markets}, 17(C):230--261.

\bibitem[O'Hara and Ye, 2011]{ohara_is_2011}
O'Hara, M. and Ye, M. (2011).
\newblock Is market fragmentation harming market quality?
\newblock {\em Journal of Financial Economics}, 100(3):459--474.

\bibitem[Peretti and Tapiero, 2014]{peretti_is_2014}
Peretti, P.~d. and Tapiero, O. (2014).
\newblock ``{Is} there light in dark trading" ? {A} {GARCH} analysis of
  transactions in dark pools.
\newblock Technical report.

\bibitem[Preece, 2012]{preece_dark_2012}
Preece, R. (2012).
\newblock {\em Dark {Pools}, {Internalization}, and {Equity} {Market}
  {Quality}}.
\newblock CFA Institute.

\bibitem[Ray, 2010]{ray_match_2010}
Ray, S. (2010).
\newblock A {Match} in the {Dark}: {Understanding} {Crossing} {Network}
  {Liquidity}.
\newblock {SSRN} {Scholarly} {Paper} ID 1570634, Social Science Research
  Network, Rochester, NY.

\bibitem[Ready, 2014]{ready_determinants_2014-1}
Ready, M.~J. (2014).
\newblock Determinants of {Volume} in {Dark} {Pool} {Crossing} {Networks}.
\newblock {SSRN} {Scholarly} {Paper} ID 1361234, Social Science Research
  Network, Rochester, NY.

\bibitem[Sarkar et~al., 2009]{sarkar_liquidity_2009}
Sarkar, A., Schwartz, R.~A., and Klagge, N. (2009).
\newblock Liquidity {Begets} {Liquidity}.
\newblock {\em Trading}, 2009(1):15--20.

\bibitem[Tang et~al., 2013]{tang_insider_2013}
Tang, H.-w., Chen, A., and Chang, C.-C. (2013).
\newblock Insider trading, accrual abuse, and corporate governance in emerging
  markets-{Evidence} from {Taiwan}.
\newblock {\em Pacific-Basin Finance Journal}, 24:132--155.

\bibitem[Viswanathan and Wang, 2002]{viswanathan_market_2002}
Viswanathan, S. and Wang, J. J.~D. (2002).
\newblock Market architecture: limit-order books versus dealership markets.
\newblock {\em Journal of Financial Markets}, 5(2):127--167.

\bibitem[Wang and Wu, 2011]{wang_quality_2011}
Wang, X. and Wu, M. (2011).
\newblock The quality of financial reporting in {China}: {An} examination from
  an accounting restatement perspective.
\newblock {\em China Journal of Accounting Research}, 4(4):167--196.

\bibitem[Weaver, 2014]{weaver_trade-at_2014}
Weaver, D.~G. (2014).
\newblock The {Trade}-{At} {Rule}, {Internalization}, and {Market} {Quality}.
\newblock {SSRN} {Scholarly} {Paper} ID 1846470, Social Science Research
  Network, Rochester, NY.

\bibitem[Ye, 2010]{ye_non-execution_2010}
Ye, M. (2010).
\newblock Non-{Execution} and {Market} {Share} of {Crossing} {Networks}.
\newblock {SSRN} {Scholarly} {Paper} ID 1719016, Social Science Research
  Network, Rochester, NY.

\bibitem[Ye, 2011]{ye_glimpse_2011}
Ye, M. (2011).
\newblock A {Glimpse} into the {Dark}: {Price} {Formation}, {Transaction}
  {Cost} and {Market} {Share} of the {Crossing} {Network}.
\newblock {SSRN} {Scholarly} {Paper} ID 1521494, Social Science Research
  Network, Rochester, NY.

\bibitem[Yu and Lu, 2009]{yu_literature_2009}
Yu, Z. and Lu, W. (2009).
\newblock Literature {Review} of the {Accounting} {Information} {Quality} in
  {China}.
\newblock In {\em Fourth {International} {Conference} on {Cooperation} and
  {Promotion} of {Information} {Resources} in {Science} and {Technology}, 2009.
  {COINFO} '09}, pages 194--197.

\bibitem[Zhu, 2014]{zhu_dark_2013}
Zhu, H. (2014).
\newblock Do dark pools harm price discovery?
\newblock {\em Review of Financial Studies}, 27(3):747--789.

\end{thebibliography}
\bibliographystyle{apalike}

\section{Appendix}

\subsection{Proof of Lemma \ref{non-split informed}}

Since each trader is infinitesimal and orders are limited by the amount, his or her action has no impact on the market parameters (i.e., the exchange spread $A$ and the dark pool execution probabilities ($\bar{R},~\underline{R}$)). Therefore, splitting the order cannot affect the (per unit) profit in each venue. Without loss of generality, we focus on the case of a positive signal (the case for a negative signal is similar). Suppose that the informed traders have signal $s>0$. Then he has a belief $B(s)> \frac{1}{2}$. Because the profit of a ``Buy'' order in each venue is strictly higher than the profit of a ``Sell'' order, thus it is optimal to choose the ``Buy'' direction. From his or her perspective, given the exchange spread $A$ and the dark pool execution probabilities ($\bar{R},~\underline{R}$), the expected (per unit) profit for trading in the lit market, dark pool, and not trade depends on his or her confidence level $B(|s|)$ and is determined by (\ref{payoff_informed_lit}), (\ref{payoff_informed_dp}), and (\ref{payoff_informed_no}), respectively.

Because these payoffs are linear in $B(|s|)$, given any belief $B(|s|)$, there is always one venue that is no worse than any of other venues. This relationship is shown in Figure~\ref{fig_payoffs_informed}. When $s\neq  \pm s_0 \mbox{ or }  \pm s_1$, the payoff of trading in one venue is strictly better than others, and it is optimal to send the entire order to that venue. When $s =  \pm s_0 \mbox { or }  \pm s_1$,  there are two venues that yield the same payoff, and the trader can choose to split the order or not between these two venues. However, since the realization of the signal among the informed traders are continuously distributed, the measure of informed traders who receive a particular signal is zero. That is, such traders who are indifferent to these two venues has a mass of zero in the market. Therefore, in probability one, all informed traders send entire order to either the exchange or the dark pool, or not trader at all. \qed

\subsection{Proof of Lemma \ref{non-split uninformed}}
From a type $d$ liquidity trader's perspective, the expected per unit payoff from trading in the lit market, dark pool, and completely deference are determined by (\ref{payoff_uninformed_lit}), (\ref{payoff_uninformed_dp}), and (\ref{payoff_uninformed_no}), respectively. Since each individual has no impact to the market, given $A,~\bar{R},~\underline{R}$, the per unit payoff in each venue is fixed. There is always one venue that is no worse than others. In addition, the payoff is linear in the number of units transacted. Hence there is no need to split among different venues or among different periods. \qed

\subsection{Proof of Theorem \ref{thm existence0}}

Hereafter we normalize some variables via dividing by $\sigma_e$, i.e., let $\bm{\mathrm{s}} = \frac{s}{\sigma_e}$, $\bm{\mathrm{s_0}} = \frac{s_0}{\sigma_e}$, $\bm{\mathrm{s_1}} = \frac{s_1}{\sigma_e}$, $\bm{\widehat{\mathrm{s}}} = \frac{\widehat{s}}{\sigma_e}$, $\bm{\sigma} = \frac{\sigma_v}{\sigma_e}$. Then it is equivalent to prove that, given $\bm{\sigma}\geq 0$, there is a unique cut-off $\mathbf{\widehat{s}}$ such that $h^{\mathbb{S}}_I(\mathbf{s}),~h^{\mathbb{S}}_{U,\iota}(d),~A^{\mathbb{S}},~\overline{\gamma_e}^{\mathbb{S}},~\underline{\gamma_e}^{\mathbb{S}},~\alpha_e^{\mathbb{S}}$ consist a equilibrium, in which
\begin{align}\label{eqn:append_34}
h^{\mathbb{S}}_I(\mathbf{s})&=\left\{\begin{array}{ll}
                                                                         (\mbox{``Buy''},\mbox{ Exchange(Lit)}) &\mbox{ if } \mathbf{s} \geq \widehat{\mathbf{s}},\\
                                                                          (\mbox{``Sell''},\mbox{ Exchange(Lit)}) &\mbox{ if } \mathbf{s} < -\widehat{\mathbf{s}}, \\                                                                                                                                                  \mbox{Not trade}&\mbox{ otherwise, }                                                                    \end{array}\right. \\
h^{\mathbb{S}}_{U,\iota}(d)&=\left\{\begin{array}{ll}
            (\mbox{``Buy'' if $\iota$=Buyer, or ``Sell'' if $\iota$=Seller},\mbox{ Exchange(Lit)}) &\mbox{ if } d \geq 2B(\widehat{\mathbf{s}})-1,\\
            \mbox{Delay trade} &\mbox{ otherwise, }
            \end{array}\right.\\
 \overline{\gamma_e}^{\mathbb{S}}&=1-\Phi(\mathbf{\widehat{s}}-\bm{\sigma}),\\
\underline{\gamma_e}^{\mathbb{S}} &= 1-\Phi(\mathbf{\widehat{s}}+\bm{\sigma}),\\
\alpha_e^{\mathbb{S}} &= 1-G(2B(\mathbf{\widehat{s}})-1),\\\label{eqn:append_39}
\frac{A^{\mathbb{S}}}{\sigma_v} &= \frac{\overline{\gamma_e}^{\mathbb{S}}-\underline{\gamma_e}^{\mathbb{S}}}
{\overline{\gamma_e}^{\mathbb{S}}+\underline{\gamma_e}^{\mathbb{S}}+\alpha_e^{\mathbb{S}}\frac{\mu_z}{\mu}},
\end{align}
where $\mathbf{\mathbf{\widehat{s}}}$ is determined by
\begin{align}
2B(\mathbf{\widehat{s}})-1 = \frac{A^{\mathbb{S}}}{\sigma_v}.
\label{equation3_scaled}
\end{align}

We prove the theorem in two steps. First, we show that if $\mathbf{\widehat{s}}$ is given, the other variables $h^{\mathbb{S}}_I(\mathbf{s}),~h^{\mathbb{S}}_{U,\iota}(d),~A^{\mathbb{S}},~\overline{\gamma_e}^{\mathbb{S}},~\underline{\gamma_e}^{\mathbb{S}},~\alpha_e^{\mathbb{S}}$ solved from~\eqref{eqn:append_34}-\eqref{eqn:append_39} form an equilibrium. Then we show that such $\mathbf{\widehat{s}}$ exists and is unique.

Suppose that $\mathbf{\widehat{s}}$ exists. By (\ref{equation3_scaled}), an informed trader with signal $\mathbf{\widehat{s}}$ is indifferent between trading in the exchange and not trade. Since $B(\mathbf{s})$ is increasing in $\mathbf{s}$, $h^{\mathbb{S}}_I(\mathbf{s})$ is an optimal strategy for informed traders. Similarly, since a type $\widehat{d}=2B(\mathbf{\widehat{s}})-1$ uninformed liquidity trader is indifferent between trading on the exchange and deferring trade, $h^{\mathbb{S}}_{U,\iota}(d)$ is an optimal strategy for uninformed traders. By the law of large numbers, given $h^{\mathbb{S}}_I(\mathbf{s})$ and $h^{\mathbb{S}}_{U,\iota}(d)$, the fraction of uninformed traders who trade in the exchange would be $\alpha_e^{\mathbb{S}} = \Pr(d\geq \widehat{d}) = 1-G(2B(\bm{\widehat{s}})-1)$. Thus, the fraction of informed traders who trade in the ``right direction'' would be $\overline{\gamma_e}^{\mathbb{S}} = \Pr (\mathbf{s}\geq \bm{\widehat{s}})=1-\Phi(\bm{\widehat{s}}-\bm{\sigma})$, and the fraction of informed traders who trade in the ``wrong direction'' would be $\underline{\gamma_e}^{\mathbb{S}} = \Pr (\mathbf{s}< \bm{\widehat{s}})=1-\Phi(\bm{\widehat{s}}+\bm{\sigma})$. In addition, for given $\overline{\gamma_e}^{\mathbb{S}},~\underline{\gamma_e}^{\mathbb{S}},~\alpha_e^{\mathbb{S}}$, we can find $A^{\mathbb{S}}$ from~\eqref{eqn:append_39} and it would make the market maker on the exchange breaks even on average. Thus, $h^{\mathbb{S}}_I(\mathbf{\widehat{s}}),~h^{\mathbb{S}}_{U,\iota}(d),~A^{\mathbb{S}},~\overline{\gamma_e}^{\mathbb{S}},~\underline{\gamma_e}^{\mathbb{S}},~\alpha_e^{\mathbb{S}}$ indeed form an equilibrium.

Then we will prove that such $\mathbf{\widehat{s}}$ exists and is unique. After substituting the expressions of $A^{\mathbb{S}},~\overline{\gamma_e}^{\mathbb{S}},~\underline{\gamma_e}^{\mathbb{S}},~\alpha_e^{\mathbb{S}}$ into~\eqref{equation3_scaled}, we obtain the following equation for $\mathbf{\widehat{s}}$: \begin{align}
&\frac{\Phi(\mathbf{\widehat{s}}+\bm{\sigma})-\Phi(\mathbf{\widehat{s}}-\bm{\sigma})}
{2-\Phi(\mathbf{\widehat{s}}+\bm{\sigma})-\Phi(\mathbf{\widehat{s}}-\bm{\sigma})+(1-G(2B(\mathbf{\widehat{s}})-1))\frac{\mu_z}{\mu}}=
2B(\mathbf{\widehat{s}})-1.
\label{equation_hats}
\end{align}
Define
\begin{align*}
f(s) = &(2B(s)-1)\left[2-\Phi(s+\bm{\sigma})-\Phi(s-\bm{\sigma})+(1-G(2B(s)-1))\frac{\mu_z}{\mu}\right] \\
&-\left[\Phi(s+\bm{\sigma})-\Phi(s-\bm{\sigma})\right],
\end{align*}
 and the derivative of $f(s)$ is
\begin{align*}f'(s) = &2B'(s)\left[2-\Phi(s+\bm{\sigma})
-\Phi(s-\bm{\sigma})+(1-G(2B(s)-1))\frac{\mu_z}{\mu}\right] \\
&-2(2B(s)-1)G'(2B(s)-1) B'(s).\end{align*}
We can easily find that $f(\frac{1}{2})<0,~f(+\infty)>0$, $f'(0)>0,~f'(+\infty)=0$. Because $G'(x)+xG''(x)\geq 0,~\forall x \in [0,1]$, we have $f''(s)<0$.
Thus there exists a unique $\mathbf{\widehat{s}}$ such that $f(\mathbf{\widehat{s}})=0$. \qed

\subsection{Proof of Theorem \ref{thm existence}} \label{App:existence}
Hereafter we normalize some variables via dividing by $\sigma_e$, i.e., $\bm{\mathrm{s}} = \frac{s}{\sigma_e}$, $\bm{\mathrm{s_0}} = \frac{s_0}{\sigma_e}$,$\bm{\mathrm{s_1}} = \frac{s_1}{\sigma_e}$, $\bm{\widehat{\mathrm{s}}} = \frac{\widehat{s}}{\sigma_e}$, $\bm{\sigma} = \frac{\sigma_v}{\sigma_e}$.
Then finding the equilibrium is equivalent to solving the following system of equations:
\begin{align}
&B(\mathbf{s}_0)(\bar{R}+\underline{R})=\bar{R},
\label{equation0_scaled}\\
&B(\mathbf{s}_1)\left[(1-\bar{R})+(1-\underline{R})\right]=\frac{A}{\sigma_v}+(1-\bar{R}),
\label{equation1_scaled}\\
\bar{R}&=\mathbb{E}\left[\min\left\{1, \frac{\overline{\gamma_d}\mu + \alpha_d Z^+}{\underline{\gamma_d}\mu + \alpha_d Z^-} \right\} \right],
\label{R_upper_scaled}\\
\underline{R}&= \mathbb{E}\left[\min\left\{1, \frac{\underline{\gamma_d}\mu + \alpha_d Z^-}{\overline{\gamma_d}\mu + \alpha_d Z^+} \right\}\right],
\label{R_lower_scaled}\\
\frac{A}{\sigma_v}&=\frac{\overline{\gamma_e}-\underline{\gamma_e}}{(\overline{\gamma_e}+\underline{\gamma_e})+\alpha_e {\mu_z \over \mu}},
\label{spread_scaled}\\
\overline{\gamma_e}&=1-\Phi(\mathbf{s_1}-\bm{\sigma}),
\label{gamma_upper_scaled}\\
\underline{\gamma_e}&=1-\Phi(\mathbf{s_1}+\bm{\sigma}),
\label{gamma_lower_scaled}\\
\overline{\gamma_d}&=\Phi(\mathbf{s_1}-\bm{\sigma})-\Phi(\mathbf{s_0}-\bm{\sigma}),
\label{beta_upper_scaled}\\
\underline{\gamma_d}&=\Phi(\mathbf{s_1}+\bm{\sigma})-\Phi(\mathbf{s_0}+\bm{\sigma}),
\label{beta_lower_scaled}\\
\alpha_e &=1-G(2B(\mathbf{s_1})-1),
\label{alpha_e_scaled}\\
\alpha_d &= G(2B(\mathbf{s_1})-1)-G(2B(\mathbf{s_0})-1),
\label{alpha_d_scaled}
\end{align}
where
\begin{align}
B(\mathbf{s})=\frac{\phi(\mathbf{s}-\bm{\sigma})}{\phi(\mathbf{s}-\bm{\sigma})+\phi(\mathbf{s}+\bm{\sigma})}.
\end{align}

Before proving the existence of solutions to the system of equations, we introduce the following lemma.
\begin{lem}
Let $\mathbf{s_0} \geq 0$ and $\mathbf{s_1}=\mathbf{s_0}+\epsilon$, we have
\begin{align*}
\lim_{\epsilon \rightarrow 0^+}\underline{R}&=\mathrm{E \left[min\left\{1, \frac{\phi(\mathbf{s_0}+\bm{\sigma})\mu + 2G'(2B(\mathbf{s_0})-1)B'(\mathbf{s_0})Z^-}{\phi(\mathbf{s_0}-\bm{\sigma})\mu + 2G'(2B(\mathbf{s_0})-1)B'(\mathbf{s_0})Z^+} \right\} \right]},\\
\lim_{\epsilon \rightarrow 0^+}\bar{R}&=\mathrm{E \left[min\left\{1, \frac{\phi(\mathbf{s_0}-\bm{\sigma})\mu + 2G'(2B(\mathbf{s_0})-1)B'(\mathbf{s_0})Z^+}{\phi(\mathbf{s_0}+\bm{\sigma})\mu + 2G'(2B(\mathbf{s_0})-1)B'(\mathbf{s_0})Z^-} \right\} \right]},\\
\lim_{\epsilon\rightarrow 0^+} \frac{A}{\sigma_v} &= \frac{\Phi(\mathbf{s_0}+\bm{\sigma})-\Phi(\mathbf{s_0}-\bm{\sigma})}{2-\Phi(\mathbf{s_0}+\bm{\sigma})-\Phi(\mathbf{s_0}-\bm{\sigma})+\left[1-G(2B(\mathbf{s_0})-1)\right]\frac{\mu_z}{\mu}}.
\end{align*}
Moreover, if $\mathbf{s_0}=0$ or $\bm{\sigma}=0$, then $\lim\limits_{\epsilon \rightarrow 0^+}\underline{R}=\lim\limits_{\epsilon \rightarrow 0^+}\bar{R}=1$. Therefore, we define $\underline{R}$, $\bar{R}$, and $\frac{A}{\sigma_v}$ use these limits when $\mathbf{s_0}=\mathbf{s_1}$. \label{lem_Rlimit}
\end{lem}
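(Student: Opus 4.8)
The plan is to isolate the three dark-pool aggregates $\overline{\gamma_d},\underline{\gamma_d},\alpha_d$ as the only quantities that degenerate as $\epsilon\to 0^+$, resolve the resulting $0/0$ forms in $\bar R,\underline R$ by a first-order expansion, and pass the limit through the expectation by dominated convergence; the spread limit is then pure continuity. First I would note that, with $\mathbf{s_1}=\mathbf{s_0}+\epsilon$, each of $\overline{\gamma_d},\underline{\gamma_d},\alpha_d$ in \eqref{beta_upper_scaled}, \eqref{beta_lower_scaled}, \eqref{alpha_d_scaled} is a difference of a $C^1$ function evaluated at $\mathbf{s_0}+\epsilon$ and at $\mathbf{s_0}$, hence vanishes linearly in $\epsilon$. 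Differentiating (and applying the chain rule to $\mathbf{s}\mapsto G(2B(\mathbf{s})-1)$ for the last one) gives the leading coefficients
\[
\frac{\overline{\gamma_d}}{\epsilon}\to\phi(\mathbf{s_0}-\bm{\sigma}),\qquad \frac{\underline{\gamma_d}}{\epsilon}\to\phi(\mathbf{s_0}+\bm{\sigma}),\qquad \frac{\alpha_d}{\epsilon}\to 2G'(2B(\mathbf{s_0})-1)\,B'(\mathbf{s_0}).
\]

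Next I would treat $\bar R,\underline R$. In \eqref{R_upper_scaled}--\eqref{R_lower_scaled} both numerator and denominator inside $\min\{1,\cdot\}$ tend to $0$, but dividing top and bottom by $\epsilon$ and inserting the limits above shows that, for almost every $(Z^+,Z^-)$, the fraction converges pointwise to the displayed ratio; the limiting denominators are strictly positive because $\phi>0$ everywhere, and for each $\epsilon>0$ the denominators $\overline{\gamma_d}\mu+\alpha_d Z^+$ and $\underline{\gamma_d}\mu+\alpha_d Z^-$ are already positive (since $\overline{\gamma_d},\underline{\gamma_d}>0$), so no division by zero ever occurs along the sequence. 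Since $0\le\min\{1,\cdot\}\le 1$, the constant $1$ dominates the integrand uniformly in $\epsilon$, and the dominated convergence theorem lets me interchange $\lim_{\epsilon\to0^+}$ with the expectation over $(Z^+,Z^-)$. This produces exactly the two claimed formulas for $\lim_{\epsilon\to0^+}\underline R$ and $\lim_{\epsilon\to0^+}\bar R$.

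The spread carries no indeterminacy: the quantities $\overline{\gamma_e},\underline{\gamma_e},\alpha_e$ in \eqref{gamma_upper_scaled}, \eqref{gamma_lower_scaled}, \eqref{alpha_e_scaled} depend continuously on $\mathbf{s_1}$, so I would substitute $\mathbf{s_1}=\mathbf{s_0}$ into \eqref{spread_scaled}, use $\overline{\gamma_e}-\underline{\gamma_e}\to\Phi(\mathbf{s_0}+\bm{\sigma})-\Phi(\mathbf{s_0}-\bm{\sigma})$ and $\overline{\gamma_e}+\underline{\gamma_e}+\alpha_e\tfrac{\mu_z}{\mu}\to 2-\Phi(\mathbf{s_0}-\bm{\sigma})-\Phi(\mathbf{s_0}+\bm{\sigma})+[1-G(2B(\mathbf{s_0})-1)]\tfrac{\mu_z}{\mu}$, and collect terms to obtain the stated value of $\lim_{\epsilon\to0^+}A/\sigma_v$.

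It remains to evaluate the degenerate endpoints, which I expect to be the main obstacle. When $\bm{\sigma}=0$ one has $B\equiv\tfrac12$ and $B'\equiv 0$, so the $Z^{\pm}$-coefficient $2G'(2B(\mathbf{s_0})-1)B'(\mathbf{s_0})$ vanishes and $\phi(\mathbf{s_0}-\bm{\sigma})=\phi(\mathbf{s_0}+\bm{\sigma})$; the limiting integrand is then identically $1$, giving $\lim\underline R=\lim\bar R=1$ at once. The case $\mathbf{s_0}=0$ (with $\bm{\sigma}>0$) is the genuinely delicate one: the two Gaussian coefficients still coincide, $\phi(\mathbf{s_0}-\bm{\sigma})=\phi(\mathbf{s_0}+\bm{\sigma})$, so the informed sides of the pool balance at leading order, yet the liquidity terms $\alpha_d Z^{\pm}$ survive with a nonzero common coefficient and create a residual two-sided imbalance. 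Showing that this balanced-but-noisy ratio nonetheless drives $\mathbb{E}[\min\{1,\cdot\}]$ to $1$ is the crux, and it is here that the one-sided derivatives $G'(0)$ and $B'(0)$ must be handled with care; I would verify this endpoint value separately (and, if necessary, under the regularity hypotheses on $G$), since the clean limit–expectation machinery of the previous steps does not by itself deliver it.
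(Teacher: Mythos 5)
Your main argument coincides with the paper's: both expand $\overline{\gamma_d},\underline{\gamma_d},\alpha_d$ to first order in $\epsilon$, cancel the common factor of $\epsilon$ inside the ratio, and let $\epsilon\to 0^+$; you add the dominated-convergence justification (the integrand is bounded by $1$) for interchanging the limit with the expectation, which the paper leaves implicit, and your continuity argument for $A/\sigma_v$ is the same computation. Your treatment of the $\bm{\sigma}=0$ endpoint is also correct: there $B'\equiv 0$ and $\phi(\mathbf{s_0}-\bm{\sigma})=\phi(\mathbf{s_0}+\bm{\sigma})$, so the limiting ratio is identically $1$.

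The one piece you leave open --- that $\lim\underline{R}=\lim\bar{R}=1$ when $\mathbf{s_0}=0$ and $\bm{\sigma}>0$ --- cannot be closed as stated, and your suspicion is the right reaction rather than a defect of your argument. Substituting $\mathbf{s_0}=0$ into the limit formula you have just established, and using $\phi(-\bm{\sigma})=\phi(\bm{\sigma})$ together with $B'(0)=\bm{\sigma}/2$, gives
\[
\lim_{\epsilon\to 0^+}\underline{R}=\mathbb{E}\left[\min\left\{1,\ \frac{\phi(\bm{\sigma})\mu+G'(0)\,\bm{\sigma}\,Z^-}{\phi(\bm{\sigma})\mu+G'(0)\,\bm{\sigma}\,Z^+}\right\}\right].
\]
Whenever $G'(0)>0$ (e.g.\ the uniform $G$ used in the paper's numerical examples) and $\Pr(Z^+>Z^-)>0$ (which holds since $Z^+,Z^-$ are i.i.d.\ continuous), this expectation is strictly less than $1$, so that branch of the ``moreover'' clause contradicts the displayed limits; it can only hold when $G'(0)B'(0)=0$, i.e.\ essentially only in the $\bm{\sigma}=0$ case you already handled. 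The paper's own proof is silent on the ``moreover'' sentence, so you are not missing an argument that exists there. Note also that where the lemma is invoked at $\mathbf{s_0}=\mathbf{s_1}=0$ in the existence proof, only the equality $\bar{R}=\underline{R}$ is actually needed, and that does follow from your formula by the exchangeability of $Z^+$ and $Z^-$.
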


\begin{proof}
We can prove this by the Taylor expansion. Suppose that $\epsilon$ is sufficiently small. Because $\mathbf{s_0} \geq 0$ and $\mathbf{s_1}=\mathbf{s_0}+\epsilon$, we have, by the Taylor expansion, that $\underline{\gamma_d}= \phi(\mathbf{s_0}+\bm{\sigma}) \epsilon + o(\epsilon) $,  $\overline{\gamma_d}=\phi(\mathbf{s_0}-\bm{\sigma}) \epsilon + o(\epsilon)$, and $\alpha_d= 2G'(2B(\mathbf{s_0})-1)B'(\mathbf{s_0}) \epsilon+ o(\epsilon) $.  Therefore we have
\begin{align*}
\underline{R}&=\mathrm{E \left[min \left\{1, \frac{\phi(\mathbf{s_0}+\bm{\sigma})\mu \epsilon+ 2G'(2B(\mathbf{s_0})-1)B'(\mathbf{s_0})Z^- \epsilon }{\phi(\mathbf{s_0}-\bm{\sigma})\mu \epsilon+ 2G'(2B(\mathbf{s_0})-1)B'(\mathbf{s_0})Z^+ \epsilon } +  o(\epsilon)\right\}  \right]},\\
\bar{R}&=\mathrm{E \left[min\left\{1, \frac{\phi(\mathbf{s_0}-\bm{\sigma})\mu \epsilon+ 2G'(2B(\mathbf{s_0})-1)B'(\mathbf{s_0})Z^+ \epsilon}{\phi(\mathbf{s_0}+\bm{\sigma})\mu \epsilon+ 2G'(2B(\mathbf{s_0})-1)B'(\mathbf{s_0})Z^-\epsilon}  + o(\epsilon) \right\} \right]}.
\end{align*}
Similarly, by the Taylor expansion, we have  $\overline{\gamma_e}=1-\Phi(\mathbf{s_0}-\bm{\sigma})-\phi(\mathbf{s_0}-\bm{\sigma}) \epsilon + o(\epsilon)$, $\underline{\gamma_e}=1-\Phi(\mathbf{s_0}+\bm{\sigma})-\phi(\mathbf{s_0}+\bm{\sigma}) \epsilon + o(\epsilon)$, and $\alpha_e= \left[1-G(2B(\mathbf{s_0})-1)\right] - 2 G'(\cdot)B'(\mathbf{s_0}) \epsilon+ o(\epsilon) $. Therefore
\begin{align*}\small
&\frac{A}{\sigma_v} = \\
&\small\frac{\Phi(\mathbf{s_0}+\bm{\sigma})-\Phi(\mathbf{s_0}-\bm{\sigma})-\left[\phi(\mathbf{s_0}+\bm{\sigma})-\phi(\mathbf{s_0}-\bm{\sigma})\right]\epsilon }{2-\Phi(\mathbf{s_0}+\bm{\sigma})-\Phi(\mathbf{s_0}-\bm{\sigma})+\left[1-G(2B(\mathbf{s_0})-1)\right]\frac{\mu_z}{\mu}
-\left[\phi(\mathbf{s_0}+\bm{\sigma})+\phi(\mathbf{s_0}-\bm{\sigma})+2 G'(\cdot)B'(\mathbf{s_0})\right]\epsilon}\\
&+ o(\epsilon).
\end{align*}
Let $\epsilon \rightarrow 0^+$, and we prove the lemma.
\end{proof}

We prove the theorem in a similar way as in the proof of Theorem~\ref{thm existence0}. First, we show that if $\mathbf{s_0}$ and $\mathbf{s_1}$ are given, the other variables $h_I(\cdot)$, $h_{U,\iota}(\cdot)$, $A,~\bar{R},~\underline{R},~\overline{\gamma_e},~\underline{\gamma_e},~\overline{\gamma_d},~\underline{\gamma_d},~\alpha_d,~\alpha_e$ solved from~\eqref{R_upper_scaled}-\eqref{alpha_d_scaled} form an equilibrium. Then we show that ($\mathbf{s_0},~\mathbf{s_1}$) exists and is unique.

Given $A,~\bar{R},~\underline{R},~\overline{\gamma_e},~\underline{\gamma_e},~\overline{\gamma_d},~\underline{\gamma_d},~\alpha_d,~\alpha_e$ and that $\mathbf{s_0}, \mathbf{s_1}$ determined by (\ref{equation0_scaled}), (\ref{equation1_scaled}), $0< \mathbf{s_0} < \mathbf{s_1} $, we show that it is optimal for informed speculators and uninformed liquidity buyers (and sellers) to following the strategy described respectively by $h_I(\cdot)$ and  $h_{U,\iota}(\cdot), \iota \in \left\{Buyer, Seller\right\}$.

Consider an informed speculator who receives a signal $\bm{\mathrm{s}}\geq 0$ (the case when $\bm{\mathrm{s}}\leq 0$ is symmetric with respect to the vertical axis, and hence the analysis is similar and skipped here). Suppose that $0< \mathbf{s_0} < \mathbf{s_1}$. From his or her perspective, the expected payoffs in the lit market, the dark pool, and no-trade are, respectively, $\left[B(\bm{\mathrm{s}})\sigma_v-(1-B(\bm{\mathrm{s}}))\sigma_v\right] -A$, $B(\mathbf{s})\underline{R}\sigma_v-(1-B(\bm{\mathrm{s}}))\bar{R}\sigma_v$, and 0. Figure \ref{fig_payoffs_informed} captures the payoff as a function of $B(\bm{\mathrm{s}})$. As one can see in the graph, since the payoffs are linear with respect to $B(\bm{\mathrm{s}})$, and $B(\bm{\mathrm{s}})$ is strictly increasing with respect to $\bm{\mathrm{s}}$, the optimal strategy for an informed speculator with signal $\bm{\mathrm{s}}$ should use the exchange (the lit market) to trade when his or her signal $\bm{\mathrm{s}}\geq \mathbf{s_1}$, and the dark pool when $\mathbf{s_0} \leq \bm{\mathrm{s}} < \mathbf{s_1}$, and stay outside when $\bm{\mathrm{s}}< \mathbf{s_0}$. This is marked as the red line in Figure \ref{fig_payoffs_informed}.

The fractions of each type of traders in each venue $\overline{\gamma_e},~\underline{\gamma_e},~\overline{\gamma_d},~\underline{\gamma_d},~\alpha_e,~\alpha_d$ are determined by  (\ref{gamma_upper_scaled}), (\ref{gamma_lower_scaled}), (\ref{beta_upper_scaled}), (\ref{beta_lower_scaled}), (\ref{alpha_e_scaled}), (\ref{alpha_d_scaled}), respectively, and $A,~\bar{R},~\underline{R}$ are given by (\ref{spread_scaled}), (\ref{R_upper_scaled}), (\ref{R_lower_scaled}). Thus properties (ii), (iii) and (iv) in Definition~\ref{eqn def} are satisfied.

Then we need to show that such pair of cut-off ($\mathbf{s_0},~\mathbf{s_1}$) exists and satisfies $0< \mathbf{s_0} < \mathbf{s_1}$. In order to show this, we consider equations (\ref{equation0_scaled}) and (\ref{equation1_scaled}) and show that there is a intersection for the two lines represented by these two equations.
\begin{figure}[h]
\centering
\includegraphics[width=0.7\textwidth]{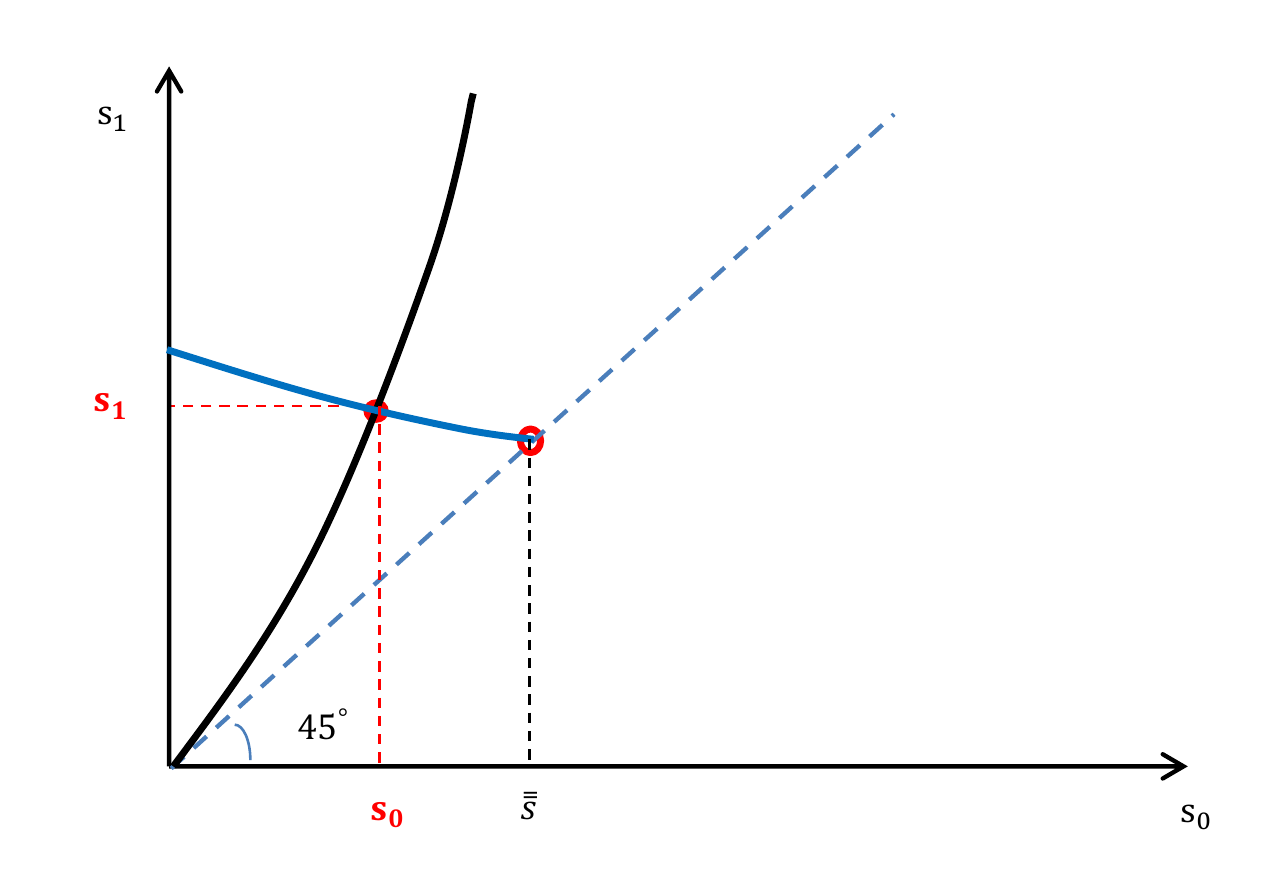}
\caption{Equilibrium Existence }
\label{fig_existence}
\end{figure}

For equation~\eqref{equation0_scaled}, we show that $(\mathbf{s_0},~\mathbf{s_1}) = (0,0)$ satisfies equation (\ref{equation0_scaled}) and behaves as the black line in Figure~\ref{fig_existence}.
\begin{itemize}
\item (i) Suppose $\mathbf{s_0}=0,~\mathbf{s_1}=0$, then $B(\mathbf{s_0})=\frac{1}{2}$, and by Lemma~\ref{lem_Rlimit},  $\Bar{R}=\underline{R}=1$. Therefore equation (\ref{equation0_scaled}) is satisfied.
\item (ii) Now suppose that $\mathbf{s_0}>0$, then $\frac{1}{2}<B(\mathbf{s_0})<1$. To satisfy (\ref{equation0_scaled}),  we need that $\underline{R}<\bar{R}\leq 1$, thus $|\underline{\gamma_d}|<|\overline{\gamma_d}|$. To obtain this, it must be true that $\mathbf{s_1}>\mathbf{s_0}$ if such $\mathbf{s_1}$ exists. By continuity such $\mathbf{s_1}$ must exist for a small enough $\mathbf{s_0}$. (Note that if $\mathbf{s_0}$ is too large, such $\mathbf{s_1}$ may not exist.)
\item (iii) We also show that there exist some $\overline{\mathbf{s}}$ such that $\mathbf{s_1} \rightarrow +\infty$ when $\mathbf{s_0} \rightarrow \overline{\mathbf{s}}$. We rewrite equation (\ref{equation0_scaled}) as $B(\mathbf{s_0}) = \frac{\bar{R}}{\bar{R}+\underline{R}}$. As $\mathbf{s_1} \rightarrow +\infty$, $\overline{\gamma_d} \rightarrow 1- \Phi(\mathbf{s_0}-\bm{\sigma})$, $\underline{\gamma_d} \rightarrow 1- \Phi(\mathbf{s_0}+\bm{\sigma})$, $\alpha_d \rightarrow 1-B(\mathbf{s_0})$. Hence $\bar{R} \rightarrow \mathbb{E} \left[\min \left\{ 1,  \frac{1- \Phi(\mathbf{s_0}-\bm{\sigma})+ [1-B(\mathbf{s_0})] Z^+} {1- \Phi(\mathbf{s_0}+\bm{\sigma})+ [1-B(\mathbf{s_0})] Z^-} \right\}\right] $, and $\underline{R} \rightarrow \mathbb{E} \left[\min \left\{ 1,  \frac{1- \Phi(\mathbf{s_0}+\bm{\sigma})+ [1-B(\mathbf{s_0})] Z^-}{1- \Phi(\mathbf{s_0}-\bm{\sigma})+ [1-B(\mathbf{s_0})] Z^+} \right\}\right] $. Therefore, for any $\mathbf{s_0} \in [0, \infty)$, there must exist $\overline{\gamma_d}>\underline{\gamma_d}$, thus $\bar{R}>\underline{R}$. Then let $\mathbf{s_1} \rightarrow +\infty$, the left hand side of the equation, $B(\mathbf{s_0})$, is equal to $\frac{1}{2}$ if $\mathbf{s_0} = 0$, and is equal to $1$ if  $\mathbf{s_0} \rightarrow +\infty$. However, the right hand side of the equation, $\frac{\bar{R}}{\bar{R}+\underline{R}}$, is greater than $\frac{1}{2}$ if $\mathbf{s_0} =0$, and equal to $\frac{1}{2}$ if $\mathbf{s_0} \rightarrow +\infty$. This is because $\underset{s\rightarrow +\infty}{\lim} \frac{1- \Phi(\mathbf{s_0}-\bm{\sigma})}{1-B(\mathbf{s_0})}=\underset{s\rightarrow +\infty}{\lim} \frac{1- \Phi(\mathbf{s_0}+\bm{\sigma})}{1-B(\mathbf{s_0})}=0 $, so $\underset{s\rightarrow +\infty}{\lim} \bar{R} = \underset{s\rightarrow +\infty}{\lim} \underline{R} = \mathbb{E} \left[ \min \left\{1, \frac{Z^+}{Z^-} \right\} \right]$. By continuity, there must exist an $\overline{\mathbf{s}} \in (0, +\infty)$ such that, as $\mathbf{s_0} \rightarrow \overline{\mathbf{s}},~\mathbf{s_1} \rightarrow +\infty$, LHS = RHS. That is, equation (\ref{equation0_scaled}) is satisfied.
\end{itemize}

For equation (\ref{equation1_scaled}). We rewrite it as
\begin{align}\label{equation1_scaled_b}
B(\mathbf{s_1})=\frac{\frac{A}{\sigma_v}}{(1-\bar{R})+(1-\underline{R})}+\frac{(1-\bar{R})}{(1-\bar{R})+(1-\underline{R})}.
\end{align}
\begin{itemize}
\item (i) Suppose that $\mathbf{s_0}=0$, we prove that there must exist a $\mathbf{s_1}>0$ satisfy (\ref{equation1_scaled}). Note that for any given $\bm{\sigma} \in (0, +\infty)$, $A>0$ is satisfied. If $\mathbf{s_1}=0$, we have $B(\mathbf{s_1})=\frac{1}{2}$ and $\bar{R}=\underline{R}=1$. Plugging into (\ref{equation1_scaled}) gives us $A=0$, which contradicts the fact that $A>0$. If $\mathbf{s_1}<0$, then $B(\mathbf{s_1})<\frac{1}{2}$, $\overline{\gamma_d}<\underline{\gamma_d}$, and $0<\bar{R}<\underline{R}<1$ (we don't consider any $\bar{R},\underline{R}<0$). Hence $\frac{(1-\bar{R})}{(1-\bar{R})+(1-\underline{R})}>\frac{1}{2}>B(\mathbf{s_1})$. In order for~\eqref{equation1_scaled} to be satisfied, we have  $A<0$, which contradicts with that fact that $A>0$. Then we show the existence of $\mathbf{s_1}$ using the continuity of equation~\eqref{equation1_scaled_b}. Its left hand side $B(\mathbf{s_1})$ is increasing in $\mathbf{s_1}$ and $B(0)=\frac{1}{2},~\underset{\mathbf{s_1}\rightarrow \infty}{\lim} B(\mathbf{s_1})=1$. If $\mathbf{s_1}=0$, the right hand side equals $\frac{\frac{A}{\sigma_v}}{(1-\bar{R})+(1-\underline{R})}+\frac{1}{2}>\frac{1}{2}$. However, when $\mathbf{s_1}\rightarrow \infty$, we have $A\rightarrow 0$ and $1>\bar{R}>\underline{R}$, hence the right hand side equals $0+\frac{1-\bar{R}}{(1-\bar{R})+(1-\underline{R})}<\frac{1}{2}$. By the continuity of equation~\eqref{equation1_scaled_b}, there must exist a $\mathbf{s_1}\in (0,+\infty)$ such that the equation is satisfied.
\item (ii)
Next we prove that there exist an $\overline{\overline{\mathbf{s}}}>0$ and small enough $\epsilon>0$ such that for $\mathbf{s_0} = \overline{\overline{\mathbf{s}}},~\mathbf{s_1}=\overline{\overline{\mathbf{s}}}+\epsilon$, equation (\ref{equation1_scaled}) is satisfied as $\epsilon \rightarrow 0^+$. Consider any $\mathbf{s_0}=\mathbf{s}, \mathbf{s_1} = \mathbf{s} + \epsilon$, when $\epsilon>0$ is sufficiently small. By Lemma \ref{lem_Rlimit}, equation (\ref{equation1_scaled}) is equivalent to
\begin{align}\label{eqn:app_55}
B(\mathbf{s})=\frac{\frac{A}{\sigma_v}}{(1-\bar{R})+(1-\underline{R})}+\frac{(1-\bar{R})}{(1-\bar{R})+(1-\underline{R})},
\end{align}
where $\frac{A}{\sigma_v}= \frac{\Phi(\mathbf{s}+\bm{\sigma})-\Phi(\mathbf{s}-\bm{\sigma})}{2-\Phi(\mathbf{s}
+\bm{\sigma})-\Phi(\mathbf{s}-\bm{\sigma})+\left[1-G(2B(\mathbf{s})-1)\right]\frac{\mu_z}{\mu}}$,
$\underline{R}=\mathrm{E \left[min\left\{1, \frac{\phi(\mathbf{s}-\bm{\sigma})\mu + 2G'(2B(\mathbf{s})-1)B'(\mathbf{s})Z^+}{\phi(\mathbf{s}+\bm{\sigma})\mu + 2G'(2B(\mathbf{s})-1)B'(\mathbf{s})Z^-} \right\} \right]}$, and
$\bar{R} =\mathrm{E \left[min\left\{1, \frac{\phi(\mathbf{s}+\bm{\sigma})\mu + 2G'(2B(\mathbf{s})-1)B'(\mathbf{s})Z^-}{\phi(\mathbf{s}-\bm{\sigma})\mu + 2G'(2B(\mathbf{s})-1)B'(\mathbf{s})Z^+} \right\} \right]}$.
Consider $\mathbf{s}$ on $[0,\infty)$. The left hand side of equation~\eqref{eqn:app_55} increases with respect to $\mathbf{s}$. We have $B(0)=\frac{1}{2}$, and $\underset{\mathbf{s}\rightarrow \infty}{\lim} B(\mathbf{s})=1$.  Now consider the right hand side of equation~\eqref{eqn:app_55}. By Lemma~\ref{lem_Rlimit}, we know that if $\mathbf{s}\rightarrow 0^+$, the limit of the right hand side is $\frac{\frac{A}{\sigma_v}}{(1-\bar{R})+(1-\underline{R})}+\frac{1}{2}>\frac{1}{2}$. If $\mathbf{s}\rightarrow \infty$, we have $A\rightarrow 0$ and $1>\bar{R}>\underline{R}$, hence the limit of the right hand side is $0+\frac{1-\bar{R}}{(1-\bar{R})+(1-\underline{R})}<\frac{1}{2}$. By continuity there must exist a $\overline{\overline{\mathbf{s}}}\in (0,\infty)$ such that equation~\eqref{eqn:app_55} is satisfied at $(\overline{\overline{\mathbf{s}}},~\overline{\overline{\mathbf{s}}})$ (i.e., $\mathbf{s_0}=\mathbf{s_1}=\overline{\overline{\mathbf{s}}}$). 
\end{itemize}

The above argument can be summarized by Figure \ref{fig_existence}.
Given $\bm{\sigma}>0$ fixed, the black curve represents the $(\mathbf{s_0}$, $\mathbf{s_1})$ pairs that satisfy equation (\ref{equation0_scaled}). It goes through the point $(0,0)$, is always above the line $\mathbf{s_1}=\mathbf{s_0}$, and $\mathbf{s_1}\rightarrow+\infty$ when $\mathbf{s_0}\rightarrow \overline{\mathbf{s}}$.
The red curve represents the ($\mathbf{s_0}$, $\mathbf{s_1}$) pairs that satisfy equation (\ref{equation1_scaled}). When $\mathbf{s_0}=0$, $\mathbf{s_1} \in (0,\infty)$. And there exists some $\overline{\overline{\mathbf{s}}} \in (0,+\infty)$ such that $\mathbf{s_0}=\mathbf{s_1}=\overline{\overline{\mathbf{s}}}$, satisfies equation (\ref{equation1_scaled}). Then because all functions are continuous, there must exist a pair $(\mathbf{s_0},~\mathbf{s_1}),~0<\mathbf{s_0}<\mathbf{s_1}<+\infty$, such that both equations (\ref{equation0_scaled}) and (\ref{equation1_scaled}) are satisfied. It is the intersection of the black curve and the red curve in Figure \ref{fig_existence}. The existence is then established.\qed

\subsection{Proof of Proposition \ref{prop_monotonicity_lowsigma} and Proposition \ref{prop_monotonicity_lowsigmaparticipation}}

To prove Propositions \ref{prop_monotonicity_lowsigma} and \ref{prop_monotonicity_lowsigmaparticipation}, we need the following two lemmas.

\begin{lem}
Suppose $s(\bm{\sigma})$ is  continuously differentiable over $(0, +\infty)$, and $\underset{\bm{\sigma} \rightarrow 0^+}{\lim} s(\bm\sigma)\bm{\sigma} =0$, then
\begin{align*}
\underset{\bm{\sigma} \rightarrow 0^+}{\lim} \left(\phi(s(\bm\sigma)+\bm{\sigma}) - \phi(s(\bm\sigma)-\bm{\sigma})\right) s'(\bm{\sigma}) &=0\\
\underset{\bm{\sigma} \rightarrow 0^+}{\lim} \left(\Phi(s(\bm\sigma)+\bm{\sigma}) - \Phi(s(\bm\sigma)-\bm{\sigma})\right) s'(\bm{\sigma}) &=0
\end{align*}
In addition,
\begin{itemize}
\item (i) If $\underset{\bm{\sigma} \rightarrow 0^+}{\lim} s(\bm\sigma) =\pm\infty$, $\left|\bm{\sigma}s'(\bm{\sigma}) \right|\leq s(\bm\sigma)$ for sufficiently small $\bm\sigma$.
\item (ii) If $-\infty <\underset{\bm{\sigma} \rightarrow 0^+}{\lim} s(\bm\sigma) <+\infty$, $\underset{\bm{\sigma} \rightarrow 0^+}{\lim} \bm{\sigma} s'(\bm{\sigma})=0$.
\end{itemize}
\label{lem_limit_zero}
\end{lem}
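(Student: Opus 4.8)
The plan is to reduce the two displayed limits to the regularity claims (i) and (ii), which I regard as the real content of the lemma, and then to establish (i) and (ii) from the hypothesis $\lim_{\bm{\sigma}\to 0^+} s(\bm\sigma)\bm\sigma = 0$ together with Gaussian tail decay.

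First I would record two elementary identities. Since $\phi(x)=\tfrac{1}{\sqrt{2\pi}}e^{-x^2/2}$, one has $\phi(s+\bm\sigma)=\phi(s-\bm\sigma)\,e^{-2s\bm\sigma}$, hence $\phi(s+\bm\sigma)-\phi(s-\bm\sigma)=\phi(s-\bm\sigma)\bigl(e^{-2s\bm\sigma}-1\bigr)$. For the CDF difference the mean value theorem gives $\Phi(s+\bm\sigma)-\Phi(s-\bm\sigma)=2\bm\sigma\,\phi(\xi)$ for some $\xi\in(s-\bm\sigma,s+\bm\sigma)$. Multiplying each by $s'(\bm\sigma)$ and writing $e^{-2s\bm\sigma}-1=-2s\bm\sigma\,(1+o(1))$, which is valid because $s\bm\sigma\to0$, the two target quantities become $-2\,\phi(s-\bm\sigma)\,s\,\bigl(\bm\sigma s'\bigr)(1+o(1))$ and $2\,\phi(\xi)\,\bigl(\bm\sigma s'\bigr)$. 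Thus both reduce to controlling $\bm\sigma s'(\bm\sigma)$ against $\phi$ evaluated near $s$.

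Next I would split according to the limiting behaviour of $s$, assuming $s$ has a limit in $[-\infty,+\infty]$ (otherwise one argues along subsequences, since every sequence $\bm\sigma_n\to0$ has a subsequence along which $s$ converges in $[-\infty,+\infty]$). In the finite-limit case of (ii), $\phi(s-\bm\sigma)$, $\phi(\xi)$ and $s$ all stay bounded, so both products vanish as soon as $\bm\sigma s'\to0$; this is exactly claim (ii). In the infinite-limit case of (i), say $s\to+\infty$, I would use claim (i) in the form $|\bm\sigma s'|\le|s|$ to bound the first quantity by $\phi(s-\bm\sigma)\,s^2$ and, for $\bm\sigma$ small enough that $s-\bm\sigma>0$, the second by $\phi(\xi)\,|s|\le\phi(s-\bm\sigma)\,|s|$. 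Since $s\bm\sigma\to0$ gives $e^{-(s-\bm\sigma)^2/2}\sim e^{-s^2/2}$, the Gaussian decay $e^{-s^2/2}s^k\to0$ for each fixed $k$ forces both bounds to $0$; the case $s\to-\infty$ is symmetric. This settles the two displayed limits modulo (i) and (ii).

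The main obstacle is establishing (i) and (ii) themselves, since for a completely arbitrary $C^1$ function they can fail (an $s$ with a finite limit but wildly oscillating derivative violates (ii)); the right input is a bound on the growth of $s'$ near $0$. Note that (ii) is immediate whenever $s'$ is bounded near $0$, in which case $\bm\sigma s'\to0$ trivially. More generally I would obtain the needed control from the structure of $s$: in the application $s$ is $\mathbf{s_0}(\bm\sigma)$ or $\mathbf{s_1}(\bm\sigma)$ solving \eqref{equation0_scaled}--\eqref{equation1_scaled}, so differentiating these relations and invoking the implicit function theorem expresses $s'(\bm\sigma)$ through the smooth equilibrium quantities, from which the comparison $|\bm\sigma s'|\le|s|$ in the divergent case follows using again $s\bm\sigma\to0$. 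A convenient reformulation is to set $h(\bm\sigma)=\bm\sigma s(\bm\sigma)$, so that $h\to0$ and $h'=s+\bm\sigma s'$; then (i) is equivalent to $0\le h'\le 2s$ and (ii) to $h'\to L$, where $L=\lim s$. Phrasing the two claims through this monotone, vanishing auxiliary function $h$ is, I expect, the cleanest route to pin down the required derivative bounds.
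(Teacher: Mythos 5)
Your reduction of the two displayed limits to claims (i) and (ii) is correct and is essentially the route the paper takes: the paper writes $\phi(s+\bm{\sigma})-\phi(s-\bm{\sigma})=\int_{s-\bm{\sigma}}^{s+\bm{\sigma}}-xe^{-x^2/2}\,dx$ and $\Phi(s+\bm{\sigma})-\Phi(s-\bm{\sigma})=\int_{s-\bm{\sigma}}^{s+\bm{\sigma}}e^{-x^2/2}\,dx$, applies the mean value theorem to obtain factors $-2\bm{\sigma}\,s\,e^{-s^2/2}$ and $2\bm{\sigma}\,e^{-s^2/2}$, and then kills the products using $|\bm{\sigma}s'(\bm{\sigma})|\le|s(\bm{\sigma})|$ together with $s^2e^{-s^2/2}\to0$ and $se^{-s^2/2}\to0$. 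Your multiplicative identity $\phi(s+\bm{\sigma})=\phi(s-\bm{\sigma})e^{-2s\bm{\sigma}}$ is a cosmetic variant of the same computation, so this half of the proposal matches the paper.

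The genuine gap is that claims (i) and (ii) are part of the lemma's conclusion, and your proposal never proves them. You are right that they fail for a general $C^1$ function satisfying only $s(\bm{\sigma})\bm{\sigma}\to0$ (an oscillating-derivative example defeats (ii), and with it the second displayed limit), but the proposed remedy --- recovering $|\bm{\sigma}s'|\le|s|$ from the equilibrium equations via the implicit function theorem --- is only announced, not carried out, and would in any case establish a property of the particular cut-offs $\mathbf{s_0},\mathbf{s_1},\widehat{\mathbf{s}}$ rather than the lemma as stated. The closing reformulation does not close the gap either: $h(\bm{\sigma})=\bm{\sigma}s(\bm{\sigma})$ is not monotone merely because it is positive and tends to $0$, so neither $0\le h'\le 2s$ nor $h'\to L$ follows from the hypotheses. (For what it is worth, the paper's own proof has the same soft spot: it infers $\frac{d(s\bm{\sigma})}{d\bm{\sigma}}>0$ near $0$ from $s\bm{\sigma}>0$ and $s\bm{\sigma}\to0$, and in case (ii) it identifies $\lim_{\bm{\sigma}\to0^+}h'(\bm{\sigma})$ with the one-sided derivative $h'(0)$; neither step is valid without additional regularity. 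Your diagnosis of where the difficulty sits is therefore accurate, but as a proof the proposal leaves the decisive step open.)
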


\begin{proof}
(i) Suppose that $\underset{\bm{\sigma} \rightarrow 0^+}{\lim} s(\bm\sigma)=+\infty$. There exists $\epsilon>0$ such that $\forall \bm{\sigma} \in (0, \epsilon),~s(\bm\sigma)\bm{\sigma} >0$ and $\frac{d (s(\bm\sigma)\bm\sigma)}{d \bm\sigma}>0$. Thus $\frac{d (s(\bm\sigma)\bm\sigma)}{d \bm\sigma}=\bm{\sigma}s'(\bm{\sigma}) + s(\bm\sigma) \geq 0$, and
$$\left|\bm{\sigma}s'(\bm{\sigma}) \right|\leq |s(\bm\sigma)|,$$
for $\bm\sigma\in(0,\epsilon)$.
Similarly, if $\underset{\bm{\sigma} \rightarrow 0^+}{\lim} s(\bm\sigma)=-\infty$, we have that
$$\left|\bm{\sigma}s'(\bm{\sigma}) \right|\leq |s(\bm\sigma)|,$$
for sufficiently small $\bm\sigma$.

Therefore, by mean value theorem, we have
\begin{align*}
\underset{\bm{\sigma} \rightarrow 0^+}{\lim} \left(\phi(s(\bm\sigma)+\bm{\sigma}) - \phi(s(\bm\sigma)-\bm{\sigma})\right) s'(\bm{\sigma})&= \underset{\bm{\sigma} \rightarrow 0^+}{\lim}  \int^{s(\bm\sigma)+\bm{\sigma}}_{s(\bm\sigma)-\bm{\sigma}}-xe^{-\frac{x^2}{2}}dx s'( \bm{\sigma})\\
& = \underset{\bm{\sigma} \rightarrow 0^+}{\lim} -2\bm{\sigma} s(\bm\sigma)e^{-\frac{s(\bm\sigma)^2}{2}}s'(\bm{\sigma}).
 \end{align*}
Because $\left|\bm{\sigma}s'(\bm{\sigma}) \right|\leq |s(\bm\sigma)|$ and $\underset{\bm{\sigma} \rightarrow 0^+}{\lim} \left|-2 s(\bm\sigma)^2e^{-\frac{s(\bm\sigma)^2}{2}}\right|=0$, we obtain
\begin{align*}
\underset{\bm{\sigma} \rightarrow 0^+}{\lim} \left(\phi(s(\bm\sigma)+\bm{\sigma}) - \phi(s(\bm\sigma)-\bm{\sigma})\right) s'(\bm{\sigma})=0.
 \end{align*}

Similarly, we have
\begin{align*}
\underset{\bm{\sigma} \rightarrow 0^+}{\lim} \left(\Phi(s(\bm\sigma)+\bm{\sigma}) - \Phi(s(\bm\sigma)-\bm{\sigma})\right) s'(\bm{\sigma})&= \underset{\bm{\sigma} \rightarrow 0^+}{\lim}  \int^{s(\bm\sigma)+\bm{\sigma}}_{s(\bm\sigma)-\bm{\sigma}}e^{-\frac{x^2}{2}}dx s'(\bm{\sigma})\\
& = \underset{\bm{\sigma} \rightarrow 0^+}{\lim} 2   \bm{\sigma} e^{-\frac{s(\bm\sigma)^2}{2}}s'(\bm{\sigma}).
 \end{align*}
Additionally, $\underset{\bm{\sigma} \rightarrow 0^+}{\lim} \left|-2 s(\bm\sigma)e^{-\frac{s(\bm\sigma)^2}{2}}\right|=0$ gives us that
\begin{align*}
\underset{\bm{\sigma} \rightarrow 0^+}{\lim} \left(\Phi(s(\bm\sigma)+\bm{\sigma}) - \Phi(s(\bm\sigma)-\bm{\sigma})\right) s'(\bm{\sigma})=0.\end{align*}

(ii) Suppose that $\underset{\bm{\sigma} \rightarrow 0^+}{\lim} s<+\infty$. On one hand, we have that $\lim\limits_{\bm\sigma\rightarrow 0^+}\frac{d (s(\bm\sigma)\bm\sigma)}{d \bm\sigma}=\lim\limits_{\bm\sigma\rightarrow 0^+}\bm{\sigma}s'(\bm{\sigma}) + \lim\limits_{\bm\sigma\rightarrow 0^+}s(\bm\sigma)=\lim\limits_{\bm\sigma\rightarrow 0^+}\bm{\sigma}s'(\bm{\sigma}) +s(0)$. On the other hand, we have
\begin{align}
\frac{d (s(\bm\sigma)\bm\sigma)}{d \bm\sigma}\big|_{\bm\sigma=0}=\lim\limits_{\bm\sigma\rightarrow 0^+}\frac{s(\bm\sigma)\bm\sigma-0}{\bm\sigma-0}=s(0).
\end{align}
Thus we have
$$\lim\limits_{\bm\sigma\rightarrow 0^+}\bm{\sigma}s'(\bm{\sigma})=0,$$
%
and
\begin{align*}
\underset{\bm{\sigma} \rightarrow 0^+}{\lim} \left(\phi(s(\bm\sigma)+\bm{\sigma}) - \phi(s(\bm\sigma)-\bm{\sigma})\right) s'(\bm{\sigma})&= \underset{\bm{\sigma} \rightarrow 0^+}{\lim}  \int^{s(\bm\sigma)+\bm{\sigma}}_{s(\bm\sigma)-\bm{\sigma}}-xe^{-\frac{x^2}{2}}dx s'(\bm{\sigma})\\
& = \underset{\bm{\sigma} \rightarrow 0^+}{\lim}\left(-2 \bm{\sigma} s'(\bm{\sigma})\right) \cdot  \underset{\bm{\sigma} \rightarrow 0^+}{\lim} s(\bm\sigma)e^{-\frac{s(\bm\sigma)^2}{2}}\\
& =0,\\
\underset{\bm{\sigma} \rightarrow 0^+}{\lim} \left(\Phi(s(\bm\sigma)+\bm{\sigma}) - \Phi(s(\bm\sigma)-\bm{\sigma})\right) s'(\bm{\sigma})&= \underset{\bm{\sigma} \rightarrow 0^+}{\lim}  \int^{s(\bm\sigma)+\bm{\sigma}}_{s(\bm\sigma)-\bm{\sigma}}-xe^{-\frac{x^2}{2}}dx s'(\bm{\sigma})\\
& = \underset{\bm{\sigma} \rightarrow 0^+}{\lim}\left(\bm{\sigma} s'(\bm{\sigma})\right) \cdot  \underset{\bm{\sigma} \rightarrow 0^+}{\lim} e^{-\frac{s(\bm\sigma)^2}{2}}\\
& =0.
 \end{align*}
\end{proof}

\begin{lem}
$\underset{\bm{\sigma} \rightarrow 0^+}{\lim} \mathbf{\widehat{s}} = \mathbf{s}^*$, where $\mathbf{s}^* \in (0, +\infty)$  is determined by the following equation
$$s = \frac{2 \phi (s)}{2- 2\Phi(s)+\frac{\mu_z}{\mu}}.$$
\label{lem_limitzero_hats}
\end{lem}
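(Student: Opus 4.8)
The defining relation for the benchmark cutoff is the scalar equation \eqref{equation_hats}, which I write as $f(\mathbf{\widehat{s}},\bm{\sigma})=0$ with $f$ the function displayed in the proof of Theorem~\ref{thm existence0}. The immediate difficulty is that \emph{both} sides of \eqref{equation_hats} vanish as $\bm{\sigma}\to 0^+$: the numerator $\Phi(\mathbf{\widehat{s}}+\bm{\sigma})-\Phi(\mathbf{\widehat{s}}-\bm{\sigma})\to 0$ and the right-hand side $2B(\mathbf{\widehat{s}})-1\to 0$, so the naive limit degenerates to $0=0$ and determines nothing. The plan is therefore to \textbf{renormalize}: divide $f(s,\bm{\sigma})=0$ by $\bm{\sigma}>0$, set $F(s,\bm{\sigma}):=f(s,\bm{\sigma})/\bm{\sigma}$, and study the limiting equation $F(s,0^+)=0$. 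Since $\bm{\sigma}>0$, $F(\cdot,\bm{\sigma})$ has exactly the same roots and the same sign pattern as $f(\cdot,\bm{\sigma})$.

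First I would compute the pointwise limit of $F$ for fixed $s>0$ by Taylor expansion in $\bm{\sigma}$. Using $\phi(s\mp\bm{\sigma})=\phi(s)\mp\bm{\sigma}\phi'(s)+O(\bm{\sigma}^2)$ and $\phi'(s)=-s\phi(s)$ gives $2B(s)-1=\dfrac{\phi(s-\bm{\sigma})-\phi(s+\bm{\sigma})}{\phi(s-\bm{\sigma})+\phi(s+\bm{\sigma})}=\bm{\sigma}s+O(\bm{\sigma}^3)$, hence $\tfrac{1}{\bm{\sigma}}(2B(s)-1)\to s$; similarly $\tfrac{1}{\bm{\sigma}}\bigl[\Phi(s+\bm{\sigma})-\Phi(s-\bm{\sigma})\bigr]\to 2\phi(s)$ (this is just twice the definition of $\Phi'(s)=\phi(s)$), and $G(2B(s)-1)\to G(0)=0$ by continuity. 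Substituting into $F=f/\bm{\sigma}$ yields
\[
F(s,\bm{\sigma})\ \longrightarrow\ F_0(s):=s\Bigl(2-2\Phi(s)+\tfrac{\mu_z}{\mu}\Bigr)-2\phi(s),
\]
whose zeros are precisely $s=\dfrac{2\phi(s)}{2-2\Phi(s)+\frac{\mu_z}{\mu}}$, i.e.\ the equation in the statement. Note only \emph{pointwise} convergence is needed, so no uniform control of the expansions is required.

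Next I would prove that $F_0$ has a \textbf{unique} positive root $\mathbf{s}^*$. The clean point is a cancellation: differentiating, the two $s\phi(s)$ contributions cancel and $F_0'(s)=2-2\Phi(s)+\tfrac{\mu_z}{\mu}>0$ for all $s$, so $F_0$ is strictly increasing. Since $F_0(0)=-2\phi(0)<0$ and $F_0(s)\sim s\,\tfrac{\mu_z}{\mu}\to+\infty$ (because $\tfrac{\mu_z}{\mu}>0$ while $2\phi(s)\to0$ and $s(1-\Phi(s))\to0$), there is exactly one root $\mathbf{s}^*\in(0,+\infty)$, and $F_0<0$ on $(0,\mathbf{s}^*)$, $F_0>0$ on $(\mathbf{s}^*,\infty)$.

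Finally I would upgrade pointwise convergence of $F$ to convergence of the roots $\mathbf{\widehat{s}}(\bm{\sigma})\to\mathbf{s}^*$ by a sandwich argument that avoids any a~priori compactness bound. By Theorem~\ref{thm existence0}, for each fixed $\bm{\sigma}>0$ the function $f(\cdot,\bm{\sigma})$ (and hence $F(\cdot,\bm{\sigma})$) is strictly increasing through its unique zero $\mathbf{\widehat{s}}(\bm{\sigma})$, so it is negative to the left and positive to the right of that zero. Fix $\epsilon\in(0,\mathbf{s}^*)$; then $F_0(\mathbf{s}^*-\epsilon)<0<F_0(\mathbf{s}^*+\epsilon)$, and by pointwise convergence there is $\bar{\bm{\sigma}}>0$ with $F(\mathbf{s}^*-\epsilon,\bm{\sigma})<0<F(\mathbf{s}^*+\epsilon,\bm{\sigma})$ for all $\bm{\sigma}<\bar{\bm{\sigma}}$. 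The single-crossing sign pattern then forces $\mathbf{s}^*-\epsilon<\mathbf{\widehat{s}}(\bm{\sigma})<\mathbf{s}^*+\epsilon$, i.e.\ $|\mathbf{\widehat{s}}(\bm{\sigma})-\mathbf{s}^*|<\epsilon$, giving $\lim_{\bm{\sigma}\to0^+}\mathbf{\widehat{s}}=\mathbf{s}^*$. The main obstacle is conceptual rather than computational: recognizing that the degenerate limit must be resolved by the $\bm{\sigma}$-renormalization, and then organizing the argument so that only pointwise limits plus the monotonicity already supplied by Theorem~\ref{thm existence0} are invoked, rather than a delicate uniform convergence or a separate boundedness estimate for $\mathbf{\widehat{s}}(\bm{\sigma})$.
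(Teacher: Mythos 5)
Your proposal is correct, but it takes a genuinely different route from the paper's. The paper differentiates: it invokes the implicit function theorem to get a $C^1$ path $\widehat{\mathbf{s}}(\bm{\sigma})$, computes $\frac{d}{d\bm{\sigma}}\bigl(\frac{A^{\mathbb{S}}}{\sigma_v}\bigr)$ in two ways (once from the break-even spread formula, once from the indifference condition $\frac{A^{\mathbb{S}}}{\sigma_v}=2B(\widehat{\mathbf{s}})-1$), equates the limits, and needs the auxiliary Lemma~\ref{lem_limit_zero} to kill the terms involving $\bm{\sigma}\,\widehat{\mathbf{s}}\,'(\bm{\sigma})$, plus a separate contradiction argument to rule out $\widehat{\mathbf{s}}\to+\infty$. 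Your renormalization $F=f/\bm{\sigma}$ reaches the same limiting equation $s\bigl(2-2\Phi(s)+\tfrac{\mu_z}{\mu}\bigr)=2\phi(s)$ directly from pointwise Taylor expansions (indeed $2B(s)-1=\tanh(s\bm{\sigma})=s\bm{\sigma}+O(\bm{\sigma}^3)$ exactly), and your sandwich step replaces the paper's L'H\^{o}pital-style bookkeeping: since $f(\cdot,\bm{\sigma})$ is negative below and positive above its unique zero --- which is precisely what the uniqueness argument inside Theorem~\ref{thm existence0} delivers --- evaluating $F$ at the two fixed test points $\mathbf{s}^*\pm\epsilon$ suffices, and boundedness of $\widehat{\mathbf{s}}(\bm{\sigma})$ comes for free rather than by contradiction; the cancellation giving $F_0'(s)=2-2\Phi(s)+\tfrac{\mu_z}{\mu}>0$ is also checked correctly. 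What your route buys is economy and robustness: no differentiability of the root path, no interchange of limits and derivatives, no assumption that various limits exist. What the paper's route buys is the derivative information $\lim_{\bm{\sigma}\to 0^+}\frac{d}{d\bm{\sigma}}\bigl(\frac{A^{\mathbb{S}}}{\sigma_v}\bigr)=\mathbf{s}^*$ itself, which it immediately reuses in the proofs of Propositions~\ref{prop_monotonicity_lowsigma} and~\ref{prop_monotonicity_lowsigmaparticipation}. The only hypothesis you use tacitly is $G(0)=0$ (so that $1-G(2B(s)-1)\to 1$), but the paper's own limiting equation makes the same assumption.
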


\begin{proof}

Because $G(\cdot),~\Phi(\cdot)\in C^2$. The implicit function theorem and the uniqueness of $\mathbf{\widehat{s}}$ show that $\mathbf{\widehat{s}}(\bm{\sigma})$ is a continuously differentiable function over $(0, +\infty)$.

When $\bm{\sigma}=0$, we have $\overline{\gamma_e}^{\mathbb{S}}-\underline{\gamma_e}^{\mathbb{S}}=0$ and $\frac{A^{\mathbb{S}}}{\sigma_v}=0$. Equation~\eqref{equation3_scaled} gives us that $B(\mathbf{\widehat{s}})=\frac{1}{2}$ and $\mathbf{\widehat{s}}(\bm\sigma)\bm{\sigma}=0$.

Recall that
\begin{align}
\frac{A^{\mathbb{S}}}{\sigma_v}= \frac{\Phi(\mathbf{\widehat{s}}+\bm{\sigma})-\Phi(\mathbf{\widehat{s}}-\bm{\sigma})}
{2-\Phi(\mathbf{\widehat{s}}+\bm{\sigma})-\Phi(\mathbf{\widehat{s}}
-\bm{\sigma})+(1-G(2B(\mathbf{\widehat{s}})-1))\frac{\mu_z}{\mu}},
\end{align}
$G(\cdot),~\Phi(\cdot)\in C^2$, and $\frac{A^{\mathbb{S}}}{\sigma_v}$ is differentiable of $\bm{\sigma}$ over $(0, +\infty)$.

Taking the derivative, we get
\begin{align*}
\frac{d\left(\frac{A^{\mathbb{S}}}{\sigma_v} \right)}{d \bm{\sigma}}  = & \frac{ \left(\phi(\mathbf{\widehat{s}}+\bm{\sigma}) - \phi(\mathbf{\widehat{s}}-\bm{\sigma})\right) \frac{d \mathbf{\widehat{s}}}{d \bm{\sigma}} + \left(\phi(\mathbf{\widehat{s}}+\bm{\sigma}) + \phi(\mathbf{\widehat{s}}-\bm{\sigma})\right) } {\overline{\gamma_e}+\underline{\gamma_e}+\alpha_e \frac{\mu_z}{\mu}}     \\
& + \frac{\left[\Phi(\mathbf{\widehat{s}}+\bm{\sigma})-\Phi(\mathbf{\widehat{s}}-\bm{\sigma}) \right] \left[ \left(\phi(\mathbf{\widehat{s}}+\bm{\sigma}) + \phi(\mathbf{\widehat{s}}-\bm{\sigma})\right) \frac{d \mathbf{\widehat{s}}}{d \bm{\sigma}} + \left(\phi(\mathbf{\widehat{s}}+\bm{\sigma}) - \phi(\mathbf{\widehat{s}}-\bm{\sigma})\right)\right]}{\left[\overline{\gamma_e}+\underline{\gamma_e}+\alpha_e \frac{\mu_z}{\mu}\right]^2} \\
& + \frac{2 G'(2B(\mathbf{\widehat{s}})-1)\frac{\mu_z}{\mu}\left[\Phi(\mathbf{\widehat{s}}+\bm{\sigma})-\Phi(\mathbf{\widehat{s}}-\bm{\sigma}) \right]\left( \frac{\partial B(\mathbf{\widehat{s}})}{\partial \mathbf{\widehat{s}}}\frac{d \mathbf{\widehat{s}}}{d \bm{\sigma}}  + \frac{\partial B(\mathbf{\widehat{s}})}{\partial \bm{\sigma}}\right)}{\left[\overline{\gamma_e}+\underline{\gamma_e}+\alpha_e \frac{\mu_z}{\mu}\right]^2}.
\end{align*}
Lemma~\ref{lem_limit_zero} gives us
\begin{align}
\underset{\bm{\sigma} \rightarrow 0^+}{\lim}\frac{d\left( \frac{A^{\mathbb{S}}}{\sigma_v} \right)}{d \bm{\sigma}} & =\underset{\bm{\sigma} \rightarrow 0^+}{\lim}\frac{2\phi(\mathbf{\widehat{s}})}{2-2 \Phi(\mathbf{\widehat{s}})+\frac{\mu_z}{\mu}}.
\label{slimit4}
\end{align}

On the other hand, from equation~(\ref{equation3_scaled}), we have
$$\frac{A^{\mathbb{S}}}{\sigma_v}=2B(\mathbf{\widehat{s}})-1.$$
Taking derivative with respect to $\bm{\sigma}$, we get
\begin{align*}
\frac{d\left( \frac{A^{\mathbb{S}}}{\sigma_v} \right)}{d \bm{\sigma}} & = 2B(\mathbf{\widehat{s}}) \left[1-B(\mathbf{\widehat{s}})\right]\left( 2\bm{\sigma} \frac{d \mathbf{\widehat{s}}}{d \bm{\sigma}} + 2\mathbf{\widehat{s}}  \right).
\end{align*}
Using Lemma~\ref{lem_limit_zero} and $\underset{\bm{\sigma} \rightarrow 0^+}{\lim} \mathbf{\widehat{s}}(\bm\sigma)\bm{\sigma} =0$, we obtain
\begin{align}
\underset{\bm{\sigma} \rightarrow 0^+}{\lim}\frac{d\left( \frac{A^{\mathbb{S}}}{\sigma_v} \right)}{d \bm{\sigma}} & = \underset{\bm{\sigma} \rightarrow 0^+}{\lim} \left( \bm{\sigma} \frac{d \mathbf{\widehat{s}}}{d \bm{\sigma}} + \mathbf{\widehat{s}}  \right).
\label{slimit2}
\end{align}

Combing equations (\ref{slimit2}) and (\ref{slimit4}), we have that
$$\underset{\bm{\sigma} \rightarrow 0^+}{\lim} \left( \bm{\sigma} \frac{d \mathbf{\widehat{s}}}{d \bm{\sigma}} + \mathbf{\widehat{s}}  \right)=\underset{\bm{\sigma} \rightarrow 0^+}{\lim}\frac{2\phi(\mathbf{\widehat{s}})}{2-2 \Phi(\mathbf{\widehat{s}})+\frac{\mu_z}{\mu}}.
$$

Suppose that $\underset{\bm{\sigma} \rightarrow 0^+}{\lim} \mathbf{\widehat{s}} = +\infty$, then we have, as we do in the proof of Lemma~\ref{lem_limit_zero}, $\underset{\bm{\sigma} \rightarrow 0^+}{\lim}\bm{\sigma}\frac{d\mathbf{\widehat{s}}}{d\bm{\sigma}} + \mathbf{\widehat{s}} >0,$ which contradicts with $\underset{\bm{\sigma} \rightarrow 0^+}{\lim}\frac{2\phi(\mathbf{\widehat{s}})}{2-2 \Phi(\mathbf{\widehat{s}})+\frac{\mu_z}{\mu}} =0$.

Then we have to show that the limit can not be zero. Because the limit can not be infinity, we have $\underset{\bm{\sigma} \rightarrow 0^+}{\lim}\bm\sigma s'(\bm\sigma)=0$ from Lemma~\ref{lem_limit_zero}. Let $f(s)=\frac{2\phi(s)}{2-2\Phi(s)+\frac{\mu_z}{\mu}}-s$. We can check that there is a unique $\mathbf{s^*}\in (0, +\infty)$ such that $f(\mathbf{s^*})=0$. Therefore,
$$\underset{\bm{\sigma} \rightarrow 0^+}{\lim}\mathbf{\widehat{s}}=\mathbf{s^*} \in (0,+\infty).$$
\end{proof}

We then proceed to prove the propositions.

\textbf{Case I: Without a dark pool}

By Lemma \ref{lem_limit_zero} and Lemma \ref{lem_limitzero_hats}, $ \frac{A^{\mathbb{S}}}{\sigma_v},~\widehat{\alpha_e},~\overline{\gamma_e}^{\mathbb{S}}, ~\underline{\gamma_e}^{\mathbb{S}}$ are differentiable functions of $\bm{\sigma}$, and
$$\underset{\bm{\sigma} \rightarrow 0^+}{\lim}\frac{d\left( \frac{A^{\mathbb{S}}}{\sigma_v} \right)}{d \bm{\sigma}}=\mathbf{s^*} \in (0,+\infty).$$
Also, taking derivative of $B(\mathbf{\widehat{s}})$ with respect to $\bm{\sigma}$, we get
\begin{align}\label{eqn:add_1}
\frac{d B(\mathbf{\widehat{s}})}{d \bm{\sigma}}& = \frac{\partial B(\mathbf{\widehat{s}})}{\partial \mathbf{\widehat{s}}} \frac{d \mathbf{\widehat{s}}}{d \bm{\sigma}} + \frac{\partial B(\mathbf{\widehat{s}})}{\partial \bm{\sigma}}= B(\mathbf{\widehat{s}}) \left(1- B(\mathbf{\widehat{s}}) \right)\left(2 \bm{\sigma}\frac{d \mathbf{\widehat{s}}}{d \bm{\sigma}} + 2 \mathbf{\widehat{s}}  \right).
\end{align}
and the derivative of $\widehat{\alpha_e}$ is
\begin{align*}
\frac{d \widehat{\alpha_e}}{d \bm{\sigma}} = -G'(2 B(\widehat{\mathbf{s}})-1)B(\mathbf{\widehat{s}}) \left(1- B(\mathbf{\widehat{s}}) \right)\left(2 \bm{\sigma}\frac{d \mathbf{\widehat{s}}}{d \bm{\sigma}} + 2 \mathbf{\widehat{s}}  \right).
\end{align*}
When $\bm{\sigma}$ is sufficiently small, we get
\begin{align*}
\underset{\bm{\sigma} \rightarrow 0^+}{\lim} \frac{d \widehat{\alpha_e}}{d\bm{\sigma}} = -\frac{G'(0)\mathbf{s^*}}{2} \in (-\infty, 0).
\end{align*}
Similarly, we take derivative of $\overline{\gamma_e}^{\mathbb{S}}-\underline{\gamma_e}^{\mathbb{S}}$ with respect to $\bm{\sigma}$ and get
\begin{align*}
\frac{d \left(\overline{\gamma_e}^{\mathbb{S}}-\underline{\gamma_e}^{\mathbb{S}} \right)}{d \bm{\sigma}} = \left[\phi(\mathbf{\widehat{s}}+\bm{\sigma}) - \phi(\mathbf{\widehat{s}}-\bm{\sigma}) \right]\frac{d \mathbf{\widehat{s}}}{d \bm{\sigma}} + \left[\phi(\mathbf{\widehat{s}}+\bm{\sigma}) + \phi(\mathbf{\widehat{s}}-\bm{\sigma}) \right],
\end{align*}
and leting $\bm{\sigma} \rightarrow 0^+$, we have
\begin{align*}
\underset{\bm{\sigma} \rightarrow 0^+}{\lim} \frac{d \left(\overline{\gamma_e}^{\mathbb{S}}-\underline{\gamma_e}^{\mathbb{S}} \right)}{d \bm{\sigma}} = 2\phi(\mathbf{\widehat{s}}) \in (0, +\infty).
\end{align*}
Note that $\bm{\sigma} = \frac{\sigma_v}{\sigma_e}$, we conclude the following:

Given $\bm{\sigma}$ sufficiently small, as $\sigma_v$ increases (or $\sigma_e$ decreases),
\begin{itemize}
\item (i) $\frac{A^{\mathbb{S}}}{\sigma_v}$ strictly increases.
\item (ii) $\overline{\gamma_e}^{\mathbb{S}} - \underline{\gamma_e}^{\mathbb{S}}$ strictly increases, and $\alpha_e^{\mathbb{S}}$ strictly decreases.
\end{itemize}

\textbf{Case II, With a dark pool}

Note that when $\bm{\sigma}=0$, we have $\overline{\gamma_e}=\underline{\gamma_e}$ and $\overline{\gamma_d}=\underline{\gamma_d}$. Therefore $\frac{A}{\sigma_v}=0$ and $\bar{R} = \underline{R}$. Equations (\ref{equation0_scaled}) and (\ref{equation1_scaled}) show that $B(\mathbf{s_0})=\frac{1}{2}$ and $B(\mathbf{s_1})=\frac{1}{2}$. If $0<\bm{\sigma}<+\infty$, we have, by Theorem \ref{thm existence}, that $0<\mathbf{s_0}<\mathbf{s_1}<\infty$. Therefore, we have
$\overline{\gamma_e}>\underline{\gamma_e}$, $\overline{\gamma_d}>\underline{\gamma_d}$, $\frac{A}{\sigma_v}>0$, $\bar{R}> \underline{R}$, and $\frac{1}{2}<B(\mathbf{s_0})<B(\mathbf{s_1})<1$. Then we are ready to conclude the following:

Given $\bm{\sigma}$ sufficiently small, as $\sigma_v$ increases (or as $\sigma_e$ decreases),
\begin{itemize}
\item (i) $\frac{A}{\sigma_v}$ increases, and ${\bar{R}}-{\underline{R}}$ increases.
\item (ii) $\overline{\gamma_e} - \underline{\gamma_e}$, $\overline{\gamma_d} - \underline{\gamma_d}$ increases, $\alpha_e$ decreases, and $\alpha_d$ increases.
\end{itemize}

Let $(\mathbf{s_0},~\mathbf{s_1})$ be any equilibrium. Since $G(\cdot)$, and $\Phi(\cdot)$ are twice differentiable, by the implicit function theorem, there  exist continuously differentiable functions $\mathbf{s_0}(\bm{\sigma}),~\mathbf{s_1}(\bm{\sigma})$ defined on $(0,+\infty)$. 

When $\bm{\sigma}\in (0, +\infty)$. By equation (\ref{equation0_scaled}), we have $B(\mathbf{s_0}) = \frac{\bar{R}}{\underline{R}+\bar{R}}\in (0,1)$. Thus rewrite it as
$$\frac{\bar{R}}{\underline{R}} = \frac{1}{\frac{1}{B(\mathbf{s_0}})-1},$$
and the derivative can be found as following:
\begin{align*}
\frac{d\left( \frac{\bar{R}}{\underline{R}} \right)}{d \bm{\sigma}} & = \frac{1}{\underline{R}^2}\left[\frac{d \bar{R}}{d \bm{\sigma}} \underline{R} - \frac{d \underline{R}}{d \bm{\sigma}} \bar{R} \right]\\
& = \frac{1}{\left[ 1- B(\mathbf{s_0})\right]^2}\left( \frac{\partial B(\mathbf{s_0})}{\partial \mathbf{s_0}} \frac{d \mathbf{s_0}}{d \bm{\sigma}} + \frac{\partial B(\mathbf{s_0})}{\partial \bm{\sigma}} \right)\\
& = \frac{B(\mathbf{s_0})}{ 1- B(\mathbf{s_0})}
\left( 2 \bm{\sigma}  \frac{d \mathbf{s_0}}{d \bm{\sigma}} + 2 \mathbf{s_0} \right).
\end{align*}
Also, we know $\underset{\bm{\sigma} \rightarrow 0^+}{\lim} B(\mathbf{s_0}) = \frac{1}{2}$ and $\underset{\bm{\sigma} \rightarrow 0^+}{\lim} \bar{R}= \underset{\bm{\sigma} \rightarrow 0^+}{\lim} \underline{R}=1$, thus
\begin{align*}
\underset{\bm{\sigma}_n \rightarrow 0^+}{\lim}\frac{d \bar{R}}{d \bm{\sigma}} - \underset{\bm{\sigma}_n \rightarrow 0^+}{\lim} \frac{d \underline{R}}{d \bm{\sigma}} & = \underset{\bm{\sigma}_n \rightarrow 0^+}{\lim}\left( 2 \bm{\sigma}  \frac{d \mathbf{s_0}}{d \bm{\sigma}} + 2 \mathbf{s_0} \right).
\end{align*}

Equation (\ref{equation1_scaled}) shows that $$2B(\mathbf{s_1})-1-\left[B(\mathbf{s_1})\underline{R}-(1-B(\mathbf{s_1}))\bar{R}\right]=\frac{A}{\sigma_v}.$$
Taking derivative on both sides, we get
\begin{align*}
\frac{d\left( \frac{A}{\sigma_v} \right)}{d \bm{\sigma}} & = (2-\underline{R}-\bar{R})B(\mathbf{s_0})\left[1-B(\mathbf{s_0})\right]\left( \bm{\sigma} \frac{d \mathbf{s_1}}{d \bm{\sigma}} + \mathbf{s_1}  \right)+\left[1-B(\mathbf{s_0})\right] \frac{d \bar{R}}{d \bm{\sigma}} - B(\mathbf{s_0})  \frac{d \underline{R}}{d \bm{\sigma}},
\end{align*}
and because $\frac{A}{\sigma_v}= \frac{\Phi(\mathbf{s_1}+\bm{\sigma})-\Phi(\mathbf{s_1}-\bm{\sigma})}{2-\Phi(\mathbf{s_1}+\bm{\sigma})-\Phi(\mathbf{s_1}
-\bm{\sigma})+(1-G(2B(\mathbf{s_1})-1))\frac{\mu_z}{\mu}}$,
we have
\begin{align*}
&\frac{d\left( \frac{A}{\sigma_v} \right)}{d \bm{\sigma}} \\
 = & \frac{ \left(\phi(\mathbf{s_1}+\bm{\sigma}) - \phi(\mathbf{s_1}-\bm{\sigma})\right) \frac{d \mathbf{s_1}}{d \bm{\sigma}} + \left(\phi(\mathbf{s_1}+\bm{\sigma}) + \phi(\mathbf{s_1}-\bm{\sigma})\right) } {\overline{\gamma_e}+\underline{\gamma_e}+\alpha_e \frac{\mu_z}{\mu}}     \\
& + \frac{\left[\Phi(\mathbf{s_1}+\bm{\sigma})-\Phi(\mathbf{s_1}-\bm{\sigma}) \right] \left[ \left(\phi(\mathbf{s_1}+\bm{\sigma}) + \phi(\mathbf{s_1}-\bm{\sigma})\right) \frac{d \mathbf{s_1}}{d \bm{\sigma}} + \left(\phi(\mathbf{s_1}+\bm{\sigma}) - \phi(\mathbf{s_1}-\bm{\sigma})\right)\right]}{\left[\overline{\gamma_e}+\underline{\gamma_e}+\alpha_e \frac{\mu_z}{\mu}\right]^2} \\
& + \frac{2 G'(2B(\mathbf{s_1})-1)\frac{\mu_z}{\mu}\left[\Phi(\mathbf{s_1}+\bm{\sigma})-\Phi(\mathbf{s_1}-\bm{\sigma}) \right]\left( \frac{\partial B(\mathbf{s_1})}{\partial \mathbf{s_1}}\frac{d \mathbf{s_1}}{d \bm{\sigma}}  + \frac{\partial B(\mathbf{s_1})}{\partial \bm{\sigma}}\right)}{\left[\overline{\gamma_e}+\underline{\gamma_e}+\alpha_e \frac{\mu_z}{\mu}\right]^2}.
\end{align*}

Similarly to what we shown in the proof of Lemma \ref{lem_limitzero_hats}, we obtain
\begin{align}
\underset{\bm{\sigma} \rightarrow 0^+}{\lim}\frac{d\left( \frac{A}{\sigma_v} \right)}{d \bm{\sigma}} & = \frac{1}{2}\left(\underset{\bm{\sigma} \rightarrow 0^+}{\lim}\frac{d \bar{R}}{d \bm{\sigma}} - \underset{\bm{\sigma} \rightarrow 0^+}{\lim} \frac{d \underline{R}}{d \bm{\sigma}}\right) = \underset{\bm{\sigma}_n \rightarrow 0^+}{\lim}\left( \bm{\sigma}  \frac{d \mathbf{s_0}}{d \bm{\sigma}} +  \mathbf{s_0} \right),
\label{slimit1}
\end{align}
and
\begin{align}
\underset{\bm{\sigma} \rightarrow 0^+}{\lim}\frac{d\left( \frac{A}{\sigma_v} \right)}{d \bm{\sigma}} & = \underset{\bm{\sigma} \rightarrow 0^+}{\lim}\frac{2\phi(\mathbf{s_1})}{2-2 \Phi(\mathbf{s_1})+\frac{\mu_z}{\mu}}
\label{slimit3}.
\end{align}
Combing equations (\ref{slimit1}) and (\ref{slimit3}) gives us
$$\underset{\bm{\sigma} \rightarrow 0^+}{\lim}\left( \bm{\sigma}  \frac{d \mathbf{s_0}}{d \bm{\sigma}} + \mathbf{s_0} \right)=\underset{\bm{\sigma} \rightarrow 0^+}{\lim}\frac{2\phi(\mathbf{s_1})}{2-2 \Phi(\mathbf{s_1})+\frac{\mu_z}{\mu}}$$

Suppose $\underset{\bm{\sigma} \rightarrow 0^+}{\lim} \mathbf{s_0} =+\infty$. Using the similar argument as in the proof of Lemma \ref{lem_limitzero_hats}, we obtain $\underset{\bm{\sigma} \rightarrow 0^+}{\lim}\left( \bm{\sigma}  \frac{d \mathbf{s_0}}{d \bm{\sigma}} + \mathbf{s_0} \right)>0$. However, as $\mathbf{s_0} \rightarrow +\infty$, we have $\mathbf{s_1} \rightarrow +\infty$ and $\frac{2\phi(\mathbf{s_1})}{2-2 \Phi(\mathbf{s_1})+\frac{\mu_z}{\mu}} \rightarrow 0$. This is a contradiction. Therefore, it must be that $\underset{\bm{\sigma} \rightarrow 0^+}{\lim} \mathbf{s_0} <+\infty$.

By Lemma \ref{lem_limit_zero}, $\underset{\bm{\sigma} \rightarrow 0^+}{\lim} \bm{\sigma}  \frac{d \mathbf{s_0}}{d \bm{\sigma}}=0$. So we have
$$\underset{\bm{\sigma} \rightarrow 0^+}{\lim} \mathbf{s_0}=\underset{\bm{\sigma} \rightarrow 0^+}{\lim}\frac{2\phi(\mathbf{s_1})}{2-2 \Phi(\mathbf{s_1})+\frac{\mu_z}{\mu}}.$$
Define $\underset{\bm{\sigma} \rightarrow 0^+}{\lim} \mathbf{s_0} \overset{\triangle}{=}\mathbf{s_0}(0^+)$, $\underset{\bm{\sigma} \rightarrow 0^+}{\lim} \mathbf{s_1} \overset{\triangle}{=}\mathbf{s_1}(0^+)$, and we have
\begin{align*}
&\underset{\bm{\sigma} \rightarrow 0^+}{\lim}\frac{d\left( \frac{A}{\sigma_v} \right)}{d \bm{\sigma}}=\underset{\bm{\sigma} \rightarrow 0^+}{\lim} \mathbf{s_0}=\mathbf{s_0}(0^+)\geq 0,\\
&\underset{\bm{\sigma} \rightarrow 0^+}{\lim} \frac{d \left(\overline{\gamma_e}-\underline{\gamma_e} \right)}{d \bm{\sigma}} = 2\phi(\mathbf{s_1(0^+)})\geq 0,\\
&\underset{\bm{\sigma} \rightarrow 0^+}{\lim}\frac{d \alpha_e}{d \bm{\sigma}} = -\frac{G'(0) \mathbf{s_1}(0^+)}{2}\leq 0, \\
&\underset{\bm{\sigma} \rightarrow 0^+}{\lim}\frac{d \alpha_d}{d \bm{\sigma}} = \frac{G'(0) (\mathbf{s_1(0^+)}-\mathbf{s_0}(0^+))}{2}\geq 0,\\
&\underset{\bm{\sigma} \rightarrow 0^+}{\lim}\frac{d \left(\alpha_e+\alpha_d\right)}{d \bm{\sigma}} = -\frac{G'(0) \mathbf{s_0}(0^+)}{2} \leq 0,
\end{align*}
which conclude the proof.
\qed
\subsection{Proof of Proposition \ref{prop_compare}}
\label{proof_prop_compare}
To prove Proposition \ref{prop_compare}, we need the following lemmas.
\begin{lem}
For any given $\bm{\sigma} \in (0, +\infty)$, $\widehat{\mathbf{s}}(\bm\sigma)<\mathbf{s}_1(\bm\sigma)$.
\label{lem_HATSlessthanS}
\end{lem}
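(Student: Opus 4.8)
The plan is to reduce the inequality to the monotone auxiliary function $f$ already built in the proof of Theorem~\ref{thm existence0}. Recall that there
$$f(s) = (2B(s)-1)\Big[2-\Phi(s+\bm{\sigma})-\Phi(s-\bm{\sigma})+(1-G(2B(s)-1))\tfrac{\mu_z}{\mu}\Big] - \big[\Phi(s+\bm{\sigma})-\Phi(s-\bm{\sigma})\big],$$
its unique zero is $\widehat{\mathbf{s}}$, and the listed facts ($f''<0$, $f'(+\infty)=0$) force $f$ to be \emph{strictly increasing}, so $f(s)>0$ precisely for $s>\widehat{\mathbf{s}}$. Writing the single-venue spread as $\Psi(s):=[\Phi(s+\bm{\sigma})-\Phi(s-\bm{\sigma})]/D(s)$, where $D(s)>0$ is the bracketed denominator above, one has the factorization $f(s)=D(s)\,[(2B(s)-1)-\Psi(s)]$, hence $f(s)>0 \iff 2B(s)-1>\Psi(s)$. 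Therefore it suffices to prove $2B(\mathbf{s}_1)-1>\Psi(\mathbf{s}_1)$: this gives $f(\mathbf{s}_1)>0=f(\widehat{\mathbf{s}})$, and monotonicity of $f$ then yields $\mathbf{s}_1>\widehat{\mathbf{s}}$.

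The first ingredient is that the \emph{multi-venue} exchange spread is exactly $\Psi$ evaluated at $\mathbf{s}_1$. Substituting (\ref{gamma_upper_scaled})--(\ref{gamma_lower_scaled}) and (\ref{alpha_e_scaled}) into (\ref{spread_scaled}) shows that $\overline{\gamma_e},\underline{\gamma_e},\alpha_e$ depend on $\mathbf{s}_1$ in precisely the same way that the single-venue $\overline{\gamma_e}^{\mathbb{S}},\underline{\gamma_e}^{\mathbb{S}},\alpha_e^{\mathbb{S}}$ depend on $\widehat{\mathbf{s}}$, so $A/\sigma_v=\Psi(\mathbf{s}_1)$. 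The second ingredient is a short manipulation of the informed cut-off conditions. Expanding (\ref{equation1_scaled}) gives
$$2B(\mathbf{s}_1)-1-\tfrac{A}{\sigma_v}=B(\mathbf{s}_1)(\bar{R}+\underline{R})-\bar{R},$$
and substituting $\bar{R}=B(\mathbf{s}_0)(\bar{R}+\underline{R})$ from (\ref{equation0_scaled}) turns this into
$$2B(\mathbf{s}_1)-1=\tfrac{A}{\sigma_v}+\big(B(\mathbf{s}_1)-B(\mathbf{s}_0)\big)(\bar{R}+\underline{R}).$$
By Theorem~\ref{thm existence} we have $0<\mathbf{s}_0<\mathbf{s}_1$, and $B$ is strictly increasing with $\bar{R}+\underline{R}>0$, so the correction term is strictly positive. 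Combined with $A/\sigma_v=\Psi(\mathbf{s}_1)$ this gives exactly $2B(\mathbf{s}_1)-1>\Psi(\mathbf{s}_1)$, i.e.\ $f(\mathbf{s}_1)>0$, and the conclusion follows.

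The main obstacle is purely bookkeeping in the second ingredient: one must keep straight that the spread appearing in (\ref{equation1_scaled}) is the \emph{multi-venue} spread and that it equals $\Psi(\mathbf{s}_1)$ rather than the single-venue value $\Psi(\widehat{\mathbf{s}})$, since the two share the same functional form $\Psi$ but are evaluated at different cut-offs. No new analytic input is required beyond the strict monotonicity of $f$ (from Theorem~\ref{thm existence0}) and the ordering $\mathbf{s}_0<\mathbf{s}_1$ (from Theorem~\ref{thm existence}). Economically the positive correction term $(B(\mathbf{s}_1)-B(\mathbf{s}_0))(\bar{R}+\underline{R})$ is the marginal trader's strictly positive dark-pool continuation value, which is exactly what raises the signal threshold for preferring the exchange; this is the content of Corollary~\ref{cor_exchange_amplify}.
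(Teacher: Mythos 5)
Your proof is correct, and its substance coincides with the paper's: both arguments rest on the observation that the single-venue and multi-venue spreads share the same functional form $\Psi(\cdot)$ evaluated at their respective cut-offs, and that the multi-venue indifference condition at $\mathbf{s}_1$ sits strictly below the single-venue one by the amount $B(\mathbf{s}_1)\underline{R}-(1-B(\mathbf{s}_1))\bar{R}=(B(\mathbf{s}_1)-B(\mathbf{s}_0))(\bar{R}+\underline{R})>0$, which is positive precisely because $\mathbf{s}_0<\mathbf{s}_1$ and equation (\ref{equation0_scaled}) pins $B(\mathbf{s}_0)=\bar{R}/(\bar{R}+\underline{R})$. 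Where you diverge is in how the ordering of the cut-offs is extracted from this inequality. The paper compares the curve $\Psi(s)$ against the two lines $\widehat{h}(s)=2B(s)-1$ and $h(s)=2B(s)-1-[B(s)\underline{R}-(1-B(s))\bar{R}]$ and reads off from Figure \ref{fig_higher} that the intersection with the lower curve lies to the right; this implicitly requires a single-crossing property of $\Psi-\widehat{h}$ that the figure asserts but the text does not isolate. You instead evaluate the sign of the monotone auxiliary function $f(s)=D(s)[(2B(s)-1)-\Psi(s)]$ from the proof of Theorem \ref{thm existence0} at the single point $s=\mathbf{s}_1$, and invoke its strict monotonicity (which follows from $f''<0$ and $f'(+\infty)=0$ as established there) to conclude $\mathbf{s}_1>\widehat{\mathbf{s}}$. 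This buys a fully analytic argument that needs no figure and also sidesteps the slight awkwardness that $h(\cdot)$ treats the equilibrium constants $\bar{R},\underline{R}$ as if they defined a curve in $s$; you only ever use them at the equilibrium point itself. The one hypothesis you inherit without re-deriving is $f''<0$, which in Theorem \ref{thm existence0} relies on the assumption $G'(x)+xG''(x)\geq 0$ — worth stating explicitly, but it is the same assumption the paper's own single-crossing picture depends on.
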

\begin{proof}
Substitute the expressions of $\frac{A}{\sigma_v}$ into equation~\eqref{equation3_scaled} and~\eqref{equation1_scaled}, then $\mathbf{\widehat{s}},~\mathbf{s_1}$ are respectively determined by the following two equations
\begin{align*}
\frac{\Phi(\mathbf{\widehat{s}}+\bm{\sigma})-\Phi(\mathbf{\widehat{s}}-\bm{\sigma})}
{2-\Phi(\mathbf{\widehat{s}}+\bm{\sigma})-\Phi(\mathbf{\widehat{s}}-\bm{\sigma})+(1-G(2B(\mathbf{\widehat{s}})-1))\frac{\mu_z}{\mu}}&=
2B(\mathbf{\widehat{s}})-1,\\
\frac{\Phi(\mathbf{s_1}+\bm{\sigma})-\Phi(\mathbf{s_1}-\bm{\sigma})}{2-\Phi(\mathbf{s_1}+\bm{\sigma})
-\Phi(\mathbf{s_1}-\bm{\sigma})+(1-G(2B(\mathbf{s_1})-1))\frac{\mu_z}{\mu}}&	=
2B(\mathbf{\mathbf{s_1}})-1\\
& \quad-\left[B(\mathbf{\mathbf{s_1}})\underline{R}-(1-B(\mathbf{s_1}))\bar{R}\right].
\end{align*}
Let $f(s) = \frac{\Phi(s+\bm{\sigma})-\Phi(s-\bm{\sigma})}{2-\Phi(s+\bm{\sigma})
-\Phi(s-\bm{\sigma})+(1-G(2B(s)-1))\frac{\mu_z}{\mu}}$, and its derivative is
\begin{align*}
f'(s)&=\frac{D_1(s)+D_2(s)}{\left[{2-\Phi(s+\bm{\sigma})
-\Phi(s-\bm{\sigma})+(1-G(2B(s)-1))\frac{\mu_z}{\mu}}\right]^2},
\end{align*}
where
\begin{align*}
D_1(s)&=\left(\phi(s+\bm{\sigma})-\phi(s-\bm{\sigma})\right)\left( 2-\Phi(s+\bm{\sigma})
-\Phi(s-\bm{\sigma})+(1-G(2B(s)-1))\frac{\mu_z}{\mu}\right)<0,\\
D_2(s)& = -\left(\Phi(s+\bm{\sigma})-\Phi(s-\bm{\sigma})\right)\left( -\phi(s+\bm{\sigma})
-\phi(s-\bm{\sigma})-2G'(2B(s)-1)B'(s)\frac{\mu_z}{\mu}\right)>0.
\end{align*}
Since $G'(s)+sG''(s)\geq 0$, one can represent $f(s)$ as the blue curve in Figure~\ref{fig_higher}.

Let $\widehat{h}(s) =2B(s)-1 $ and $h(s) = 2B(s)-1-\left[B(s)\underline{R}-(1-B(s))\bar{R}\right]$. By equation~(\ref{equation0_scaled}), for any $s>\mathbf{s_0} $, we have $B(s)>B(\mathbf{s_0}) = \frac{\bar{R}}{\underline{R}+\bar{R}}$. That is, $\left[B(s)\underline{R}-(1-B(s))\bar{R}\right]>0$. Therefore $\widehat{h}(s)>h(s)$. In Figure \ref{fig_higher}, $\widehat{h}(s)$ is represented by the red curve, while $h(s)$ is represented by the green curve which is below $\widehat{h}(s)$. Obviously, the intersection point $\mathbf{s_1}$ is larger than $\mathbf{\widehat{s}}$. The Lemma is proved.
\begin{figure}[h]
\centering
\includegraphics[width=0.7\textwidth]{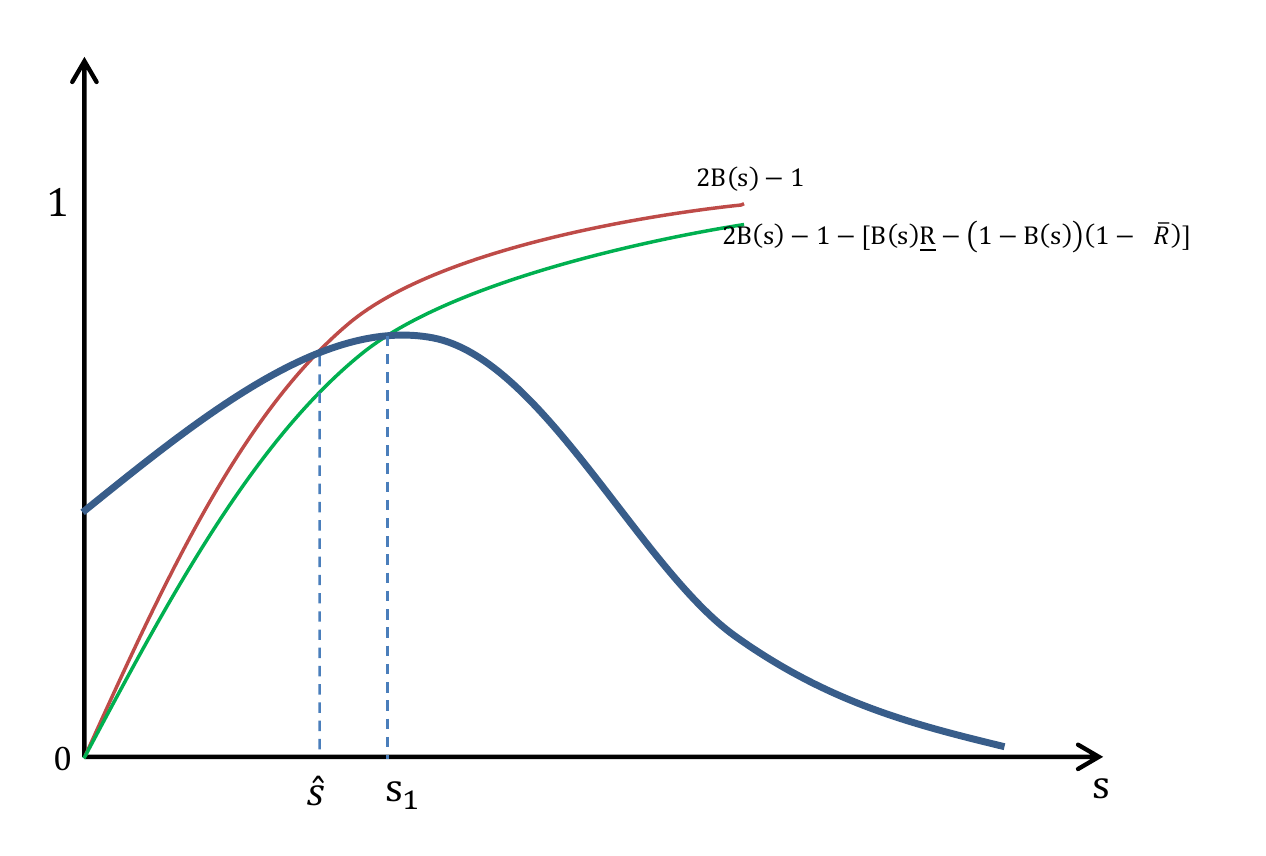}
\caption{$h(s)$ and $\widehat{h}(s)$}
\label{fig_higher}
\end{figure}
\end{proof}

\begin{lem}
If $\bm{\sigma} \rightarrow +\infty$, there exists a unique $\widehat{k}\in(\frac{1}{2},1)$ such that
$\underset{\bm{\sigma} \rightarrow +\infty}{\lim}\overline{\gamma_e}^{\mathbb{S}}=1$,
$\underset{\bm{\sigma} \rightarrow +\infty}{\lim}\underline{\gamma_e}^{\mathbb{S}} =0 $,
$\underset{\bm{\sigma} \rightarrow +\infty}{\lim} \alpha_e^{\mathbb{S}} = 1-G(\widehat{k})$,
and $\underset{\bm{\sigma} \rightarrow +\infty}{\lim} \frac{A^{\mathbb{S}}}{\sigma_v}=\hat k$,
where $\widehat{k}$ is determined by
\begin{align}
\widehat{k}=\frac{1}{1+\left[1-G(\widehat{k})\right]\frac{\mu_z}{\mu}}.
\label{equation_limit3}
\end{align}
In addition, such $\widehat{k}$ is smaller if $\frac{\mu_z}{\mu}$ is larger.
\label{lem_limitinfty_hats}
\end{lem}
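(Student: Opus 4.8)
The plan is to take the $\bm{\sigma}\to+\infty$ limit directly in the single-venue break-even/indifference system, after rescaling the informed cutoff by $\bm{\sigma}$. First I would record the algebraic identity that follows from the (normalized) belief~\eqref{Belief}: since $B(\mathbf{s})=\phi(\mathbf{s}-\bm{\sigma})/[\phi(\mathbf{s}-\bm{\sigma})+\phi(\mathbf{s}+\bm{\sigma})]$, one has $2B(\mathbf{s})-1=\tanh(\mathbf{s}\bm{\sigma})$. Hence the cutoff condition~\eqref{equation3_scaled} reads $A^{\mathbb{S}}/\sigma_v=\tanh(\widehat{\mathbf{s}}\bm{\sigma})$ and $\alpha_e^{\mathbb{S}}=1-G(\tanh(\widehat{\mathbf{s}}\bm{\sigma}))$, so every equilibrium object depends on the cutoff only through the product $\widehat{\mathbf{s}}\bm{\sigma}$ together with the tail terms $\Phi(\widehat{\mathbf{s}}\pm\bm{\sigma})$. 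This identifies $t:=\mathbf{s}\bm{\sigma}$ as the correct coordinate in which to pass to the limit.

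Next I would analyze the cleared form $f(s)=0$ of the cutoff equation~\eqref{equation_hats} (whose unique positive root is $\widehat{\mathbf{s}}$ by Theorem~\ref{thm existence0}) through $G_{\bm{\sigma}}(t):=f(t/\bm{\sigma})$, whose unique root is $\widehat{t}:=\widehat{\mathbf{s}}\bm{\sigma}$. On any compact $t$-interval, $t/\bm{\sigma}+\bm{\sigma}\to+\infty$ and $t/\bm{\sigma}-\bm{\sigma}\to-\infty$ uniformly, so $\Phi(t/\bm{\sigma}+\bm{\sigma})\to1$ and $\Phi(t/\bm{\sigma}-\bm{\sigma})\to0$ uniformly, while $2B(t/\bm{\sigma})-1=\tanh t$ is exact. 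Thus $G_{\bm{\sigma}}\to G_{\infty}$ uniformly on compacts, where $G_{\infty}(t)=\tanh t\,[1+(1-G(\tanh t))\tfrac{\mu_z}{\mu}]-1$; with $k=\tanh t$ its root solves $\psi(k):=k[1+(1-G(k))\tfrac{\mu_z}{\mu}]-1=0$, which is precisely~\eqref{equation_limit3}. To promote this to convergence of roots I would note $G_{\bm{\sigma}}(0)=-(2\Phi(\bm{\sigma})-1)<0$, while for any fixed $T>t^{\ast}:=\operatorname{arctanh}\widehat{k}$ we have $G_{\bm{\sigma}}(T)\to G_{\infty}(T)>0$; hence $\widehat{t}\in(0,T)$ is bounded, every subsequential limit is a root of $G_{\infty}$, and uniqueness of that root (established in the next step) forces $\widehat{t}\to t^{\ast}$, so in particular $\widehat{\mathbf{s}}=O(1/\bm{\sigma})\to0$. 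Reading off limits then gives $\overline{\gamma_e}^{\mathbb{S}}=1-\Phi(\widehat{\mathbf{s}}-\bm{\sigma})\to1$, $\underline{\gamma_e}^{\mathbb{S}}=1-\Phi(\widehat{\mathbf{s}}+\bm{\sigma})\to0$, $A^{\mathbb{S}}/\sigma_v=\tanh\widehat{t}\to\widehat{k}$, and $\alpha_e^{\mathbb{S}}\to1-G(\widehat{k})$.

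It remains to show $\psi$ has a unique, transversal root in the unit interval and that it is monotone in $\tfrac{\mu_z}{\mu}$. Since $\psi(0)=-1<0$ and $\psi(1)=(1-G(1))\tfrac{\mu_z}{\mu}\ge0$, a root $\widehat{k}\in(0,1)$ exists. For uniqueness I would verify $\psi'>0$ at every root: at a root one has $\tfrac{\mu_z}{\mu}=\tfrac{1-k}{k(1-G(k))}$, so $\psi'(k)>0$ is equivalent to $k(1-k)G'(k)<1-G(k)$. The hypothesis $G'(x)+xG''(x)\ge0$ says $xG'(x)$ is nondecreasing, giving $G'(x)\ge kG'(k)/x$ for $x\ge k$, whence $1-G(k)=\int_k^{\bar{d}}G'(x)\,dx\ge\int_k^{1}G'(x)\,dx\ge kG'(k)\ln(1/k)$; together with the elementary bound $\ln(1/k)>1-k$ on $(0,1)$ this yields $1-G(k)>k(1-k)G'(k)$. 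So every root is an upward crossing, and there can be only one. For the comparative static, the fixed-point map $\phi_m(k)=[1+(1-G(k))m]^{-1}$, $m=\tfrac{\mu_z}{\mu}$, is strictly decreasing in $m$ on $(0,1)$; hence $\chi_m(k)=k-\phi_m(k)$ increases in $m$, and since its root is an upward crossing, increasing $m$ moves the root to the left, i.e. $\widehat{k}$ decreases in $\tfrac{\mu_z}{\mu}$.

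The main obstacle is the limit interchange in the second paragraph: one must rule out the cutoff decaying more slowly than $1/\bm{\sigma}$ (so that $\widehat{\mathbf{s}}\bm{\sigma}$ stays bounded and actually converges), rather than merely computing formal limits of each factor. The rescaling $t=\mathbf{s}\bm{\sigma}$, the uniform-on-compacts convergence $G_{\bm{\sigma}}\to G_{\infty}$, and the transversality of the limit root are exactly what make this rigorous, and the transversality inequality $k(1-k)G'(k)<1-G(k)$ --- derived from $G'(x)+xG''(x)\ge0$ via $\ln(1/k)>1-k$ --- is the one genuinely new estimate, simultaneously yielding uniqueness and the monotonicity in $\tfrac{\mu_z}{\mu}$. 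I would finally flag that the sharper localization $\widehat{k}\in(\tfrac12,1)$ asserted in the statement is equivalent to $(1-G(\widehat{k}))\tfrac{\mu_z}{\mu}<1$, and so genuinely requires a restriction on $\tfrac{\mu_z}{\mu}$; what the argument establishes in general is $\widehat{k}\in(0,1)$.
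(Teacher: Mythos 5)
Your proposal is correct, and it follows the same basic strategy as the paper --- identify $\widehat{\mathbf{s}}\bm{\sigma}$ as the right scaled variable and pass to the limit in the cleared cutoff equation --- but it executes that strategy more completely, and in doing so it repairs several gaps in the paper's own argument. The paper rules out $\widehat{\mathbf{s}}\bm{\sigma}\rightarrow+\infty$ by the clean observation that this would force $\frac{A^{\mathbb{S}}}{\sigma_v}\rightarrow 1$ while $\alpha_e^{\mathbb{S}}=1-G(1)>0$ caps the normalized spread strictly below $1$; it then simply \emph{writes} $\lim\widehat{\mathbf{s}}\bm{\sigma}=\widehat{C}$, i.e., it establishes boundedness but assumes convergence. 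Your uniform-on-compacts convergence of the rescaled equation, combined with transversality of the limiting root, is exactly what upgrades boundedness to convergence, and it is the honest way to justify the limit interchange. More importantly, the lemma asserts uniqueness of $\widehat{k}$ and monotonicity in $\frac{\mu_z}{\mu}$, neither of which the paper's proof addresses: it only checks $f(0)<0$, $f(1)>0$ and concludes existence of some root in $(0,1)$. Your inequality $k(1-k)G'(k)<1-G(k)$, derived from the standing hypothesis $G'(x)+xG''(x)\geq 0$ via $xG'(x)$ nondecreasing and $\ln(1/k)>1-k$, supplies both missing pieces at once (every root is an upward crossing, hence unique, hence shifts left as $\frac{\mu_z}{\mu}$ grows). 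Finally, your closing caveat is accurate and worth retaining: the fixed-point equation only localizes $\widehat{k}$ in $(0,1)$ in general, and $\widehat{k}>\frac{1}{2}$ is equivalent to $\left[1-G(\widehat{k})\right]\frac{\mu_z}{\mu}<1$; since $\widehat{k}\rightarrow 0$ as $\frac{\mu_z}{\mu}\rightarrow+\infty$, the interval $(\frac{1}{2},1)$ claimed in the statement genuinely requires a restriction on $\frac{\mu_z}{\mu}$ that the paper's proof does not impose or verify.
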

\begin{proof}
Suppose $\underset{\bm{\sigma} \rightarrow +\infty}{\lim} \mathbf{\widehat{s}}\bm{\sigma} = +\infty$. Then, when $\bm\sigma\rightarrow +\infty$, we have $ 2B(\mathbf{\widehat{s}})-1 \rightarrow 1$. Thus equation~(\ref{equation3_scaled}) gives us that $ \frac{A^{\mathbb{S}}}{\sigma_v}=1$. However $\widehat{\alpha_e} = 1-G(1)>0$, which implies $\frac{A^{\mathbb{S}}}{\sigma_v}<1$. Therefore, we have
$$\underset{\bm{\sigma} \rightarrow +\infty}{\lim} \mathbf{\widehat{s}}\bm{\sigma} < +\infty.$$

Let $\underset{\bm{\sigma} \rightarrow +\infty}{\lim} \mathbf{\widehat{s}}\bm{\sigma} = \widehat{C}\in [0,+\infty)$, where $\widehat C$ will be determined later. Then we have $\underset{\bm{\sigma} \rightarrow +\infty}{\lim} \mathbf{\widehat{s}}=0$, thus $\underset{\bm{\sigma} \rightarrow +\infty}{\lim} (\mathbf{\widehat{s}}-\bm{\sigma})=-\infty$. Therefore, $\underset{\bm{\sigma} \rightarrow +\infty}{\lim}\overline{\gamma_e}^{\mathbb{S}}=1$, $\underset{\bm{\sigma} \rightarrow +\infty}{\lim}\underline{\gamma_e}^{\mathbb{S}} =0$. Let $\hat k= \lim\limits_{\bm\sigma\rightarrow+\infty}2B({\mathbf{\widehat s}})-1=\frac{1-e^{-2\widehat{C}}}{1+e^{-2\widehat{C}}}$, and we have $\underset{\bm{\sigma} \rightarrow +\infty}{\lim} \alpha_e^{\mathbb{S}} = 1-G(\widehat{k})$ and
$\underset{\bm{\sigma} \rightarrow +\infty}{\lim} \frac{A^{\mathbb{S}}}{\sigma_v}=\frac{1}{1+\left[1-G(\widehat{k})\right]\frac{\mu_z}{\mu}}$. However, $\hat k$ has to satisfy equation~\eqref{equation_limit3} such that equation~\eqref{equation3_scaled} is satisfied.

Let $f(k) = k-\frac{1}{1+\left[1-G(k)\right]\frac{\mu_z}{\mu}}$, and we can easily verify that $f(0)<0$, and $f(1)>0 $. Therefore, there exists a $\widehat{k}\in (0,1)$ such that $f(\widehat{k})=0$, and $\widehat{C}=\frac{1}{2}\ln\frac{1+\widehat{k}}{1-\widehat{k}}$.
\end{proof}

\begin{lem} Let $R = \mathbb{E} \left[ \min \left\{1, \frac{Z^+}{Z^-} \right\} \right]$. Consider any equilibrium $\mathbf{s_0}(\bm{\sigma}), \mathbf{s_1}(\bm{\sigma})$ for $\bm{\sigma} \rightarrow +\infty$. We have $\underset{\bm{\sigma} \rightarrow +\infty}{\lim} \mathbf{s_0}\bm{\sigma} < +\infty$. In addition, the limits of variables can be determined in the following two statements.
\begin{itemize}
\item (i) If $\underset{\bm{\sigma} \rightarrow +\infty}{\lim} \mathbf{s_1}\bm{\sigma} < +\infty$, we have $\underset{\bm{\sigma} \rightarrow +\infty}{\lim}\overline{\gamma_e}=1$, $\underset{\bm{\sigma} \rightarrow +\infty}{\lim}\underline{\gamma_e} =0 $,
    $\underset{\bm{\sigma} \rightarrow +\infty}{\lim} \overline{\gamma_d} =0$,
    $\underset{\bm{\sigma} \rightarrow +\infty}{\lim}\underline{\gamma_d} =0$,
    $\underset{\bm{\sigma} \rightarrow +\infty}{\lim} \alpha_e = 1-G(k_1)$,
    $\underset{\bm{\sigma} \rightarrow +\infty}{\lim} \alpha_d=G(k_1)$,
    $\underset{\bm{\sigma} \rightarrow +\infty}{\lim} \frac{A}{\sigma_v}=\frac{1}{1+\left[1-G(k_1)\right]\frac{\mu_z}{\mu}}$,
    $\underset{\bm{\sigma} \rightarrow +\infty}{\lim}\bar{R} =R$, and
    $ \underset{\bm{\sigma} \rightarrow +\infty}{\lim}\underline{R} =R$, where $k_1\in(\frac{1}{2},1)$ is determined by \begin{align}
    (1-R)k_1=\frac{1}{1+\left[1-G(k_1)\right]\frac{\mu_z}{\mu}}.
    \label{equation_limit00}
    \end{align}
\item (ii) If $\underset{\bm{\sigma} \rightarrow +\infty}{\lim} \mathbf{s_0}\bm{\sigma} = +\infty$, we have $\underset{\bm{\sigma} \rightarrow +\infty}{\lim}\overline{\gamma_e}=1-k_3$,
$\underset{\bm{\sigma} \rightarrow +\infty}{\lim}\underline{\gamma_e} =0$,
$\underset{\bm{\sigma} \rightarrow +\infty}{\lim} \overline{\gamma_d} = k_3$,
$\underset{\bm{\sigma} \rightarrow +\infty}{\lim}\underline{\gamma_d} =0$,
$\underset{\bm{\sigma} \rightarrow +\infty}{\lim} \alpha_e = 1-G(1)$,
$\underset{\bm{\sigma} \rightarrow +\infty}{\lim} \alpha_d=G(1)-G(2k_2-1)$,
$\underset{\bm{\sigma} \rightarrow +\infty}{\lim} \frac{A}{\sigma_v}=\frac{1-k_3}{1-k_3+\left[1-G(1) \right]\frac{\mu_z}{\mu}}$,
$\underset{\bm{\sigma} \rightarrow +\infty}{\lim}\bar{R} =\frac{k_2}{1-k_2}\frac{\left[1-G(1) \right]\frac{\mu_z}{\mu}}{1-k_3 + \left[1-G(1) \right]\frac{\mu_z}{\mu}}$, and
$\underset{\bm{\sigma} \rightarrow +\infty}{\lim}\underline{R}=\frac{\left[1-G(1) \right]\frac{\mu_z}{\mu}}{1-k_3 + \left[1-G(1) \right]\frac{\mu_z}{\mu}}$, where $k_2 \in [\frac{1}{2},1)$ and $k_3 \in [0,1)$ are determined by
\begin{align}
&\frac{\left[1-G(1) \right]\frac{\mu_z}{\mu}}{1-k_3 + \left[1-G(1) \right]\frac{\mu_z}{\mu}}   =  \mathbb{E}\left[\min\left\{1, \frac{Z^-}{\frac{k_3}{G(1)-G(2k_2-1)}+Z^+} \right\} \right],
\label{equation_limit0}\\
&k_2  = \frac{\mathbb{E}\left[\min\left\{1, \frac{\frac{k_3}{G(1)-G(2k_2-1)}+Z^+} {Z^-}\right\} \right]}{\mathbb{E}\left[\min\left\{1, \frac{\frac{k_3}{G(1)-G(2k_2-1)}+Z^+} {Z^-}\right\} \right]+ \mathbb{E}\left[\min\left\{1, \frac{Z^-}{\frac{k_3}{G(1)-G(2k_2-1)}+Z^+} \right\} \right]}.
\label{equation_limit1}
\end{align}
\end{itemize}
\label{lem_limitinfty_s}

\end{lem}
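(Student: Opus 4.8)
The plan is to exploit the identity $2B(\mathbf{s})-1=\tanh(\mathbf{s}\bm{\sigma})$, which follows from $B(\mathbf{s})=\frac{\phi(\mathbf{s}-\bm{\sigma})}{\phi(\mathbf{s}-\bm{\sigma})+\phi(\mathbf{s}+\bm{\sigma})}=\frac{1}{1+e^{-2\mathbf{s}\bm{\sigma}}}$. Consequently $B(\mathbf{s_0}),B(\mathbf{s_1}),\alpha_e,\alpha_d$ depend on the cutoffs only through the products $\mathbf{s_0}\bm{\sigma}$ and $\mathbf{s_1}\bm{\sigma}$, which is exactly why the statement is phrased in terms of these products. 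I would fix an arbitrary sequence $\bm{\sigma}\to+\infty$, pass to a subsequence along which $(\mathbf{s_0}\bm{\sigma},\mathbf{s_1}\bm{\sigma})$ converges in $[0,+\infty]^2$ (compactness), identify the limits of all equilibrium quantities through \eqref{spread_scaled}--\eqref{R_lower_scaled}, and show the limit point is pinned down by the stated fixed-point equations; since those have unique solutions the subsequential limits coincide and the full limits exist.

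First I would prove $\lim\mathbf{s_0}\bm{\sigma}<+\infty$ by contradiction. If $\mathbf{s_0}\bm{\sigma}\to+\infty$ then $\mathbf{s_1}\bm{\sigma}\geq\mathbf{s_0}\bm{\sigma}\to+\infty$ as well, so $\alpha_d=G(\tanh(\mathbf{s_1}\bm{\sigma}))-G(\tanh(\mathbf{s_0}\bm{\sigma}))\to0$ and $B(\mathbf{s_0})\to1$; by \eqref{equation0_scaled} in the form $B(\mathbf{s_0})=\bar{R}/(\bar{R}+\underline{R})$ this forces $\underline{R}\to0$. But dropping $\underline{\gamma_d}\mu\geq0$ from the numerator of \eqref{R_lower_scaled} gives $\underline{R}\geq\mathbb{E}[\min\{1,Z^-/(\overline{\gamma_d}\mu/\alpha_d+Z^+)\}]$, and since $\overline{\gamma_d}=\Phi(\mathbf{s_1}-\bm{\sigma})-\Phi(\mathbf{s_0}-\bm{\sigma})$ is governed by Gaussian tails that decay faster than the exponential rate controlling $\alpha_d$ (quantified via $1-\tanh(x)\sim2e^{-2x}$ and the Mills-ratio estimates already used in Lemma \ref{lem_Rlimit}), one obtains $\overline{\gamma_d}/\alpha_d\to0$, whence $\underline{R}\to\mathbb{E}[\min\{1,Z^-/Z^+\}]=R>0$, a contradiction.

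In case (i), $\lim\mathbf{s_1}\bm{\sigma}<+\infty$ forces $\mathbf{s_0},\mathbf{s_1}\to0$, so $\overline{\gamma_e}\to1$, $\underline{\gamma_e}\to0$, while $\overline{\gamma_d},\underline{\gamma_d}\to0$ (Gaussian tails) and $\alpha_d$ tends to a positive constant. The integrands in \eqref{R_upper_scaled}--\eqref{R_lower_scaled} are bounded by $1$ and converge pointwise to $\min\{1,Z^+/Z^-\}$ and $\min\{1,Z^-/Z^+\}$, so dominated convergence yields $\bar{R},\underline{R}\to R$. Then \eqref{equation0_scaled} gives $B(\mathbf{s_0})\to\frac{1}{2}$, i.e. $\lim\mathbf{s_0}\bm{\sigma}=0$; computing $\lim A/\sigma_v$ from \eqref{spread_scaled} and substituting into \eqref{equation1_scaled} with $k_1:=\tanh(\lim\mathbf{s_1}\bm{\sigma})$ produces exactly \eqref{equation_limit00}. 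I would finish this case by checking $k_1\in(\frac{1}{2},1)$ and uniqueness through the strict monotonicity of $k\mapsto(1-R)k-1/(1+[1-G(k)]\frac{\mu_z}{\mu})$.

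Case (ii) is the hard case, and here the stated hypothesis $\lim\mathbf{s_0}\bm{\sigma}=+\infty$ should read $\lim\mathbf{s_1}\bm{\sigma}=+\infty$ (forced by $\alpha_e\to1-G(1)$). Then $B(\mathbf{s_1})\to1$ and $B(\mathbf{s_0})\to k_2$, so $\alpha_e\to1-G(1)$ and $\alpha_d\to G(1)-G(2k_2-1)$. The crux is to show $\mathbf{s_1}-\bm{\sigma}$ converges to a finite limit, giving $\overline{\gamma_e}\to1-k_3$ and $\overline{\gamma_d}\to k_3$ for some $k_3\in[0,1)$; the bound $k_3<1$ is obtained by excluding $\overline{\gamma_e}\to0$, which would give $A/\sigma_v\to0$ and, via \eqref{equation1_scaled} with $B(\mathbf{s_1})\to1$, $\underline{R}\to1$, contradicting the strict rationing bound $\underline{R}<1$. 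With this established, $B(\mathbf{s_1})\to1$ collapses \eqref{equation1_scaled} to the closed form $\lim\underline{R}=1-\lim A/\sigma_v$, \eqref{equation0_scaled} gives $\lim\bar{R}/\underline{R}=k_2/(1-k_2)$, and equating these against the dominated-convergence limits of \eqref{R_upper_scaled}--\eqref{R_lower_scaled} yields the simultaneous system \eqref{equation_limit0}--\eqref{equation_limit1} for $(k_2,k_3)$. The main obstacles are precisely this convergence of $\mathbf{s_1}-\bm{\sigma}$ (controlling the exact growth rate of $\mathbf{s_1}$ relative to $\bm{\sigma}$) and justifying the interchange of limit and expectation in the rationing integrals when numerator and denominator do not jointly vanish; uniqueness of the $(k_2,k_3)$ system then upgrades the subsequential limits to genuine limits.
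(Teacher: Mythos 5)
Your overall strategy --- pass to subsequences, identify the limits through the equilibrium equations, and pin them down by the limiting fixed-point systems --- is the same as the paper's, and your observation that the hypothesis of case (ii) must read $\lim_{\bm{\sigma}\to+\infty}\mathbf{s_1}\bm{\sigma}=+\infty$ (otherwise it contradicts the first claim) is correct and matches what the paper actually proves. However, your proof of the first claim, $\lim_{\bm{\sigma}\to+\infty}\mathbf{s_0}\bm{\sigma}<+\infty$, has a genuine gap: the asserted tail comparison $\overline{\gamma_d}/\alpha_d\to 0$ is false in precisely the regime you need it. Indeed $\alpha_d=G(\tanh(\mathbf{s_1}\bm{\sigma}))-G(\tanh(\mathbf{s_0}\bm{\sigma}))$ decays like $e^{-2\mathbf{s_0}\bm{\sigma}}$, but $\overline{\gamma_d}=\Phi(\mathbf{s_1}-\bm{\sigma})-\Phi(\mathbf{s_0}-\bm{\sigma})$ is a Gaussian \emph{tail} only when $\mathbf{s_0}-\bm{\sigma}\to+\infty$. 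Because $\bm{\sigma}\to+\infty$, one can have $\mathbf{s_0}\bm{\sigma}\to+\infty$ while $\mathbf{s_0}-\bm{\sigma}\to-\infty$ (e.g.\ $\mathbf{s_0}=\bm{\sigma}^{1/2}$), in which case $\Phi(\mathbf{s_0}-\bm{\sigma})\to 0$, $\overline{\gamma_d}$ can remain bounded away from zero, and $\overline{\gamma_d}/\alpha_d\to+\infty$. Your lower bound on $\underline{R}$ then degenerates to $0$ and yields no contradiction. The paper instead case-splits on the behavior of $\mathbf{s_0}-\bm{\sigma}$ and, in the regime where the rationing integrals give nothing, derives the contradiction from the exchange side: for instance, when $\mathbf{s_0}-\bm{\sigma}\to+\infty$ one gets $\overline{\gamma_e},\underline{\gamma_e}\to 0$, hence $A/\sigma_v\to 0$, and \eqref{equation1_scaled} then forces $B(\mathbf{s_1})\to\frac{1-\bar R}{(1-\bar R)+(1-\underline R)}\leq\frac{1}{2}$, contradicting $B(\mathbf{s_1})\to 1$. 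Some argument of this kind, using \eqref{equation1_scaled} and the spread formula rather than only \eqref{R_lower_scaled}, is indispensable.

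A secondary gap: in case (i) your dominated-convergence step presumes $\alpha_d$ tends to a positive constant, but a priori $\mathbf{s_0}\bm{\sigma}$ and $\mathbf{s_1}\bm{\sigma}$ could converge to the same finite limit, in which case $\overline{\gamma_d}$, $\underline{\gamma_d}$, and $\alpha_d$ all vanish and the integrands are of the form $0/0$; the paper resolves this degenerate case with the Taylor-expansion limits of Lemma \ref{lem_Rlimit}. You flag this at the end as an ``obstacle,'' but it must be discharged, not deferred, before the conclusion $\bar{R},\underline{R}\to R$ and hence \eqref{equation_limit00} is available.
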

\begin{proof}
Consider any continuously differentiable functions $\mathbf{s_0}(\bm{\sigma}),\mathbf{s_1}(\bm{\sigma})$.

First we show $\underset{\bm{\sigma} \rightarrow +\infty}{\lim} \mathbf{s_0}\bm{\sigma} < +\infty$ by contradiction. Suppose that $\underset{\bm{\sigma} \rightarrow +\infty}{\lim} \mathbf{s_0}\bm{\sigma} = +\infty$, we have $B(\mathbf{s_0})=\frac{1}{1+e^{-2\mathbf{s_0}\bm{\sigma}}}\rightarrow 1$. Since $\mathbf{s_1}>\mathbf{s_0}$, we have $\underset{\bm{\sigma} \rightarrow +\infty}{\lim} \mathbf{s_1}\bm{\sigma} = +\infty$, $B(\mathbf{s_1})=\frac{1}{1+e^{-2\mathbf{s_1}\bm{\sigma}}}\rightarrow 1$. In addition, Equation~\eqref{equation0_scaled} gives us that $\frac{\bar{R}}{\underline{R}}=1 $, i.e., $\bar{R}=\underline{R}$.

If $\underset{\bm{\sigma} \rightarrow +\infty}{\lim}\left(\mathbf{s_0}-\bm{\sigma}\right)<+\infty$, then $\overline{\gamma_d}>0=\underline{\gamma_d}$, which is a contradiction to $\bar{R}=\underline{R}$. If $\underset{\bm{\sigma} \rightarrow +\infty}{\lim}\left(\mathbf{s_0}-\bm{\sigma}\right)=+\infty$, then $\underset{\bm{\sigma} \rightarrow +\infty}{\lim}\mathbf{s_1}-\bm{\sigma}=+\infty$. Therefore we have $\underset{\bm{\sigma} \rightarrow +\infty}{\lim}\overline{\gamma_e}=\underset{\bm{\sigma} \rightarrow +\infty}{\lim} \underline{\gamma_e}=0$, $\underset{\bm{\sigma} \rightarrow +\infty}{\lim} \frac{A}{\sigma_v}=0$, and by equation (\ref{equation1_scaled}), we have $\underset{\bm{\sigma} \rightarrow +\infty}{\lim} B(\mathbf{s_1})=\underset{\bm{\sigma} \rightarrow +\infty}{\lim}\frac{1-\bar{R}}{1-\bar{R}+1-\underline{R}}=\frac{1}{2}$, whche is a contradiction to $\underset{\bm{\sigma} \rightarrow +\infty}{\lim} B(\mathbf{s_1})=1$. Therefore, we have $$\underset{\bm{\sigma} \rightarrow +\infty}{\lim} \mathbf{s_0}\bm{\sigma}=C\in [0, +\infty).$$


Then we show the two statements.
\begin{itemize}
\item  (i) Suppose that $\underset{\bm{\sigma} \rightarrow +\infty}{\lim} \mathbf{s_0}\bm{\sigma}=C_0 \in [0,+\infty)$ and $\underset{\bm{\sigma} \rightarrow +\infty}{\lim} \mathbf{s_1}\bm{\sigma}=C_1 \in[0, +\infty)$, then we have $\underset{\bm{\sigma} \rightarrow +\infty}{\lim}\left(\mathbf{s_0}-\bm{\sigma}\right)\rightarrow -\infty$ and $\underset{\bm{\sigma} \rightarrow +\infty}{\lim}\left(\mathbf{s_1}-\bm{\sigma}\right)\rightarrow -\infty$. Therefore, $\underset{\bm{\sigma} \rightarrow +\infty}{\lim}\overline{\gamma_e}=1$, $\underset{\bm{\sigma} \rightarrow +\infty}{\lim}\underline{\gamma_e} =\underset{\bm{\sigma} \rightarrow +\infty}{\lim} \overline{\gamma_d} =\underset{\bm{\sigma} \rightarrow +\infty}{\lim}\underline{\gamma_d} =0$.

\qquad We show that $\underset{\bm{\sigma} \rightarrow +\infty}{\lim}\bar{R}=\underset{\bm{\sigma} \rightarrow +\infty}{\lim}\underline{R}=R$.
If $C_0=C_1$, Lemma~\ref{lem_Rlimit} and $\underset{\bm{\sigma} \rightarrow +\infty}{\lim} \frac{\phi(\mathbf{s_0}-\bm{\sigma})}{B'(\mathbf{s_0})}=\underset{\bm{\sigma} \rightarrow +\infty}{\lim} \frac{\phi(\mathbf{s_0}+\bm{\sigma})}{B'(\mathbf{s_0})}=0$ give us that $\underset{\bm{\sigma} \rightarrow +\infty}{\lim}\bar{R}=\underset{\bm{\sigma} \rightarrow +\infty}{\lim}\underline{R}=\mathbb{E} \left[ \min \left\{1, \frac{Z^+}{Z^-} \right\} \right]=R$. If $C_0 <C_1$, because $\underset{\bm{\sigma} \rightarrow +\infty}{\lim} \overline{\gamma_d} = \underset{\bm{\sigma} \rightarrow +\infty}{\lim} \underline{\gamma_d}=0$ and $\underset{\bm{\sigma} \rightarrow +\infty}{\lim}\alpha_d>0$, we have $\underset{\bm{\sigma} \rightarrow +\infty}{\lim}\bar{R}=\underset{\bm{\sigma} \rightarrow +\infty}{\lim}\underline{R}=\mathbb{E} \left[ \min \left\{1, \frac{Z^+}{Z^-} \right\} \right]=R$.

\qquad Then equation~\eqref{equation0_scaled} gives us that $\lim\limits_{\bm{\sigma} \rightarrow +\infty}B(\mathbf{s_0})=1/2$ and $\lim\limits_{\bm{\sigma} \rightarrow +\infty}\mathbf{s_0}\bm\sigma=0$. Let $k_1=\lim\limits_{\bm{\sigma} \rightarrow +\infty} 2B(\mathbf{s_1})-1$, and we have  $\underset{\bm{\sigma} \rightarrow +\infty}{\lim} \alpha_e = 1-G(k_1)$, $\underset{\bm{\sigma} \rightarrow +\infty}{\lim} \alpha_d=G(k_1)$ and $\underset{\bm{\sigma} \rightarrow +\infty}{\lim} \frac{A}{\sigma_v}=\frac{1}{1+\left[1-G(k_1) \right]\frac{\mu_z}{\mu}}$. Rewrite equation~\eqref{equation1_scaled} in the following form
$$\left(2B(\mathbf{s_1})-1\right)(1-R)=\frac{1}{1+\left[1-G(2B(\mathbf{s_1})-1)\right]\frac{\mu_z}{\mu}},$$
and $k_1$ has to satisfy equation~\eqref{equation_limit00}. 

\qquad Let $f(k) = (1-R)k-\frac{1}{1+\left[1-G(k)\right]\frac{\mu_z}{\mu}}$. We can verify that $f(0)<0$ and $f(1)>0$ if $1+\left[1-G(1)\right]\frac{\mu_z}{\mu} > \frac{1}{1-R}$. There is a $k_1 \in (0, 1)$ such that $f(k_1)=0$, and $C_1 =\frac{1}{2}\ln \frac{1+k_1}{1-k_1}$.

\item (ii) Suppose that $\underset{\bm{\sigma} \rightarrow +\infty}{\lim} \mathbf{s_0}\bm{\sigma}=C_2 \in [0,+\infty)$ and $\underset{\bm{\sigma} \rightarrow +\infty}{\lim} \mathbf{s_1}\bm{\sigma} = +\infty$. We have $\underset{\bm{\sigma} \rightarrow +\infty}{\lim}\underline{\gamma_e} =0$, $\underset{\bm{\sigma} \rightarrow +\infty}{\lim}\underline{\gamma_d} =0$, and $\underset{\bm{\sigma} \rightarrow +\infty}{\lim} \alpha_e = 1-G(1)$.

\qquad Suppose taht $\underset{\bm{\sigma} \rightarrow +\infty}{\lim} \left(\mathbf{s_1}-\bm{\sigma}\right) = C_3 \in [-\infty, +\infty]$. Let $k_2=\underset{\bm{\sigma} \rightarrow +\infty}{\lim}B(\mathbf{s_0})=\frac{1}{1+e^{-2C_2}}\in [\frac{1}{2},1)$ and $k_3=\underset{\bm{\sigma} \rightarrow +\infty}{\lim}\overline{\gamma_d}=\Phi(C_3)\in [0,1]$. Then we have $\underset{\bm{\sigma} \rightarrow +\infty}{\lim}\overline{\gamma_e}=1-k_3$, $\underset{\bm{\sigma} \rightarrow +\infty}{\lim} \overline{\gamma_d} = k_3$, $\underset{\bm{\sigma} \rightarrow +\infty}{\lim} \alpha_d=G(1)-G(2k_2-1)$, $\underset{\bm{\sigma} \rightarrow +\infty}{\lim} \frac{A}{\sigma_v}=\frac{1-k_3}{1-k_3+\left[1-G(1) \right]\frac{\mu_z}{\mu}}$. Combining equations~\eqref{equation0_scaled} and~\eqref{equation1_scaled}, we have $\underset{\bm{\sigma} \rightarrow +\infty}{\lim}\bar{R} =\frac{k_2}{1-k_2}\frac{\left[1-G(1) \right]\frac{\mu_z}{\mu}}{1-k_3 + \left[1-G(1) \right]\frac{\mu_z}{\mu}}$,
$\underset{\bm{\sigma} \rightarrow +\infty}{\lim}\underline{R}=\frac{\left[1-G(1) \right]\frac{\mu_z}{\mu}}{1-k_3 + \left[1-G(1) \right]\frac{\mu_z}{\mu}}$. In addition, by equations~\eqref{R_upper_scaled} and~\eqref{R_lower_scaled}, $k_2$ and $k_3$ have to satisfy equations~\eqref{equation_limit0} and~\eqref{equation_limit1}.

\qquad Suppose that $1+\left[1-G(1)\right]\frac{\mu_z}{\mu} \leq\frac{1}{1-R}$. For equation (\ref{equation_limit0}), the left hand side is increasing with respect to $k_3$, while the right hand side is decreasing with respect to $k_3$. In addition, when $k_3=0$, $LHS-RHS = \frac{\left[1-G(1) \right]\frac{\mu_z}{\mu}}{1+ \left[1-G(1) \right]\frac{\mu_z}{\mu}} - \mathbb{E}\left[\min\left\{1, \frac{Z^-}{Z^+} \right\} \right] =1-R -\frac{1}{1+\left[1-G(1) \right]\frac{\mu_z}{\mu}}\leq 0 $, and when $k_3 =1$, $LHS-RHS = 1-\mathbb{E}\left[\min\left\{1, \frac{Z^-}{\frac{1}{G(1)-G(2k_2-1)}+Z^+} \right\} \right]>0$. Thus, given any $k_2\in [\frac{1}{2},1)$, there exists a unique $k_3(k_2)\in (0,1)$ that solves equation (\ref{equation_limit0}). Furthermore, as $k_2$ increases, the right hand side of equation~\eqref{equation_limit0} decreases, thus $k_3(k_2)$ is decreasing with respect to  $k_2$. When $k_2\rightarrow 1$, we have $k_3(k_2) \rightarrow 0$. Thus $\underline{R}\rightarrow  \frac{\left[1-G(1) \right]\frac{\mu_z}{\mu}}{1+ \left[1-G(1) \right]\frac{\mu_z}{\mu}}\geq 0$.


\qquad For equation (\ref{equation_limit1}), we substitute $k_3$ with the expression solved from~\eqref{equation_limit0}, and it becomes a function of $k_2$ only. When $k_2=1/2$, we have $k_3\in[0,1)$. Then $LHS-RHS\leq 0$. While when $k_2\rightarrow 1$, we have $LHS-RHS \geq0$. Therefore, there exist $k_2\in[1/2,1)$ and $k_3\in [0,1)$ such that equations~\eqref{equation_limit0} and~\eqref{equation_limit1} are satisfied. Additionally, we have $C_2 = \frac{1}{2}\ln \frac{k_2}{1-k_2}$, $C_3 = \Phi^{-1}(k_3)$.
\end{itemize}
\end{proof}

We now proceed to prove the proposition.
From Lemma~\ref{lem_HATSlessthanS}, we have $\hvs < \vs_1$ for all $\bm\sigma\in (0,+\infty)$. Thus, $\gammaesu-\gammaesl=\Phi(\hvs+\sigma)-\Phi(\hvs-\sigma) >\Phi(\vs_1+\sigma)-\Phi(\vs_1-\sigma) = \gammaeu-\gammael$ and $\alphaes=1-G(2B(\hvs)-1)>1-G(2B(\vs_1)-1)=\alpha_e$.

Let $\widehat{k},~k_1,~k_2,~k_3$ as in   (\ref{equation_limit3}), (\ref{equation_limit00}), (\ref{equation_limit0}), (\ref{equation_limit1}). Suppose  $1-R> \frac{1}{1+[1-G(\widehat{k})]\frac{\mu_z}{\mu}}$, then as $\bm{\sigma}\rightarrow +\infty$, by Lemma~\ref{lem_limitinfty_hats}, we have
\begin{align*}
&\underset{\bm{\sigma} \rightarrow +\infty}{\lim} \frac{A^{\mathbb{S}}}{\sigma_v}=\frac{1}{1+\left[1-G(\widehat{k}) \right]\frac{\mu_z}{\mu}},\\
&\underset{\bm{\sigma} \rightarrow +\infty}{\lim} \alpha_e^{\mathbb{S}} = 1-G(\widehat{k}).
\end{align*}

By Lemma \ref{lem_limitinfty_s}(i), $\underset{\bm{\sigma} \rightarrow +\infty}{\lim} \frac{A}{\sigma_v}=\frac{1}{1+\left[1-G(k_1)\right]\frac{\mu_z}{\mu}}$ and $\underset{\bm{\sigma} \rightarrow +\infty}{\lim} \alpha_e =1-G(k_1)$. We can verify that $\widehat{k}< k_1$ from equations~\eqref{equation_limit3} and~\eqref{equation_limit00}. Therefore $\frac{1}{1+\left[1-G(\widehat{k}) \right]\frac{\mu_z}{\mu}} < \frac{1}{1+\left[1-G(k_1)\right]\frac{\mu_z}{\mu}}$ and $1-G(\widehat{k}) > 1-G(k_1)$. That is, $\underset{\bm{\sigma} \rightarrow +\infty}{\lim} \frac{A^{\mathbb{S}}}{\sigma_v}< \underset{\bm{\sigma} \rightarrow +\infty}{\lim} \frac{A}{\sigma_v}$. We can easily verify that $\underset{\bm{\sigma} \rightarrow +\infty}{\lim} \alpha_e^{\mathbb{S}}< \underset{\bm{\sigma} \rightarrow +\infty}{\lim} \alpha_e+\alpha_d$.

By Lemma \ref{lem_limitinfty_s}(ii), $\underset{\bm{\sigma} \rightarrow +\infty}{\lim} \frac{A}{\sigma_v}=\frac{1-k_3}{1-k_3+\left[1-G(1) \right]\frac{\mu_z}{\mu}}$ and $\underset{\bm{\sigma} \rightarrow +\infty}{\lim} \alpha_e = 1-G(1)$. Then by equation~(\ref{equation_limit0}), $\frac{1-k_3}{1-k_3+\left[1-G(1) \right]\frac{\mu_z}{\mu}}=1-\mathbb{E}\left[\min\left\{1, \frac{Z^-}{\frac{k_3}{G(1)-G(2k_2-1)}+Z^+} \right\} \right]>1-\mathbb{E}\left[\min\left\{1, \frac{Z^-}{Z^+} \right\} \right]=1-R$. Since we suppose that $1-R> \frac{1}{1+[1-G(\widehat{k})]\frac{\mu_z}{\mu}}$, we have that $\frac{1}{1+[1-G(\widehat{k})]\frac{\mu_z}{\mu}}<\frac{1-k_3}{1-k_3+\left[1-G(1) \right]\frac{\mu_z}{\mu}}$, that is, $\underset{\bm{\sigma} \rightarrow +\infty}{\lim} \frac{A^{\mathbb{S}}}{\sigma_v}< \underset{\bm{\sigma} \rightarrow +\infty}{\lim} \frac{A}{\sigma_v}$.

Since $\widehat{k}<1, k_3>0$, we proved that $\underset{\bm{\sigma} \rightarrow +\infty}{\lim} \overline{\gamma_e}^{\mathbb{S}}-\underline{\gamma_e}^{\mathbb{S}}\leq \underset{\bm{\sigma} \rightarrow +\infty}{\lim} \overline{\gamma_e}-\underline{\gamma_e}, \underset{\bm{\sigma} \rightarrow +\infty}{\lim} \alpha_e^{\mathbb{S}}\geq \underset{\bm{\sigma} \rightarrow +\infty}{\lim} \alpha_e $.

Next we consider the case when $\bm\sigma\rightarrow 0^+$. Recall that when $\bm\sigma=0$, we have $\frac{A^{\mathbb{S}}}{\sigma_v}=  \frac{A}{\sigma_v}=0$. So we have to compare their derivatives at $0$.
From the proof of Lemma \ref{lem_limitzero_hats}, we have that $\underset{\bm{\sigma} \rightarrow 0^+}{\lim}\widehat{\mathbf{s}} = \underset{\bm{\sigma} \rightarrow 0^+}{\lim}\frac{2 \phi (\widehat{\mathbf{s}})}{2- 2\Phi(\widehat{\mathbf{s}})+\frac{\mu_z}{\mu}}$, and $\underset{\bm{\sigma} \rightarrow 0^+}{\lim} \mathbf{s_0}=\underset{\bm{\sigma} \rightarrow 0^+}{\lim}\frac{2\phi(\mathbf{s_1})}{2-2 \Phi(\mathbf{s_1})+\frac{\mu_z}{\mu}}.$ Since $\frac{2\phi(s)}{2-2 \Phi(s)+\frac{\mu_z}{\mu}}$ decreases in $s$, we show that either $ \underset{\bm{\sigma} \rightarrow 0^+}{\lim} \mathbf{s_0} < \underset{\bm{\sigma} \rightarrow 0^+}{\lim} \widehat{\mathbf{s}} < \underset{\bm{\sigma} \rightarrow 0^+}{\lim} \mathbf{s_1}$, or $ \underset{\bm{\sigma} \rightarrow 0^+}{\lim} \mathbf{s_0} = \underset{\bm{\sigma} \rightarrow 0^+}{\lim} \widehat{\mathbf{s}} = \underset{\bm{\sigma} \rightarrow 0^+}{\lim} \mathbf{s_1}$. Therefore we have two cases to consider.
(i) $ \underset{\bm{\sigma} \rightarrow 0^+}{\lim} \mathbf{s_0} < \underset{\bm{\sigma} \rightarrow 0^+}{\lim} \widehat{\mathbf{s}} < \underset{\bm{\sigma} \rightarrow 0^+}{\lim} \mathbf{s_1}$. Since $\frac{d\frac{A}{\sigma_v}}{d\bm{\sigma}}$ increases in $s$ when $\bm\sigma\rightarrow 0^+$, we have that $\frac{A^{\mathbb{S}}}{\sigma_v} < \frac{A}{\sigma_v}$, as $\bm\sigma\rightarrow 0^+$. (ii) $ \underset{\bm{\sigma} \rightarrow 0^+}{\lim} \mathbf{s_0} = \underset{\bm{\sigma} \rightarrow 0^+}{\lim} \widehat{\mathbf{s}} = \underset{\bm{\sigma} \rightarrow 0^+}{\lim} \mathbf{s_1}$. In this case $\frac{d\frac{A^{\mathbb{S}}}{\sigma_v}}{d\bm{\sigma}}=\frac{d\frac{A}{\sigma_v}}{d\bm{\sigma}}$, it is undetermined whether $\frac{A^{\mathbb{S}}}{\sigma_v} < \frac{A}{\sigma_v}$ or $\frac{A^\mathbb{S}}{\sigma_v} > \frac{A}{\sigma_v}$, as $\bm\sigma\rightarrow 0^+$. However, we cannot distinguish between case (i) and case (ii).

\subsection{Proof of Proposition \ref{prop_informativeness}}
As $\bm{\sigma}\rightarrow +\infty$, we have $\underset{\bm{\sigma} \rightarrow +\infty}{\lim} \overline{\gamma_e}^{\mathbb{S}}-\underline{\gamma_e}^{\mathbb{S}}=1$, and $\underset{\bm{\sigma} \rightarrow +\infty}{\lim} \alpha_e^{\mathbb{S}}=1-G(\hat k)$. We consider the two case in Lemma~\ref{lem_limitinfty_s}: (i) We have $\underset{\bm{\sigma} \rightarrow +\infty}{\lim} \overline{\gamma_e}-\underline{\gamma_e}=1$ and $\underset{\bm{\sigma} \rightarrow +\infty}{\lim} \alpha_e =1-G(k_1)$. Thus $\liminfty\frac{\overline{\gamma_e}^{\mathbb{S}}-\underline{\gamma_e}^{\mathbb{S}}}{\alpha_e^{\mathbb{S}}}\leq  \liminfty\frac{\overline{\gamma_e}-\underline{\gamma_e}}{\alpha_e}$ because $\hat k < k_1$. (ii) We have $\underset{\bm{\sigma} \rightarrow +\infty}{\lim} \overline{\gamma_e}-\underline{\gamma_e}=1-k_3$ and $\underset{\bm{\sigma} \rightarrow +\infty}{\lim} \alpha_e =1-G(1)$. From $\frac{1}{1+[1-G(\widehat{k})]\frac{\mu_z}{\mu}}<\frac{1-k_3}{1-k_3+\left[1-G(1) \right]\frac{\mu_z}{\mu}}$, we have $\frac{1}{1-G(\widehat{k})}<\frac{1-k_3}{1-G(1)}$, i.e., Thus $\liminfty\frac{\overline{\gamma_e}^{\mathbb{S}}-\underline{\gamma_e}^{\mathbb{S}}}{\alpha_e^{\mathbb{S}}}\leq  \liminfty\frac{\overline{\gamma_e}-\underline{\gamma_e}}{\alpha_e}$.



As $\bm{\sigma}\rightarrow 0^+$, by Lemma \ref{lem_HATSlessthanS}, we have $\mathbf{\widehat{s}}<\mathbf{s_1}$, $\forall \bm{\sigma}>0$. Since $\frac{\overline{\gamma_e}^{\mathbb{S}}-\underline{\gamma_e}^{\mathbb{S}}}{\alpha_e^{\mathbb{S}}}=
\frac{\Phi(\mathbf{\widehat{s}}+\bm{\sigma})-\Phi(\mathbf{\widehat{s}}-\bm{\sigma})}{1-G(2B(\mathbf{\widehat{s}})-1)\frac{}{}}$ and
$\frac{\overline{\gamma_e}-\underline{\gamma_e}}{\alpha_e}=
\frac{\Phi(\mathbf{s_1}+\bm{\sigma})-\Phi(\mathbf{s_1}-\bm{\sigma})}{1-G(2B(\mathbf{s_1})-1)}$, to show that $\frac{\overline{\gamma_e}^{\mathbb{S}}-\underline{\gamma_e}^{\mathbb{S}}}{\alpha_e^{\mathbb{S}}}> \frac{\overline{\gamma_e}-\underline{\gamma_e}}{\alpha_e}$ for small $\bm\sigma$, it is sufficient to show that there exists $\bm{\bar{\sigma}}>0$, s.t. $\forall \bm{\sigma}\in (0, \bm{\bar{\sigma}})$, $\frac{\Phi(s+\bm{\sigma})-\Phi(s-\bm{\sigma})}{1-G(2B(s)-1)}$ decreases in $s$. If $\frac{\mu_z}{\mu}<+\infty$, Lemma~\ref{lem_limitzero_hats} gives us $\underset{\bm{\sigma} \rightarrow 0^+}{\lim} \mathbf{s_1}\geq \underset{\bm{\sigma} \rightarrow 0^+}{\lim} \mathbf{\widehat{s}}>0$. Also, recall that $\underset{\bm{\sigma} \rightarrow 0^+}{\lim} \mathbf{s_1} \bm{\sigma}=\underset{\bm{\sigma} \rightarrow 0^+}{\lim} \mathbf{\widehat{s}} \bm{\sigma} =0$.

We now consider the derivative of $\frac{\Phi(s+\bm{\sigma})-\Phi(s-\bm{\sigma})}{1-G(2B(s)-1)}$ with respect to $s$.
\begin{align}
\frac{d\left( \frac{\Phi(s+\bm{\sigma})-\Phi(s-\bm{\sigma})}{1-G(2B(s)-1)}\right)}{d s} = &\frac{\left(\Phi(s+\bm{\sigma})-\Phi(s-\bm{\sigma}) \right)G'(2B(s)-1) \frac{e^{-2s\bm{\sigma}}}{\left(1+e^{-2s\bm{\sigma}}\right)^2} 4\bm{\sigma}}{\left(1-G(2B(s)-1)\right)^2}\nonumber\\
&+\frac{\left(\phi(s+\bm{\sigma})-\phi(s-\bm{\sigma})\right)(1-G(2B(s)-1))}{\left(1-G(2B(s)-1)\right)^2}.
\label{derivative_info}
\end{align}
Let $M=\underset{s\in[0,+\infty]}{\max}\left[\frac{4G'(2B(s)-1)}{1-G(2B(s)-1)}+1\right] <+\infty$. If $\underset{\bm{\sigma} \rightarrow 0^+}{\lim} s>0$ and  $\underset{\bm{\sigma} \rightarrow 0^+}{\lim} s \bm{\sigma} =0$,  there exists $\bm{\bar{\sigma}}>0$, such that $\forall \bm{\sigma}\in (0, \bm{\bar{\sigma}})$, $s>M e^2 \bm{\sigma}$ and $s \bm{\sigma} <1$. Therefore by the mean value theorem,
\begin{align*}
\frac{d\left( \frac{\Phi(s+\bm{\sigma})-\Phi(s-\bm{\sigma})}{1-G(2B(s)-1)}\right)}{d s} 
<&  \frac{2\bm{\sigma}\left[\phi(s-\bm{\sigma}) 4G'(2B(s)-1)\bm{\sigma}+\phi(s+\bm{\sigma})\left(1-G(2B(s)-1)\right)(-(s-\bm{\sigma}))\right]}{\left(1-G(2B(s)-1)\right)^2}\\
< & \frac{2\bm{\sigma}\left\{\phi(s-\bm{\sigma}) \left[\frac{4G'(2B(s)-1)}{1-G(2B(s)-1)}+1\right]\bm{\sigma}+\phi(s+\bm{\sigma})(-s)\right\}}{1-G(2B(s)-1)}\\
=& \frac{2\bm{\sigma}\left\{\phi(s+\bm{\sigma}) M e^{2s\bm{\sigma}}\bm{\sigma}-\phi(s+\bm{\sigma})s\right\}}{1-G(2B(s)-1)}\\
< & 0.
\end{align*}
Thus $\exists \bm{\bar{\sigma}}>0$, such that $\forall \bm{\sigma}\in (0, \bm{\bar{\sigma}}), s\in [\mathbf{\widehat{s}}, \mathbf{s_1}]$, we have $\frac{d\left( \frac{\Phi(s+\bm{\sigma})-\Phi(s-\bm{\sigma})}{1-G(2B(s)-1)}\right)}{d s}<0$.
Since $\mathbf{\widehat{s}}<\mathbf{s_1}$, we have $\frac{\Phi(\mathbf{\widehat{s}}+\bm{\sigma})-\Phi(\mathbf{\widehat{s}}-\bm{\sigma})}{1-G(2B(\mathbf{\widehat{s}})-1)}>
\frac{\Phi(\mathbf{s_1}+\bm{\sigma})-\Phi(\mathbf{s_1}-\bm{\sigma})}{1-G(2B(\mathbf{s_1})-1)}$. We proved the proposition.\qed

\subsection{Proof of Proposition \ref{prop_sigma_tendency}}
We need the follow Lemma to proceed the proof.
\begin{lem}
$\forall \bm{\sigma}>0, \exists C(\bm{\sigma})>0$, such that $\forall 0\leq s\leq C(\bm{\sigma}), \frac{d\left( \frac{\Phi(s+\bm{\sigma})-\Phi(s-\bm{\sigma})}{1-G(2B(s)-1)}\right)}{d s}>0$.
\label{lem_limit_muzmu}
\end{lem}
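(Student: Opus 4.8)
The plan is to reduce the claim to a sign computation at the single point $s=0$ and then extend by continuity. I would begin from the derivative formula \eqref{derivative_info} already obtained in the proof of Proposition~\ref{prop_informativeness}, which expresses $\frac{d}{ds}\bigl(\frac{\Phi(s+\bm\sigma)-\Phi(s-\bm\sigma)}{1-G(2B(s)-1)}\bigr)$ as a ratio whose numerator is the sum of two pieces: a \emph{density} piece proportional to $\phi(s+\bm\sigma)-\phi(s-\bm\sigma)$, and a \emph{$G'$} piece proportional to $\bigl(\Phi(s+\bm\sigma)-\Phi(s-\bm\sigma)\bigr)G'(2B(s)-1)$. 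The crucial structural observation is that the density piece — which is exactly the term driving the derivative \emph{negative} for large $s$ in Proposition~\ref{prop_informativeness} — vanishes at $s=0$, since $\phi$ is even and hence $\phi(0+\bm\sigma)-\phi(0-\bm\sigma)=0$.

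Evaluating the surviving $G'$ piece at $s=0$ is then routine. Using $B(0)=\tfrac12$ (so $2B(0)-1=0$ and $G(2B(0)-1)=G(0)=0$, because the delay cost has no atom at the origin), together with $\Phi(\bm\sigma)-\Phi(-\bm\sigma)=2\Phi(\bm\sigma)-1$ and $\frac{4\bm\sigma e^{-2s\bm\sigma}}{(1+e^{-2s\bm\sigma})^2}\big|_{s=0}=\bm\sigma$, the derivative at the origin collapses to
\[
\frac{d}{ds}\!\left(\frac{\Phi(s+\bm\sigma)-\Phi(s-\bm\sigma)}{1-G(2B(s)-1)}\right)\!\bigg|_{s=0}=\bigl(2\Phi(\bm\sigma)-1\bigr)\,G'(0)\,\bm\sigma .
\]
For any fixed $\bm\sigma>0$ we have $2\Phi(\bm\sigma)-1>0$, and $G'(0)>0$ (the delay-cost density is strictly positive at $0$, the same fact invoked, e.g., in the limits $-\tfrac{G'(0)\mathbf{s}^*}{2}$ appearing in the proof of Propositions~\ref{prop_monotonicity_lowsigma}--\ref{prop_monotonicity_lowsigmaparticipation}), so this value is strictly positive. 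Since $G,\Phi\in C^2$, the map $s\mapsto B(s)$ is smooth and the full expression \eqref{derivative_info} is continuous in $s$ near $0$ (its denominator $(1-G(2B(s)-1))^2$ is bounded away from $0$ there because $G(0)<1$); a continuous function strictly positive at $s=0$ stays positive on some interval $[0,C(\bm\sigma)]$ with $C(\bm\sigma)>0$, which is the assertion.

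The one genuine subtlety — and the step I would watch most carefully — is the sign bookkeeping, because the density piece is not merely absent but strictly negative for $s>0$: to first order $\phi(s+\bm\sigma)-\phi(s-\bm\sigma)=-2s\bm\sigma\phi(\bm\sigma)+o(s)$. Thus the lemma holds only because this competing negative contribution is $O(s)$ and is therefore dominated near the endpoint by the strictly positive, $O(1)$ term $(2\Phi(\bm\sigma)-1)G'(0)\bm\sigma$. This is precisely why strict positivity of $G'(0)$ must be used: were the density to vanish at the origin, the derivative at $0$ would be zero and the negative density piece would immediately take over, breaking the conclusion. No fixed-point or global monotonicity argument is needed — the entire content of the lemma is local at $s=0$.
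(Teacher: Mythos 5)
Your proposal is correct and follows essentially the same route as the paper's own (very terse) proof: evaluate the derivative formula \eqref{derivative_info} at $s=0$, observe it is strictly positive there, and extend to an interval $[0,C(\bm\sigma)]$ by continuity. You in fact supply details the paper omits — the explicit value $(2\Phi(\bm\sigma)-1)\,G'(0)\,\bm\sigma$ at the origin, the vanishing of the $\phi(s+\bm\sigma)-\phi(s-\bm\sigma)$ term by evenness of $\phi$, and the implicit reliance on $G'(0)>0$ (which the paper also uses without comment elsewhere) — but the argument is the same.
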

\begin{proof}
Consider (\ref{derivative_info}), $\forall \bm{\sigma}>0$, $\left.\frac{d\left( \frac{\Phi(s+\bm{\sigma})-\Phi(s-\bm{\sigma})}{1-G(2B(s)-1)}\right)}{d s}\right|_{s=0}>0$. Therefore, $\exists C(\bm{\sigma})>0$ such that $\forall 0\leq s\leq C(\bm{\sigma})$, we have
$$\frac{d\left( \frac{\Phi(s+\bm{\sigma})-\Phi(s-\bm{\sigma})}{1-G(2B(s)-1)}\right)}{d s}>0.$$
\end{proof}

First we show that $\bar{\sigma}_v=\underset{x>0}{\sup} \left\{x |\forall \sigma_v\in(0,x) ,~\frac{\overline{\gamma_e}^{\mathbb{S}}-\underline{\gamma_e}^{\mathbb{S}}}{\alpha_e^{\mathbb{S}}}>   \frac{\overline{\gamma_e}-\underline{\gamma_e}}{\alpha_e}\right\}$ is increasing in $\sigma_e$.
Because $\frac{\mu_z}{\mu}<+\infty$ and $\frac{\mu_z}{\mu}$ sufficiently large, according to Proposition~\ref{prop_informativeness}, there must exist a $\bm{\widehat{\sigma}}$ such that $\bm{\bar{\sigma}}=\underset{x>0}{\sup} \left\{x |\forall \bm{\sigma}\in(0,x) , \frac{\overline{\gamma_e}^{\mathbb{S}}-\underline{\gamma_e}^{\mathbb{S}}}{\alpha_e^{\mathbb{S}}}>   \frac{\overline{\gamma_e}-\underline{\gamma_e}}{\alpha_e}\right\}$. By definition $\bm{\bar{\sigma}} = \frac{\bar{\sigma}_v}{\sigma_e}$, i.e., $\bar{\sigma}_v=\bm{\bar{\sigma}} \sigma_e$, where $\bm{\bar{\sigma}}$ is a constant. $\bar{\sigma}_v$ is increasing in $\sigma_e$.
As $\sigma_e \rightarrow 0^+$, $\bar{\sigma}_v \rightarrow 0$ and as $\sigma_e \rightarrow +\infty$, $\bar{\sigma}_v \rightarrow +\infty$.

Next we prove that if $\frac{\mu_z}{\mu}$ is large enough, there exists a subsequence $\{(\frac{\mu_z}{\mu})_i\}$ such that $\bar{\sigma}_v$ decreases as $(\frac{\mu_z}{\mu})_i$ increases.

Let $C(\bm{\sigma})$ defined as $\sup\limits_x\left\{x\big|\forall s\in(0,x),~\frac{d\left( \frac{\Phi(s+\bm{\sigma})-\Phi(s-\bm{\sigma})}{1-G(2B(s)-1)}\right)}{d s}>0\right\}$. By Lemma \ref{lem_limit_muzmu}, such $C(\bm{\sigma})$ exists for all $\bm{\sigma}>0$. Note that if $\frac{\mu_z}{\mu} \rightarrow +\infty$, we have $\mathbf{\widehat{s}},~\mathbf{s_1} \rightarrow 0$. Therefore, as $\frac{\mu_z}{\mu}$ becomes sufficiently large, there exists $\bm{\sigma}(\frac{\mu_z}{\mu})$, such that
$$\mathbf{\widehat{s}},\mathbf{s_1}<C(\bm{\sigma}(\frac{\mu_z}{\mu})).$$
Thus, $\left.\frac{d\left( \frac{\Phi(s+\bm{\sigma})-\Phi(s-\bm{\sigma})}{1-G(2B(s)-1)}\right)}{d s}\right|_{s=\mathbf{\widehat{s}}}>0$, $\left.\frac{d\left( \frac{\Phi(s+\bm{\sigma})-\Phi(s-\bm{\sigma})}{1-G(2B(s)-1)}\right)}{d s}\right|_{s=\mathbf{s_1}}>0$.
And since $\mathbf{\widehat{s}}<\mathbf{s_1}$, $\frac{\overline{\gamma_e}^{\mathbb{S}}-\underline{\gamma_e}^{\mathbb{S}}}{\alpha_e^{\mathbb{S}}}<   \frac{\overline{\gamma_e}-\underline{\gamma_e}}{\alpha_e}$

This is to say, when $\frac{\mu_z}{\mu}$ is sufficiently large, we find a upper bound of $\bm{\bar{\sigma}}$, i.e., $\bm{\bar{\sigma}}<\bm{\sigma}(\frac{\mu_z}{\mu})$.

Therefore, there exists a subsequence $\{(\frac{\mu_z}{\mu})_i\}$ such that, as $(\frac{\mu_z}{\mu})_i$ increases, $\bm{\sigma}(\frac{\mu_z}{\mu})$ decreases, and $\bm{\bar{\sigma}}$ decreases, and as $(\frac{\mu_z}{\mu})_i \rightarrow +\infty$, $\bm{\sigma}(\frac{\mu_z}{\mu}) \rightarrow 0$, $\bm{\bar{\sigma}}\rightarrow 0$.

\end{document}